\definecolor{linkblue}{HTML}{001487}
\DeclareMathAlphabet{\mathcal}{OMS}{ntxm}{m}{n}
\let\mathbb\relax
\let\mathbb\mathds
\newtheorem{theorem}{Theorem}[section]
\newtheorem*{theorem*}{Theorem}
\newtheorem{proposition}[theorem]{Proposition}
\newtheorem{lemma}[theorem]{Lemma}
\newtheorem{corollary}[theorem]{Corollary}
\theoremstyle{remark}
\newtheorem{remark}[theorem]{Remark}
\theoremstyle{definition}
\newtheorem{definition}[theorem]{Definition}
\newtheorem{protocol}{Protocol}
\numberwithin{equation}{section}
\newcommand\numberthis{\addtocounter{equation}{1}\tag{\theequation}}
\newcommand{\ket}[1]{|#1\rangle}
\newcommand{\bra}[1]{\langle#1|}
\newcommand{\proj}[1]{\ket{#1}\!\bra{#1}}
\DeclarePairedDelimiterX\braket[2]{\langle}{\rangle}{#1 \delimsize\vert #2}
\newcommand{\tr}[1]{\mbox{\rm Tr}\!\left[ #1 \right]}
\newcommand{\Tr}{\mbox{\rm Tr}}
\newcommand{\pr}[1]{{\rm Pr}\!\left[ #1 \right]}
\newcommand{\1}{\mathbb{1}}
\newcommand{\C}{\ensuremath{\mathbb{C}}}
\newcommand{\N}{\ensuremath{\mathbb{N}}}
\newcommand{\R}{\ensuremath{\mathbb{R}}}
\newcommand{\Setup}{\ensuremath{\mathsf{Setup}}\xspace}
\let\H\relax
\newcommand{\H}{\mathcal{H}}
\DeclareMathOperator*{\E}{\mathbb{E}}
\newcommand{\ot}{\ensuremath{\otimes}}
\newcommand{\deq}{\coloneqq}
\newcommand{\mA}{\ensuremath{\mathcal{A}}}
\newcommand{\mG}{\ensuremath{\mathcal{G}}}
\newcommand{\mD}{\ensuremath{\mathcal{D}}}
\newcommand{\mI}{\ensuremath{\mathcal{I}}}
\newcommand{\mF}{\ensuremath{\mathcal{F}}}
\newcommand{\mK}{\ensuremath{\mathcal{K}}}
\newcommand{\mL}{\ensuremath{\mathcal{L}}}
\newcommand{\mM}{\ensuremath{\mathcal{M}}}
\newcommand{\mS}{\ensuremath{\mathcal{S}}}
\newcommand{\mX}{\ensuremath{\mathcal{X}}}
\newcommand{\mY}{\ensuremath{\mathcal{Y}}}
\newcommand{\kf}{\ensuremath{\mathcal{K}_{{1}}}}
\newcommand{\kg}{\ensuremath{\mathcal{K}_{{0}}}}
\newcommand{\bit}{\{0,1\}}
\newcommand{\rand}{\raisebox{-1pt}{\ensuremath{\,\xleftarrow{\raisebox{-1pt}{$\scriptscriptstyle\$$}}\,}}}
\newcommand{\setft}[1]{\textnormal{#1}}
\newcommand{\pos}{\setft{Pos}}
\DeclareMathOperator{\poly}{poly}
\DeclareMathOperator{\negl}{negl}
\newcommand{\bits}{\ensuremath{\{0, 1\}}}
\newcommand{\supp}{\setft{supp}}
\newcommand{\Gen}{\textsc{Gen}}
\newcommand{\QCP}{\mathsf{QCP}}
\newcommand{\Adv}{\mathsf{Adv}}
\newcommand{\CC}{\mathsf{CC}}
\newcommand{\CPTP}{\mathsf{CPTP}}
\newcommand{\id}{\textsf{id}}
\newcommand{\eps}{\varepsilon}
\newcommand{\Samp}{\textsc{Samp}}
\newcommand{\Chk}{\ensuremath{\textsc{Chk}}}
\newcommand{\KeyGen}{\ensuremath{\mathsf{KeyGen}}\xspace}
\newcommand{\Enc}{\ensuremath{\mathsf{Enc}}\xspace}
\newcommand{\Dec}{\ensuremath{\mathsf{Dec}}\xspace}
\newcommand{\WKD}{\ensuremath{\mathsf{WKD}}\xspace}
\newcommand{\ts}{\ensuremath{\mathsf{ts}}\xspace}
\newcommand{\ct}{\mathsf{ct}\xspace}
\newcommand{\ketbra}[2]{\left|#1\right\rangle\!\!\left\langle #2\right|}
\newtheorem{construction}{Construction}
\newcommand{\abs}[1]{\left\vert {#1} \right\vert}
\newcommand{\norm}[1]{\left\| {#1} \right\|}
\newcommand{\QECM}{\ensuremath{\mathsf{QECM}}\xspace}
\newcommand{\IND}{\ensuremath{\mathsf{IND}}\xspace}
\newcommand{\QPT}{\ensuremath{\mathsf{QPT}}\xspace}
\newcommand{\algo}{\mathcal}
\newcommand{\Sim}{\ensuremath{\mathsf{Sim}}\xspace}
\newcommand{\QCED}{\ensuremath{\mathsf{QCED}}\xspace}
\newcommand{\QCEDCC}{\ensuremath{\mathsf{QCED_{CC}}}\xspace}
\newcommand{\VDBQC}{\ensuremath{\mathsf{VDBQC}}\xspace}
\newcommand{\VDBQCCC}{\ensuremath{\mathsf{VDBQC_{CV}}}\xspace}
\newcommand{\Evaluate}{\ensuremath{\mathsf{Evaluate}}\xspace}
\newcommand{\StatePrep}{\ensuremath{\mathsf{StatePrep}}\xspace}
\newcommand{\PPT}{\ensuremath{\mathsf{PPT}}\xspace}
\newcommand{\pk}{\mathsf{pk}\xspace}
\newcommand{\TVD}{\mathsf{TVD}\xspace}
\newcommand{\sk}{\mathsf{sk}\xspace}
\newcommand{\sth}{\setft{~s.t.~}}
\newcommand{\tand}{\;\textnormal{~and~}\;}
\let\eps\varepsilon
\newcommand{\capprox}{\stackrel{c}{\approx}}
\renewcommand{\paragraph}{%
  \@startsection{paragraph}{4}%
  {\z@}{2.25ex \@plus 1ex \@minus .2ex}{-1em}%
  {\normalfont\normalsize\bfseries}%
}
\crefname{figure}{Protocol}{Protocols}
\Crefname{figure}{Protocol}{Protocols}
\DeclareSymbolFont{greekletters}{OML}{ntxmi}{m}{it}
\DeclareMathSymbol{\alpha}{\mathord}{greekletters}{"0B}
\DeclareMathSymbol{\beta}{\mathord}{greekletters}{"0C}
\DeclareMathSymbol{\gamma}{\mathord}{greekletters}{"0D}
\DeclareMathSymbol{\delta}{\mathord}{greekletters}{"0E}
\DeclareMathSymbol{\epsilon}{\mathord}{greekletters}{"0F}
\DeclareMathSymbol{\zeta}{\mathord}{greekletters}{"10}
\DeclareMathSymbol{\eta}{\mathord}{greekletters}{"11}
\DeclareMathSymbol{\theta}{\mathord}{greekletters}{"12}
\DeclareMathSymbol{\iota}{\mathord}{greekletters}{"13}
\DeclareMathSymbol{\kappa}{\mathord}{greekletters}{"14}
\DeclareMathSymbol{\lambda}{\mathord}{greekletters}{"15}
\DeclareMathSymbol{\mu}{\mathord}{greekletters}{"16}
\DeclareMathSymbol{\nu}{\mathord}{greekletters}{"17}
\DeclareMathSymbol{\xi}{\mathord}{greekletters}{"18}
\DeclareMathSymbol{\pi}{\mathord}{greekletters}{"19}
\DeclareMathSymbol{\rho}{\mathord}{greekletters}{"1A}
\DeclareMathSymbol{\sigma}{\mathord}{greekletters}{"1B}
\DeclareMathSymbol{\tau}{\mathord}{greekletters}{"1C}
\DeclareMathSymbol{\upsilon}{\mathord}{greekletters}{"1D}
\DeclareMathSymbol{\phi}{\mathord}{greekletters}{"1E}
\DeclareMathSymbol{\chi}{\mathord}{greekletters}{"1F}
\DeclareMathSymbol{\psi}{\mathord}{greekletters}{"20}
\DeclareMathSymbol{\omega}{\mathord}{greekletters}{"21} 
\DeclareMathSymbol{\vartheta}{\mathord}{greekletters}{"23}
\DeclareMathSymbol{\varpi}{\mathord}{greekletters}{"24}
\DeclareMathSymbol{\varrho}{\mathord}{greekletters}{"25}
\DeclareMathSymbol{\varsigma}{\mathord}{greekletters}{"26}
\DeclareMathSymbol{\varphi}{\mathord}{greekletters}{"27}
\begin{document}
\let\ref\cref

\sloppy

\title{Quantum cryptography with classical communication \\ 
\vspace{0.5cm}
{\Large Parallel remote state preparation for\\\vspace{-0.13cm} copy-protection, verification, and more}\vspace{0.5cm}}
\date{}

\author[1]{Alexandru Gheorghiu}
\author[2]{Tony Metger\footnote{Corresponding author: \href{mailto:tmetger@ethz.ch}{tmetger@ethz.ch}}}
\author[3]{Alexander Poremba}
\affil[1]{Institute for Theoretical Studies, ETH Z{\"u}rich, 8092 Z{\"u}rich, Switzerland}
\affil[2]{Institute for Theoretical Physics, ETH Z{\"u}rich, 8093 Z{\"u}rich, Switzerland}
\affil[3]{Computing and Mathematical Sciences, California Institute of Technology, CA 91125, United States}
\maketitle
\thispagestyle{empty}

\begin{abstract}
Quantum mechanical effects have enabled the construction of cryptographic primitives that are impossible classically. For example, quantum copy-protection allows for a program to be encoded in a quantum state in such a way that the program can be evaluated, but not copied. Many of these cryptographic primitives are two-party protocols, where one party, Bob, has full quantum computational capabilities, and the other party, Alice, is only required to send random BB84 states to Bob. In this work, we show how such protocols can \emph{generically} be converted to ones where Alice is fully classical, assuming that Bob cannot efficiently solve the LWE problem. In particular, this means that all communication between (classical) Alice and (quantum) Bob is classical, yet they can still make use of cryptographic primitives that would be impossible if both parties were classical. We apply this conversion procedure to obtain quantum cryptographic protocols with classical communication for unclonable encryption, copy-protection, computing on encrypted data, and verifiable blind delegated computation.

The key technical ingredient for our result is a protocol for \emph{classically-instructed parallel remote state preparation of BB84 states}. This is a multi-round protocol between (classical) Alice and (quantum polynomial-time) Bob that allows Alice to certify that Bob must have prepared $n$ uniformly random BB84 states (up to a change of basis on his space). Furthermore, Alice knows which specific BB84 states Bob has prepared, while Bob himself does not. Hence, the situation at the end of this protocol is (almost) equivalent to one where Alice sent $n$ random BB84 states to Bob. This allows us to replace the step of preparing and sending BB84 states in existing protocols by our remote-state preparation protocol in a generic and modular way.
\end{abstract}

\newpage 
\thispagestyle{empty}
{
\hypersetup{linkcolor=black}
\tableofcontents
}

\newpage

\section{Introduction}
A central distinction between classical and quantum information is that a classical string can always be copied, but a quantum state cannot: the \emph{no-cloning theorem} states that there cannot exist a procedure that produces the state $\rho \otimes \rho$ when given as input an arbitrary quantum state $\rho$~\cite{wootters1982single}. 
The first cryptographic protocols that made use of the no-cloning theorem were Wiesner's proposal to use quantum states as unforgeable banknotes~\cite{wiesner} and Bennett and Brassard's protocol for information-theoretically secure quantum key-distribution (the BB84 QKD protocol)~\cite{bb84}.
These protocols rely on the idea of a \emph{conjugate coding scheme}: classical information can be encoded into a quantum state in (at least) two incompatible bases, most commonly the standard basis $\{\ket{0}, \ket{1}\}$ and the Hadamard basis $\{\ket{+}, \ket{-}\}$, where $\ket{\pm} = \frac{1}{\sqrt{2}}(\ket{0} \pm \ket{1})$. These four states are commonly referred to as \emph{BB84 states}.
If we encode a bit $b \in \{0,1\}$ as either $\ket{b}$ or $\ket{(-)^b} = \frac{1}{\sqrt{2}}(\ket{0} + (-1)^b \ket{1})$, then an adversary who does not know which basis we chose for the encoding cannot create a copy of this quantum state.
Furthermore, if the adversary tries to measure the state, with probability $1/2$ they will choose the ``wrong'' measurement basis, which disturbs the state and means that the adversary's tampering can be detected.

There is an important conceptual difference between the BB84 protocol and Wiesner's quantum money scheme. The former addresses the problem of key-distribution,  which is a task that can also be achieved classically under computational assumptions using public-key cryptography~\cite{dh}.
In contrast, Wiesner's quantum money scheme achieves a functionality which is entirely impossible classically, even under computational assumptions.
Recently there has been renewed interest in this latter kind of application, i.e.~to use BB84 states to construct quantum cryptographic primitives that have no classical analogue.
Perhaps the most striking example of this is the idea of \emph{quantum copy-protection}~\cite{Aar2009}. Suppose that a vendor has created a piece of software (viewed as a function that maps some input to some output) and wants to allow a user to run it (i.e. to evaluate the function), while preventing the user from producing additional ``pirated'' copies of the original software.
Clearly, this is impossible classically: any piece of software is specified by a string of symbols, which can easily be copied.
Surprisingly, it has been shown that it is possible to encode certain narrow classes of functions in the form of a quantum state in such a way that a user can evaluate the function without being able to copy it~\cite{coladangelo2020quantum}.

Copy-protection and many related protocols require only limited quantum capabilities from one party, e.g.~the \emph{vendor} in the case of copy-protection: they only need to prepare random BB84 states and send them to the other party (e.g.~the \emph{user} in copy-protection), who has full quantum computational capabilities.
In particular, this requires a quantum channel between the two parties to send the BB84 states.
The purpose of this paper is to show that such protocols, where one party's quantum operations are limited to preparing and sending random BB84 states, can be converted into protocols where that party is \emph{fully classical}.
This \emph{dequantises} such protocols in the sense that all communication becomes classical.
To achieve this, we need to construct a protocol between a \emph{classical verifier} and a \emph{computationally bounded quantum prover} that achieves the same outcome as if the verifier had prepared and sent random BB84 states to the prover.
We call this task \emph{classically-instructed parallel remote state preparation of BB84 states}, or \emph{parallel RSP} for short.
Our protocol builds on techniques introduced in~\cite{mahadev, randomness, rsp} that allow the verifier to use post-quantum cryptography to constrain the actions of an untrusted (but computationally bounded) prover.
Proving soundness for this parallel RSP protocol is the main technical result of our work. 
We then use this result to dequantise a number of cryptographic protocols, namely unclonable quantum encryption, quantum copy-protection, quantum computing on encrypted data and blind verification of quantum computation. 

\subsection{Main results} \label{sec:main_results}
We start by first describing the soundness guarantee achieved by our parallel RSP protocol.
Intuitively, the goal of our protocol is to guarantee that the prover has prepared a quantum state of the form $H^{\theta_1} \proj{v_1} H^{\theta_1} \ot \dots \ot H^{\theta_n} \proj{v_n} H^{\theta_n}$, where $\vec{v}, \vec{\theta} \in \{0, 1\}^n$. Additionally, the prover should not have any information about $\vec v$ and $\vec \theta$ beyond what is contained in its BB84 states, while the verifier should know both $\vec v$ and $\vec \theta$.
Our protocol achieves a guarantee of this kind assuming the quantum-intractability of the \emph{Learning with Errors} (LWE) problem (with quantum advice, see \cref{rem:advice_state}) introduced by Regev~\cite{lwe}. Our main result is the following (see \ref{thm:final_rigidity} for the corresponding formal statement):
\begin{theorem}[Informal] \label{thm:inf}
There exists an interactive protocol between a classical verifier and a computationally bounded quantum prover such that the following holds assuming the quantum-intractability of LWE (with quantum advice).
Fix a number $n$ of BB84 states and let $W$ and $P$ be the verifier's and prover's systems at the end of the protocol, respectively.
Then there exists an isometry $V: P \to QP'$ (for $\H_Q \cong (\C^2)^{\ot n}$ and $P'$ arbitrary) and an additional (subnormalised) state $\alpha_{P'}$ such that for any basis choice $\vec \theta \in \bits^n$, the protocol's final state $\sigma_{WP}$ conditioned on the prover being accepted satisfies
\begin{align*}
p_{\textnormal{success}} V \sigma_{WP} V^\dagger \capprox_{1/\poly(n)} \frac{1}{2^n} \sum_{\vec v \in \bits^n} \proj{v}_W \otimes \left( H^{\theta_1} \proj{v_1} H^{\theta_1} \ot \dots \ot H^{\theta_n} \proj{v_n} H^{\theta_n}  \right) \ot \alpha_{P'} \,.
\end{align*}
Here, $p_{\textnormal{success}}$ is the prover's success probability in the protocol and $\capprox_{1/\poly(n)}$ denotes computational indistinguishability up to inverse polynomial error.
\end{theorem}

We make two remarks regarding this security guarantee.
Firstly, the theorem makes a statement about the joint state of the verifier's system $W$ and the prover's system $P$ after applying an isometry $V$ that \emph{only acts on the prover's space}.
This additional isometry is unavoidable: it represents the prover's freedom to use any basis of its choice on its space.
Hence, we cannot guarantee that the prover prepares BB84 states (in the standard basis), only that it prepares BB84 states up to a change of basis.
However, crucially this change of basis is \emph{independent of which BB84 state was supposed to be prepared}, i.e., $V$ is independent of $\vec v$ and $\vec \theta$.
Put differently, the theorem guarantees that the prover prepares one of $4^n$ possible states whose relation to each other is the same as the relation between the $4^n$ BB84 states.
This does not affect the utility of the prover's state for applications.
In fact, this freedom also exists if the verifier sent $n$ BB84 states to the prover via a quantum channel: the prover could apply an isometry $V$ to these states immediately upon receipt, but the security of any application using the BB84 states is not impacted by this.

Secondly, the theorem holds \emph{for any} basis choice $\vec \theta$, but \emph{on average} over the values $\vec v$. In other words, in the protocol, the verifier gets to choose the bases at will, but the values will be uniformly random and cannot be chosen by the verifier.
Furthermore, the only dependence on $\vec v$ and $\vec \theta$ in the prover's state is via the BB84 states.
This means that the protocol forces the prover to prepare these states ``blindly'', i.e., the prover does not know which BB84 states were actually prepared. In contrast, the verifier does know, because they chose $\vec \theta$ and are in possession of the system $W$, which contains information about $\vec v$.
This asymmetry of knowledge about the prover's state is the same as what is achieved by preparing and sending BB84 states through a quantum channel and is crucial for applications.

Having introduced our parallel RSP theorem, we can turn to its cryptographic applications.
We consider various cryptographic primitives that have previously been defined and constructed in a setting where one party sends random BB84 states to the other.
For each primitive, we give a formal definition of the ``classical-client version'' and show that this definition can be satisfied using our parallel RSP protocol as a building block. 
Since our parallel RSP protocols relies on the LWE assumption, so do the dequantised protocols we present here.
Furthermore, \ref{thm:inf} only guarantees the preparation of BB84 states up to an inverse polynomial error, so as a result, the dequantised protocols only have inverse polynomial security (see \ref{sec:discussion} for a discussion of this point).
Some of these primitives have previously been dequantised using an application-specific approach (and similarly relying on computational assumptions) \cite{rsp, qfactory, radian2022semi, hiroka2021quantum, kitagawa2021secure}; in contrast, our approach is \emph{generic} and simply uses RSP to replace the sending of BB84 states.
We give a short overview of the different applications and refer to \ref{part:applications} for details.

\paragraph{Unclonable quantum encryption.} As a first application of our parallel RSP protocol, we consider the notion of \emph{unclonable quantum encryption}. This cryptographic functionality was coined by Gottesman~\cite{gottesman2004uncloneable} and then formalised by Broadbent and Lord~\cite{BroadbentL19}. In a private-key unclonable quantum encryption scheme, a classical message is encrypted into a quantum state (the \emph{quantum ciphertext}) with the following property: given only a single quantum ciphertext, it is impossible to create two states that can later both be decrypted with access to the private key.
We consider an unclonable conjugate coding \emph{hybrid encryption} scheme which is inspired by the work of Broadbent and Lord: a plaintext $\vec m \in \bit^n$ is encrypted with a randomly chosen secret key $k=(\vec s,\vec \theta) \rand \bit^n \times \bit^n$ and randomness $\vec v \rand \bit^n$ into the quantum ciphertext given by $\Enc_k(\vec m) =  \bigotimes_{i=1}^n H^{\theta_i} \proj{v_i} H^{\theta_i} \ot \proj{\vec v \oplus \vec s \oplus \vec m}\,.$
To decrypt using the secret key $k = (\vec s,\vec \theta)$, one applies $H^{\theta_1} \otimes \dots \otimes H^{\theta_n}$ to the first half of the ciphertext, measures in the computational basis with outcome $\vec x$, and then uncomputes the one-time pad in the second half using $\vec x$ and $\vec s$.
The fact that this scheme is unclonable is a consequence of the \emph{monogamy of entanglement}~\cite{BroadbentL19,Tomamichel2013}.

To dequantise this protocol, we consider a scenario in which a classical client $\algo C$ wishes to delegate an unclonable ciphertext to a quantum receiver $\algo R$. As a first step, $\algo C$ and $\algo R$ run our parallel RSP protocol to delegate a collection of random BB84 states of the form $H^{\theta_1} \ket{v_1} \otimes \dots \otimes H^{\theta_n}\ket{v_n}$, where $\vec v,\vec \theta \in \bit^n$ are random strings known only to $\algo C$. Then, $\algo C$ can choose $\vec s \in \bits^n$ and output the string $\vec v \oplus \vec s \oplus \vec m$ and set $\vec k = (\vec s,\vec \theta)$ as the secret key.
With this choice of key, the delegated parallel BB84 states are exactly the ciphertext $\Enc_k(\vec m)$. Because the final output state of the protocol is computationally indistinguishable from a tensor product of BB84
states (known to the client), we can follow a similar proof as in~\cite{BroadbentL19} to obtain a classical-client unclonable encryption scheme with inverse-polynomial security.

\paragraph{Quantum copy-protection.} In quantum copy-protection (QCP), a vendor wishes to encode a program into a quantum state in a way that enables a recipient to run the program, but not to create functionally equivalent ``pirated'' copies. The notion of QCP was introduced by Aaronson~\cite{Aar2009}, who gave the first construction for unlearnable and efficiently computable functions in a strong quantum oracle model, which has since been improved to only requiring classical oracles~\cite{aaronson2020new}.
Recent work~\cite{coladangelo2020quantum} has also provided the first construction of QCP for compute-and-compare programs in the quantum random oracle model (QROM) as well as a scheme for multi-bit point functions in the QROM based on unclonable encryption with wrong-key detection (WKD) -- a property which enables the decryption procedure to recognise incorrect keys.

Our QCP scheme for multi-bit point functions combines our unclonable hybrid encryption scheme with the generic WKD transformation in the QROM proposed by Coladangelo et al.~\cite{coladangelo2020quantum}. The basic idea behind our QCP scheme is as follows. To encode a point function $P_{\vec y,\vec m}$ (which is defined as returning $\vec m$ on input $\vec y$ and $0^n$, otherwise) we simply output $\Enc_{\vec y}(\vec m)$ together with $h(\vec y)$, where $h$ is a suitable hash function which we model as a truly random function (in the QROM).
To evaluate the program on an input $\vec x \in \bit^{2n}$, we first check whether $\vec x$ hashes to $h(\vec y)$ under $h$. If true, we decrypt as in the aforementioned hybrid encryption scheme and recover $\vec m$. Otherwise, we output $0^n$.

We then show how to obtain a QCP scheme with a classical client through the use of our parallel RSP protocol for preparing random BB84 states, similar to our aforementioned classical-client unclonable encryption scheme. Our scheme enables a classical client to delegate a correct copy-protected program from the class of multi-bit point functions consisting of uniformly random marked inputs $\vec y$ and output strings $\vec m$ with inverse-polynomial security.

\paragraph{Quantum computing on encrypted data.}
Suppose a client wishes to perform some quantum computation, represented as the action of a quantum circuit $C$ on an input state $\ket{x}$, with $x \in \{0, 1\}^n$. 
For simplicity, we will assume the desired output is classical and corresponds to a computational basis measurement of $C \ket{x}$. 
The client only has limited quantum capability and therefore wishes to delegate the computation to a quantum server while ensuring the privacy of the input $\ket{x}$ and the output resulting from the measurement of $C \ket{x}$. Essentially, the client would like to send the server an encryption of the input and, after performing an interactive protocol, obtain an encryption of the output (which the client can decrypt, but the server cannot)\footnote{This also allows the client to hide the computation itself from the server by suitably encoding it as part of the input $x$ and taking $C$ to be a \emph{universal circuit}. When the primary goal of the protocol is to hide the computation, it is referred to as a \emph{blind quantum computing protocol}~\cite{arrighi2006blind, broadbent2009universal}.}.
This primitive is called quantum computing on encrypted data (QCED).

Many protocols for QCED with differing quantum requirements on the client have been developed (see~\cite{fitzsimons2017private} for a survey). Here we will focus on the protocol of Broadbent~\cite{broadbent2015delegating} which achieves QCED with a client that is only required to prepare BB84 states and send them to the server. This makes the protocol well-suited for dequantisation via our parallel RSP protocol.
Before explaining this dequantisation, we (informally) define what a QCED protocol \textit{with a classical client} should achieve. As before, the client's input is the string $x \in \{0, 1\}^n$ and the goal is to obtain the outcome of measuring $C \ket{x}$ in the computational basis. In contrast to before, this must be achieved using only \emph{classical} interaction with the quantum server. The requirement that the client's input must stay private is captured by the condition that after interacting with the client, it must be computationally intractable for the server to decide which one of two distinct inputs the client used.

Our QCED protocol with a classical client works as follows. The client first performs the parallel RSP protocol with the server, resulting in the preparation of BB84 states (or the client aborting). Provided the protocol succeeded, the client proceeds to run Broadbent's protocol~\cite{broadbent2015delegating} as if the server had received those BB84 states via a quantum channel. The security proof is straightforward. First, we know that after performing RSP the server's state is computationally indistinguishable from a tensor product of BB84 states (known to the client). Furthermore, the interaction in~\cite{broadbent2015delegating} preserves this computational indistinguishability. Hence, the server's state at the end of the protocol is indistinguishable from the state the server would have obtained by executing the protocol with random BB84 states and the security of our protocol follows from~\cite{broadbent2015delegating}.

\paragraph{Verifiable delegated blind quantum computation.} 

The final application we consider is verifiable delegated blind quantum computation (VDBQC). VDBQC is an interactive protocol between two parties, in this case denoted as the verifier and the prover. The verifier delegates a computation to the prover and, in addition to ensuring input-output privacy as in QCED, the protocol also ensures that the probability for the verifier to accept an incorrect output is small. In other words, if the prover deviates from the protocol and does not perform the verifier's instructed computation, the verifier should be able to detect this and abort with high probability.
As with QCED, a number of such protocols have been developed and we refer the reader to~\cite{gheorghiu2019verification} for a survey.

Here we focus on a protocol by Morimae~\cite{morimae2018blind}. This protocol achieves verifiability by combining a protocol for blind quantum computation (or QCED) with the \emph{history state construction}, which is a special encoding of a quantum circuit into a quantum state~\cite{kitaev2002classical, kempe2006complexity}. 
In Morimae's protocol, for a given circuit $C$ the verifier uses a QCED protocol to delegate to the prover the preparation of two such history states (one for $C$ and one for the complement of $C$, where the output qubit is negated). The verifier then requests these states from the prover and proceeds to measure them in the computational or Hadamard basis. This allows the verifier to determine the output of the computation. The history state construction guarantees that malicious behavior on the prover's part would be detected by the verifier's measurement. Additionally, the use of a QCED protocol ensures that the prover is ``blind'', i.e.~does not know which computation the verifier delegated.

To dequantise this protocol, we use our QCED protocol with a classical client to delegate the preparation of the two history states to the quantum prover. 
We then replace the verifier's measurements on this state by a \textit{measurement protocol} due to Mahadev~\cite{mahadev}, which allows the classical verifier to delegate these measurements to the prover in a way that forces the prover to report the correct outcomes. We thus obtain a VDBQC protocol with a classical verifier. Crucially, through the use of the classical client QCED protocol and Mahadev's measurement protocol, the prover is ``computationally blind'', i.e.~unable to distinguish which computation the verifier has performed. In contrast, Mahadev's verification protocol~\cite{mahadev} does not have this property.\footnote{In~\cite{rsp}, the authors also construct a blind verification protocol based on RSP. However, they approach the problem in a composable framework, which requires them to make an additional assumption on the prover (called the \emph{measurement buffer} in~\cite{rsp}). In contrast, our protocol requires no extra assumptions on the prover. We describe the issue with the measurement buffer assumption in more detail in~\ref{sec:prev_work}.}

\subsection{Related work} \label{sec:prev_work}
A number recent of works starting with~\cite{randomness,mahadev} have developed techniques that allow a classical verifier to use post-quantum cryptography to force an untrusted (but computationally bounded) quantum prover to behave in a certain way.
Here, we briefly describe these works and explain their relation to our parallel RSP protocol.

In a breakthrough result~\cite{mahadev}, Mahadev introduced a protocol that allows a classical verifier to delegate a quantum computation to a quantum computer and be able to verify the correctness of the result.
The key ingredient for this protocol is a \emph{measurement protocol}, which allows the verifier to securely delegate single-qubit measurements in the standard or Hadamard basis to a quantum prover, assuming that the prover cannot break the LWE assumption.
This can then be applied to so-called \emph{prepare-and-measure protocols}: if one has a protocol that involves a quantum prover preparing and sending a quantum state to the verifier and the verifier performing single-qubit measurements on this state, one can use Mahadev's measurement protocol to delegate these quantum measurements to the prover itself.
This yields a protocol in which the prover only sends classical measurement outcomes to the verifier, hence making the verifier classical.

This measurement protocol is in many ways similar to what we seek to do in this paper: it removes the need for quantum communication between a fully quantum prover and a verifier with very limited quantum capabilities (only measuring single qubits in the computational or Hadamard basis).
The difference to our work is that we are concerned with \emph{prepare-and-send protocols}, in which the verifier sends random BB84 states to the prover instead of receiving them.

It turns out that replacing the quantum communication of prepare-and-send protocols requires significantly stronger control over the untrusted prover.
At a high level, the reason is the following: for Mahadev's measurement protocol, it  suffices to show that \emph{there exists} a quantum state that is consistent with the distribution of  measurement outcomes reported by the prover, in the sense that the measurement outcomes for different bases could have been obtained by measurements on (copies of) the same state.
In contrast, if we want to replace the step of the verifier sending a physical quantum state to the prover, we need to show that the prover has \emph{actually constructed} a certain quantum state, not just that such a quantum state exists mathematically.\footnote{In fact, in~\cite{vidick2020classical} it was shown that Mahadev's measurement protocol does ensure that the prover \emph{knows} (in the sense of a proof of knowledge) the state it is measuring, not just that it exists mathematically. The notion of ``knowing'' a quantum state is quite subtle to define and we forego a detailed description here, but point out that this is weaker than showing that the prover actually constructed the state and (to the best of our knowledge) not sufficient to use Mahadev's protocol for prepare-and-measure scenarios.}
We give a more detailed description of what it means to ``actually construct'' a quantum state in \ref{sec:proof_sketch}.

The first classical protocol that provably forced a quantum prover to prepare a certain quantum state was the single-qubit RSP protocol of~\cite{rsp} (see also~\cite{qfactory} for a related result).
This protocol essentially achieves our informal theorem as stated above for a single qubit, i.e.~$n=1$.\footnote{The protocol in~\cite{rsp}  allows for the qubit to be prepared in one of 10 possible states} which includes the 4 BB84 states. Here, we only focus on the 4 BB84 states as this is the case we will deal with in our parallel RSP protocol. At first sight, it might seem as though a simple hybrid argument, which replaces each BB84 qubit with a (sequential) instance of~\cite{rsp}, suffices to achieve the multi-qubit task. However, the single-qubit RSP protocol of~\cite{rsp} only ensures that each BB84 qubit can be individually replaced by an RSP protocol up to a \emph{global} isometry. Because the prover's state can be entangled in arbitrary ways between intermediate applications of the protocol, it is difficult to justify that all of the individual replacements together form an actual $n$-qubit BB84 state; as we explain below, the fact that the protocol from~\cite{rsp} is composable does not remedy this situation, either. While some prior work~\cite{https://doi.org/10.48550/arxiv.1604.01586} showed that composable \emph{single-qubit} RSP suffices in the context of quantum verification, one would have to show a similar result for each application of interest. Our parallel RSP protocol, in contrast, can be used in a plug-and-play manner for many cryptographic protocols and applications. In addition, our protocol has fewer rounds than a sequential repetition of~\cite{rsp} and also immediately yields a \emph{proof of quantum space} (a certificate that the prover has a certain number of qubits).
We give a brief outline of~\cite{rsp} and its soundness proof in \ref{sec:proof_sketch}.

The main difficulty in going from~\cite{rsp} to our parallel RSP result is enforcing a tensor product structure on the prover's space: we would like to show that, if we execute multiple instances of a single-qubit RSP protocol in parallel, a successful prover must treat each of these copies independently.
Mathematically, this means that we need to be able to split the prover's a priori uncharacterised Hilbert space into a \emph{tensor product}, where each tensor factor is supposed to correspond to one instance of the RSP protocol.
This is a more demanding version of the classic question of parallel repetition: there, one is interested in showing that any prover's winning probability in the protocol decays in essentially the same way as it would for a prover who executes the instances independently. 
In contrast, we need to show that the prover really does execute the different instances independently in a physically meaningful sense.
We call this stronger requirement \emph{parallel rigidity}.

In~\cite{rsp}, the authors show that their protocol has composable security. This may suggest that one can obtain a parallel rigidity statement simply by composing the protocol with itself in sequence or in parallel. However, this is not the case because the composable security statement in~\cite{rsp} requires an additional assumption called a \emph{measurement buffer}, which effectively acts as a trusted intermediary between the verifier and the prover. A sequential or parallel composition of the protocol in~\cite{rsp} would utilise a different measurement buffer for each instance, thereby forcing the prover to treat the different instances in a (largely) independent way. In particular, this means that one already assumes a tensor product structure with $n$ separate qubits in the prover's space, whereas in our work enforcing this tensor product structure is the key technical challenge. 
For cryptographic applications, we do not want to place any such assumption on the prover and instead allow the prover to perform arbitrary global operations involving all instances. This is what our parallel RSP protocol achieves.
Furthermore, as shown in~\cite{badertscher2020security}, achieving a composable single-qubit RSP without the measurement buffer is impossible.
This means that one cannot hope to achieve parallel RSP by showing a stronger composable version of single-qubit RSP; instead, it is necessary to directly analyse parallel executions of the protocol, as we do in this paper.

The question of parallel rigidity has been studied extensively in the literature on quantum self-testing~\cite{coudron2016parallel, coladangelo2016parallel, linearity, natarajan2018low}, where one considers a setting of two non-communicating provers.
Unfortunately, those techniques are not immediately transferable to the setting we consider here, namely a single computationally bounded prover. 

Some progress towards the question of parallel rigidity for single computationally bounded provers was made in~\cite{self-testing}, which gives a protocol that allows a classical verifier to certify that a quantum prover must have prepared and measured a Bell state, i.e.~an entangled 2-qubit quantum state.
This has since been applied to device-independent quantum key distribution~\cite{metger2021device} and oblivious transfer~\cite{broadbent2021device}, and been extended to work for magic states~\cite{mizutani2021computational}.
The protocol from~\cite{self-testing} uses a 2-fold parallel repetition of~\cite{rsp} (with additional steps to allow for the certification of an entangled state, not just product states).
As part of their soundness proof,~\cite{self-testing} do show a kind of parallel rigidity result for 2 instances of the RSP protocol.
However, their method does not generalise to an $n$-fold parallel repetition without an exponential decay in parameters.
Hence, for our $n$-fold parallel rigidity proof, new techniques are needed.
We give a more detailed comparison between our new parallel rigidity proof and the method in~\cite{self-testing} at the end of \ref{sec:proof_sketch}.
We note that in independent concurrent work,~\cite{parallel_epr} also gave an 
$n$-fold parallel rigidity proof in the computational setting, but the class of states they deal with is different from random BB84 pairs and they do not consider the dequantisation of cryptographic protocols.

In addition to this line of work focused on rigidity statements, application-specific dequantisations were already considered for private-key quantum money~\cite{radian2022semi}, certifiable deletion of quantum encryption~\cite{hiroka2021quantum} and secure software leasing~\cite{kitagawa2021secure}. In all these cases the authors derived the desired security statement from properties of trapdoor claw-free functions, a cryptographic primitive which is also the basis of our RSP protocol. 
While this is less generic and modular than our approach and requires a new analysis for each application, it does have the advantage that one can obtain \emph{negligible} security, whereas with RSP we obtain inverse polynomial security. We comment more on the possibility of negligible security from RSP-like primitives in~\ref{sec:discussion}.

\subsection{Soundness proof for parallel RSP protocol} \label{sec:proof_sketch}
We give a short sketch of the soundness proof for our parallel RSP protocol.
The full protocol is described as \ref{fig:protocol_multi_round} (though our discussion in the introduction is restricted to \ref{fig:protocol_test}) and its soundness proof is given in \ref{sec:soundness}.
We briefly explain the difference between \ref{fig:protocol_test} and \ref{fig:protocol_multi_round}: \ref{fig:protocol_test} is a protocol to \emph{test} the prover, i.e.~in this protocol the prover is asked to prepare \emph{and measure} a quantum state, and  the verifier runs checks on the prover's answer.
The soundness statement for this protocol is a self-testing statement in the sense of~\cite{self-testing}, which characterises which states and measurements the prover used in the protocol.
Although we do not spell this out in \ref{sec:soundness}, it is easy to obtain an explicit self-testing statement from our proof.
In contrast, \ref{fig:protocol_multi_round} is a protocol for remote state preparation, so the prover is supposed to prepare, but not yet measure, a particular quantum state.
Instead, this quantum state will be used for other applications.
This means that we do not want to make a statement about how the prover measured its state, but rather what state remains in its quantum memory.
The soundness of \ref{fig:protocol_multi_round} follows from that of \ref{fig:protocol_test} via a statistical argument.
In the following, we focus on \ref{fig:protocol_test}.
We do not explain the protocol and the cryptographic primitives underlying it in detail; instead, we give a very high-level description of the relevant part of the soundness proof of the RSP protocol from~\cite{rsp} and then explain our method for proving a parallel rigidity statement based on that result.

The main cryptographic primitive underlying the RSP protocol is a so-called extended noisy trapdoor claw-free function (ENTCF) family, which can be constructed assuming the quantum hardness of LWE~\cite{lwe,mahadev}.
An ENTCF family is a family of functions indexed by a set of keys $\kg \cup \kf$.
$\kg$ and $\kf$ are disjoint sets of keys with the property that given a $k \in \kg \cup \kf$, it is computationally intractable to determine which set this key belongs to.
See~\cite[Section 4]{mahadev} for further details on ENTCF families.

In the RSP protocol from~\cite{rsp}, for a given \emph{basis choice} $\theta \in \bits$ (where ``0'' corresponds to the computational and ``1'' to the Hadamard basis), the verifier samples a key $k \in \mK_{\theta}$, alongside some trapdoor information $t$.
The verifier sends $k$ to the prover and keeps $t$ private.
The verifier and prover then interact classically; for us, the main point of interest is the last round of the protocol, i.e.~the last message from the verifier to the prover and back.
Let us denote the protocol's transcript up to the last round by $\ts$.
Before the last round, the remaining quantum state of an \emph{honest} prover is the single-qubit state $H^\theta \proj{v} H^\theta$ for $v \in \bits$.
From the transcript and the trapdoor information, the verifier can compute $v$; in contrast, the prover, who does not know the trapdoor, cannot efficiently compute $\theta$ or $v$.
In the last round, the verifier sends $\theta$ to the prover,  who returns $v' \in \bits$; the verifier then checks whether $v' = v$.
The honest prover would generate $v'$ by measuring its remaining qubit $H^\theta \proj{v} H^\theta$ in the basis $\theta$ and therefore always pass the verifier's check.

We can model this last round of the protocol (with a potentially dishonest prover) as follows:
at the start, the prover has a state $\sigma^{(\theta, v)}$, which it produced as a result of the previous rounds of the protocol.
For an honest prover, $\sigma^{(\theta, v)} = H^\theta \proj{v} H^\theta$.
Of course, this state can depend on all of $\ts$, but we only make the dependence on $\theta$ and $v$ explicit.
Upon receiving $\theta \in \bits$ the prover measures a binary observable $Z$ (if $\theta = 0$) or $X$ (if $\theta = 1$) and returns the outcome $v'$.
An honest prover would simply use the Pauli observables $Z = \sigma_Z$ and $X = \sigma_X$.
The key step in the proof of~\cite{rsp} is to show that, due to the properties of ENTCF families, for any (potentially dishonest) prover that is accepted with high probability, the observables $X$ and $Z$ must anti-commute when acting on the prover's state.
Then, \ref{thm:inf} (for $n = 1$) follows from known results~\cite{scarani-singlet, linearity, gowers_hatami}.

For our parallel RSP protocol we run $n$ independent copies of the protocol from~\cite{rsp} in parallel, except that the basis choice $\theta_i$ is the same for each copy.\footnote{The advantage of this is that a prover that succeeds with high probability on average over $\theta$ must also succeed with high probability for each $\theta$ individually. If we were to sample $\theta$ independently for each of the parallel copies we could not conclude that a prover succeeds with high probability for any particular choice of $\theta_1, \dots, \theta_n$ as there are exponentially many such choices.}
The prover's state before the last round of each copy of the RSP protocol is now denoted by $\sigma^{(\theta, \vec v)}$, where $\vec v \in \bits^n$ can be calculated by the verifier from the transcript $\ts$ by repeating the same calculation as above for each parallel copy.
Generalising from the single-qubit case, given $\theta\in\bits$ the prover performs a measurement to generate $\vec v \in \bits^n$, which we can describe by binary observables $Z_i, X_i$ (for $\theta = 0,1$ respectively) that correspond to the observable used to produce the $i$-th entry of $\vec v$.
(For an honest prover, $\sigma^{(\theta, \vec v)} = H^\theta \proj{v_1} H^\theta \ot \dots \ot H^\theta \proj{v_n} H^\theta$ and $Z_i$ is a Pauli-$Z$ measurement on the $i$-th qubit.)

The main challenge in the proof is to establish that the prover must treat all of the parallel copies of the RSP protocol independently, i.e.~to show that its (a priori uncharacterised) Hilbert space can be partitioned into $n$ identical subspaces, one for each copy of the protocol.
At first sight, it might look as though for this it suffices to show that $X_i$ and $Z_j$ (approximately) commute for all $i \neq j$.
However, this is not the case because any such commutation statement can only be shown in a special \emph{state-dependent distance}~\cite{vidick_thesis}, which does not allow us to combine individual commutation statements into the \emph{global} statement that the Hilbert space factorises into $n$ subspaces.
Instead, we need to consider the family $\{Z(\vec a) X(\vec b)\}_{\vec a, \vec b \in \bits^n}$ of $4^n$ binary observables, where $Z(\vec a) = Z_1^{a_1} \cdots Z_n^{a_n}$.
We then have to show that $\{Z(\vec a) X(\vec b)\}$ form an approximate representation of the Pauli group~\cite{gowers_hatami, thomas_fsmp}.\footnote{When we say ``Pauli group'' we always mean the Pauli group modulo complex conjugation, which is also sometimes called the Heisenberg-Weyl group.}
This means that when acting on the prover's (unknown) state $\sigma^{(\theta)}$ (where $\sigma^{(\theta)}$ is like $\sigma^{(\theta, \vec v)}$, but averaged over all $\vec v$), the operators $\{Z(\vec a) X(\vec b)\}$ behave essentially like Pauli operators. 
Formally, this means showing that on average over $\vec a, \vec b \in \bits^n$, 
\begin{align}
\tr{Z(\vec a) X(\vec b) Z(\vec a) X(\vec b) \sigma^{(\theta)}} \approx (-1)^{\vec a \cdot \vec b} \,. \label{eqn:pauli_intro}
\end{align}
This is the appropriate generalisation of the statement that $Z$ and $X$ anti-commute in the single-qubit case.
It is easy to check that \ref{eqn:pauli_intro} holds when $Z_i$ and $X_i$ are the Pauli observables.

Our proof of \ref{eqn:pauli_intro} has five main steps, which we briefly sketch here with references to the corresponding parts of the formal proof. 
\begin{itemize}[label=\arabic*., wide, labelindent=0pt,itemsep=0pt]
\item[\textbf{(1)}] Instead of working with the observables $X_i$, we define ``inefficient observables'' $\tilde X_i = (-1)^{v_i} X_i$, where $v_i$ is the $i$-th bit of the verifier's string $\vec v$ (\ref{def:observables}). $\tilde X_i$ is not an observable that an efficient prover can implement because it depends on $v_i$, which requires the trapdoor information to be computed efficiently.
Intuitively, while $X_i$ describes the prover's answer, $\tilde X_i$ describes whether that answer is accepted by the verifier.
This has the advantage that the state $\sigma^{(\theta=1)}$ (averaged over $\vec v$) of a successful prover is an approximate $+1$-eigenstate of $\tilde X_i$, but not of $X_i$ (\ref{lem:observables_succ_prob}).
\item[\textbf{(2)}] We extend the family of states $\{\sigma^{(\theta)}\}_{\theta \in \bits}$ to a larger family of ``counterfactual states'' $\{\sigma^{(\vec \theta)}\}_{\vec \theta \in \bits^n}$, which are defined as the states the prover would have prepared if the verifier had sent keys $k_i \in \mK_{\theta_i}$.
In \ref{fig:protocol_test} the basis choice is the same for all $i$, i.e.~$\vec \theta = \vec 0$ or $\theta=\vec 1$, so for other choices of $\vec \theta$ these states are never actually prepared.
However, they are still well-defined because for any prover in the actual protocol, we can fix that prover's operations (as a quantum circuit acting on a given input) and then consider what state those operations would produce if given keys with an arbitrary basis choice $\vec \theta$.
The reason these counterfactual states are useful is that we can show that, as a consequence of the properties of ENTCF families, the states $\{\sigma^{(\vec \theta)}\}_{\vec \theta}$ are computationally indistinguishable (\ref{sec:theta_ext}).
\item[\textbf{(3)}] We now want to show various commutation and anti-commutation relations for the observables $Z(\vec a)$ and $\tilde X(\vec b)$ (\ref{lem:anticomm_single_obs} and \ref{lem:anticomm_many_z}).
For example, we want to show that $Z_i$ and $\tilde X_i$ anti-commute, but $Z_i$ and $\tilde X_j$ commute (for $i \neq j$).
To show these relations, we make use of the counterfactual states $\sigma^{(\vec \theta)}$ in the following way:
for any particular relation, we can pick a $\vec \theta$ that makes showing this relation especially convenient.
For example, to show that $Z_i$ and $\tilde X_j$ commute, we would choose a $\vec \theta$ with $\theta_i = 0$ and $\theta_j = 1$ since the verifier can check the outcomes of ``$Z$-type observables'' for $\theta = 0$ and ``$X$-type observables'' for $\theta = 1$. Using the properties of ENTCF families, we can argue that the prover's measurements on these counterfactual states still yield outcomes that would pass the verifier's checks for each choice of $\theta_i$.
Based on this, we can show the desired relations for a ``convenient'' choice of counterfactual state $\sigma^{(\vec \theta)}$.
Then, we can relate these statements back to the prover's actual states $\sigma^{(\theta)}$ using the computational indistinguishability of $\{\sigma^{(\vec \theta)}\}$.
This is somewhat delicate because $\tilde X_i$ are inefficient, see \ref{lem:partial_lifting} for the precise statement.
\item[\textbf{(4)}] We can combine the various commutation and anti-commutation statements from the previous step to show that the observables $\{Z(\vec a) \tilde X(\vec b)\}$ behave like Pauli observables on $\sigma^{(\theta=1)}$, i.e.~we show \ref{eqn:pauli_intro} but with $\tilde X$ instead of $X$ (\ref{lem:approx_rep_tilde}).
This step relies on the fact that $\sigma^{(\theta=1)}$ is an approximate $+1$-eigenstate of $\tilde X(\vec b)$ for all $\vec b$; see the proof of \ref{lem:approx_rep_tilde} for details on how this simplifies the analysis.
\item[\textbf{(5)}] Since we now know that $\{Z(\vec a) \tilde X(\vec b)\}$ behave essentially like Pauli observables, we can define an explicit isometry $\tilde V$ which can be shown to map $\{Z(\vec a) \tilde X(\vec b)\}$ to the corresponding Pauli observables (\ref{lem:operator_rounding_tilde}).
This means that we have good control over these inefficient observables, and we know how the inefficient and efficient observables are related.
We can use this to define a modified isometry $V$ that maps the efficient observables $\{Z(\vec a) X(\vec b)\}$ to the corresponding Pauli observables (\ref{lem:operator_rounding_no_tilde}).
This is a stronger version of \ref{eqn:pauli_intro} and, combined again with the verifier's checks in the protocol and properties of ENTCF families, can be used to show that the prover must have prepared BB84 states (\ref{sec:bb84}).
\end{itemize}

We briefly comment on the relation between our soundness proof and that in \cite{self-testing}. 
At a high level, the soundness proof in \cite{self-testing} also shows a kind of ``parallel rigidity'' of two executions of a remote state preparation protocol.
However, their proof proceeds quite differently from ours: they first show that observables ``on the first qubit'' anti-commute, which allows them to make a partial statement about the prover's state. This in turn can be used to extend the statement about the prover's observables to two-qubit observables, which is finally used to prove a statement about the prover's two-qubit state.
This qubit-by-qubit approach is extremely costly in terms of parameters due to switching back and forth between making partial statements about the observables and state, and cannot reasonably be extended to $n$ qubits.
In contrast, we can make a global statement about the prover's $4^n$ possible observables without first characterising parts of the prover's state.
This allows us to prove a parallel rigidity statement for $n$ qubits without an exponential degradation of parameters.

\subsection{Discussion} \label{sec:discussion}

We have shown how a classical verifier can certify a tensor product of BB84 states in the memory of a quantum prover, assuming the quantum-intractability of the LWE problem. Importantly, the prover does not know which BB84 states it has prepared, whereas the verifier does. Hence, the result at the end of the protocol is as if the verifier had sent random BB84 states to the prover.
This allows us to dequantise a number of quantum cryptographic primitives, yielding a generic and modular way of translating these protocols to a setting where only classical communication is used.
We have demonstrated the versatility of this approach by applying it to unclonable encryption, quantum copy-protection, computing on encrypted data, and blind verification.
Naturally, we expect that other primitives that rely on BB84 states can also be dequantised using our approach.
Examples of this include quantum encryption with certified deletion~\cite{Broadbent_2020,Poremba22} and private key quantum money~\cite{wiesner,radian2022semi}.
We leave these and other applications to future work.

Apart from applying our technique to dequantise additional cryptographic primitives, our work raises a number of further open problems. 
Firstly, while our RSP primitive is based on the hardness of LWE, we can ask whether it is possible to achieve this functionality from weaker computational assumptions. For instance, would it be possible to perform an RSP-like protocol assuming only the existence of quantum-secure one-way functions? This is of particular interest because recent results have shown that secure two-party computation can be achieved from one-way functions and quantum communication~\cite{bartusek2021one, grilo2021oblivious}. These results are based on the fact that an oblivious-transfer protocol can be implemented from one-way functions and quantum communication that consists of BB84 states. However, an RSP primitive like ours would allow one to generically dequantise that quantum communication. Hence if RSP (with sufficiently strong parameters) can be obtained from quantum-secure one-way functions, then secure two-party computation can also be obtained from those functions, together with classical communication. In light of earlier work~\cite{impagliazzo1989limits, rudich1991use} we conjecture that this is impossible.
Formalising this intuition could lead to a better understanding of the minimum assumptions required for performing RSP-like protocols.

Secondly, a more technical open problem concerns the parameters of our rigidity theorem,~\ref{thm:inf}. As stated above, provided the prover accepts, the state the verifier certifies is $1/\poly(n)$-close to a tensor product of $n$ BB84 states (up to an isometry). The $1/\poly(n)$ closeness means that the soundness error of our dequantised protocols also scales as $1/\poly(n)$. It would be desirable to achieve \emph{negligible} soundness error, particularly when considering composable instances of these protocols.
This is not possible with the approach taken in this paper as the statistical argument in \cref{thm:final_rigidity} will necessarily introduce $1/\poly(n)$ factors.
However, it might be possible to circumvent an explicit RSP statement: the advantage of the RSP statement in our paper is that one can use it to dequantise existing protocols easily, but these existing protocols typically only use BB84 states because of their no-cloning properties.
Therefore, instead of using an RSP protocol to prepare those states, one could instead try to show a ``post-quantum cryptographic no-cloning property'' directly that could plausibly be used to dequantise these protocols while preserving negligible soundness.

Finally, we mention that our derivation of the parameters in the rigidity theorem (\ref{thm:final_rigidity}) is likely not optimal and could be optimised to improve the efficiency of our protocol. The situation here is similar to that of parallel self-testing in the multi-prover setting, with the first works having round complexity that scaled as a high-degree polynomial~\cite{reichardt2013classical} and more recent works achieving quasilinear scaling~\cite{linearity, coladangelo2019verifier}. It would be interesting to see whether ideas from these newer works are also applicable in the setting of parallel remote state preparation.

\paragraph{Acknowledgements.}
We thank Honghao Fu, Thomas Vidick, and Daochen Wang for helpful discussions, and Jeffrey Champion and John Wright for allowing us to use the results in \ref{sec:ineff_pauli}, which are based on unpublished joint work by them and the second author. We also thank Matty Hoban for pointing out a typo in an earlier draft. AG is supported by Dr.~Max R\"ossler, the Walter Haefner Foundation and the ETH Z\"urich Foundation.
TM acknowledges support from the QuantERA project ``eDict'' and the Air Force Office of Scientific Research (AFOSR) Grant No. FA9550-19-1-0202. AP
is partially supported by AFOSR YIP award number FA9550-16-1-0495 and the Institute for Quantum Information and Matter (an NSF Physics Frontiers Center; NSF Grant PHY-1733907).

During the preparation of this manuscript, we became aware of related independent work by Fu, Wang, and Zhao, who also show a parallel rigidity statement in the setting of a single computationally bounded prover. More specifically, they are able to self-test (in the sense of~\cite{self-testing}), i.e., certify the preparation and measurement of, two kinds of states: (i) $n$ parallel BB84 states subject to the restriction that only one of the qubits may be in the Hadamard basis and the remaining $n-1$ are in the computational basis, and (ii) parallel EPR pairs. They also present a dimension testing protocol based on their self-test. We refer to their paper~\cite{parallel_epr} for more details and thank them for their cooperation in publishing our respective results at the same time.

\newpage
\part{Classically-instructed parallel remote state preparation of BB84 states}
\section{Preliminaries}

We will follow the notation of \cite{self-testing}, and also make frequent use of results from the preliminaries of that paper.
We include a copy of the most frequently used notations and lemmas from that paper here for the reader's convenience.

\subsection{Notation} \label{sec:notation}
A function $n: \N \to \R_+$ is called negligible if $\lim_{\lambda \to \infty} n(\lambda) p(\lambda) = 0$ for any polynomial $p$. We use $\negl(\lambda)$ to denote an arbitrary negligible function. We use the notation $x \rand \algo S$ to denote that $x$ is being sampled from the set $\algo S$ uniformly at random. We denote the concatenation of two strings $x$ and $y$ as $x||y$.

We use $\H$ to denote an arbitrary finite-dimensional Hilbert space, and use indices to differentiate between distinct spaces. The map $\rm{Tr} : \mL(\H) \to \C$ denotes the trace, and ${\rm Tr}_B: \mL(\H_A \ot \H_B) \to \mL(\H_A)$ is the partial trace over subsystem $B$. $\pos(\H)$ denotes the set of positive semidefinite operators on $\H$, and $\mD(\H) = \{A \in \pos(\H)\,|\; \tr{A} = 1\}$ is the set of density matrices on $\H$. 

The single qubit Pauli operators are $\sigma_X = \left( \begin{smallmatrix} 0 & 1 \\ 1 & 0 \end{smallmatrix}  \right)$ and $\sigma_Z = \left( \begin{smallmatrix} 1 & 0 \\ 0 & -1 \end{smallmatrix}  \right)$. The Hadamard basis states are written as $\ket{(-)^b} = \frac{1}{\sqrt{2}} ( \ket{0} + (-1)^b \ket{1})$.

An observable on $\H$ is a Hermitian linear operator on $\H$. A binary observable is an observable that only has eigenvalues $\in \{-1, 1\}$. For a binary observable $O$ and $b \in \bits$, we denote by $O^{(b)}$ the projector onto the $(-1)^b$-eigenspace of $O$.
For any procedure which takes a quantum state as input and produces a bit (or more generally an integer) as output, e.g., by measuring the input state, we denote the probability distribution over outputs $b$ on input state $\psi$ by $\pr{b \, | \, \psi}$.

Our main protocol \ref{fig:protocol_test} will be (almost) a parallel repetition of a sub-protocol.
We make use of vector notation to denote tuples of items corresponding to the different copies of the sub-protocol.
For example, if each of the $n$ parallel sub-protocols requires a key $k_i$, we denote $\vec k = (k_1, \dots, k_n)$.
A function that takes as input a single value can be extended to input vectors in the obvious way: for example, if $f$ takes as input a single key $k$, then we write $f(\vec k)$ for the vector $(f(k_1), \dots, f(k_n))$.
We will also use $\vec 0$ and $\vec 1$ for the bitstrings consisting only of 0 and 1, respectively (and whose length will be clear from the context), and $\vec 1^i \in \bits^n$ for the bitstring whose $i$-th bit is 1 and whose remaining bits are 0.

\subsection{Extended trapdoor claw-free functions}

Our remote state preparation protocol is based on a cryptographic primitive called extended noisy trapdoor claw free function families (ENTCF families), which are defined in~\cite[Section 4]{mahadev} and can be constructed from the Learning with Errors assumption~\cite{lwe, randomness}
We use the same notation as in~\cite[Section 4]{mahadev}, with the exception that we write $\kg$ instead of $\mK_{\mG}$ and $\kf$ instead of $\mK_{\mF}$.
In addition, we also define the following functions for convenience:

\begin{definition}[Decoding maps, {\cite[Definition 2.1]{self-testing}}] \label{def:decoding_maps}
~
\begin{enumerate}
\item For a key $k \in \kg \cup \kf$, an image $y \in \mY$, a bit $b \in \bits$, and a preimage $x \in \mX$, we define $\Chk(k, y, b, x)$ to return 1 if $y \in \supp(f_{k,b}(x))$, and 0 otherwise. (This definition is as in~\cite[Definition 4.1 and 4.2]{mahadev}).
\item For a key $k \in \kg$ and a $y \in \mY$, we define $\hat{b}(k, y)$ by the condition $y \in \cup_x \supp(f_{k, \hat{b}(k, y)}(x))$. (This is well-defined because $f_{k, 0}$ and $f_{k, 1}$ form an injective pair.)
\item For a key $k \in \kg \cup \kf$ and a $y \in \mY$, we define $\hat{x}_b(k, y)$ by the condition $y \in \supp(f_{k, b}(\hat{x}_b(k, y)))$, and $\hat{x}_b(k, y) = \bot$ if $y \notin \cup_x \supp (f_{k, b}(x))$. For $k \in \kg$, we also use the shorthand  $\hat{x}(k, y)=\hat{x}_{\hat{b}(k, y)}(k, y)$.
\item For a key $k \in \kf$, a $y \in \mY$, and a $d \in \bits^w$, we define $\hat{u}(k, y, d)$ by the condition $d \cdot \left( \hat{x}_0(k, y) \oplus \hat{x}_1(k, y) \right) = \hat{u}(k, y, d)$.
\end{enumerate}
The above decoding maps applied to vector inputs are understood to act in an element-wise fashion. 
For example, for $\vec k \in \kf^{\times n}, \vec y \in \mY^{\times n}$, and $\vec d \in \bits^{w \times n}$, we denote by $\hat u(\vec k, \vec y, \vec d) \in \bits^n$ the string defined by $\left( \hat u(\vec k, \vec y, \vec d) \right)_i = \hat u(k_i, y_i, d_i)$.
\end{definition}

\subsection{Computational efficiency}

Our proof will make a computational assumption, meaning that we assume that a certain problem cannot be solved by an \emph{efficient} adversary.
To formally define what ``efficient'' means, one has to introduce a security parameter $\lambda$, which roughly speaking specifies the instance size of a problem.
We usually leave this security parameter implicit, i.e.~when we say a procedure is efficient, we mean that there is a family of procedures, one for each $\lambda$, whose complexity has the required scaling in $\lambda$.
Suppressing the explicit $\lambda$-dependence, we can give the following definition.

\begin{definition}[$\QPT$ procedure]
A procedure (with potentially quantum inputs and outputs and an implicit dependence on a security parameter $\lambda$) is called \emph{quantum polynomial-time} ($\QPT$) if there exists a (classical) polynomial-time Turing machine $M$ that, on input $1^\lambda$, outputs a description of a circuit (with a fixed gate set) that implements the procedure.
We often use the term ``efficient'' to mean ``implementable by a $\QPT$ procedure''.
\end{definition}


\subsection{Distance measures}

\begin{definition}[Norms]
Let $A \in \mL(\H)$ with singular values $\lambda_1, \dots, \lambda_n \geq 0$.
Then, the trace norm is defined as 
\begin{align*}
\norm{A}_1 = \sum_{i} \lambda_i \,,
\end{align*}
and the spectral norm is 
\begin{align*}
\norm{A}_\infty = \max \{\lambda_i\} \,.
\end{align*}
\end{definition}

\begin{definition}[Total variation distance]
Let $P, Q$ be probability distributions over a finite alphabet $\mX$.
Then, the total variation distance between $P$ and $Q$ is given by 
\begin{align*}
\TVD[P, Q] = \frac{1}{2} \sum_{x \in \mX} |P(x) - Q(x)| \,.
\end{align*}
\end{definition}

\begin{definition}[Approximate equality, {\cite[Definition~2.8 and Definition~2.14]{self-testing}}] \label{def:approx_dist}
We overload the symbol ``$\approx$'' in the following ways (leaving the dependence on the security parameter implicit in the quantities on the left):
\begin{enumerate}
\item {\bf Complex numbers:} For $a, b \in \C$ we define: 
\begin{equation}
a \approx_\eps b \iff \abs{a - b} = O(\eps) + \negl(\lambda) \,.
\end{equation}
\item {\bf Operators:} For $A, B \in \mL(\H)$, we define: 
\begin{equation}
A \approx_\eps B \iff \norm{A - B}_1^2 = O(\eps) + \negl(\lambda) \,.
\end{equation}
(We will most frequently use this for (possibly subnormalised) quantum states $A, B \in \pos(\H)$.)
\item {\bf Operators on a state:} For $A, B \in \mL(\H)$ and $\psi \in \pos(\H)$, we define: 
\begin{equation}
A \approx_{\eps, \psi} B \iff \tr{(A-B)^\dagger (A-B) \psi} = O(\eps) + \negl(\lambda) \,.
\end{equation}
\item {\bf Computationally indistinguishable states:} For two (families of not necessarily normalised) states $\psi, \psi' \in \pos(\H)$ 
which are computationally indistinguishable up to $\delta$ (i.e., no efficient distinguisher has advantage exceeding $\delta$ in distinguishing $\psi$ from $\psi'$\footnote{A distinguisher $\mA$ is a CPTP map from the input state to a classical single-qubit state (i.e.~a distribution over $\bits$). The distinguishability is the trace distance between $\mA(\psi)$ and $\mA(\psi')$. Note that this definition does not require $\psi$ and $\psi'$ to be normalised: for non-normalised states, the action of $\mA$ is still well-defined and will produce a non-normalised distribution over $\bits$ as output.}), we write:
\begin{equation}
\psi \capprox_{\delta} \psi' \,.
\end{equation}
\end{enumerate}
If we write $\approx_{0}$, we mean that the quantities are negligibly close. All asymptotic statements are understood to be in the limits $\eps \to 0$ and $\lambda \to \infty$.
It is easy to check that all the distance measures underlying the $\approx$-notation are indeed metrics and in particular satisfy the triangle inequality (see \cite[Section 2.4]{self-testing} for details).
\end{definition}

\begin{lemma}[{\cite[Lemma~2.16]{self-testing}}] \label{lem:state_dep_distance_expanded}
Let $\H_1, \H_2$ be Hilbert spaces with $\dim(\H_1) \leq \dim(\H_2)$ and $V: \H_1 \to \H_2$ an isometry. Let $A$ and $B$ be unitaries on $\H_1$ and $\H_2$, respectively, $\psi_1 \in \pos(\H_1)$, $\psi_2 \in \pos(\H_2)$, and $\eps \geq 0$. Then:
\begin{align}
\tr{V^\dagger B^\dagger V A \psi_1} \approx_{\eps} \tr{\psi_1}  \implies V^\dagger B  V  \approx_{\eps, \psi_1} A \,,\\
\tr{V A^\dagger V^\dagger B \psi_2} \approx_{\eps} \tr{\psi_2} \implies V A V^\dagger  \approx_{\eps, \psi_2} B \,.
\end{align} 
We will also frequently use this lemma in the simpler case where $V = \1$.
\end{lemma}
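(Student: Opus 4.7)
The plan is to prove both implications by direct algebraic expansion of the state-dependent distance squared, followed by a single application of the hypothesis, in the same spirit as the original proof in~\cite{self-testing}. The only subtlety beyond standard unitary manipulations is that $V$ is merely an isometry: we have $V^\dagger V = \1_{\H_1}$, but $V V^\dagger$ is only the projector onto the image of $V$ in $\H_2$, so we have the inequality $V V^\dagger \leq \1_{\H_2}$ rather than equality. Fortunately, the argument only ever needs an upper bound on the ``square'' term, so this is enough.

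For the first implication, I would expand
\begin{align*}
\tr{(V^\dagger B V - A)^\dagger (V^\dagger B V - A)\, \psi_1} = \tr{V^\dagger B^\dagger V V^\dagger B V \psi_1} + \tr{A^\dagger A \psi_1} - 2\operatorname{Re}\tr{V^\dagger B^\dagger V A \psi_1}.
\end{align*}
Unitarity of $A$ gives $\tr{A^\dagger A \psi_1} = \tr{\psi_1}$. For the square term, inserting $V V^\dagger \leq \1_{\H_2}$ between the two occurrences of $B$ and then using $B^\dagger B = \1_{\H_2}$ together with $V^\dagger V = \1_{\H_1}$ yields the operator inequality $V^\dagger B^\dagger V V^\dagger B V \leq V^\dagger B^\dagger B V = \1_{\H_1}$, hence $\tr{V^\dagger B^\dagger V V^\dagger B V \psi_1} \leq \tr{\psi_1}$. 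For the cross term, the hypothesis $\tr{V^\dagger B^\dagger V A \psi_1} \approx_\eps \tr{\psi_1}$ combined with the trivial observation $|\operatorname{Re}(z) - r| \leq |z - r|$ for real $r$ yields $\operatorname{Re}\tr{V^\dagger B^\dagger V A \psi_1} \geq \tr{\psi_1} - O(\eps) - \negl(\lambda)$. Summing the three estimates gives $\tr{(V^\dagger B V - A)^\dagger (V^\dagger B V - A) \psi_1} \leq 2\tr{\psi_1} - 2\tr{\psi_1} + O(\eps) + \negl(\lambda) = O(\eps) + \negl(\lambda)$, which is precisely the definition of $V^\dagger B V \approx_{\eps,\psi_1} A$.

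The second implication is proved by an entirely analogous computation carried out in $\H_2$: expand $\tr{(V A V^\dagger - B)^\dagger (V A V^\dagger - B) \psi_2}$, use $B^\dagger B = \1_{\H_2}$, $V^\dagger V = \1_{\H_1}$, and $A^\dagger A = \1_{\H_1}$ to reduce the square term to $\tr{V V^\dagger \psi_2}$, bound this by $\tr{\psi_2}$ via $V V^\dagger \leq \1_{\H_2}$, and control the cross term using the hypothesis after noting that $\operatorname{Re}\tr{B^\dagger V A V^\dagger \psi_2} = \operatorname{Re}\tr{V A^\dagger V^\dagger B \psi_2}$ (the two quantities are complex conjugates of one another). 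There is no substantial obstacle here: the whole argument amounts to a handful of lines of operator manipulation, and the only step requiring a moment's care is making sure that the isometry inequality $V V^\dagger \leq \1$ is always applied in the direction producing the upper bound needed to close the estimate.
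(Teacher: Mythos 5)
Your proof is correct and follows the natural approach (expanding the state-dependent distance squared, using unitarity for the constant and diagonal terms, the operator inequality $VV^\dagger\le\1$ for the isometry term, and the hypothesis for the cross term), which is exactly how \cite[Lemma~2.16]{self-testing} is proved; the paper itself only cites the lemma rather than reproving it. You also correctly handle the generalisation from binary observables to arbitrary unitaries flagged in the paper's remark, where the nontrivial point is that the two cross terms are complex conjugates of each other so only a real-part bound is available — which you obtain from $|\operatorname{Re}(z)-r|\le|z-r|$.
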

\begin{remark}
In \cite[Lemma~2.16]{self-testing}, this lemma is only stated for the case where $A$ and $B$ are binary observables (i.e.~are Hermitian in addition to being unitary), but it is easy to check that the proof of \cite[Lemma~2.16]{self-testing} also goes through in the more general case stated here.
\end{remark}

\begin{lemma}[Replacement lemma, {\cite[Lemma~2.21]{self-testing}}] \label{lem:replace_in_trace}
Let $\psi \in \pos(\H)$, and $A, B, C \in \mL(\H)$. If $A \approx_{\eps, \psi} B$ and $\norm{C}_\infty = O(1)$, then
\begin{align}
\tr{C A \psi} \approx_{\eps^{1/2}} \tr{C B \psi} \,, \\
\tr{A C \psi} \approx_{\eps^{1/2}} \tr{B C \psi} \,.
\end{align}
\end{lemma}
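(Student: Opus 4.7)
The plan is to derive both approximations from a single Cauchy--Schwarz estimate on the Hilbert--Schmidt inner product, using the hypothesis $A \approx_{\eps,\psi} B$ in the equivalent form $\|(A-B)\sqrt{\psi}\|_2^2 = \tr{(A-B)^\dagger(A-B)\psi} = O(\eps)+\negl(\lambda)$.

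For the first inequality, I would rewrite the difference as
$\tr{CA\psi}-\tr{CB\psi}=\tr{C(A-B)\psi}$,
split $\psi = \sqrt{\psi}\sqrt{\psi}$, and use cyclicity of the trace to recast this as $\tr{\sqrt{\psi}\,C(A-B)\sqrt{\psi}}$. Interpreting this as the Hilbert--Schmidt pairing $\langle \sqrt{\psi}C^\dagger,\,(A-B)\sqrt{\psi}\rangle_{\mathrm{HS}}$, Cauchy--Schwarz gives
\begin{align*}
|\tr{C(A-B)\psi}|^2 \;\leq\; \tr{CC^\dagger\psi}\cdot\tr{(A-B)^\dagger(A-B)\psi}\,.
\end{align*}
The first factor is at most $\|C\|_\infty^2\,\tr{\psi}=O(1)$ by the operator inequality $CC^\dagger \leq \|C\|_\infty^2\,\1$ and the assumption that $\psi$ has bounded trace, while the second factor is $O(\eps)+\negl(\lambda)$ by hypothesis. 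Taking square roots (and noting that the square root of a negligible function is negligible) then yields $|\tr{C(A-B)\psi}| = O(\eps^{1/2})+\negl(\lambda)$, which is exactly $\tr{CA\psi}\approx_{\eps^{1/2}}\tr{CB\psi}$.

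For the second inequality $\tr{AC\psi} \approx_{\eps^{1/2}} \tr{BC\psi}$, I would use the same template: write the difference as $\tr{(A-B)C\psi}$, apply cyclicity together with the splitting $\psi = \sqrt{\psi}\sqrt{\psi}$, and invoke Cauchy--Schwarz in the orientation that makes the factor $(A-B)\sqrt{\psi}$ reappear so that the state-dependent distance bound applies directly. The operator-norm bound on $C$ handles the other factor, and the same square-root step produces the $\eps^{1/2}$ scaling.

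The main obstacle --- more of a technicality than a genuine difficulty --- is bookkeeping: one has to pick the Cauchy--Schwarz splitting so that the $A-B$ factor appears as $(A-B)\sqrt{\psi}$, which is the form matching the one-sided definition of $\approx_{\eps,\psi}$. Everything else is standard manipulation using cyclicity of the trace, the operator-ordering inequality $CC^\dagger \leq \|C\|_\infty^2\,\1$, and the fact that $\sqrt{\cdot}$ preserves negligibility.
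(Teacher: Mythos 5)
Your proof of the first inequality is sound: the decomposition $\tr{C(A-B)\psi}=\tr{\sqrt{\psi}\,C(A-B)\sqrt{\psi}}$ and the Cauchy--Schwarz bound
\begin{align*}
\big|\tr{C(A-B)\psi}\big|^2 \;\leq\; \tr{CC^\dagger\psi}\cdot\tr{(A-B)^\dagger(A-B)\psi}
\end{align*}
are exactly what is needed, and $\tr{CC^\dagger\psi}\leq\norm{C}_\infty^2\tr{\psi}$ finishes it. (A small slip: the first element of your Hilbert--Schmidt pair should be $C^\dagger\sqrt{\psi}$, not $\sqrt{\psi}C^\dagger$, since you need $\tr{\sqrt{\psi}C(A-B)\sqrt{\psi}}=\langle C^\dagger\sqrt{\psi},(A-B)\sqrt{\psi}\rangle_{\mathrm{HS}}$; your stated bound is nonetheless the right one.)

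The second inequality is where the plan breaks down, and the issue is precisely the ``bookkeeping'' you describe as a technicality. For $\tr{(A-B)C\psi}=\tr{\sqrt{\psi}(A-B)C\sqrt{\psi}}$, there is \emph{no} cyclic rearrangement that makes the factor $(A-B)\sqrt{\psi}$ appear; the only way to isolate the difference operator next to $\sqrt{\psi}$ produces $\sqrt{\psi}(A-B)$, and Cauchy--Schwarz then yields the factor $\norm{\sqrt{\psi}(A-B)}_2^2=\tr{(A-B)(A-B)^\dagger\psi}$, which is the trace with the \emph{opposite} operator ordering and is not controlled by the one-sided hypothesis $\tr{(A-B)^\dagger(A-B)\psi}=O(\eps)$. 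In fact, for arbitrary $A,B\in\mL(\H)$ the second inequality fails: take $\psi=\proj{0}$, $A-B=\ket{0}\bra{1}$, $C=\ket{1}\bra{0}$, so that $\tr{(A-B)^\dagger(A-B)\psi}=0$ and $\norm{C}_\infty=1$, yet $\tr{(A-B)C\psi}=1$. What rescues the lemma in every application in the paper is that $A$ and $B$ are Hermitian (observables or projectors), so $A-B$ is normal and $\tr{(A-B)(A-B)^\dagger\psi}=\tr{(A-B)^\dagger(A-B)\psi}$. A complete proof should either invoke this symmetry explicitly (or, equivalently, assume $A-B$ normal), rather than asserting that the same Cauchy--Schwarz orientation works.
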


\section{Protocol description and completeness}

In this section, we introduce our protocol for classically instructed parallel remote state preparation.
The formal description of the protocol is given as \ref{fig:protocol_multi_round}.
On a high-level, \ref{fig:protocol_multi_round} works as follows: the verifier first runs a number of \emph{test rounds} (\ref{fig:protocol_test}), where the prover is asked to measure its entire quantum state.
These test rounds are used by the verifier to check whether the prover behaves as intended.
Once the verifier is convinced of this, the verifier runs a preparation round (\ref{fig:protocol_prep}).
Test and preparation rounds are indistinguishable from the point of view of the prover, except that unlike in a test round, in a preparation the prover is not asked to measure its final state. 
Instead, the final state can then be used in another protocol, as we will do in \ref{part:applications}.

The main guarantee achieved by our protocol is the following: if the prover passed the test rounds with high probability, then the verifier knows that (up to a change of basis) the prover must have prepared $n$ parallel random BB84 states.
The formal soundness theorem is given in \ref{thm:final_rigidity}.
Before we turn to the soundness proof, we show completeness, i.e., that an honest prover can succeed in our protocol with overwhelming probability.

\begin{table}[p!]
\begin{longfbox}[breakable=false, padding=1em, padding-right=1.8em, padding-top=1.2em, margin-top=1em, margin-bottom=1em, background-color=gray!20]
\begin{protocol} {\bf Test round protocol.} \label{fig:protocol_test} \end{protocol}
Let $\lambda \in \N$ be the security parameter, $(\mF, \mG)$ an ENTCF family, and $n = \poly(\lambda)$ the number of BB84 states that the verifier wishes to prepare.
\begin{enumerate}[label=\arabic*.]
 \item The verifier selects a uniformly random basis $\theta \rand \bits$, where 0 corresponds to the computational and 1 to the Hadamard basis.
\item The verifier samples keys and trapdoors $(k_1, t_{k_1}; \dots k_n, t_{k_n})$ by computing $(k_i, t_{k_i}) \leftarrow \Gen_{\mK_{\theta}}(1^\lambda)$.
The verifier then sends $(k_1, \dots k_n)$ to the prover (but keeps the trapdoors $t_{k_i}$ private).
\item The verifier receives $(y_1, \dots, y_n) \in \mY^{\times n}$ from the prover.
\item The verifier selects a round type $\in \{$preimage round, Hadamard round$\}$ uniformly at random and sends the round type to the prover. 
\begin{enumerate}
\item For a \emph{preimage round}: The verifier receives $(b_1, x_1; \; \dots b_n, x_n)$ from the prover, with $b_i \in \bits$ and $x_i \in \mX$. The verifier sets $\mathtt{flag} \leftarrow \mathtt{fail_{Pre}}$ if $\Chk(k_i, y_i, b_i, x_i) = 0$.
\item For a \emph{Hadamard round}: The verifier receives $d_1, \dots d_n \in \bits^w$ from the prover (for some $w$ depending on the security parameter). The verifier sends $q = \theta$ to the prover, and receives answers $v_1, \dots v_n \in \bits$. The verifier performs the following checks:

\begin{tabular}{p{4cm} p{8cm}}
Case & Verifier's check \\
\hline
$q = \theta = 0$ & Set $\mathtt{flag} \leftarrow \mathtt{fail_{Had}}$ if $\hat{b}(k_i, y_i) \neq v_i$ for some $i$. \\
$q = \theta = 1$ & Set $\mathtt{flag} \leftarrow \mathtt{fail_{Had}}$ if $\hat{u}(k_i, y_i, d_i) \neq v_i$. 
\end{tabular}
\end{enumerate}
\end{enumerate}
\emph{Note.} We denote the ``question'' separately by $q$ (even though here we always have $q = \theta$) because when the variant of this protocol in \ref{fig:protocol_prep} is used in the context of another cyrptographic task, the verifier can also send questions $q$ which are different from $\theta$.
\end{longfbox}
\end{table}

\begin{table}[p!]
\begin{longfbox}[breakable=false, padding=1em, padding-right=1.8em, padding-top=1.2em, margin-top=1em, margin-bottom=1em, background-color=gray!20]
\begin{protocol} {\bf Preparation round protocol.} \label{fig:protocol_prep}\end{protocol}
Let $\lambda \in \N$ be the security parameter, $(\mF, \mG)$ an ENTCF family, and $n = \poly(\lambda)$ the number of BB84 states that the verifier wishes to prepare.
\begin{enumerate}[label=\arabic*.]
\item The verifier selects bases $\vec \theta \rand \bits^n$, where 0 corresponds to the computational and 1 to the Hadamard basis.
\item The verifier samples keys and trapdoors $(k_1, t_{k_1}; \dots k_n, t_{k_n})$ by computing $(k_i, t_{k_i}) \leftarrow \Gen_{\mK_{\theta_i}}(1^\lambda)$.
The verifier then sends $(k_1, \dots k_n)$ to the prover (but keeps the trapdoors $t_{k_i}$ private).
\item The verifier receives $y_1, \dots y_n \in \mY$ from the prover.
\item The verifier sends ``Hadamard round'' to the prover as the round type.
\item The verifier receives $d_1, \dots d_n \in \bits^w$ from the prover (for some $w$ depending on the security parameter). 
The verifier computes a string $\vec v$ according to 
\begin{equation*}
v_i = 
\begin{cases}
\hat b(k_i, y_i) & \text{if } \theta_i = 0 \,, \\
\hat u(k_i, y_i, d_i) & \text {if } \theta_i = 1 \,.
\end{cases}
\end{equation*}
\end{enumerate}
\end{longfbox}
\end{table}

\begin{table}[p!]
\begin{longfbox}[breakable=false, padding=1em, padding-right=1.8em, padding-top=1.2em, margin-top=1em, margin-bottom=1em, background-color=gray!20]
\begin{protocol} {\bf Multi-round protocol for preparation of BB84 states.}\label{fig:protocol_multi_round} \end{protocol}
Let $\lambda \in \N$ be the security parameter, $(\mF, \mG)$ an ENTCF family, $n=\poly(\lambda)$ the number of BB84 states that the verifier wishes to prepare, $N = M^2$ the maximum number of test rounds (for $M \in \N)$, and $\delta$ an error tolerance parameter. For $j \in [M]$ we denote by $B_j = \{(j-1)M+1, \dots, jM\}$ the $j$-th ``block'' of $M$ rounds.
\begin{enumerate}[label=\arabic*.]
\item The verifier (privately) samples $S \rand \{0, \dots, M-1\}$ (the number of $M$-round blocks of test rounds that will be performed).
\item The verifier performs $S M$ executions of \ref{fig:protocol_test} with the prover. The verifier aborts if \emph{for any} $j \in [S]$, the fraction of rounds in $B_j$ for which $\mathtt{flag} = \mathtt{fail_{Pre}}$ or $\mathtt{flag} = \mathtt{fail_{Had}}$ exceeds $\delta$.
\item The verifier (privately) samples $R \rand [M]$ and executes \ref{fig:protocol_test} with the prover $R-1$ times. Then, the verifier executes \ref{fig:protocol_prep} with the prover and records the basis choice $\vec \theta$ and the string $\vec v$ from that execution.
\end{enumerate}
\end{longfbox}
\end{table}

\begin{proposition}\label{prop:honest-prover-correctness}
There exists an efficient quantum prover that is accepted in \ref{fig:protocol_multi_round} with probability negligibly close to 1 in the security parameters $\lambda$ (for parameter choices $n$ at most polynomial in $\lambda$ and $\delta$ at least inverse polynomial in $\lambda$).
Furthermore, the final state of such a prover at the end of \ref{fig:protocol_multi_round} is 
\begin{align}
\bigotimes_{i \in [n]} H^{\theta_i} \proj{v_i} H^{\theta_i} \,, \label{eqn:honest_prover_final_state}
\end{align}
where $\vec v$ and $\vec \theta$ are the strings recorded by the verifier in \ref{fig:protocol_multi_round}.
\end{proposition}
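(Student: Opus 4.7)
The plan is to exhibit the standard honest strategy for ENTCF-based protocols, verify that it passes each individual check in \ref{fig:protocol_test} with probability negligibly close to $1$, and then combine this with a union bound over the polynomially many rounds in \ref{fig:protocol_multi_round} to conclude.

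First I would describe the honest prover. Upon receiving keys $\vec k = (k_1,\dots,k_n)$, for each $i$ the prover prepares $\ket{+}\sum_{x\in\mX}\ket{x}$ on registers $B_iX_i$, coherently evaluates $f_{k_i, b}(x)$ into an ancilla register $Y_i$, and measures $Y_i$ to obtain $y_i$. By the claw-free/injective-pair structure of ENTCF families, the post-measurement state on $B_iX_i$ is (negligibly close to) $\frac{1}{\sqrt{2}}\sum_b\ket{b}\ket{\hat x_b(k_i, y_i)}$ for $k_i \in \kf$, and $\ket{\hat b(k_i, y_i)}\ket{\hat x(k_i, y_i)}$ for $k_i \in \kg$. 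For a preimage round, the prover measures $B_iX_i$ in the computational basis and returns $(b_i, x_i)$; for a Hadamard round, the prover measures $X_i$ in the Hadamard basis to obtain $d_i$, leaving behind the single qubit $B_i$, and upon receiving $q$ measures $B_i$ in the basis indicated by $q$ to obtain $v_i$.

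Next I would check the per-round completeness. For a preimage round, the measured $(b_i, x_i)$ is, up to the negligible ENTCF error, a valid preimage of $y_i$ under $f_{k_i, b_i}$, so $\Chk(k_i, y_i, b_i, x_i) = 1$ by \ref{def:decoding_maps}. For a Hadamard round with $q = \theta = 0$ (so $k_i \in \kg$), the residual qubit is $\ket{\hat b(k_i, y_i)}$, whose computational-basis measurement deterministically returns $v_i = \hat b(k_i, y_i)$. For $q = \theta = 1$ (so $k_i \in \kf$), a direct calculation using $d_i\cdot(\hat x_0 \oplus \hat x_1) = \hat u(k_i, y_i, d_i)$ shows that the residual qubit is $H\ket{\hat u(k_i, y_i, d_i)}$, whose Hadamard-basis measurement deterministically returns $v_i = \hat u(k_i, y_i, d_i)$. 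By a union bound over the $n = \poly(\lambda)$ parallel instances, the verifier's check is passed in a single execution of \ref{fig:protocol_test} with probability $1 - \negl(\lambda)$.

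For \ref{fig:protocol_multi_round}, the total number of test-round executions is at most $N + M = O(M^2) = \poly(\lambda)$ provided $M = \poly(\lambda)$. Another union bound then shows that, except with negligible probability, every single test round passes, so in particular the fraction of failing rounds in each block $B_j$ equals $0$, which is at most any inverse-polynomial $\delta$. Conditioned on this event, the preparation round of \ref{fig:protocol_prep} is run by the honest prover as above: the verifier's recorded string $\vec v$ is precisely $(\hat b(k_i, y_i))_i$ on coordinates with $\theta_i = 0$ and $(\hat u(k_i, y_i, d_i))_i$ on coordinates with $\theta_i = 1$, matching exactly the computational/Hadamard outcome the honest prover's residual qubit would yield. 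Hence the leftover state on the prover's $n$ single-qubit registers is exactly $\bigotimes_{i\in[n]} H^{\theta_i}\proj{v_i}H^{\theta_i}$, as claimed in \ref{eqn:honest_prover_final_state}.

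The only mildly delicate point is the bookkeeping of the small ``noise'' inherent to ENTCF families (arising because $f_{k, b}(x)$ is a distribution rather than a point mass), but since this noise contributes only a negligible error per coordinate, it is absorbed into the $\negl(\lambda)$ term after the two union bounds. No step requires a quantitatively tight estimate, so the proof is essentially a verification that the honest prover of \cite{mahadev} acts correctly in parallel.
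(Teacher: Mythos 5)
Your proposal is correct and takes essentially the same route as the paper: describe the standard ENTCF honest strategy per key, check that each of the four subcases (preimage, $q=\theta=0$, $q=\theta=1$, preparation round) passes with probability $1-\negl(\lambda)$, and union-bound over the $n$ parallel coordinates and the $\poly(\lambda)$ rounds. The paper's proof is slightly less explicit about the final union bound (it simply observes that a per-round success probability of $1-\negl$ suffices, noting in a footnote that Hoeffding would handle the weaker bound $1-\delta/2$), but the substance is identical.
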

\begin{proof}
We describe the honest behaviour.
An honest prover behaves the same in each of the test rounds in step 2.~of \ref{fig:protocol_multi_round}.
Hence, to show that an honest prover succeeds in \ref{fig:protocol_multi_round} with probability negligibly close to 1, it suffices to describe an honest strategy for \ref{fig:protocol_test} that succeeds with probability negligibly close to 1.\footnote{In fact, it would suffice to have an honest strategy for \ref{fig:protocol_test} that succeeds with probability $1 - \delta/2$. Then, as long as we choose $N = \Omega(\lambda^c)$ for some $c > 0$, a standard application of Hoeffding's inequality shows success in \ref{fig:protocol_multi_round} with probability negligibly close to 1.}
In \ref{fig:protocol_test}, the prover receives $n$ keys $k_1, \dots, k_n$ and returns answers for each key $k_i$ individually. 
Since the verifier's checks are independent for each $i$, we only need to describe an honest procedure for one key $k_i$ that succeeds in the verifier's checks for that $i$ with probability negligibly close to~1.

The honest behaviour for a single key $k_i$ is essentially the same as in \cite[Lemma 3.4]{rsp} and \cite[Proposition 3.1]{self-testing}.
For the sake of completeness, we spell out the details, but note that most of the proof is simply copied from \cite[Proposition 3.1]{self-testing}.

Given a key $k_i \in \mK_{\theta}$, the prover prepares the state 
\begin{align*}
    \frac{1}{\sqrt{2 \cdot \abs{\mX}}} \sum_{b \in \bits} \sum_{x \in \mX, \, y \in \mY} \sqrt{f_{k_i, b}(x)(y)} \ket{b} \ket{x} \ket{y} \,.
\end{align*}

Preparing this state can be efficiently done (up to negligible error) using the $\Samp$ procedure from the definition of ENTCF families (\cite[Definition 3.1]{randomness} and \cite[Definition 4.2]{mahadev}). The prover then measures the ``image register'' (i.e., the register that store $y$) to obtain an image $y_i \in \mY$ and sends this back to the verifier. The post-measurement for each $i$ is 
\begin{align}
\begin{cases}
\ket{\hat{b}(k_i, y_i)} \ket{\hat{x}(k_i, y_i)} & \text{if $k_i \in \kg$}\,, \\
\frac{1}{\sqrt{2}} \left( \ket{0} \ket{\hat{x}_0(k_i, y_i)} + \ket{1} \ket{\hat{x}_1(k_i, y_i)}  \right) & \text{if $k_i \in \kf$}\,.
\end{cases}
\label{eqn:hones_post_y_states}
\end{align}

If the verifier selects a preimage round, the prover measures both registers in the computational basis and returns the result. From the states in \ref{eqn:hones_post_y_states} it is clear that the prover succeeds with probability negligibly close to 1 in the preimage round.

If the verifier selects a Hadamard round, the prover measures the ``$x$-register'' in the Hadamard basis to obtain $d_i$ and returns this to the verifier. 
We introduce the shorthand $b_i = \hat{b}(k_i, y_i)$ and $u_i = \hat u(k_i, y_i, d_i)$.
At this point, the prover's state for each $i$ is (up to a global phase)
\begin{align}
\begin{cases}
\ket{b_i}  & \text{if $k_i \in \kg$} \,, \\
\ket{(-)^{u_i}}  & \text{if $k_i \in \kf$} \,.
\end{cases} \label{eqn:final_bit_states}
\end{align}
The prover now receives a question $q = \theta$ and measures the remaining qubit in the computational basis if $q = 0$ and in the Hadamard basis if $q_i = 1$.
Then it is clear from the expression for the prover's remaining qubit in \ref{eqn:final_bit_states} that the prover will pass the verifier's check.

Having described the honest behaviour for a test round, we now turn our attention to showing \ref{eqn:honest_prover_final_state}.
In a preparation round, the prover receives keys $k_1, \dots, k_n$ with $k_i \in \mK_{\theta_i}$.
For each key $k_i$, an honest prover performs the same steps as described above for a test round.\footnote{In fact, due to the injective invariance property of ENTCF functions, the prover cannot distinguish keys in $\kg$ from those in $\kf$. Hence, any (potentially dishonest) prover does not know whether it is in a test or preparation round, so it must perform the same actions in both. This will play a central role in our soundness proof in \ref{sec:soundness}.}
Then, \ref{eqn:honest_prover_final_state} follows immediately from \ref{eqn:final_bit_states}.
\end{proof}

\section{Protocol soundness} \label{sec:soundness}

The purpose of this section is to prove that a prover that succeeds in \ref{fig:protocol_multi_round} must have prepared a product of BB84 states without knowing which BB84 states it has prepared, which is formally stated as \ref{thm:final_rigidity}.
This means that the situation at the end of \ref{fig:protocol_multi_round} is essentially identical to one in which the verifier has sent random BB84 states to the prover via a quantum channel.
We will make use of this in \ref{part:applications}.

The structure of the proof is as follows: a prover in \ref{fig:protocol_multi_round} is subjected to multiple test rounds, i.e., executions of \ref{fig:protocol_test}, by the verifier.
We can model the behaviour of an arbitrary prover in a single test round as a \emph{device}, defined in \ref{sec:devices}.
The bulk of the soundness proof is concerned with showing that any such device that has a high success probability in \ref{fig:protocol_test} must have prepared (and subsequently measured) a product of BB84 states (\ref{lem:bb84_same_alpha}).
For this, we first show that certain ``inefficient observables'' associated with a device approximately satisfy the same commutation and anti-commutation relations as Pauli observables (\ref{lem:approx_rep_tilde}).
This allows us to conclude that these inefficient observables can be mapped to Pauli observables by an isometry (\ref{lem:operator_rounding_tilde}).\footnote{This can be viewed as a specific instantiation of a general stability theorem for approximate group representations due to Gowers and Hatami~\cite{gowers_hatami}.}
Finally, since we now know a lot about the structure of these inefficient observables, we can  relate them to the device's actual (efficient) observables in a way that allows us to conclude that the efficient observables can be mapped to Pauli operators, too, albeit by a slightly different isometry (\ref{lem:operator_rounding_no_tilde}).

\subsection{Modelling devices in \ref{fig:protocol_test}} \label{sec:devices}

\begin{definition}[Devices] \label{def:devices}
A device $D = (S, \Pi, M, P)$ is specified by the following:
\begin{enumerate}
\item A set $S = \{ \psi^{(\vec \theta)} \}_{\vec \theta \in \bits^n}$ of states $\psi^{(\vec \theta)} \in \mD(\H_D \ot \H_Y)$, where $\dim(\H_Y) = |\mY|^{n}$ and the states are classical on $\H_Y$:
\begin{equation}
\psi^{(\vec \theta)} = \sum_{\vec y \in \mY^n} \psi^{(\vec \theta)}_{\vec y} \ot \proj{\vec y}_Y \,.
\end{equation}
In the context of \ref{fig:protocol_test}, $\psi^{(\vec \theta)}$ is the prover's state after returning $\vec y$ for the case where the verifier makes basis choices $\vec \theta$.\footnote{In \ref{fig:protocol_test}, the only two basis choices are $\vec \theta = \vec 0$ and $\vec \theta = \vec 1$. However, $\psi^{(\vec \theta)}$ is still well-defined as the state that the prover (who is defined in terms of the quantum circuits he runs on a given input) would prepare if given keys of basis choice $\vec \theta$, even though this never occurs in \ref{fig:protocol_test}. The  reason for including  all basis choices $\vec \theta$ is that in \ref{fig:protocol_prep}, the verifier makes a uniformly random basis choice $\vec \theta$, prompting the prover to use the state $\psi^{(\vec \theta)}$.} Each $\psi^{(\vec \theta)}$ also implicitly depends on the specific keys chosen by the verifier (not just the basis choice $\vec \theta$); all the statements we make hold on average over key choices (for a fixed basis choice $\vec \theta$).
\item A projective measurement $\Pi$ on $\H_D \ot \H_Y$:
\begin{equation}
\Pi = \left\{ \Pi^{(\vec b, \vec x)} = \sum_{\vec y} \Pi^{(\vec b, \vec x)}_{\vec y} \ot \proj{\vec y}_{Y} \right\}_{\vec b \in \bits^n; \; \vec x \in \mX^n} \,.
\end{equation}
This is the measurement used by the prover to compute his answer $(\vec b, \vec x)$ in the preimage challenge.
\item A projective measurement $M$ on $\H_D \ot \H_Y$:
\begin{equation}
M = \left\{ M^{(\vec d)} = \sum_{\vec y} M^{(\vec d)}_{\vec y} \ot \proj{\vec y}_{Y} \right\}_{\vec d \in \bits^{w\times n}} \,.
\end{equation}
This is the measurement used by the prover to compute his answer $\vec d$ in the Hadamard challenge.
We use an additional Hilbert spaces $\H_R$ to record the outcomes of measuring $M$ and write the post-measurement state after applying $M$ to $\psi^{(\vec \theta)}$ as 
\begin{equation}
\sigma^{(\vec \theta)} \deq \sum_{\vec y, \vec d} M^{(\vec d)}_{\vec y} \psi^{(\vec \theta)}_{\vec y} M^{(\vec d)}_{\vec y} \ot \proj{\vec y, \vec d}_{YR} \,. \label{eqn:def_sigma}
\end{equation}
\item A set $P = \{P_q\}$, where for each $q \in \bits$, $P_{q}$ is a projective measurement on $\H_D \ot \H_Y \ot \H_R$:
\begin{equation}
P_{q} = \left\{P^{(\vec v)}_{q} = \sum_{\vec y, \vec d} P^{(\vec v)}_{q, \vec y, \vec d} \ot \proj{\vec y, \vec d}_{YR} \right\}_{\vec v \in \bits^n}\,.
\end{equation}
In the context of \ref{fig:protocol_test}, given question $q$, the prover will measure $\{P_q^{(\vec v)}\}$ and return the outcome $\vec v$ as his answer.
\end{enumerate}
\end{definition}

\begin{definition}[Efficient devices] \label{def:efficient_dev}
A device is called \emph{efficient} if the states $\psi^{(\vec \theta)}$ can be prepared efficiently and the measurements $\Pi$, $M$, and $P_{q}$ can be performed efficiently.
\end{definition}

\begin{remark} \label{rem:advice_state}
Throughout this paper, whenever we say that a procedure is ``efficient'' (e.g.~in \ref{def:efficient_dev}, or in proofs when constructing adversaries that break a property of ENTCF families) we mean a quantum polynomial-time procedure \emph{with advice}, i.e.~in addition to the input, the procedure also has access to a quantum state (the ``advice state'') which only depends on the input length, but not the input itself. 
This advice state need not be efficiently preparable itself.
The additional advice state plays no role in the single-round soundness analysis (i.e.~the entire soundness proof up to \ref{thm:final_rigidity}) and hence we do not write it explicitly.
The reason the advice state is required at all is that if we want to model a prover in \ref{fig:protocol_multi_round} as a sequence of (single-round) devices of the kind in \ref{def:devices}, we need to account for the fact that the prover in round $i+1$ has access to the final state after round $i$, which we can view as an advice state for the device modelling the $(i+1)$-th round of the prover.
We note that the same convention is (implicitly) used in \cite{randomness, rsp}, and that as a result these works and our paper require that the LWE problem is hard to solve quantumly with quantum advice (see \cite[Definition 2.5]{randomness} for the precise statement).
\end{remark}

\begin{definition}[Observables] \label{def:observables}
For a device $D = (S, \Pi, M, P)$ with projective measurements as in \ref{def:devices}, we define the following binary observables:
\begin{align}
Z_i &= \sum_{\vec v} (-1)^{v_i} P_0^{(\vec v)} \,,\\
X_i &= \sum_{\vec v} (-1)^{v_i} P_1^{(\vec v)} \,,\\
\tilde X_i &= \sum_{\vec v, \vec y, \vec d} (-1)^{v_i + \hat u(k_i, y_i, d_i)}P^{(\vec v)}_{1, \vec y, \vec d} \ot \proj{\vec y, \vec d}_{YR} \,.
\end{align}
We further use the following notation for products of observables: for $\vec a \in \bits^n$, we define
\begin{equation}
Z(\vec a) \deq Z_1^{a_1} \dots Z_n^{a_n} = \sum_{\vec v} (-1)^{\vec a \cdot \vec v} P_0^{(\vec v)} \,,
\end{equation}
and likewise for $X(\vec a)$ and $\tilde X(\vec a)$. 
Expanding $X_i = \sum_{\vec y, \vec d} X_{i, \vec y, \vec d} \ot \proj{\vec y, \vec d}$ (and analogously for the other observables), it directly follows that
\begin{align}
\tilde X_{i, \vec y, \vec d} = (-1)^{\hat u(k_i, y_i, d_i)} X_{i, \vec y, \vec d} \quad\tand\quad
\tilde X(\vec a)_{\vec y, \vec d} = (-1)^{\vec a \cdot \hat u(\vec k, \vec y, \vec d)} X(\vec a)_{\vec y, \vec d} \,.
\end{align}
\end{definition}

\begin{remark}
It is easy to check that the operators defined above are indeed binary observables using that $\{P^{(\vec v)}_{q, \vec y, \vec d}\}_{\vec v}$ are projective measurements.
\end{remark}

\begin{definition}[Partial post-measurement states] \label{def:partial_postmeas_states}
For $k \in \kg \cup \kf$ and $v \in\bits$ define the set $V_{k, v} \subseteq \mY \times \bits^w$ by the following condition: 
\begin{equation}
(y, d) \in V_{k, v} \iff \begin{cases}
\hat b(k, y) = v  & \textnormal{if } k \in \kg \,, \\
\hat u(k, y, d) = v & \textnormal{if } k \in \kf \,.
\end{cases}
\end{equation}
Then for $\vec k, \vec \theta, \vec v$ define 
\begin{equation}
\sigma^{(\vec \theta, \vec v)} = \sum_{y_1, d_1 \in V_{k_1, v_1}} \cdots \sum_{y_n, d_n \in V_{k_n, v_n}} \sigma^{(\vec \theta)}_{\vec y, \vec d} \ot \proj{\vec y, \vec d} \,.
\end{equation}
Further for $\vec a \in \bits^n$ we define 
\begin{equation}
\sigma^{(\vec \theta, v, \vec a)} = \sum_{\vec v : \, \vec v \cdot \vec a = v} \sigma^{(\vec \theta, \vec v)} \,.
\end{equation}
\end{definition}

To get an intuition for this definition, consider the case $\theta = \vec 0$.
Then for any $\vec a \in \bits^n$, $\sigma^{(\vec 0, v, \vec a)}$ is that part of the state $\sigma^{(\vec 0)}$ for which the honest device would receive outcome $v$ when measuring the observable $Z(\vec a)$.

\begin{definition}[Success probabilities] \label{def:succ_prob}
For any device $D = (S, \Pi, M, P)$ we define $\gamma_P(D)$ as the device's failure probability in a preimage round, and $\gamma_H(D)$ as the failure probability in a Hadamard round in \ref{fig:protocol_test}: 
\begin{align}
\gamma_P(D) &= \pr{\mathtt{flag} = \mathtt{fail_{Pre}}  \,|\;\text{round type = preimage round}} \,,\\
\gamma_H(D) &= \pr{\mathtt{flag} = \mathtt{fail_{Had}} \,|\; \text{round type = Hadamard round}} \,.
\end{align}
\end{definition}

The following lemma tells us that for any device that has a non-negligible failure probability in the preimage test, there is another device that is ``close'' to the original one in the sense that its measurements are the same as for the original device and its states only differ by $O(\gamma_P(D))$, but that has a negligible failure probability in the preimage test.
A device with a negligible failure probability in the preimage test is called \emph{perfect}.
This means that for the soundness proof, it suffices to only consider perfect devices:
for any arbitrary device, we can first switch to the corresponding perfect device at the cost of incurring an approximation error of $O(\gamma_P(D))$, and then apply our soundness proof to the perfect device.
We omit the proof of \ref{lem:reduction_to_perfect} as it is essentially identical to that of \cite[Lemma 4.13]{self-testing}.\footnote{To avoid confusion, we point out that \cite[Lemma 4.13]{self-testing} phrases the approximation in terms of trace distance, but \ref{lem:reduction_to_perfect} uses the $\approx$-notation for states, which corresponds to the square of the trace distance (see \ref{def:approx_dist}). This is why we have $\gamma_P(D)$ in \ref{lem:reduction_to_perfect}, not $\gamma_P(D)^{1/2}$ as in \cite[Lemma 4.13]{self-testing}.}

\begin{definition}[Perfect device]
We call a device $D$ \emph{perfect} if $\gamma_P(D) = \negl(\lambda)$.
\end{definition}

\begin{lemma} \label{lem:reduction_to_perfect}
Let $D = (S, \Pi, M, P)$ be an efficient device with $\gamma_P(D) < 1$, where $S = \left\{ \psi^{(\vec \theta)} \right\}$. Then there exists an efficient \emph{perfect} device $D' = (S', \Pi, M, P)$, which uses the same measurements $\Pi, M, P$ and whose states $S' = \left\{ \psi'^{(\vec \theta)} \right\}$ satisfy for any $\vec \theta \in \bits^n$:
\begin{equation}
\psi'^{(\vec \theta)} \approx_{\gamma_P(D)} \psi^{(\vec \theta)} \,.
\end{equation}
\end{lemma}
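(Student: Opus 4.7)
The plan is to obtain $D'$ from $D$ by projecting each state $\psi^{(\vec\theta)}$ onto the ``accepting'' subspace of the preimage measurement, so that the modified device cannot fail a preimage test. For fixed keys $\vec k$ (with $k_i \in \mathcal{K}_{\theta_i}$) I would define
$$\Pi^{\mathrm{acc}} := \sum_{\vec y}\ \sum_{(\vec b,\vec x)\,:\,\Chk(k_i,y_i,b_i,x_i)=1\ \forall i} \Pi^{(\vec b,\vec x)}_{\vec y} \otimes \proj{\vec y}_Y,$$
which is an orthogonal projector on $\H_D\otimes\H_Y$, since for each fixed $\vec y$ the operators $\{\Pi^{(\vec b,\vec x)}_{\vec y}\}$ form a projective measurement and we are simply collapsing a subset of its outcomes. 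I then set
$$\psi'^{(\vec\theta)} := \frac{\Pi^{\mathrm{acc}}\,\psi^{(\vec\theta)}\,\Pi^{\mathrm{acc}}}{\Tr[\Pi^{\mathrm{acc}}\,\psi^{(\vec\theta)}]},$$
well-defined because $\Tr[\Pi^{\mathrm{acc}}\psi^{(\vec\theta)}] = 1-\gamma_P(D) > 0$ by hypothesis, and take $D' = (S',\Pi,M,P)$ with the same measurements as $D$. Since $\psi'^{(\vec\theta)}$ is supported in the range of $\Pi^{\mathrm{acc}}$, every subsequent measurement of $\Pi$ returns an accepting outcome, so $\gamma_P(D')=0$ and $D'$ is perfect.

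For the approximation bound I would invoke the Gentle Measurement Lemma: writing $p := \Tr[\Pi^{\mathrm{acc}}\psi^{(\vec\theta)}] = 1-\gamma_P(D)$, one has $\|\psi^{(\vec\theta)} - \Pi^{\mathrm{acc}}\psi^{(\vec\theta)}\Pi^{\mathrm{acc}}\|_1 \le 2\sqrt{\gamma_P(D)}$, and renormalisation contributes an extra $O(\gamma_P(D))$ term, giving $\|\psi^{(\vec\theta)} - \psi'^{(\vec\theta)}\|_1 = O(\sqrt{\gamma_P(D)})$. Squaring, $\|\psi^{(\vec\theta)} - \psi'^{(\vec\theta)}\|_1^2 = O(\gamma_P(D))$, which is exactly $\psi'^{(\vec\theta)} \approx_{\gamma_P(D)} \psi^{(\vec\theta)}$ under the convention of \ref{def:approx_dist}.

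The main obstacle I expect is showing that $D'$ is efficient in the sense of \ref{def:efficient_dev}. The natural implementation is rejection sampling: $D'$ efficiently prepares $\psi^{(\vec\theta)}$ (using efficiency of $D$), applies $\Pi$ coherently into a scratch register, computes the accept bit using $\Chk$ (an efficient classical operation on $\vec k$, $\vec y$, and the outcome), and outputs the residual state on $\H_D\otimes\H_Y$ if the bit is ``accept'', or else discards everything and restarts from scratch. A single trial succeeds with probability $1-\gamma_P(D)$, so truncating at a suitable polynomial number of trials yields a polynomial-time preparation whose output is negligibly close to $\psi'^{(\vec\theta)}$ whenever $1-\gamma_P(D)$ is at least inverse polynomial. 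In the remaining regime, where $\gamma_P(D)$ is superpolynomially close to $1$, the claim $\approx_{\gamma_P(D)}$ is essentially vacuous, so one can fall back to any trivial efficient preparation of a state in the range of $\Pi^{\mathrm{acc}}$ (for instance by hardwiring a single accepting transcript). These two regimes can be patched together into a single uniform definition of $D'$, and since the closeness bound is unaffected by this case distinction, the conclusion of the lemma follows.
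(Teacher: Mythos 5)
Your core approach---projecting each $\psi^{(\vec\theta)}$ onto the accepting subspace of the preimage measurement $\Pi$, renormalising, and invoking the gentle measurement lemma---is exactly what the cited proof (\cite[Lemma~4.13]{self-testing}, to which the paper defers) does, and the bound you derive matches after squaring. Two points in your write-up deserve more care. First, $\gamma_P(D)$ is the preimage failure probability \emph{averaged over the uniformly random} $\theta \in \bits$ chosen in \ref{fig:protocol_test}, so only $\vec\theta \in \{\vec 0, \vec 1\}$ is ever directly constrained; the lemma nevertheless quantifies over all $\vec\theta \in \bits^n$. Your line ``$\Tr[\Pi^{\mathrm{acc}}\psi^{(\vec\theta)}] = 1-\gamma_P(D)$'' is therefore not literally correct: for $\vec\theta \in \{\vec 0, \vec 1\}$ one only gets $\Tr[\Pi^{\mathrm{acc}}\psi^{(\vec\theta)}] \geq 1-2\gamma_P(D)$, and for a general $\vec\theta$ one must additionally invoke computational indistinguishability of $\{\psi^{(\vec\theta)}\}$ (as in \ref{lem:states_indist}) together with efficiency of the $\Chk$ map to conclude $\Tr[\Pi^{\mathrm{acc}}\psi^{(\vec\theta)}] \geq 1-2\gamma_P(D)-\negl(\lambda)$. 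The $\approx$-notation absorbs the constant factor and the negligible term, so the conclusion survives, but the step should be made explicit since it uses a property of the device (efficiency, hence indistinguishability) and not just the hypothesis $\gamma_P(D)<1$.

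Second, your fallback in the regime where $1-\gamma_P(D)$ is superpolynomially small is not implementable as written. You propose ``hardwiring a single accepting transcript'' to obtain a state in the range of $\Pi^{\mathrm{acc}}$, but the projectors $\Pi^{(\vec b,\vec x)}_{\vec y}$ are part of the adversarial device and are opaque: for any $\vec y$ you can efficiently compute, the sum $\sum_{(\vec b,\vec x)\,\textnormal{accepting}}\Pi^{(\vec b,\vec x)}_{\vec y}$ may be the zero operator, so knowing an accepting classical tuple $(\vec y,\vec b,\vec x)$ does not produce a state on $\H_D$ in the appropriate range. This gap lives entirely in the regime where the lemma's closeness claim is already vacuous, and it is harmless for the paper's applications---in the proof of \ref{thm:final_rigidity} one restricts attention to provers with non-negligible acceptance probability, which forces $\gamma_P(D)$ to be bounded away from $1$ so that your rejection sampling converges in polynomial trials---but a clean proof should either observe that nothing downstream ever invokes the lemma outside that regime, or avoid the unsupported ``hardwire an accepting transcript'' claim.
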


The following lemma shows what outcomes a successful device must produce when measuring the observables from \ref{def:observables} on the partial post-measurement states from \ref{def:partial_postmeas_states}.

\begin{lemma} \label{lem:observables_succ_prob}
For any efficient device $D = (S, \Pi, M, P)$, string $\vec a \in \bits^n$, and bit $v \in \bits$:
\begin{align}
\tr{Z(\vec a)^{(v)} \sigma^{(\vec 0, v, \vec a)}} &\approx_{\gamma_H(D)} \tr{\sigma^{(\vec 0, v, \vec a)}} \,, \label{eqn:z_vec_succprob}\\
\tr{X(\vec a)^{(v)} \sigma^{(\vec 1, v, \vec a)}} &\approx_{\gamma_H(D)} \tr{\sigma^{(\vec 1, v, \vec a)}} \,, \label{eqn:x_vec_succprob}\\
\tr{\tilde X(\vec a) \sigma^{(\vec 1)}} &\approx_{\gamma_H(D)} 1 \label{eqn:xtilde_vec_succprob}\,.
\end{align}
\end{lemma}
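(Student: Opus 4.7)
The plan is to unpack each trace in terms of the projective POVM elements $P_q^{(\vec v)}$ and relate them to the failure probabilities $\gamma_H(D)$ of the Hadamard round for the two relevant basis choices. For the case $\vec\theta=\vec 0$, the verifier's Hadamard-round check (with $q=\theta=0$) accepts answer $\vec v$ iff $\hat b(k_i,y_i)=v_i$ for all $i$. Since $V_{k_i,v_i}$ with $k_i\in\kg$ is precisely the set $\{(y,d):\hat b(k_i,y)=v_i\}$, the state $\sigma^{(\vec 0,\vec v)}$ collects exactly those $(\vec y,\vec d)$ components of $\sigma^{(\vec 0)}$ on which the verifier accepts answer $\vec v$. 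Writing $p_{\vec v,\vec v'}=\tr{P_0^{(\vec v)}\sigma^{(\vec 0,\vec v')}}$, we therefore have
\begin{equation*}
1-\gamma_H(D)=\sum_{\vec v}p_{\vec v,\vec v},\qquad \sum_{\vec v\neq \vec v'}p_{\vec v,\vec v'}\le \gamma_H(D).
\end{equation*}

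For the first equation I would use the spectral decomposition of $Z(\vec a)$ from \ref{def:observables} to note that $Z(\vec a)^{(v)}=\sum_{\vec v:\vec a\cdot\vec v=v}P_0^{(\vec v)}$, and expand $\sigma^{(\vec 0,v,\vec a)}=\sum_{\vec v':\vec a\cdot \vec v'=v}\sigma^{(\vec 0,\vec v')}$ using \ref{def:partial_postmeas_states}. Then
\begin{equation*}
\tr{\sigma^{(\vec 0,v,\vec a)}}-\tr{Z(\vec a)^{(v)}\sigma^{(\vec 0,v,\vec a)}}=\sum_{\vec v':\vec a\cdot\vec v'=v}\,\sum_{\vec v:\vec a\cdot\vec v\neq v}p_{\vec v,\vec v'},
\end{equation*}
and every term in the right-hand side has $\vec v\neq\vec v'$, so the whole expression is bounded by $\gamma_H(D)$. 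The second equation follows by exactly the same argument with $\vec\theta=\vec 1$, $k_i\in\kf$, $X$ in place of $Z$, and the Hadamard-round check for $q=1$ giving $1-\gamma_H(D)=\sum_{\vec v}\tr{P_1^{(\vec v)}\sigma^{(\vec 1,\vec v)}}$; the partial-state structure via $\hat u$ rather than $\hat b$ is the only change.

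For the third equation I would start from the formula $\tilde X(\vec a)_{\vec y,\vec d}=(-1)^{\vec a\cdot \hat u(\vec k,\vec y,\vec d)}X(\vec a)_{\vec y,\vec d}$ in \ref{def:observables} and expand $X(\vec a)=\sum_{\vec v'}(-1)^{\vec a\cdot \vec v'}P_1^{(\vec v')}$. Grouping the sum over $(\vec y,\vec d)$ by the value $\vec v=\hat u(\vec k,\vec y,\vec d)$ (which makes the $(\vec y,\vec d)$-sum range over exactly the terms defining $\sigma^{(\vec 1,\vec v)}$) gives
\begin{equation*}
\tr{\tilde X(\vec a)\sigma^{(\vec 1)}}=\sum_{\vec v,\vec v'}(-1)^{\vec a\cdot(\vec v+\vec v')}\tr{P_1^{(\vec v')}\sigma^{(\vec 1,\vec v)}}.
\end{equation*}
The diagonal terms ($\vec v'=\vec v$) carry sign $+1$ and sum to $1-\gamma_H(D)$ by the $\vec\theta=\vec 1$ analog of the accept condition computed above, while the off-diagonal terms have $\vec v'\neq\vec v$, are real, and have absolute sum at most $\gamma_H(D)$. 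Hence $|\tr{\tilde X(\vec a)\sigma^{(\vec 1)}}-1|\le 2\gamma_H(D)$, which is the desired $\approx_{\gamma_H(D)} 1$ statement.

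The only minor obstacle is bookkeeping the two conventions for $V_{k,v}$ (one using $\hat b$ for $\kg$-keys and one using $\hat u$ for $\kf$-keys) so that the accept probability is identified cleanly with $\sum_{\vec v}p_{\vec v,\vec v}$ in each case; once this is set up, the argument is a straightforward separation into diagonal and off-diagonal contributions of a projective measurement summed against a probability distribution.
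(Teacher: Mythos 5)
Your proposal is correct and takes essentially the same route as the paper. For the first two equations you work at the level of the full string $\vec v\in\bits^n$ and the POVM elements $P_\theta^{(\vec v)}$, while the paper manipulates the bit-valued sums $\sum_{v'}\tr{Z(\vec a)^{(v')}\sigma^{(\vec 0,v',\vec a)}}$ directly, but the underlying argument (the verifier's accept event forces the cross terms $p_{\vec v,\vec v'}$ with $\vec v\neq\vec v'$ to be small) is identical. For the third equation the paper is slightly slicker: it notes that the accept event puts $\sigma^{(\vec 1)}$ in the $+1$-eigenspace of $\tilde X(\vec a)$ up to $O(\gamma_H(D))$, then uses $\tilde X=\tilde X^{(0)}-\tilde X^{(1)}$ with $\tilde X^{(0)}+\tilde X^{(1)}=\1$; your explicit expansion over $(\vec v,\vec v')$ with signs $(-1)^{\vec a\cdot(\vec v+\vec v')}$ is equivalent but more verbose. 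One small imprecision: you write $1-\gamma_H(D)=\sum_{\vec v}p_{\vec v,\vec v}$, but $\gamma_H(D)$ is averaged over both $\theta=0$ and $\theta=1$, so the correct statement is $\sum_{\vec v}p_{\vec v,\vec v}\geq 1-2\gamma_H(D)$ (and likewise the off-diagonal sums are $\leq 2\gamma_H(D)$). This costs you only a constant factor, which the $\approx_\eps$ notation absorbs, so the conclusion is unaffected — but you should be aware the equality as written is not exact.
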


\begin{proof}
We first prove \ref{eqn:z_vec_succprob}.
Since the case $q = \theta = 0$ occurs with probability 1/2 in the protocol in \ref{fig:protocol_test}, the device's failure probability in this case can be at most $2 \gamma_H(D)$.
Furthermore, since the device only succeeds if $v_i = \hat b(k_i, y_i)$ for all $i \in [n]$ in the protocol, it is in particular the case that for any $\vec a \in \bits^n$, $\vec a \cdot \hat b(\vec k, \vec y) = \vec a \cdot \vec v$ with probability at least $1 - 2 \gamma_H(D)$.
Now comparing the definition of $\sigma^{(\vec 0, v, \vec a)}$ with the verifier's checks in the protocol, this means that 
\begin{align}
\sum_{v' \in \bits} \tr{Z(\vec a)^{(v')} \sigma^{(\vec 0, v', \vec a)}} \geq 1 - 2 \gamma_H(D) \,. \label{eqn:z_sum_succprob}
\end{align}
Since $Z(\vec a)^{(v)}$ is a projector and $\sigma^{(\vec 0)} = \sigma^{(\vec 0, v, \vec a)} + \sigma^{(\vec 0, v \oplus 1, \vec a)}$ is a normalised state, 
\begin{align}
\tr{Z(\vec a)^{(v \oplus 1)} \sigma^{(\vec 0, v \oplus 1, \vec a)}} \leq \tr{\sigma^{(\vec 0, v \oplus 1, \vec a)}} = 1 - \tr{\sigma^{(\vec 0, v, \vec a)}} \,.
\end{align}
Inserting this into \ref{eqn:z_sum_succprob}, we obtain 
\begin{align}
1 - 2 \gamma_H(D) \leq \tr{Z(\vec a)^{(v)} \sigma^{(\vec 0, v, \vec a)}} + 1 - \tr{\sigma^{(\vec 0, v, \vec a)}} \,,
\end{align}
which implies 
\begin{align}
\tr{Z(\vec a)^{(v)} \sigma^{(\vec 0, v, \vec a)}} \geq \tr{\sigma^{(\vec 0, v, \vec a)}} - 2 \gamma_H(D) \,.
\end{align}
For the inequality in the other direction, we note that since $Z(\vec a)^{(v)}$ is a projector, we immediately have $\tr{Z(\vec a)^{(v)} \sigma^{(\vec 0, v, \vec a)}} \leq \tr{\sigma^{(\vec 0, v, \vec a)}}$, finishing the proof of \ref{eqn:z_vec_succprob}.

The proof of \ref{eqn:x_vec_succprob} is completely analogous.
Finally, we need to show \ref{eqn:xtilde_vec_succprob}.
For this, observe that by definition $\tilde X(\vec a)$ returns outcome 0 if the device's answers satisfy $\vec a \cdot \vec v = \vec a \cdot \hat u(\vec k, \vec y, \vec d)$.
Since this is implied by the checks performed by the verifier in the protocol, we have 
\begin{align}
\tr{\tilde X(\vec a)^{(0)} \sigma^{(\vec 1)}} \approx_{\gamma_H(D)} 1 \,.
\end{align}
Since $\tilde X(\vec a)^{(0)} + \tilde X(\vec a)^{(1)} = \1$, this implies $\tr{\tilde X(\vec a)^{(1)} \sigma^{(\vec 1)}} \approx_{\gamma_H(D)} 0$. The lemma follows because $\tilde X(\vec a) = \tilde X(\vec a)^{(0)} - \tilde X(\vec a)^{(1)}$.
\end{proof}

\subsection{Extending \ref{lem:observables_succ_prob} to different basis choices} \label{sec:theta_ext}

In \cite{mahadev}, it was shown that keys sampled from $\kg$ and $\kf$ (according to the corresponding sampling procedures, see \cite[Definition 4.3]{mahadev} for details) are computationally indistinguishable (up to negligible advantage).
This property is called \emph{injective invariance}.
In this section, we will use this property to extend the statement of \ref{lem:observables_succ_prob} to basis choices $\vec \theta$ other than $\vec \theta = \vec 0$ and $\vec \theta = \vec 1$.
For this, we need a slight strengthening of injective invariance to multiple keys, which follows almost immediately from the original injective invariance property.
\begin{lemma}\label{lem:injective_invariance}
For any two $\vec \theta, \vec \theta' \in \bits^n$, no efficient distinguisher has non-negligible advantage in the following distinguishing game:
the challenger samples a bit $b \rand \bits$. The challenger then samples keys $\vec k$ with $k_i \in \mK_{\theta_i}$ if $b = 0$ and $k_i \in \mK_{\theta'_i}$ if $b = 1$, and sends $\vec k$ to the distinguisher. The distinguisher provides a guess for the bit $b$.
\end{lemma}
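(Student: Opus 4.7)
The plan is to prove this by a standard hybrid argument, reducing the multi-key indistinguishability to the single-key injective invariance property of the ENTCF family (as in~\cite[Definition 4.3]{mahadev}), together with the fact that $\Gen$ is efficient and $n = \poly(\lambda)$.

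First I would define a family of hybrid distributions $H_0, H_1, \dots, H_n$ over key tuples $\vec k = (k_1, \dots, k_n)$, where under $H_j$ the first $j$ components are sampled as $k_i \leftarrow \Gen_{\mK_{\theta'_i}}(1^\lambda)$ and the remaining $n-j$ components as $k_i \leftarrow \Gen_{\mK_{\theta_i}}(1^\lambda)$. Thus $H_0$ is exactly the marginal of the challenger's transcript conditioned on $b = 0$, and $H_n$ is the marginal conditioned on $b = 1$. A triangle-inequality argument then shows that if an efficient distinguisher achieved non-negligible advantage between $b=0$ and $b=1$, it would achieve non-negligible advantage between some consecutive pair $H_{j-1}, H_j$.

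Next, I would show that each consecutive pair of hybrids is computationally indistinguishable up to a negligible term. The distributions $H_{j-1}$ and $H_j$ differ only in position $j$: $k_j$ is drawn from $\mK_{\theta_j}$ in $H_{j-1}$ and from $\mK_{\theta'_j}$ in $H_j$. If $\theta_j = \theta'_j$ the two hybrids coincide, so assume $\theta_j \neq \theta'_j$. From an efficient distinguisher $\mA$ between $H_{j-1}$ and $H_j$ I would build an efficient single-key distinguisher $\mA'$: on input a challenge key $k^*$, $\mA'$ samples the remaining $n-1$ keys itself (using $\Gen_{\mK_{\theta'_i}}$ for $i<j$ and $\Gen_{\mK_{\theta_i}}$ for $i>j$), places $k^*$ at position $j$, and runs $\mA$ on the resulting tuple. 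The reduction is efficient because $\Gen$ runs in polynomial time and $n$ is polynomial, and $\mA'$ inherits the full advantage of $\mA$; by single-key injective invariance this must be negligible.

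Combining at most $n = \poly(\lambda)$ hybrid steps via the triangle inequality preserves negligibility of the total advantage, giving the claim. I do not anticipate a serious obstacle: the argument is a textbook hybrid reduction. The only small points to check are that the hybrids are literally equal at indices where $\theta_i = \theta'_i$ (so no reduction is needed there), and that the advice-state convention of \ref{rem:advice_state} is preserved because the reduction simply forwards its own advice state to $\mA$ unchanged.
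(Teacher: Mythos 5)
Your proposal is correct and follows essentially the same route as the paper: the paper also first handles the case where $\vec\theta$ and $\vec\theta'$ differ in a single coordinate by embedding the single-key injective-invariance challenge in position $i$ and sampling the other $n-1$ keys locally, then extends to arbitrary $\vec\theta,\vec\theta'$ by a hybrid argument over the $n$ coordinates. Your observation about trivially identical hybrids at coordinates where $\theta_i=\theta'_i$ and the remark about the advice-state convention are fine but not needed for the core argument.
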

\begin{proof}
First suppose that $\vec \theta$ and $\vec \theta'$ only differ in one position $\theta_i \neq \theta'_i$.
For convenience, we can assume that $\theta_i = 0$ and $\theta'_i = 1$ (and swap the labelling of $\vec \theta$ and $\vec \theta'$ if this is not the case).
Consider an efficient distinguisher $\mD$ for the distinguishing game in the lemma.
From such a distinguisher $\mD$ we can construct a distinguisher $\mD'$ for the original injective invariance game: $\mD'$ is given a key $k_i \in \kg \cup \kf$, samples $n-1$ further keys $k_j \in \mK_{\theta_j}$ for $j \in [n] \setminus \{i\}$, and runs $\mD$ on the keys $\vec k = (k_1, \dots, k_n)$.
If $\mD$ guesses $b = 0$, then $\mD'$ guesses $k \in \kg$, otherwise $\mD'$ guesses $k \in \kf$.
From the construction it is clear that $\mD'$ guesses correctly if and only if $\mD$ guesses correctly.
However, from the injective invariance property of the ENTCF family, no such $\mD'$ can succeed with non-negligible advantage. Hence, $\mD$ also cannot succeed with non-negligible advantage in the distinguishing game from the lemma.

We can extend this proof to any $\vec \theta, \vec \theta' \in \bits^n$ by a simple hybrid argument: for any $\vec \theta, \vec \theta' \in \bits^n$, we can find $\vec \theta^1, \dots, \vec \theta^n$ with $\vec \theta^1 = \vec \theta$ and $\vec \theta^n = \vec \theta'$ such that $\vec \theta^i$ and $\vec \theta^{i+1}$ differ on at most one position.
Applying the above argument in each step of the hybrid argument, the advantage in distinguishing $\vec \theta$ from $\vec \theta'$ can be at most $n \cdot \negl(\lambda)$, which is still negligible in $\lambda$ since $n = \poly(\lambda)$.
\end{proof}

\begin{lemma} \label{lem:states_indist}
Consider an efficient device $D = (S, \Pi, M, P)$. 
Then, for any $\vec \theta, \vec \theta'$: 
\begin{align}
\sigma^{(\vec \theta)} \capprox_0 \sigma^{(\vec \theta')} \,.
\end{align}
\end{lemma}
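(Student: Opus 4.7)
The plan is a direct reduction to the multi-key injective invariance statement established in \ref{lem:injective_invariance}. Suppose for contradiction that there is an efficient distinguisher $\mD$ that distinguishes $\sigma^{(\vec\theta)}$ from $\sigma^{(\vec\theta')}$ with non-negligible advantage. We build an efficient distinguisher $\mD'$ for the key-distinguishing game of \ref{lem:injective_invariance} whose advantage equals that of $\mD$, contradicting \ref{lem:injective_invariance}.

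The distinguisher $\mD'$ works as follows. It receives keys $\vec k$ from the challenger, where $\vec k$ is sampled either with $k_i \in \mK_{\theta_i}$ (if $b=0$) or $k_i \in \mK_{\theta'_i}$ (if $b=1$). Since the device $D$ is efficient (\ref{def:efficient_dev}, together with the advice-state convention of \ref{rem:advice_state}), $\mD'$ can internally simulate the prover's first-round action on input $\vec k$: this produces the state $\psi^{(\vec\theta)}$ or $\psi^{(\vec\theta')}$ together with the classical message $\vec y$ on register $Y$. Then $\mD'$ efficiently applies the Hadamard-round measurement $M$ and records the outcome $\vec d$ on register $R$, producing exactly $\sigma^{(\vec\theta)}$ or $\sigma^{(\vec\theta')}$ as defined in \ref{eqn:def_sigma}. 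Finally, $\mD'$ runs $\mD$ on the resulting state and outputs its guess.

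By construction, $\mD'$ perfectly simulates the state that $\mD$ receives as input in the two cases, so its distinguishing advantage in the key game equals $\mD$'s advantage in distinguishing $\sigma^{(\vec\theta)}$ from $\sigma^{(\vec\theta')}$. By \ref{lem:injective_invariance}, the former is negligible, so the latter is too, giving $\sigma^{(\vec\theta)} \capprox_0 \sigma^{(\vec\theta')}$ as claimed.

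The only mild subtlety is efficiency: we need both the state preparation $\psi^{(\vec\theta)}$ and the Hadamard measurement $M$ to be implementable by a $\QPT$ procedure (with advice). This is exactly the content of \ref{def:efficient_dev} and the convention from \ref{rem:advice_state}, so no real obstacle arises. Note that no part of the reduction uses the trapdoor information or the inefficient observables $\tilde X_i$; this is essential, because the reduction itself must be efficient.
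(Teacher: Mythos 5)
Your proof is correct and takes essentially the same approach as the paper's proof, which simply states that the claim follows immediately from \ref{lem:injective_invariance} because the device is efficient; you have spelled out the implicit reduction explicitly.
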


\begin{proof}
Since the device $D$ (and hence the preparation of the states $\sigma^{(\vec \theta)}$) is efficient, this follows immediately from \ref{lem:injective_invariance}.
\end{proof}

The fact that the states $\sigma^{(\vec \theta)}$ are computationally indistinguishable allows us to extend relations between operators that hold on a particular $\sigma^{(\vec \theta)}$ to other states $\sigma^{(\vec \theta')}$.
This will be formalised by \ref{lem:partial_lifting}.
The reason that the lemma does not just deal with efficient observables, but also observables which are efficient only with access to some of the trapdoors, is that the observables $\tilde X(\vec a)$ (as defined in \ref{def:observables}) are only efficient if one has access to all trapdoors $t_i$ for $i \sth a_i = 1$.
Hence, the following lemma will also be applicable to operator relations involving $\tilde X(\vec a)$ (provided the constraint on $\vec \theta$ and $\vec \theta'$ stated in the lemma is satisfied).
\begin{lemma} \label{lem:partial_lifting}
Consider an efficient device $D = (S, \Pi, M, P)$ and a subset of indices $\mI \subset [n]$. Suppose that for any keys $\vec k = (k_1, \dots, k_n)$ with corresponding trapdoors $\vec t = (t_1, \dots, t_n)$, $A$ and $B$ are observables which can be efficiently implemented with access to the trapdoors $\{t_i\}_{i \in \mI}$. 
Then for any $\vec \theta, \vec \theta' \in \bits^n$ such that $\theta_i = \theta'_i$ for all $i \in \mI$, 
\begin{align}
A \approx_{\eps, \sigma^{(\vec \theta)}} B \iff A \approx_{\eps, \sigma^{(\vec \theta')}} B \,, \label{eqn:partial_lifting_result}
\end{align}
and 
\begin{align}
\tr{A \sigma^{(\vec \theta)}} \approx_0 \tr{A \sigma^{(\vec \theta')}} \,.  \label{eqn:partial_lifting_in_trace}
\end{align}
\end{lemma}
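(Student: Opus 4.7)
The plan is to derive both displayed statements from the injective invariance of ENTCF families (\ref{lem:injective_invariance}), rather than directly from the state-level indistinguishability in \ref{lem:states_indist}. The obstacle to the latter approach is that the observables $A$ and $B$ are only efficient \emph{given access to the trapdoors} $\{t_i\}_{i\in\mI}$, so an adversary that is merely handed $\sigma^{(\vec\theta)}$ or $\sigma^{(\vec\theta')}$ cannot apply them. The key point enabling the reduction is that the hypothesis $\theta_i = \theta_i'$ on $\mI$ lets the distinguisher sample the $\mI$-keys itself, \emph{together with their trapdoors}, and substitute them into the challenger's key vector.

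For \ref{eqn:partial_lifting_in_trace}, I argue by contradiction: suppose $|\tr{A \sigma^{(\vec\theta)}} - \tr{A \sigma^{(\vec\theta')}}|$ is non-negligible, and construct a $\QPT$ distinguisher $\mD$ for the game of \ref{lem:injective_invariance} between $\vec\theta$ and $\vec\theta'$ as follows. First, $\mD$ locally pre-samples $(\tilde k_i, t_i) \leftarrow \Gen_{\mK_{\theta_i}}(1^\lambda)$ for each $i \in \mI$; this is well-defined because $\theta_i = \theta_i'$ on $\mI$. Upon receiving the challenge keys $\vec k$, $\mD$ forms $\vec k^\ast$ by overriding the $\mI$-entries of $\vec k$ with the pre-sampled $\tilde k_i$ and keeping the rest. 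Because the $\mI$-marginal of $\vec k$ has the same distribution regardless of the challenger's bit $b$, the resulting $\vec k^\ast$ is distributed as $\mK_{\vec\theta}$ when $b=0$ and as $\mK_{\vec\theta'}$ when $b=1$. $\mD$ then runs the efficient device $D$ on $\vec k^\ast$, applies the Hadamard-challenge measurement $M$ to produce (on average) $\sigma^{(\vec\theta)}$ or $\sigma^{(\vec\theta')}$, and uses the trapdoors $\{t_i\}_{i\in\mI}$ (which are the trapdoors of the $\mI$-entries of $\vec k^\ast$) to implement $A$ and estimate $\tr{A \sigma^{(\cdot)}}$ to inverse-polynomial precision from $\poly(\lambda)$ samples. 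The resulting distinguishing advantage is non-negligible, contradicting \ref{lem:injective_invariance}.

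For \ref{eqn:partial_lifting_result}, unfold $A \approx_{\eps,\psi} B$ as $\tr{(A-B)^2 \psi} = O(\eps) + \negl(\lambda)$, using that $A,B$ are Hermitian. For the binary observables relevant to the later soundness proof, $A^2 = B^2 = \1$, so
\begin{align*}
\tr{(A-B)^2 \sigma} = 2 - 2\,\mathrm{Re}\bigl(\tr{AB\sigma}\bigr) \,,
\end{align*}
and $\mathrm{Re}(\tr{AB\sigma})$ can be estimated to inverse-polynomial precision by a $\QPT$ circuit given $\{t_i\}_{i\in\mI}$ (via a Hadamard test on the efficient unitary $AB$). Replaying the distinguisher from the previous paragraph with this estimator in place of the direct measurement of $A$ yields $\tr{(A-B)^2 \sigma^{(\vec\theta)}} \approx_0 \tr{(A-B)^2 \sigma^{(\vec\theta')}}$, which immediately gives the claimed iff since both sides have the same $O(\eps)+\negl(\lambda)$ bound if either does. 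The main subtlety throughout is that the distinguisher must \emph{never} attempt to obtain a trapdoor for an index $j \notin \mI$: on such indices $\theta_j$ and $\theta_j'$ may differ, and possession of the corresponding trapdoor would trivialise the injective-invariance challenge by revealing which set of keys was sampled. The restriction of the equal-basis condition to $\mI$ is precisely what ensures that only the ``safe'' trapdoors are needed, making the reduction go through cleanly.
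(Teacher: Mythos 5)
Your proof takes essentially the same route as the paper: both \ref{eqn:partial_lifting_result} and \ref{eqn:partial_lifting_in_trace} are reduced to the extended injective invariance of \ref{lem:injective_invariance} via a distinguisher that itself samples the keys \emph{and trapdoors} for the indices in $\mI$ (which is safe precisely because $\theta_i=\theta_i'$ there), runs the device, and applies the now-efficient observables. Your ``override the $\mI$-entries of the challenger's $\vec k$'' is functionally identical to the paper's choice of playing the injective-invariance game only on $[n]\setminus\mI$, since the key entries are sampled independently across indices; and your Hadamard test on the unitary $AB$ (with $A^2=B^2=\1$) plays the same role as the paper's invocation of \cite[Lemma 2.6]{self-testing} --- both produce a single output bit whose bias is an affine function of $\tr{(A-B)^\dagger(A-B)\sigma}$, via $\tr{(A-B)^2\sigma}=2-2\,\mathrm{Re}\,\tr{AB\sigma}$.

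One small error in the execution: the game of \ref{lem:injective_invariance} is one-shot, so your distinguisher does \emph{not} get $\poly(\lambda)$ independent challenge samples from which to ``estimate $\tr{A\sigma^{(\cdot)}}$ to inverse-polynomial precision.'' (Re-running the device locally only gives many samples of $\sigma$ for the single fixed key $\vec k^\ast$, which estimates a key-conditioned expectation and leaves you needing a key-dependent threshold.) Fortunately no estimation is needed: a single measurement of $A$, or a single Hadamard-test bit, already yields a guess whose advantage, averaged over all randomness including the key choice, is $\tfrac12\bigl|\tr{A\sigma^{(\vec\theta)}}-\tr{A\sigma^{(\vec\theta')}}\bigr|$ (resp.~$\tfrac14\bigl|\tr{(A-B)^2\sigma^{(\vec\theta)}}-\tr{(A-B)^2\sigma^{(\vec\theta')}}\bigr|$), which is non-negligible under the contradiction hypothesis. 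This is exactly how the paper's argument is structured, and swapping your estimation step for this one-shot output makes the two proofs match.
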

\begin{proof}
We first prove \ref{eqn:partial_lifting_result}.
Fix $\vec \theta, \vec \theta' \in \bits^n$.
By \ref{def:approx_dist}, it suffices to show that 
\begin{align}
\tr{(A - B)^\dagger (A-B) \sigma^{(\vec \theta)}} \approx_{0} \tr{(A - B)^\dagger (A-B) \sigma^{(\vec \theta')}} \,.
\end{align}
Suppose for the sake of contradiction that there exists a positive non-negligible function $\mu(\lambda)$ such that 
\begin{align}
\tr{(A - B)^\dagger (A-B) \sigma^{(\vec \theta)}} - \tr{(A - B)^\dagger (A-B) \sigma^{(\vec \theta')}} > \mu(\lambda) \,.
\end{align}
(For the other case where this difference is smaller than a negative non-negligible function the proof is analogous.)
We want to show that this allows us to break the extended injective invariance property of \ref{lem:injective_invariance}.
For this, consider the injective invariance game of \ref{lem:injective_invariance} on the indices $[n] \setminus \mI$: a challenger chooses $b \rand \bits$ samples a set of keys $\{k_i\}_{i \in [n] \setminus \mI}$, where $k_i \in \mK_{\theta_i}$ if $b = 0$ and $k_i \in \mK_{\theta'_i}$ if $b = 1$.
The keys $\{k_i\}_{i \in [n] \setminus \mI}$ are sent to a distinguisher $\mD$, who is asked to provide a guess for $b$.

The distinguisher $\mD$ proceeds as follows: $\mD$ samples keys $k_j \in \mK_{\theta_j}$ for $j \in \mI$ along with the corresponding trapdoors $t_j$.
$\mD$ then runs the device's operations on the keys $\vec k = (k_1, \dots, k_n)$ and will obtain a state $\sigma$, with $\sigma = \sigma^{(\vec \theta)}$ if $b = 0$ and $\sigma = \sigma^{(\vec \theta')}$ if $b=1$.

By \cite[Lemma 2.6]{self-testing}, there exists an efficient procedure that given the state $\sigma$ and the trapdoors $\{t_j\}_{j \in \mI}$ (so that for this procedure, $A$ and $B$ are efficient binary observables) produces a bit $b'$ with 
\begin{align}
\pr{b'=0 \, | \; \sigma} = \frac{1}{4} \tr{(A - B)^\dagger (A-B) \sigma} \,.
\end{align}
$\mD$ runs this procedure on the state $\sigma$ it produced using the device's operations and outputs $b'$ as its guess for $b$.

It is clear that $\mD$ is efficient. Its distinguishing advantage is given by 
\begin{align}
& \pr{\text{$\mD$ guesses 0} \, | \; b = 0} - \pr{\text{$\mD$ guesses 0} \, | \; b = 1} \\
&= \pr{b' = 0 \, | \; b = 0} - \pr{b' = 0 \, | \; b = 1} \\
&= \pr{b'=0 \, | \; \sigma^{(\vec \theta)}} - \pr{b'=0 \, | \; \sigma^{(\vec \theta')}} \\
&= \frac{1}{4} \tr{(A - B)^\dagger (A-B) \sigma^{(\vec \theta)}} - \frac{1}{4} \tr{(A - B)^\dagger (A-B) \sigma^{(\vec \theta')}} \\
& > \frac{1}{4} \mu(\lambda) \,,
\end{align}
which is non-negligible by assumption, yielding the desired contradiction.

For \ref{eqn:partial_lifting_in_trace}, the proof is analogous, except that instead of using the procedure from \cite[Lemma 2.6]{self-testing} to estimate $\tr{(A - B)^\dagger (A-B) \sigma}$, in this case the distinguisher can simply measure the observable $A$ on the state it has produced by running the device's operations.
\end{proof}

We can apply this lemma to extend \ref{eqn:xtilde_vec_succprob} to states other than $\sigma^{(\vec 1)}$:
\begin{lemma} \label{lem:xtilde_succprob_allstates}
For any efficient device $D = (S, \Pi, M, P)$ and strings $\vec a, \vec \theta \in \bits^n$ such that $a_i = 1 \implies \theta_i = 1$:
\begin{align}
\tr{\tilde X(\vec a) \sigma^{(\vec \theta)}} &\approx_{\gamma_H(D)} 1 \,.
\end{align}
\end{lemma}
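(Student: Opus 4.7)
The plan is to reduce directly to the already established case $\vec\theta=\vec 1$ of \ref{lem:observables_succ_prob} (\ref{eqn:xtilde_vec_succprob}) by invoking \ref{lem:partial_lifting}. The key observation is that $\tilde X(\vec a)$ only ``touches'' the coordinates $i \in \mI := \{i : a_i = 1\}$: from \ref{def:observables}, $\tilde X(\vec a) = \tilde X_1^{a_1}\cdots \tilde X_n^{a_n}$, and the phases that distinguish $\tilde X_i$ from $X_i$ involve $\hat u(k_i, y_i, d_i)$, which for $i\in \mI$ requires access to the trapdoor $t_i$ (and, for the value of $\hat u$ to be defined at all, requires $k_i\in\kf$, i.e., $\theta_i=1$). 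For $i\notin \mI$ the factor is trivial, so no trapdoor is needed there.

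First I would argue that, for any basis choice $\vec\theta$ with $\theta_i=1$ for all $i\in \mI$, the observable $\tilde X(\vec a)$ can be implemented efficiently given only the trapdoors $\{t_i\}_{i\in \mI}$: given $\vec y,\vec d$ (which are classical in the states $\sigma^{(\vec\theta)}$ by \ref{eqn:def_sigma}), one classically computes $\hat u(k_i,y_i,d_i)$ for each $i\in \mI$ using $t_i$, and then applies $X(\vec a)$, which is a projective measurement of the efficient device and hence efficient. Thus $\tilde X(\vec a)$ satisfies the hypothesis of \ref{lem:partial_lifting} with respect to the index set $\mI$.

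Next, set $\vec \theta'=\vec 1$. By hypothesis $\theta_i=1=\theta'_i$ for every $i\in\mI$, so \ref{lem:partial_lifting} (specifically \ref{eqn:partial_lifting_in_trace} applied to the observable $\tilde X(\vec a)$) yields
\begin{equation*}
\tr{\tilde X(\vec a)\,\sigma^{(\vec\theta)}} \approx_0 \tr{\tilde X(\vec a)\,\sigma^{(\vec 1)}}.
\end{equation*}
Combining this with \ref{eqn:xtilde_vec_succprob} of \ref{lem:observables_succ_prob}, which gives $\tr{\tilde X(\vec a)\,\sigma^{(\vec 1)}}\approx_{\gamma_H(D)} 1$, and using the triangle inequality for the $\approx$-relation from \ref{def:approx_dist}, we conclude
\begin{equation*}
\tr{\tilde X(\vec a)\,\sigma^{(\vec\theta)}} \approx_{\gamma_H(D)} 1,
\end{equation*}
as required.

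The only non-routine step is verifying that $\tilde X(\vec a)$ really is efficient given only the trapdoors indexed by $\mI$; once this is in place, the result is a one-line application of \ref{lem:partial_lifting}. I do not anticipate any obstacle, as the hypothesis $a_i=1\Rightarrow \theta_i=1$ was designed precisely so that the $\kf$-trapdoors needed to compute $\hat u$ coincide on $\mI$ for $\vec\theta$ and $\vec 1$.
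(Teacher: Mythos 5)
Your proof is correct and follows exactly the paper's argument: define $\mI=\{i:a_i=1\}$, observe that $\tilde X(\vec a)$ is efficiently implementable given the trapdoors indexed by $\mI$, and apply \ref{lem:partial_lifting} with $\vec\theta'=\vec 1$ to transfer \ref{eqn:xtilde_vec_succprob} to $\sigma^{(\vec\theta)}$. The only difference is that you spell out the application in more detail than the paper's terse one-line proof.
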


\begin{proof}
We define $\mI = \{i \in [n] \,|\; a_i = 1\}$.
The observable $\tilde X(a)$ is efficient if one has access to the trapdoors $\{t_i\}_{i \in \mI}$.
Since further $\theta_i = 1$ for all $i \in \mI$ by assumption, the lemma follows from \ref{eqn:xtilde_vec_succprob} by means of \ref{lem:partial_lifting}.
\end{proof}

Using \ref{lem:state_dep_distance_expanded}, we can immediately obtain the following corollary:
\begin{corollary} \label{lem:xtilde_is_one}
For any efficient device $D = (S, \Pi, M, P)$ and strings $\vec a, \vec \theta \in \bits^n$ such that $a_i = 1 \implies \theta_i = 1$:
\begin{align}
\tilde X(\vec a) \approx_{\gamma_H(D), \sigma^{(\vec \theta)}} \1 \,.
\end{align}
\end{corollary}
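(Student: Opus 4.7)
The plan is to derive this corollary as an essentially immediate consequence of the preceding lemma. Lemma \ref{lem:xtilde_succprob_allstates} already establishes that $\tr{\tilde X(\vec a) \sigma^{(\vec \theta)}} \approx_{\gamma_H(D)} 1$ under the hypothesis $a_i = 1 \implies \theta_i = 1$. Since $\sigma^{(\vec \theta)}$ is a normalised state (it arises by applying the projective measurement $M$ to the normalised state $\psi^{(\vec \theta)}$ and recording the outcomes, see \ref{eqn:def_sigma}), we have $\tr{\sigma^{(\vec \theta)}} = 1$, so the previous approximation can be rewritten as
\begin{equation*}
\tr{\tilde X(\vec a) \sigma^{(\vec \theta)}} \approx_{\gamma_H(D)} \tr{\sigma^{(\vec \theta)}} \,.
\end{equation*}

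Next I would invoke Lemma \ref{lem:state_dep_distance_expanded} in the special case $V = \1$, with $A = \1$, $B = \tilde X(\vec a)$, and $\psi_1 = \sigma^{(\vec \theta)}$. Note that $\tilde X(\vec a)$ is a binary observable by construction, hence Hermitian and unitary, so $B^\dagger = B$ and the hypothesis of the lemma reads exactly as the display above. The conclusion of Lemma \ref{lem:state_dep_distance_expanded} then yields $\tilde X(\vec a) \approx_{\gamma_H(D), \sigma^{(\vec \theta)}} \1$, which is the claim of the corollary.

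There is no real obstacle here: the only small thing to verify is that $\tilde X(\vec a)$ qualifies as a unitary observable (for the hypothesis of Lemma \ref{lem:state_dep_distance_expanded}) and that $\sigma^{(\vec \theta)}$ is normalised, both of which follow directly from Definitions \ref{def:observables} and \ref{def:devices}. Thus the proof is essentially one line: combine Lemma \ref{lem:xtilde_succprob_allstates} with Lemma \ref{lem:state_dep_distance_expanded}.
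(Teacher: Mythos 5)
Your proof is correct and follows exactly the route indicated by the paper: the paper derives this corollary immediately from \ref{lem:xtilde_succprob_allstates} via \ref{lem:state_dep_distance_expanded}, which is precisely your argument with $V=\1$, $A=\1$, $B=\tilde X(\vec a)$. The observations that $\tilde X(\vec a)$ is a binary (hence Hermitian and unitary) observable and that $\sigma^{(\vec\theta)}$ is normalised are the right small checks to make the application of the lemma go through.
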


We still need to extend \ref{eqn:z_vec_succprob} and \ref{eqn:x_vec_succprob} to different choices of $\theta$.
For this, we need to strengthen the statement of \ref{lem:states_indist} and show that certain partial post-measurement states are also indistinguishable.

\begin{lemma} \label{lem:partial_postmeas_indist}
For an efficient device $D = (S, \Pi, M, P)$ and strings $\vec b, \vec \theta^0, \vec \theta^1 \in \bits^n$ such that $b_i = 1 \implies \theta_i^0 = \theta_i^1$ the following holds for any $v \in \bits$:
\begin{align}
\sigma^{(\vec \theta^0, v, \vec b)} \capprox_{0} \sigma^{(\vec \theta^1, v, \vec b)} \,. \label{eqn:diff_sigmas_indist}
\end{align}
\end{lemma}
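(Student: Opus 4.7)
The plan is to prove this by reduction to the extended injective invariance property of \ref{lem:injective_invariance}, applied to the index set $\mI_0 \deq \{i \in [n] : b_i = 0\}$. The hypothesis that $\theta^0_i = \theta^1_i$ whenever $b_i = 1$ is precisely what makes the reduction go through: the reduction will need to sample keys (together with trapdoors) itself on the complementary set $\mI_1 \deq \{i : b_i = 1\}$, and this is only meaningful if the basis choice on $\mI_1$ is common to both hypotheses, so that the self-sampled distribution is well-defined independently of the challenger's secret bit.

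Concretely, suppose for contradiction that some $\QPT$ procedure $\mD$ separates $\sigma^{(\vec \theta^0, v, \vec b)}$ from $\sigma^{(\vec \theta^1, v, \vec b)}$ with non-negligible advantage. I would build a distinguisher $\mD'$ for the invariance game on $\mI_0$ as follows. On receipt of challenger keys $\{k_i\}_{i \in \mI_0}$ (sampled from $\mK_{\theta^c_i}$ for an unknown bit $c \in \bits$), $\mD'$ samples its own pairs $\{(k_i, t_i)\}_{i \in \mI_1}$ with $k_i \in \mK_{\theta^0_i}$ (well-defined by the hypothesis), assembles the full key vector $\vec k$, and runs the device $D$'s operations to prepare $\sigma^{(\vec \theta^c)}$. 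Reading the classical registers $\vec y, \vec d$, $\mD'$ uses the self-generated trapdoors on $\mI_1$ to efficiently compute $v_i$ for each $i \in \mI_1$ (as $\hat b(k_i, y_i)$ if $\theta^0_i = 0$, or as $\hat u(k_i, y_i, d_i)$ if $\theta^0_i = 1$), forms $v' = \sum_{i \in \mI_1} v_i b_i = \vec v \cdot \vec b$, and then runs $\mD$ on the remaining state, outputting $\mD$'s guess as its answer for $c$ when $v' = v$ and a uniform bit otherwise. Conditioned on $v' = v$ the state handed to $\mD$ is exactly the (subnormalised) state $\sigma^{(\vec \theta^c, v, \vec b)}$, so $\mD'$ inherits the non-negligible advantage of $\mD$, contradicting \ref{lem:injective_invariance}.

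The main technicality I expect is that the states $\sigma^{(\vec \theta^c, v, \vec b)}$ are in general subnormalised, so the relation $\capprox_0$ must be interpreted with care. The cleanest workaround is to establish the stronger claim that the \emph{normalised} classical-quantum extensions $\sum_{v' \in \bits} \sigma^{(\vec \theta, v', \vec b)} \otimes \proj{v'}$ are computationally indistinguishable; the lemma as stated then follows by post-selecting on the auxiliary classical register being $v$. Equivalently, one can separately check that $\tr{\sigma^{(\vec \theta^0, v, \vec b)}} \approx_0 \tr{\sigma^{(\vec \theta^1, v, \vec b)}}$ (again via the same reduction, since this trace is efficiently estimable given the self-sampled trapdoors on $\mI_1$) and combine this with indistinguishability of the normalised conditional states. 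In either formulation, the essential point is that all the computational uncertainty is isolated on the coordinates in $\mI_0$, where it is handled directly by \ref{lem:injective_invariance}.
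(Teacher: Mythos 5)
Your reduction is essentially the paper's: self-sample keys and trapdoors on $\mI_1 = \{i : b_i = 1\}$ (well-defined because $\theta^0$ and $\theta^1$ agree there), take challenger keys on $\mI_0$, run the device, use the $\mI_1$-trapdoors to compute $\vec v\cdot\vec b$, and invoke $\mD$ only when it matches $v$. The paper reduces first to a single differing coordinate and then uses the single-key invariance game (morally the same as your direct use of \ref{lem:injective_invariance}), and handles the subnormalisation slightly differently than your cq-extension workaround—it shows the reduction's advantage is $\pr{v=v'}\,\Delta^\mD$ and separately bounds $\pr{v=v'}\ge\Delta^\mD$, giving $(\Delta^\mD)^2\le\negl(\lambda)$—but this is the same point you flag, just organised differently.
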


\begin{proof}
Fix $v$ and $\vec b$.
As a first step, we note that it suffices to show the lemma for the case where $\vec \theta^0$ and $\vec \theta^1$ only differ on one bit by the same argument as in the proof of \ref{lem:injective_invariance}.

Hence consider $\vec \theta^0, \vec \theta^1$ that only differ on one bit $i$ s.t.~$b_i = 0$.
Without loss of generality, we can assume $\theta_i^0 = 0$ and $\theta_i^1 = 1$.
Further consider an efficient distinguisher $\mD$ that, given $\sigma^{(\vec \theta^c, v, \vec b)}$, guesses $c \in \bits$.
Let $\Delta^\mD$ be the distinguishing advantage of $\mD$.

Using $\mD$, we construct the following adversary $\mA$ for the injective invariance game.
$\mA$ is given a uniformly random key $k \in \kg \cup \kf$ and asked to guess $\theta^* \in \bits$ such that $k \in \mK_{\theta^*}$.
$\mA$ sets $k_i \deq k$ and samples keys $k_j \in \mK_{\theta_j}$ for $j \neq i$ (along with the corresponding trapdoors $t_j$). 
Then, $\mA$ performs the device's operations with these keys, obtaining $\vec y$ and $\vec d$ in the process.
Since $\mA$ knows $t_j$ for all $j \in [n]$ for which $b_j = 1$, it can efficiently compute 
\begin{align}
v' = \sum_{j \sth \theta_j = 0} b_j \, \hat b(k_j, y_j) + \sum_{j \sth \theta_j = 1} b_j \, \hat u(k_j, y_j, d_j) \mod 2\,.
\end{align}
If $v' \neq v$, $\mA$ submits a uniformly random guess for $\theta^*$.
If $v' = v$, $\mA$ runs the distinguisher $\mD$ on the post-measurement state.
$\mD$ outputs $c' \in \bits$, and $\mA$ guesses $\theta^*=c'$.

Let $\Delta^\mA$ be the distinguishing advantage of $\mA$.
Since $\mD$ is efficient, so is $\mA$, so from the injective invariance property of ENTCF families we get that $\Delta^\mA \leq \negl(\lambda)$, so 
\begin{align}
\pr{\mA \textnormal{ correct}} \leq \frac{1}{2} + \negl(\lambda)\,. \label{eqn:inj_inv_A}
\end{align}
On the other hand, 
\begin{align}
\pr{\mA \textnormal{ correct}} 
&= \pr{v = v'} \pr{\mA \textnormal{ correct} \, | \, v = v'} +  \pr{v \neq v'} \pr{\mA \textnormal{ correct} \, | \, v \neq v'} \,.
\end{align}
In the case $v \neq v'$, $\mA$ guesses uniformly at random , so $\pr{\mA \textnormal{ correct} \, | \, v \neq v'} = 1/2$.
In the case $v = v'$, by construction $\mA$ guesses correctly if and only if $\mD$ guesses correctly, so $\pr{\mA \textnormal{ correct} \, | \, v = v'} = \frac{1}{2} + \Delta^\mD$.
Inserting this, we get 
\begin{align}
\pr{\mA \textnormal{ correct}} 
&= \pr{v = v'} \left( \frac{1}{2} + \Delta^\mD \right) +  (1 - \pr{v = v'}) \frac{1}{2} \\
&= \frac{1}{2} + \pr{v = v'}  \Delta^\mD \,.
\end{align}
Now notice that 
\begin{align}
\pr{v = v'} 
&= \frac{1}{2} \tr{\sigma^{(\vec \theta^0, v, \vec b)}} + \frac{1}{2} \tr{\sigma^{(\vec \theta^1, v, \vec b)}} \\
&\geq \frac{1}{2} \norm{\sigma^{(\vec \theta^0, v, \vec b)} - \sigma^{(\vec \theta^1, v, \vec b)}}_1 \\
&\geq \Delta^\mD \,,
\end{align}
since the distinguishing advantage of any distinguisher is upper-bounded by the trace distance.
Combining this with \ref{eqn:inj_inv_A}, we get 
\begin{align}
\frac{1}{2} + \negl(\lambda) \geq \frac{1}{2} + \left( \Delta^\mD  \right)^2 \,.
\end{align}
Since the square root of a negligible function is again negligible, $\Delta^\mD = \negl(\lambda)$ as desired.
\end{proof}

Since $Z(\vec a)$ and $X(\vec b)$ are efficient observables, \ref{lem:partial_postmeas_indist} immediately implies that we can extend \ref{eqn:z_vec_succprob} and \ref{eqn:x_vec_succprob} to different choices of $\vec \theta$:

\begin{corollary} \label{lem:observables_succ_prob_mixed_theta}
Consider an efficient device $D = (S, \Pi, M, P)$ and a bit $v \in \bits$.
\begin{enumerate}
\item For any $\vec \theta, \vec a \in \bits^n$ such that $a_i = 1 \implies \theta_i = 0$:
\begin{align}
\tr{Z(\vec a)^{(v)} \sigma^{(\vec \theta, v, \vec a)}} &\approx_{\gamma_H(D)} \tr{\sigma^{(\vec \theta, v, \vec a)}} \,. \label{eqn:z_vec_succprob_mixed_theta}
\end{align}
\item For any $\vec \theta, \vec a \in \bits^n$ such that $a_i = 1 \implies \theta_i = 1$:
\begin{align}
\tr{X(\vec a)^{(v)} \sigma^{(\vec \theta, v, \vec a)}} &\approx_{\gamma_H(D)} \tr{\sigma^{(\vec \theta, v, \vec a)}} \,. \label{eqn:x_vec_succprob_mixed_theta}
\end{align}
\end{enumerate}
\end{corollary}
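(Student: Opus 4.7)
The plan is to derive this corollary as an essentially immediate consequence of the preceding \ref{lem:observables_succ_prob} and \ref{lem:partial_postmeas_indist}, by pulling the statement from the extreme basis choices $\vec\theta = \vec 0$ or $\vec\theta = \vec 1$ over to an arbitrary $\vec\theta$ via computational indistinguishability. I will focus on part (i); part (ii) is completely symmetric with $Z$, $X$, $\vec 0$, and $\vec 1$ swapped.

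First, I would verify that the hypotheses of \ref{lem:partial_postmeas_indist} are satisfied with $\vec\theta^0 = \vec 0$, $\vec\theta^1 = \vec\theta$, and $\vec b = \vec a$. The requirement ``$b_i = 1 \implies \theta_i^0 = \theta_i^1$'' becomes ``$a_i = 1 \implies 0 = \theta_i$'', which is exactly the assumption of part (i). Thus \ref{lem:partial_postmeas_indist} gives $\sigma^{(\vec 0, v, \vec a)} \capprox_0 \sigma^{(\vec\theta, v, \vec a)}$.

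Next, since the projector $Z(\vec a)^{(v)} = \sum_{\vec v : \vec a \cdot \vec v = v} P_0^{(\vec v)}$ is an efficiently implementable measurement (requiring no trapdoors at all, only the device's own $P_0$ measurement), any efficient distinguisher could implement $Z(\vec a)^{(v)}$ on the given state and output its acceptance bit. Computational indistinguishability of the two subnormalized states therefore lifts to the trace level, yielding
\begin{align*}
\tr{Z(\vec a)^{(v)} \sigma^{(\vec\theta, v, \vec a)}} &\approx_0 \tr{Z(\vec a)^{(v)} \sigma^{(\vec 0, v, \vec a)}}, \\
\tr{\sigma^{(\vec\theta, v, \vec a)}} &\approx_0 \tr{\sigma^{(\vec 0, v, \vec a)}}.
\end{align*}
Chaining these two $\approx_0$ relations with the $\approx_{\gamma_H(D)}$ statement of \ref{eqn:z_vec_succprob} from \ref{lem:observables_succ_prob} via the triangle inequality of the $\approx$-notation (\ref{def:approx_dist}) gives the desired approximation for arbitrary admissible $\vec\theta$.

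I do not expect a genuine obstacle here: the only mildly delicate point is that $\sigma^{(\vec\theta, v, \vec a)}$ is subnormalized, so one must read the $\capprox_0$ statement of \ref{lem:partial_postmeas_indist} as asserting indistinguishability of the full subnormalized objects (i.e., indistinguishability that also incorporates the conditioning probability). This is exactly how \ref{lem:partial_postmeas_indist} is proved --- the reduction there already absorbs the mismatch in traces into the adversary's distinguishing advantage --- so ``efficient measurement on the state yields an efficient distinguisher'' remains valid, and the trace differences above are negligible. Everything else is just plugging the right $\vec\theta^0, \vec\theta^1, \vec b$ into the previous lemma and invoking the triangle inequality.
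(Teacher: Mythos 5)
Your proposal is correct and follows essentially the same route as the paper: the paper's justification is a single sentence noting that since $Z(\vec a)$ and $X(\vec b)$ are efficient observables, \ref{lem:partial_postmeas_indist} immediately implies the extension of \ref{eqn:z_vec_succprob} and \ref{eqn:x_vec_succprob} to general $\vec\theta$. You simply spell out the details (the choice $\vec\theta^0 = \vec 0$ or $\vec 1$, $\vec\theta^1 = \vec\theta$, $\vec b = \vec a$, the observation that $Z(\vec a)^{(v)}$ and the identity are efficient measurements, and the triangle inequality) and correctly flag the only mild subtlety, namely that the $\capprox_0$ in \ref{lem:partial_postmeas_indist} is already a statement about subnormalized states.
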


Using \cite[Lemma 2.19]{self-testing}, this can also be restated as follows:

\begin{corollary} \label{lem:observables_one_mixed_theta}
Consider an efficient device $D = (S, \Pi, M, P)$ and a bit $v \in \bits$.
\begin{enumerate}
\item For any $\vec \theta, \vec a \in \bits^n$ such that $a_i = 1 \implies \theta_i = 0$:
\begin{align}
Z(\vec a) \approx_{\gamma_H(D), \sigma^{(\vec \theta, v, \vec a)}}  (-1)^v \1 \,.  
\end{align}
\item For any $\vec \theta, \vec a \in \bits^n$ such that $a_i = 1 \implies \theta_i = 1$:
\begin{align}
X(\vec a) \approx_{\gamma_H(D), \sigma^{(\vec \theta, v, \vec a)}} (-1)^{v} \1 \,. 
\end{align}
\end{enumerate}
\end{corollary}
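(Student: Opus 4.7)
The plan is to apply \cite[Lemma 2.19]{self-testing} (the standard conversion from ``projector-on-state'' statements to state-dependent distance statements about binary observables) directly to the two conclusions of \ref{lem:observables_succ_prob_mixed_theta}. Since $Z(\vec a)$ and $X(\vec a)$ are defined in \ref{def:observables} as products of the binary observables $Z_i$ and $X_i$ respectively, they are themselves binary observables, and $Z(\vec a)^{(v)}$ and $X(\vec a)^{(v)}$ are exactly the spectral projectors onto their $(-1)^v$-eigenspaces. Thus the hypotheses of the rewriting lemma are met in both cases.

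For part (i), I would fix $\vec \theta, \vec a \in \bits^n$ with $a_i = 1 \Rightarrow \theta_i = 0$. By \ref{eqn:z_vec_succprob_mixed_theta} we have $\tr{Z(\vec a)^{(v)} \sigma^{(\vec \theta, v, \vec a)}} \approx_{\gamma_H(D)} \tr{\sigma^{(\vec \theta, v, \vec a)}}$. The key algebraic identity behind \cite[Lemma 2.19]{self-testing} is that for a binary observable $O$ with eigenprojectors $O^{(v)}$ one has $O - (-1)^v \1 = 2(-1)^{v \oplus 1} O^{(v \oplus 1)}$, hence $(O - (-1)^v \1)^\dagger (O - (-1)^v \1) = 4\, O^{(v \oplus 1)} = 4(\1 - O^{(v)})$. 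Plugging in gives $\tr{(Z(\vec a) - (-1)^v \1)^\dagger (Z(\vec a) - (-1)^v \1)\, \sigma^{(\vec \theta, v, \vec a)}} = 4\bigl(\tr{\sigma^{(\vec \theta, v, \vec a)}} - \tr{Z(\vec a)^{(v)} \sigma^{(\vec \theta, v, \vec a)}}\bigr)$, which is $O(\gamma_H(D)) + \negl(\lambda)$ by the hypothesis. By \ref{def:approx_dist}(iii) this is exactly the assertion $Z(\vec a) \approx_{\gamma_H(D), \sigma^{(\vec \theta, v, \vec a)}} (-1)^v \1$.

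Part (ii) proceeds identically, starting from \ref{eqn:x_vec_succprob_mixed_theta} and applying the same rewriting with $O = X(\vec a)$ and $O^{(v)} = X(\vec a)^{(v)}$. Since both parts are purely a syntactic repackaging of \ref{lem:observables_succ_prob_mixed_theta} via the cited lemma, there is no substantive obstacle; the only ``content'' of the proof is observing that $Z(\vec a)$ and $X(\vec a)$ are binary observables, that their spectral projectors coincide with the $Z(\vec a)^{(v)}$ and $X(\vec a)^{(v)}$ appearing in \ref{lem:observables_succ_prob_mixed_theta}, and that the constraints on $(\vec \theta, \vec a)$ match on the two sides. I would therefore present the proof as a single short paragraph invoking \cite[Lemma 2.19]{self-testing} on each of the two items of \ref{lem:observables_succ_prob_mixed_theta}.
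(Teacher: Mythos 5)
Your proposal is correct and matches the paper's argument exactly: the paper states that \ref{lem:observables_one_mixed_theta} follows from \ref{lem:observables_succ_prob_mixed_theta} by applying \cite[Lemma 2.19]{self-testing}, which is precisely the rewriting you carry out. The only difference is that you spell out the underlying algebraic identity rather than citing the lemma as a black box.
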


\subsection{Pauli group relations for inefficient observables\footnote{This section is based on unpublished work by Jeffrey Champion, John Wright, and the second author. We thank Jeffrey and John for allowing us to use these results here.}} \label{sec:ineff_pauli}
The goal of this section is to prove that the observables $\{\tilde X(\vec a)\}_{\vec a \in \bits^n}$ and $\{ Z(\vec b)\}_{\vec b \in \bits^n}$ approximately satisfy the relations of the $n$-qubit Pauli observables.
More formally, we will show the following proposition:

\begin{proposition} \label{lem:approx_rep_tilde}
For any efficient perfect device $D = (S, \Pi, M, P)$ and any $\vec a, \vec b \in \bits^n$:
\begin{equation}
\tilde X(\vec a) Z(\vec b) \approx_{n \gamma_H(D)^{1/4}, \sigma^{(\vec 1)}} (-1)^{\vec a \cdot \vec b} Z(\vec b) \tilde X(\vec a) \,.
\end{equation}
\end{proposition}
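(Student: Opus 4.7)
The plan is to deduce the proposition from the single-index (anti-)commutation relations established in Lemma~\ref{lem:anticomm_single_obs} (and Lemma~\ref{lem:anticomm_many_z} for multi-$Z$ observables), together with Corollary~\ref{lem:xtilde_is_one}, which states $\tilde X(\vec a) \approx_{\gamma_H(D), \sigma^{(\vec 1)}} \1$. The sign $(-1)^{\vec a \cdot \vec b}$ arises as the parity of the number of indices $i$ with $a_i = b_i = 1$, i.e.~the number of anti-commutations incurred when commuting the product $Z(\vec b) = \prod_{j:b_j=1} Z_j$ past $\tilde X(\vec a) = \prod_{i:a_i=1} \tilde X_i$ one factor at a time.

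The first step will be to reduce the claim to a scalar estimate. Both $A = \tilde X(\vec a)Z(\vec b)$ and $B = (-1)^{\vec a \cdot \vec b}\, Z(\vec b)\tilde X(\vec a)$ are unitary, being products of binary observables. Expanding $\tr{(A-B)^\dagger(A-B)\sigma^{(\vec 1)}}$ and using $\tilde X(\vec a)^2 = Z(\vec b)^2 = \1$ together with cyclicity of the trace reduces the desired approximation to
\[ \tr{\tilde X(\vec a)\, Z(\vec b)\, \tilde X(\vec a)\, Z(\vec b)\, \sigma^{(\vec 1)}} \approx_{n\gamma_H(D)^{1/4}} (-1)^{\vec a \cdot \vec b}. \]

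The second step uses Corollary~\ref{lem:xtilde_is_one} together with the replacement lemma (Lemma~\ref{lem:replace_in_trace}) to collapse the outermost factor of $\tilde X(\vec a)$ against $\sigma^{(\vec 1)}$, reducing the target to $\tr{Z(\vec b)\, \tilde X(\vec a)\, Z(\vec b)\, \sigma^{(\vec 1)}} \approx (-1)^{\vec a \cdot \vec b}$. I would then evaluate the inner conjugation $Z(\vec b)\, \tilde X(\vec a)\, Z(\vec b)$ by iterating the single-index anti-commutation from Lemma~\ref{lem:anticomm_single_obs}: enumerating $\{j : b_j = 1\} = \{j_1, \dots, j_k\}$, each factor $Z_{j_l}$ can be commuted through $\tilde X(\vec a)$ with sign $(-1)^{a_{j_l}}$, since by Lemma~\ref{lem:anticomm_single_obs} it anti-commutes with $\tilde X_{j_l}$ and commutes with $\tilde X_i$ for $i \neq j_l$ (using that the $\tilde X_i$ themselves commute exactly as operators, from the structure of their definition via the common projective measurement $\{P_1^{(\vec v)}\}$). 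The twin copies of each $Z_{j_l}$ then collide via $Z_{j_l}^2 = \1$, leaving $Z(\vec b)\,\tilde X(\vec a)\,Z(\vec b) \approx (-1)^{\vec a \cdot \vec b}\,\tilde X(\vec a)$ on $\sigma^{(\vec 1)}$. A final application of Corollary~\ref{lem:xtilde_is_one} turns $\tr{\tilde X(\vec a)\, \sigma^{(\vec 1)}}$ into $1 + \negl(\lambda)$, yielding the claimed scalar value.

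The main obstacle is tracking the error accumulation. A naive chaining of the $k \leq n$ single-index approximations through the replacement lemma would introduce a nested square-root at every swap, leading to a vacuous bound. The trick to recovering a clean $n \gamma_H(D)^{1/4}$ estimate will be to chain the $n$ single-index commutation statements \emph{additively} via the triangle inequality for the state-dependent distance (which costs a factor $O(n)$ rather than $n$ square roots), and only invoke the replacement lemma \emph{once} at the end to convert the state-dependent approximation into the scalar trace estimate, picking up a single square-root factor. The fact that the $\tilde X_i$ commute as operators is crucial here: it ensures that the commutator at each swap reduces to a single state-dependent statement about $Z_{j_l} \tilde X(\vec a) Z_{j_l}$, with no operator-ordering ambiguity that would force additional replacement-lemma invocations.
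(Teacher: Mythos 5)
Your setup — the reduction to the scalar trace statement and the removal of the outermost $\tilde X(\vec a)$ via \ref{lem:xtilde_is_one} — matches the paper, and your lemma attributions need a small fix (\ref{lem:anticomm_single_obs} only gives the single-index anti-commutation $\tr{Z_i\tilde X_iZ_i\sigma^{(\vec 1^i)}}\approx -1$; the cross-index commutation statements you invoke are packaged in \ref{lem:anticomm_many_z}). The substantive gap is in your chaining step. You want to prove the state-dependent bound $Z(\vec b)\tilde X(\vec a)Z(\vec b) \approx_{\eps,\sigma^{(\vec 1)}}(-1)^{\vec a\cdot\vec b}\tilde X(\vec a)$ by additively chaining pairwise (anti-)commutation links with the triangle inequality and invoking \ref{lem:replace_in_trace} just once at the end. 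But a chain of state-dependent distances closes only if every link is an approximation with respect to the \emph{same} state. For a single link, where some factor pair is swapped in the middle of the product, the difference of consecutive operators has the form $O_{m-1}-O_m = L(A-B)R$ with $L,R$ unitary and $A\approx_{\eps,\sigma^{(\vec 1)}}B$ the relevant (anti-)commutation from \ref{lem:anticomm_many_z}. Then
\begin{equation*}
\tr{(O_{m-1}-O_m)^\dagger(O_{m-1}-O_m)\,\sigma^{(\vec 1)}} = \tr{(A-B)^\dagger(A-B)\,R\sigma^{(\vec 1)}R^\dagger}\,,
\end{equation*}
so $L$ cancels but $R$ conjugates the state: the link is a bound on $R\sigma^{(\vec 1)}R^\dagger$, not on $\sigma^{(\vec 1)}$, and \ref{lem:anticomm_many_z} gives the bound on $\sigma^{(\vec 1)}$ only. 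Because $R$ inevitably accumulates $\tilde X$-factors (inefficient observables), you also cannot transfer via \ref{lem:partial_lifting}, which requires efficiency. The chain does not close.

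Your stated reason for avoiding repeated uses of \ref{lem:replace_in_trace} — that they would "introduce a nested square-root at every swap, leading to a vacuous bound" — is mistaken: the replacement lemma turns a state-dependent $O(\eps)$ into a \emph{scalar} $O(\eps^{1/2})$, and scalar errors from successive applications add linearly; there is no nesting. This is exactly what the paper does, and it works. Staying at the trace level throughout, the paper peels off $\tilde X_i^{a_i}$ for $i=n,\dots,1$: it moves $\tilde X_i^{a_i}$ through the right-hand $Z(\vec b)$ via \ref{lem:anticomm_many_z} and \ref{lem:replace_in_trace} (scalar cost $O(\gamma_H(D)^{1/4})$), and then — because $\tilde X_i^{a_i}$ now sits \emph{adjacent to the state} — removes it via \ref{lem:xtilde_is_one} and \ref{lem:replace_in_trace} (scalar cost $O(\gamma_H(D)^{1/2})$). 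Since every swap acts on the factor adjacent to $\sigma^{(\vec 1)}$, there is never a nontrivial right factor to conjugate, and the $O(n)$ scalar errors sum to $O(n\gamma_H(D)^{1/4})$. To repair your argument, adopt this iteration order — move the $\tilde X_i$'s, not the $Z_j$'s, and perform each swap and each removal at the trace level on the factor next to the state.
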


The proof of this can be found at the end of this section. 
We first need to show a number of preparatory results.

\begin{lemma} \label{lem:z_pi_equiv}
We define
\begin{equation}
\Pi_i^{(b)} =  \sum_{b_1, \dots, b_{i-1}, b_{i+1}, \dots, b_n} \sum_{\vec x} \Pi^{(\vec b, \vec x)} \,,
\end{equation}
where $\vec b = (b_1, \dots, b_{i-1}, b, b_{i+1}, \dots, b_n)$.
For any efficient perfect device $D = (S, \Pi, M, P)$, the following holds for any $i$:
\begin{align}
\sum_{b, \vec d} \norm{
\left( M^{(\vec d)} \Pi^{(b)}_{i} - Z_{i, \vec d}^{(b)} M^{(\vec d)} \right) \left( \psi^{(\vec 1^i)} \right)^{1/2}}^2_2 \approx_{\gamma_H(D)} 0 \,, \label{eqn:zi_pi_equiv_statement}
\end{align}
where $\norm{\cdot}_2$ denotes the Schatten 2-norm (also called Hilbert-Schmidt norm) and $\vec 1^i \in \bits^n$ is the bitstring whose $i$-th bit is 1 and whose remaining bits are 0 (as defined in \ref{sec:notation}).
\end{lemma}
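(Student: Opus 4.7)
The idea is to reduce the claim to the counterfactual state $\psi^{(\vec 0)}$ (all keys in $\kg$) and then transport back via injective invariance. On $\psi^{(\vec 0)}$ the preimage bit at position $i$ is uniquely determined, namely $\hat b(k_i, y_i)$, so both $\Pi^{(b)}_i$ and $Z^{(b)}_{i,\vec d}\,M^{(\vec d)}$ should act as the classical indicator
\begin{align*}
P_{y,b} \;\deq\; \sum_{\vec y \,:\, \hat b(k_i, y_i) = b} \1_D \otimes \proj{\vec y}_Y
\end{align*}
up to the protocol's error parameters; the two operators can therefore be swapped against each other on $\psi^{(\vec 0)}$. The swapped identity is then transferred to $\psi^{(\vec 1^i)}$ by computational indistinguishability of the two key distributions, which differ only at position $i$.

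\textbf{Two approximations on $\psi^{(\vec 0)}$.} First, perfectness ($\gamma_P(D) = \negl$) yields
\begin{align*}
\sum_b \norm{(\Pi^{(b)}_i - P_{y,b}) (\psi^{(\vec 0)})^{1/2}}_2^2 \;\leq\; 2\gamma_P(D) \;=\; \negl(\lambda),
\end{align*}
since a mismatch $b_i \neq \hat b(k_i, y_i)$ at position $i$ forces the preimage check to fail. Second, \ref{lem:observables_one_mixed_theta}(i) with $\vec a = \vec 1^i$ and $\vec \theta = \vec 0$ (valid since $a_i = 1 \Rightarrow \theta_i = 0$) gives $Z^{(b)}_i \approx_{\gamma_H(D),\,\sigma^{(\vec 0)}} P_{y,b}$; unpacking $\sigma^{(\vec 0)}_{\vec y,\vec d} = M^{(\vec d)}_{\vec y}\, \psi^{(\vec 0)}_{\vec y}\, M^{(\vec d)}_{\vec y}$ and cyclicity of the trace rewrites this as
\begin{align*}
\sum_{b, \vec d} \norm{(Z^{(b)}_{i, \vec d} - P_{y,b})\, M^{(\vec d)} (\psi^{(\vec 0)})^{1/2}}_2^2 \;=\; O(\gamma_H(D)) + \negl(\lambda).
\end{align*}

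\textbf{Combine and transport.} Because $P_{y,b}$ is block-diagonal in $\H_Y$ it commutes with $M^{(\vec d)}$, so $M^{(\vec d)} \Pi^{(b)}_i - P_{y,b} M^{(\vec d)} = M^{(\vec d)} (\Pi^{(b)}_i - P_{y,b})$; combined with $\norm{M^{(\vec d)}}_\infty \leq 1$ and $\sum_{\vec d} M^{(\vec d)} = \1$, the first display gives $\sum_{b,\vec d} \norm{(M^{(\vec d)} \Pi^{(b)}_i - P_{y,b} M^{(\vec d)})(\psi^{(\vec 0)})^{1/2}}_2^2 \leq 2\gamma_P(D) = \negl$. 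The triangle inequality together with $(a+b)^2 \leq 2a^2 + 2b^2$ now combines this with the second display to establish the target bound $O(\gamma_H(D))+\negl$ on $\psi^{(\vec 0)}$. Finally, writing the left-hand side of the lemma as $\tr{C\,\psi^{(\vec \theta)}}$ with $C = \sum_{b,\vec d}(M^{(\vec d)} \Pi^{(b)}_i - Z^{(b)}_{i,\vec d} M^{(\vec d)})^\dagger (M^{(\vec d)} \Pi^{(b)}_i - Z^{(b)}_{i,\vec d} M^{(\vec d)})$ exhibits it as an efficient statistic of $\psi$. The argument of \ref{lem:partial_lifting} applied with $\mI = [n] \setminus \{i\}$ then transfers the bound from $\psi^{(\vec 0)}$ to $\psi^{(\vec 1^i)}$ at the cost of a negligible additive error; a non-negligible gap between the two values would yield an efficient distinguisher between $\kg$ and $\kf$ on the single key $k_i$, contradicting injective invariance (\ref{lem:injective_invariance}).

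\textbf{Main obstacle.} The delicate step is the transport: the squared 2-norm statistic must be shown to be efficiently estimable given efficient access to $\psi$, so that an efficient distinguisher can actually be built. The quadratic expansion into four traces of products of efficient projectors $\Pi^{(b)}_i$, $M^{(\vec d)}$, and $Z^{(b)}_{i, \vec d}$ does exactly this, so the argument reduces to a direct adaptation of \ref{lem:partial_lifting}. A secondary bookkeeping point is converting cleanly between the state-dependent distance relative to $\sigma^{(\vec 0)}$ supplied by \ref{lem:observables_one_mixed_theta} and the 2-norm expression involving $M^{(\vec d)}(\psi^{(\vec 0)})^{1/2}$, which relies only on cyclicity of the trace and the explicit factorization $\sigma^{(\vec 0)}_{\vec y,\vec d} = M^{(\vec d)}_{\vec y}\, \psi^{(\vec 0)}_{\vec y}\, M^{(\vec d)}_{\vec y}$.
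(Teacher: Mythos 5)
Your route differs from the paper's in a useful way: rather than expanding the 2-norm of $M^{(\vec d)} \Pi_i^{(b)} - Z_{i,\vec d}^{(b)} M^{(\vec d)}$ directly, you introduce the intermediate ``classical'' projector $P_{y,b}$ (which projects onto $\{\vec y : \hat b(k_i,y_i) = b\}$) and decompose the task into two independent approximations on $\psi^{(\vec 0)}$ --- one from perfectness ($\Pi_i^{(b)} \approx P_{y,b}$), the other from \ref{lem:observables_succ_prob} / \ref{lem:observables_one_mixed_theta} ($Z_{i,\vec d}^{(b)} M^{(\vec d)} \approx P_{y,b} M^{(\vec d)}$) --- before stitching them together with the triangle inequality. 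That decomposition is clean and correct, and it makes the role of the reference projector $P_{y,b}$ explicit in a way the paper's proof only does implicitly at the very end. The paper instead works purely algebraically: it expands the squared 2-norm into the sum of three trace terms, observes the first two equal $1$, and reduces the cross term to the sequential-measurement probability $\sum_{b,\vec d} \tr{Z_{i,\vec d}^{(b)} M^{(\vec d)} \Pi_i^{(b)} \psi \Pi_i^{(b)} M^{(\vec d)}}$, crucially using that the off-diagonal contributions $\Pi_i^{(0)} \psi \Pi_i^{(1)} + \Pi_i^{(1)} \psi \Pi_i^{(0)}$ vanish under the trace.

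The one genuine gap is in your transport step, and it is precisely the algebraic reduction just mentioned. You assert that ``the quadratic expansion into four traces of products of efficient projectors \ldots does exactly this'' (i.e., exhibits $\tr{C\,\psi}$ as efficiently estimable), but the raw expansion gives cross terms of the form $\tr{\Pi_i^{(b)} M^{(\vec d)} Z_{i,\vec d}^{(b)} M^{(\vec d)} \psi}$, which are \emph{not} directly probabilities of any measurement outcome --- they interleave the projectors in the wrong order and are not real-valued termwise. Efficient estimability only emerges after one shows, using $\Pi_i^{(0)} + \Pi_i^{(1)} = \1$ and orthogonality, that $\sum_{b,\vec d} \tr{M^{(\vec d)} Z_{i,\vec d}^{(b)} M^{(\vec d)}(\Pi_i^{(0)}\psi\Pi_i^{(1)} + \Pi_i^{(1)}\psi\Pi_i^{(0)})} = 0$; only then does $\tr{C\,\psi}$ collapse to $2 - 2\sum_{b,\vec d}\tr{Z_{i,\vec d}^{(b)} M^{(\vec d)}\Pi_i^{(b)}\psi\Pi_i^{(b)}M^{(\vec d)}}$, a genuine sequential-measurement statistic. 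Without this identity the distinguisher you invoke cannot be instantiated, and \ref{lem:partial_lifting} cannot be cited verbatim since it is stated for binary observables, not for the positive operator $C$. So the fix is precisely to carry out the paper's algebraic simplification before transporting; with that inserted, your argument goes through and lands on the same $O(\gamma_H(D)) + \negl$ bound.
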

\begin{proof}
The proof is analogous to \cite[Lemma 4.19]{self-testing}, but we spell out the details for completeness.
Writing out the definition of the 2-norm and multiplying out terms, we find that the l.h.s.~of \ref{eqn:zi_pi_equiv_statement} equals
\begin{multline}
\sum_{b, \vec d} \tr{M^{(\vec d)}\Pi^{(b)}_{i} \psi^{(\vec 1^i)} \Pi^{(b)}_{i} M^{(\vec d)}}
+ \sum_{b, \vec d} \tr{Z_{i, \vec d}^{(b)} M^{(\vec d)} \psi^{(\vec 1^i)} M^{(\vec d)} Z_{i, \vec d}^{(b)}} \\
- \sum_{b, \vec d} \tr{M^{(\vec d)} Z_{i, \vec d}^{(b)} M^{(\vec d)} \left( \Pi^{(b)}_{i} \psi^{(\vec 1^i)} + \psi^{(\vec 1^i)} \Pi^{(b)}_{i} \right)} \label{eqn:z_pi_expanded}
\end{multline}
Since $\{M^{(\vec d)}\}_{\vec d}$, $\{\Pi^{(b)}_{i}\}_{b}$, and $\{Z_{i, \vec d}^{(b)}\}_{b}$ form projective measurements, the first two terms equal 1.
For the third term, we note that since $\Pi^{(0)}_{i} + \Pi^{(1)}_{i} = \1$,
\begin{align}
\Pi^{(b)}_{i} \psi^{(\vec 1^i)} + \psi^{(\vec 1^i)} \Pi^{(b)}_{i} = 2 \, \Pi^{(b)}_{i} \psi^{(\vec 1^i)} \Pi^{(b)}_{i}  + \Pi^{(0)}_{i} \psi^{(\vec 1^i)} \Pi^{(1)}_{i} + \Pi^{(1)}_{i} \psi^{(\vec 1^i)} \Pi^{(0)}_{i} \,.
\end{align}
Note that since $\{Z_{i, \vec d}^{(b)}\}_{b}$ and $\{M^{(\vec d)}\}_{\vec d}$ are projective  measurements, we have
\begin{align*}
\sum_{b, \vec d} \tr{M^{(\vec d)} Z_{i, \vec d}^{(b)} M^{(\vec d)} \left( \Pi^{(0)}_{i} \psi^{(\vec 1^i)} \Pi^{(1)}_{i} + \Pi^{(1)}_{i} \psi^{(\vec 1^i)} \Pi^{(0)}_{i} \right)} = \tr{\Pi^{(0)}_{i} \psi^{(\vec 1^i)} \Pi^{(1)}_{i} + \Pi^{(1)}_{i} \psi^{(\vec 1^i)} \Pi^{(0)}_{i}} = 0 \,,
\end{align*}
where the last equality holds because $\{\Pi^{(b)}_{i}\}_{b}$ are orthogonal projectors.
Therefore, the third term in \ref{eqn:z_pi_expanded} equals
\begin{align}
2 \, \sum_{b, \vec d} \tr{Z_{i, \vec d}^{(b)} M^{(\vec d)} \Pi^{(b)}_{i} \psi^{(\vec 1^i)} \Pi^{(b)}_{i} M^{(\vec d)}} \,.
\end{align}

We now want to replace $\psi^{(\vec 1^i)}$ by $\psi^{(\vec 0)}$ in the above expression.
For this, observe that given any state $\rho$, we can efficiently estimate $\sum_{b, \vec d} \tr{Z_{i, \vec d}^{(b)} M^{(\vec d)} \Pi^{(b)}_{i} \psi^{(\vec 1^i)} \Pi^{(b)}_{i} M^{(\vec d)}}$ by measuring $\{\Pi^{(b)}_{i}\}_{b}$, $\{M^{(\vec d)}\}_{\vec d}$, and $\{Z_{i, \vec d}^{(b)}\}_{b}$ in sequence and checking whether the $\Pi_i$- and $Z_i$-measurements yielded the same result.
Therefore, since $\psi^{(\vec 1^i)} \capprox_0 \psi^{(\vec 0)}$ by an argument analogous to \ref{lem:states_indist}, we have
\begin{align}
\sum_{b, \vec d} \tr{Z_{i, \vec d}^{(b)} M^{(\vec d)} \Pi^{(b)}_{i} \psi^{(\vec 1^i)} \Pi^{(b)}_{i} M^{(\vec d)}} \approx_0 \sum_{b, \vec d} \tr{Z_{i, \vec d}^{(b)} M^{(\vec d)} \Pi^{(b)}_{i} \psi^{(\vec 0)} \Pi^{(b)}_{i} M^{(\vec d)}} \,.
\end{align}
This expression equals 1 if the $\Pi_i$- and $Z_i$-measurements yield the same result.
Since in $\psi^{(\vec 0)}$ the basis choice for the $i$-th key is $\theta_i = 0$ and the device $D$ is perfect, for any image $y_i$ the $\Pi_i$-measurement will output the bit $\hat b (k_i, y_i)$ with probability negligibly close to 1.
Hence, by \ref{def:partial_postmeas_states}, 
\begin{align}
\sum_{\vec d} M^{(\vec d)} \Pi^{(b)}_{i} \psi^{(\vec 0)} \Pi^{(b)}_{i} M^{(\vec d)} \ot \proj{\vec d} \approx_0 \sigma^{(\vec 0, b, \vec 1^i)} \,.
\end{align}
With this and noting that $Z(\vec 1^i) = Z_i$, we get from \ref{lem:observables_succ_prob} that
\begin{align}
\sum_{b, \vec d} \tr{Z_{i, \vec d}^{(b)} M^{(\vec d)} \Pi^{(b)}_{i} \psi^{(\vec 0)} \Pi^{(b)}_{i} M^{(\vec d)}} 
&\approx_{0} \sum_{b} \tr{Z_{i}^{(b)} \sigma^{(\vec 0, b, \vec 1^i)} }\\
&\approx_{\gamma_H(D)} \sum_{b} \tr{ \sigma^{(\vec 0, b, \vec 1^i)} } \\
&= \tr{ \sigma^{(\vec 0)} } \\
&= 1 \,,
\end{align}
concluding the proof.
\end{proof}

\begin{lemma} \label{lem:anticomm_single_obs}
For any efficient perfect device $D = (S, \Pi, M, P)$, the following holds for any $i \in [n]$:
\begin{align}
\tr{Z_i \tilde X_i Z_i \sigma^{(\vec 1^i)}} \approx_{\gamma_H(D)^{1/2}} -1 \,.
\end{align}
\end{lemma}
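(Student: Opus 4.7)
The plan is to establish the statement via a three-step reduction combining \ref{lem:z_pi_equiv}, an algebraic identity on the preimage projectors, and the adaptive hardcore bit property of ENTCF families.

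\emph{Step 1 (push $Z_i$ through $M$).} By cyclicity, $\tr{Z_i \tilde X_i Z_i \sigma^{(\vec 1^i)}} = \tr{\tilde X_i \, Z_i \sigma^{(\vec 1^i)} Z_i}$. Writing $\psi^{(\vec 1^i)} = (\psi^{(\vec 1^i)})^{1/2} (\psi^{(\vec 1^i)})^{1/2}$ and applying \ref{lem:z_pi_equiv} on each side (summed with weights $(-1)^b$ over $b \in \bits$ to pass from $\Pi_i^{(b)}, Z_{i,\vec d}^{(b)}$ to the observables $P_i := \Pi_i^{(0)} - \Pi_i^{(1)}$ and $Z_{i,\vec d}$), I would show via Cauchy--Schwarz (using that $\sum_{\vec y, \vec d} \|M_{\vec y}^{(\vec d)} (\psi^{(\vec 1^i)}_{\vec y})^{1/2}\|_2^2 \le 1$) that $Z_i \sigma^{(\vec 1^i)} Z_i$ is trace-close to
\[
\sigma'^{(\vec 1^i)} \;:=\; \sum_{\vec y, \vec d} M_{\vec y}^{(\vec d)} P_{i,\vec y} \psi^{(\vec 1^i)}_{\vec y} P_{i,\vec y} M_{\vec y}^{(\vec d)} \otimes \proj{\vec y, \vec d}_{YR}
\]
with trace-norm error $O(\gamma_H(D)^{1/2})$. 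Combined with $\|\tilde X_i\|_\infty = 1$, this yields $\tr{Z_i \tilde X_i Z_i \sigma^{(\vec 1^i)}} \approx_{\gamma_H(D)^{1/2}} \tr{\tilde X_i \sigma'^{(\vec 1^i)}}$.

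\emph{Step 2 (algebraic identity).} The identity $\psi + P_i \psi P_i = 2 \sum_b \Pi_i^{(b)} \psi \Pi_i^{(b)}$ (which follows from $P_i = \Pi_i^{(0)} - \Pi_i^{(1)}$ and $\Pi_i^{(0)} + \Pi_i^{(1)} = \1$) yields $\sigma^{(\vec 1^i)} + \sigma'^{(\vec 1^i)} = 2 \sum_b \sigma_b$, where
\[
\sigma_b \;:=\; \sum_{\vec y, \vec d} M_{\vec y}^{(\vec d)} \Pi_{i,\vec y}^{(b)} \psi^{(\vec 1^i)}_{\vec y} \Pi_{i,\vec y}^{(b)} M_{\vec y}^{(\vec d)} \otimes \proj{\vec y, \vec d}_{YR}
\]
is the post-$M$ state after committing to preimage bit $b$ in position $i$. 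Taking the trace against $\tilde X_i$ and using \ref{lem:xtilde_is_one} (which gives $\tr{\tilde X_i \sigma^{(\vec 1^i)}} \approx_{\gamma_H(D)} 1$ for $\vec a = \vec 1^i$), the target bound $\approx_{\gamma_H(D)^{1/2}} -1$ reduces to showing $\sum_b \tr{\tilde X_i \sigma_b} = \negl(\lambda)$.

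\emph{Step 3 (adaptive hardcore bit).} A direct expansion of $\tilde X_i$ shows that $\sum_b \tr{\tilde X_i \sigma_b}$ equals $2\Pr[\Chk(k_i, y_i, b_i, x_i) = 1 \wedge v_i = \hat u(k_i, y_i, d_i)] - \Pr[\Chk(k_i, y_i, b_i, x_i) = 1]$, where the probability is over the efficient procedure that runs the device on $\vec k$ (with $k_i \in \kf$), then performs the $\Pi$-, $M$-, and $P_1$-measurements in sequence to produce $(\vec b, \vec x), \vec d, \vec v$. Since $D$ is perfect, $\Pr[\Chk = 1] = 1 - \negl$; and by a hybrid reduction embedding the challenge key of the adaptive hardcore bit game into position $i$ (while sampling the remaining $n-1$ keys along with their trapdoors ourselves), any adversary achieving $\Pr[\Chk = 1 \wedge v_i = \hat u] > 1/2 + 1/\poly(\lambda)$ would violate the adaptive hardcore bit property of the ENTCF family (\cite[Section 4]{mahadev}). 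Hence $\sum_b \tr{\tilde X_i \sigma_b} = \negl$, which completes the proof. The main obstacles are the error bookkeeping in Step 1, where the 2-norm-squared bound of \ref{lem:z_pi_equiv} must be converted to a trace-norm bound for $Z_i \sigma^{(\vec 1^i)} Z_i - \sigma'^{(\vec 1^i)}$ via Cauchy--Schwarz, and the reduction in Step 3, which must correctly simulate the other $n-1$ parallel positions and handle the $\vec d_i = 0^w$ corner case of the adaptive hardcore bit game via a separate negligible-probability argument.
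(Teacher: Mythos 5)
Your proposal is correct and follows essentially the same route as the paper's proof: the algebraic identity relating $Z_i\tilde X_iZ_i$ to a pinching by $\Pi_i$, the passage from $Z$-projectors to $\Pi$-projectors via \ref{lem:z_pi_equiv} plus Cauchy--Schwarz, the use of $\tr{\tilde X_i\sigma^{(\vec 1^i)}}\approx 1$, and the final reduction of $\sum_b\tr{\tilde X_i\sigma_b}\approx 0$ to the adaptive hardcore bit property (with the other $n-1$ keys and trapdoors sampled by the reduction). The only difference is cosmetic: you apply the binary-observable identity $\psi+P_i\psi P_i = 2\sum_b\Pi_i^{(b)}\psi\Pi_i^{(b)}$ at the state level after first replacing $Z_i\sigma Z_i$ by $\sigma'$, whereas the paper applies the equivalent operator identity $Z_i\tilde X_iZ_i = 2\sum_bZ_i^{(b)}\tilde X_iZ_i^{(b)}-\tilde X_i$ first and then does the $Z$-to-$\Pi$ replacement inside each summand.
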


\begin{proof}
Using that $Z_i$ is a binary observable, it is easy to check that 
\begin{equation}
Z_i \tilde X_i Z_i = \left( 2 \sum_{b} Z_i^{(b)} \tilde X_i Z_i^{(b)}  \right) - \tilde X_i \,.
\end{equation}
Since 
\begin{align}
\tr{\tilde X_i \sigma^{(\vec 1^i)}} \approx_{\gamma_H(D)} 1
\end{align}
by \ref{lem:xtilde_succprob_allstates}, to show the lemma it suffices to show that 
\begin{equation}
\sum_b \tr{Z_i^{(b)} \tilde X_i Z_i^{(b)} \sigma^{(\vec 1^i)}} \approx_{\gamma_H(D)^{1/2}} 0 \,.
\end{equation}
Inserting the definition of $\sigma^{(\vec 1^i)}$ from \ref{eqn:def_sigma}, we get 
\begin{align}
\sum_b \tr{Z_i^{(b)} \tilde X_i Z_i^{(b)} \sigma^{(\vec 1^i)}} = \sum_{b, \vec d} \tr{ \tilde X_{i, \vec d} Z_{i, \vec d}^{(b)} M^{(\vec d)} \psi^{(\vec 1^i)} M^{(\vec d)} Z_{i, \vec d}^{(b)}}
\end{align}
Using \ref{lem:z_pi_equiv} and the Cauchy-Schwarz inequality, we can show that 
\begin{align}
\sum_{b, \vec d} \tr{ \tilde X_{i, \vec d} Z_{i, \vec d}^{(b)} M^{(\vec d)} \psi^{(\vec 1^i)} M^{(\vec d)} Z_{i, \vec d}^{(b)}} 
\approx_{\gamma_H(D)^{1/2}} \sum_{b, \vec d} \tr{ \tilde X_{i, \vec d} M^{(\vec d)} \Pi_i^{(b)} \psi^{(\vec 1^i)} \Pi_i^{(b)} M^{(\vec d)} } \,. \label{eqn:after_cs_application}
\end{align}
This is a standard step in state-dependent distance calculations, but we spell out the details for completeness in \ref{lem:cs_replacement_step}.

To prove the lemma, it now suffices to show that the r.h.s.~of \ref{eqn:after_cs_application} is negligible.
For the sake of contradiction, suppose that there exists a non-negligible function $\mu(\lambda)$ such that 
\begin{align}
\sum_{b, \vec d} \tr{ \tilde X_{i, \vec d} M^{(\vec d)} \Pi_i^{(b)} \psi^{(\vec 1^i)} \Pi_i^{(b)} M^{(\vec d)} } > \mu(\lambda) \,. \label{eqn:ahcb_seq_meas_contradiction}
\end{align}
From this, we can construct the following adversary $\mA$ for the adaptive hardcore bit property:
$\mA$ receives a key $k_i \in \kf$ and samples keys $k_1, \dots, k_{i-1}, k_{i+1}, k_n \in \kg$. 
It then performs the device's operations with these keys to obtain $\psi^{(\vec 1^i)}$.
$\mA$ now measures the projectors $\{\Pi_i^{(b, x)}\}_{b, x}$ and records the result $(b, x)$, where 
\begin{align}
\Pi_i^{(b, x)} = \sum_{b_1, \dots, b_{i-1}, b_{i+1}, \dots, b_n} \sum_{x_1, \dots, x_{i-1}, x_{i+1}, \dots, x_n} \Pi^{(\vec b, \vec x)} \,.
\end{align}
$\mA$ then measures $\{M^{(\vec d)}\}_{\vec d}$ receiving an outcome $\vec d$, followed by $\{X_{i, \vec d}^{(c)}\}_{c}$, receiving an outcome $c$.
$\mA$ submits $(b, x, d_i, c)$ as a guess for the adaptive hardcore bit property.

First note that since the device operations are efficient, $\mA$ is efficient, too.
To show that $\mA$ breaks the adaptive hardcore bit property, we need to show that on average over $\vec k$, $\vec y$, and $\vec d$, $\hat u(k_i, y_i, d_i) = c$ with non-negligible advantage.

For this, we observe that since the device $D$ is perfect, if the adversary measures $(b, x)$ as described above, conditioned on receiving the bit $b$, the preimage $x$ it obtains will be $x_b(k_i, y_i)$ with probability negligibly close to 1.
As a consequence, by  the gentle measurement lemma~\cite{winter1999} the post-measurement state after measuring $\{\Pi_i^{(b, x)}\}_{b, x}$ is negligibly close to the post-measurement state after measuring $\{\Pi_i^{(b)}\}_{b}$ (i.e.~it does not matter for the post-measurement state whether $\mA$ also measures the preimage $x$ or not): 
\begin{align}
\Pi_i^{(b, x)} \psi^{(\vec 1^i)} \Pi_i^{(b, x)} \approx_0 \Pi_i^{(b)} \psi^{(\vec 1^i)} \Pi_i^{(b)} \,.
\end{align}

In the next step, the adversary measures $\{M^{(\vec d)}\}_{\vec d}$, followed by $\{X_{i, \vec d'}^{(c)}\}_{c}$.
The probability that $\mA$ guesses correctly, i.e.~that $c = \hat u(k_i, y_i, d_i)$, is then given by (up to negligible error from to the approximation $\sum_{b, x} \Pi_i^{(b, x)} \psi^{(\vec 1^i)} \Pi_i^{(b, x)}  \approx_0 \sum_{b} \Pi_i^{(b)} \psi^{(\vec 1^i)} \Pi_i^{(b)}$)
\begin{align}
\pr{\text{$\mA$ guesses correctly}} \approx_0 \sum_{b, \vec y, \vec d} \tr{ X_{i, \vec y, \vec d}^{(\hat u(k_i, y_i, d_i))} M^{(\vec d)}_{\vec y} \Pi^{(b)}_{i, \vec y} \psi^{(\vec 1^i)}_{\vec y} \Pi^{(b)}_{i, \vec y} M^{(\vec d)}_{\vec y} } \,.
\end{align}
Since $X_{i, \vec y, \vec d}$ is a binary observable, we can express this as
\begin{align}
\pr{\text{$\mA$ guesses correctly}} 
&= \sum_{b, \vec y, \vec d} \tr{ \frac{1}{2}\left( \1 + (-1)^{\hat u(k_i, y_i, d_i)} X_{i, \vec y, \vec d} \right)  M^{(\vec d)}_{\vec y} \Pi^{(b)}_{i, \vec y} \psi^{(\vec 1^i)}_{\vec y} \Pi^{(b)}_{i, \vec y} M^{(\vec d)}_{\vec y} } \\
\intertext{Using that $M^{(\vec d)}_{\vec y}$ and $\Pi^{(b)}_{i, \vec y}$ form projective measurements:}
&= \frac{1}{2} + \sum_{b, \vec y, \vec d} \tr{ (-1)^{\hat u(k_i, y_i, d_i)} X_{i, \vec y, \vec d}  M^{(\vec d)}_{\vec y} \Pi^{(b)}_{i, \vec y} \psi^{(\vec 1^i)}_{\vec y} \Pi^{(b)}_{i, \vec y} M^{(\vec d)}_{\vec y} } \\
\intertext{By \ref{def:observables}, $(-1)^{\hat u(k_i, y_i, d_i)} X_{i, \vec y, \vec d} = \tilde X_{i, \vec y, \vec d}$:}
&= \frac{1}{2} + \sum_{b, \vec y, \vec d} \tr{ \tilde X_{i, \vec y, \vec d}  M^{(\vec d)}_{\vec y} \Pi^{(b)}_{i, \vec y} \psi^{(\vec 1^i)}_{\vec y} \Pi^{(b)}_{i, \vec y} M^{(\vec d)}_{\vec y} } \\
&= \frac{1}{2} + \sum_{b, \vec d} \tr{ \tilde X_{i, \vec d}  M^{(\vec d)} \Pi^{(b)}_{i} \psi^{(\vec 1^i)} \Pi^{(b)}_{i} M^{(\vec d)} } \,.
\end{align}
Hence, \ref{eqn:ahcb_seq_meas_contradiction} would imply that the adversary has non-negligible advantage in guessing the adaptive hardcore bit, a contradiction.
\end{proof}

\begin{lemma} \label{lem:cs_replacement_step}
With the setup and notation of \ref{lem:anticomm_single_obs}, the following holds:
\begin{align}
\sum_{b, \vec d} \tr{ \tilde X_{i, \vec d} Z_{i, \vec d}^{(b)} M^{(\vec d)} \psi^{(\vec 1^i)} M^{(\vec d)} Z_{i, \vec d}^{(b)}} 
\approx_{\gamma_H(D)^{1/2}} \sum_{b, \vec d} \tr{ \tilde X_{i, \vec d} M^{(\vec d)} \Pi_i^{(b)} \psi^{(\vec 1^i)} \Pi_i^{(b)} M^{(\vec d)} } \,.
\end{align}
\end{lemma}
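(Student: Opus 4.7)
The plan is to treat this as a standard Cauchy--Schwarz replacement argument, using \ref{lem:z_pi_equiv} as the key input. Introduce the abbreviations $A_{b,\vec d} = Z_{i,\vec d}^{(b)} M^{(\vec d)}$ and $B_{b,\vec d} = M^{(\vec d)} \Pi_i^{(b)}$, so that the two sides of the claim are $\sum_{b,\vec d} \tr{\tilde X_{i,\vec d}\, A_{b,\vec d}\, \psi^{(\vec 1^i)}\, A_{b,\vec d}^\dagger}$ and $\sum_{b,\vec d} \tr{\tilde X_{i,\vec d}\, B_{b,\vec d}\, \psi^{(\vec 1^i)}\, B_{b,\vec d}^\dagger}$ respectively. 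The starting identity is the telescoping decomposition
\begin{equation*}
A\psi A^\dagger - B\psi B^\dagger = (A-B)\psi A^\dagger + B\psi (A-B)^\dagger,
\end{equation*}
which, after inserting $\tilde X_{i,\vec d}$ and taking the trace, gives two cross terms to bound.

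The second step is to bound each cross term by a Hilbert--Schmidt Cauchy--Schwarz. Writing for example $\tr{\tilde X_{i,\vec d}(A-B)\psi A^\dagger} = \tr{\bigl((A-B)\psi^{1/2}\bigr)\bigl(\psi^{1/2} A^\dagger \tilde X_{i,\vec d}\bigr)}$, one gets $|\tr{\cdot}| \leq \|(A-B)\psi^{1/2}\|_2\cdot\|\psi^{1/2} A^\dagger \tilde X_{i,\vec d}\|_2$. Then apply Cauchy--Schwarz over the index $(b,\vec d)$:
\begin{equation*}
\sum_{b,\vec d} \bigl|\tr{\tilde X_{i,\vec d}(A_{b,\vec d}-B_{b,\vec d})\psi^{(\vec 1^i)} A_{b,\vec d}^\dagger}\bigr| \leq \biggl(\sum_{b,\vec d}\|(A_{b,\vec d}-B_{b,\vec d})(\psi^{(\vec 1^i)})^{1/2}\|_2^2\biggr)^{1/2}\biggl(\sum_{b,\vec d}\|(\psi^{(\vec 1^i)})^{1/2} A_{b,\vec d}^\dagger \tilde X_{i,\vec d}\|_2^2\biggr)^{1/2}.
\end{equation*}
The first factor is $O(\gamma_H(D)^{1/2}) + \negl(\lambda)$ directly from \ref{lem:z_pi_equiv} (which controls $\|(M^{(\vec d)}\Pi_i^{(b)} - Z_{i,\vec d}^{(b)}M^{(\vec d)})(\psi^{(\vec 1^i)})^{1/2}\|_2^2$ summed over $b,\vec d$, noting that $A-B$ equals minus this difference).

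The third step is to show the second factor is at most $1$. Using that $\tilde X_{i,\vec d}$ is a binary observable (it acts as $\sum_{\vec y} Q_{\vec y,\vec d}\otimes\proj{\vec y}$ with $Q_{\vec y,\vec d}^2 = \sum_{\vec v} P^{(\vec v)}_{1,\vec y,\vec d} = \1$, since $P_1$ is projective), we have $\tilde X_{i,\vec d}^2 = \1$, so
\begin{equation*}
\sum_{b,\vec d} \|(\psi^{(\vec 1^i)})^{1/2} A_{b,\vec d}^\dagger \tilde X_{i,\vec d}\|_2^2 = \sum_{b,\vec d} \tr{A_{b,\vec d}\psi^{(\vec 1^i)} A_{b,\vec d}^\dagger} = \sum_{b,\vec d}\tr{Z_{i,\vec d}^{(b)} M^{(\vec d)}\psi^{(\vec 1^i)} M^{(\vec d)}} = 1,
\end{equation*}
where we used that $\{Z_{i,\vec d}^{(b)}\}_b$ and $\{M^{(\vec d)}\}_{\vec d}$ are projective measurements and the cyclicity of the trace. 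The identical argument bounds the analogous sum with $B_{b,\vec d}$ in place of $A_{b,\vec d}$ by $1$ as well, handling the symmetric cross term $\tr{\tilde X_{i,\vec d} B\psi(A-B)^\dagger}$. Triangle inequality then yields the claimed $\approx_{\gamma_H(D)^{1/2}}$ bound.

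Since this is essentially bookkeeping, there is no real obstacle; the only subtle point is verifying that $\tilde X_{i,\vec d}$ really does square to the identity (so that we get the clean bound $1$ in the third step rather than something that scales with a dimension factor), which follows because the measurement $\{P_1^{(\vec v)}\}$ is projective on each $\vec y,\vec d$ sector.
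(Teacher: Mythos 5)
Your proposal is correct and follows essentially the same path as the paper's own proof: the telescoping decomposition of $A\psi A^\dagger - B\psi B^\dagger$ into two cross terms, two applications of Cauchy--Schwarz (once at the Hilbert--Schmidt level, once over $(b,\vec d)$), with one factor controlled by \ref{lem:z_pi_equiv} and the other shown to equal $1$ using $\tilde X_{i,\vec d}^2 = \1$ and projectivity of $\{Z_{i,\vec d}^{(b)}\}_b$, $\{\Pi_i^{(b)}\}_b$, and $\{M^{(\vec d)}\}_{\vec d}$. Your write-up is slightly more explicit than the paper's (which compresses the second cross term into ``repeating the above steps''), but the argument is the same.
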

\begin{proof}
To show the lemma, there are two instances where we need to replace terms of the form $M^{(\vec d)} Z_{i, \vec d}$ by $\Pi_i^{(b)} M^{(\vec d)}$.
As a first step, we will show that
\begin{align}
\sum_{b, \vec d} \tr{ \tilde X_{i, \vec d} Z_{i, \vec d} M^{(\vec d)} \psi^{(\vec 1^i)} \left( M^{(\vec d)} Z_{i, \vec d} - \Pi_i^{(b)} M^{(\vec d)} \right)} \approx_{\gamma_H(D)^{1/2}} 0 \,, \label{eqn:cs_replacement_goal}
\end{align}
with $\Pi_i^{(b)}$ defined as in \ref{lem:z_pi_equiv}. 
For this, we apply the triangle inequality and write the l.h.s.~of \ref{eqn:cs_replacement_goal} as a Hilbert-Schmidt inner product. Then we can apply the Cauchy-Schwarz inequality twice: 
\begin{align}
&\left|\sum_{b, \vec d} \tr{ \tilde X_{i, \vec d} Z_{i, \vec d} M^{(\vec d)} \psi^{(\vec 1^i)} \left( M^{(\vec d)} Z_{i, \vec d} - \Pi_i^{(b)} M^{(\vec d)} \right)}\right| \\
&\leq \sum_{b, \vec d} \left| \left\langle  \left( Z_{i, \vec d} M^{(\vec d)} - M^{(\vec d)} \Pi_i^{(b)} \right) \left( \psi^{(\vec 1^i)} \right)^{1/2} , \tilde X_{i, \vec d} Z_{i, \vec d} M^{(\vec d)} \left( \psi^{(\vec 1^i)} \right)^{1/2} \right\rangle \right| \\
& \leq \sum_{b, \vec d} \norm{\tilde X_{i, \vec d} Z_{i, \vec d} M^{(\vec d)} \left( \psi^{(\vec 1^i)} \right)^{1/2}}_2 \norm{\left( Z_{i, \vec d} M^{(\vec d)} - M^{(\vec d)} \Pi_i^{(b)} \right) \left( \psi^{(\vec 1^i)} \right)^{1/2}}_2 \\
&\leq \left( \sum_{b, \vec d} \norm{\tilde X_{i, \vec d} Z_{i, \vec d} M^{(\vec d)} \left( \psi^{(\vec 1^i)} \right)^{1/2}}_2^2  \right)^{1/2} \left( \sum_{b, \vec d} \norm{\left( Z_{i, \vec d} M^{(\vec d)} - M^{(\vec d)} \Pi_i^{(b)} \right) \left( \psi^{(\vec 1^i)} \right)^{1/2}}_2^2  \right)^{1/2} \\
\intertext{Writing out the definition of the norm, it is easy to check that the first factor equals 1, and the second factor is bounded by \ref{lem:z_pi_equiv}:}
& \leq O(\gamma_H(D)^{1/2}) \,.
\end{align}
Repeating the above steps for the other $Z_{i, \vec d}^{(b)} M^{(\vec d)}$-term, we obtain the desired statement.
\end{proof}

\begin{lemma} \label{lem:anticomm_many_z}
For any efficient perfect device $D = (S, \Pi, M, P)$, any index $i \in [n]$, and any string $\vec b \in \bits^n$, the following holds:
\begin{align}
\tilde X_i Z(\vec b) \approx_{\gamma_H(D)^{1/2}, \sigma^{(\vec 1)}} (-1)^{b_i} Z(\vec b) \tilde X_{i} \,.
\end{align}
\end{lemma}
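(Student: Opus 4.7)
The plan is to reduce the claim to the counterfactual state $\sigma^{(\vec 1^i)}$ (on which both $\tilde X_i$ and, after conditioning on the $Z(\vec b')$-outcome, $Z(\vec b')$ act approximately as scalars) and then exploit Lemma~\ref{lem:anticomm_single_obs} for the single remaining $Z_i$ factor. First, unpacking the definition of $\approx_{\gamma_H(D)^{1/2},\sigma^{(\vec 1)}}$ and using that $\tilde X_i$ and $Z(\vec b)$ are Hermitian unitaries, the desired state-dependent equivalence is equivalent to
\[
\tr{\tilde X_i\,Z(\vec b)\,\tilde X_i\,Z(\vec b)\;\sigma^{(\vec 1)}}\;\approx_{\gamma_H(D)^{1/2}}\;(-1)^{b_i}.
\]
Because the operator $\tilde X_i Z(\vec b)\tilde X_i Z(\vec b)$ is efficient given just the single trapdoor $t_i$, Lemma~\ref{lem:partial_lifting} (applied with $\mI=\{i\}$, noting that $\vec 1$ and $\vec 1^i$ agree in the $i$-th position) lets me replace $\sigma^{(\vec 1)}$ by $\sigma^{(\vec 1^i)}$ up to a negligible error, so it suffices to prove the trace identity on $\sigma^{(\vec 1^i)}$.

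Setting $\vec b'=\vec b-b_i\vec 1^i$ so that $b'_i=0$, and using that all $Z_j$'s commute pairwise (they arise from the single projective measurement $\{P_0^{(\vec v)}\}$), I factor $Z(\vec b)=Z_i^{b_i}Z(\vec b')$ and decompose the state as $\sigma^{(\vec 1^i)}=\sum_v \sigma^{(\vec 1^i,v,\vec b')}$. On each sub-state $\sigma^{(\vec 1^i,v,\vec b')}$ I have two key ingredients: by Lemma~\ref{lem:xtilde_is_one} applied with $\vec a=\vec\theta=\vec 1^i$ (so $a_i=1\implies\theta_i=1$), $\tilde X_i\approx_{\gamma_H(D),\,\sigma^{(\vec 1^i)}}\1$, which transfers to the sub-state via $\sigma^{(\vec 1^i,v,\vec b')}\preceq\sigma^{(\vec 1^i)}$; and by Lemma~\ref{lem:observables_one_mixed_theta} applied with $\vec a=\vec b'$ (which satisfies $a_j=1\implies\theta_j=0$ on $\vec\theta=\vec 1^i$), $Z(\vec b')\approx_{\gamma_H(D),\,\sigma^{(\vec 1^i,v,\vec b')}}(-1)^v\1$. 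A standard Cauchy--Schwarz argument upgrades both state-dependent bounds to the trace-norm approximations $\sigma^{(\vec 1^i,v,\vec b')}\tilde X_i\approx\sigma^{(\vec 1^i,v,\vec b')}$ and $\sigma^{(\vec 1^i,v,\vec b')}Z(\vec b')\approx(-1)^v\sigma^{(\vec 1^i,v,\vec b')}$, each with error $O(\gamma_H(D)^{1/2})$. I then repeatedly use cyclic shifts together with the exact commutation of $Z_i$ and $Z(\vec b')$ to bring the middle operators adjacent to the state, absorbing both $Z(\vec b')$ factors (contributing $(-1)^{2v}=1$) and the middle $\tilde X_i$ factor, which reduces each summand to $\tr{Z_i^{b_i}\tilde X_i Z_i^{b_i}\,\sigma^{(\vec 1^i,v,\vec b')}}$.

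For $b_i=0$ this residual trace equals $\tr{\tilde X_i\,\sigma^{(\vec 1^i,v,\vec b')}}\approx\tr{\sigma^{(\vec 1^i,v,\vec b')}}$ directly. For $b_i=1$ I need a sub-state strengthening of Lemma~\ref{lem:anticomm_single_obs}, which only asserts $\tr{Z_i\tilde X_i Z_i\,\sigma^{(\vec 1^i)}}\approx -1$. The key observation here is that $Z_i\tilde X_i Z_i+\1\succeq 0$, so the decomposition $\sum_v\tr{(Z_i\tilde X_i Z_i+\1)\,\sigma^{(\vec 1^i,v,\vec b')}}$ is a sum of non-negative terms whose total is $O(\gamma_H(D)^{1/2})$; hence each summand is individually $O(\gamma_H(D)^{1/2})$, which yields $\tr{Z_i\tilde X_i Z_i\,\sigma^{(\vec 1^i,v,\vec b')}}\approx-\tr{\sigma^{(\vec 1^i,v,\vec b')}}$. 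Summing over $v$ gives the desired $(-1)^{b_i}$ on $\sigma^{(\vec 1^i)}$, which by the partial-lifting step transfers back to $\sigma^{(\vec 1)}$.

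The main technical obstacle I anticipate is the bookkeeping in the second paragraph: the middle $Z(\vec b')$ and middle $\tilde X_i$ are not initially adjacent to $\sigma^{(\vec 1^i,v,\vec b')}$, and each manoeuvre into position has to be paired with a careful trace-norm estimate so as to control the total error at $O(\gamma_H(D)^{1/2})$ rather than losing a square root at every step.
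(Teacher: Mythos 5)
Your proposal tracks the paper's proof closely: both first lift from $\sigma^{(\vec 1)}$ to $\sigma^{(\vec 1^i)}$ via \ref{lem:partial_lifting}, reduce the claim to $\tr{\tilde X_i Z(\vec b)\tilde X_i Z(\vec b)\,\sigma^{(\vec 1^i)}}\approx(-1)^{b_i}$ via \ref{lem:state_dep_distance_expanded}, strip one $\tilde X_i$ using \ref{lem:xtilde_is_one}, and decompose $\sigma^{(\vec 1^i)}$ over the parity value $v$ so that the two $Z(\vec b')$ factors can each be replaced by $(-1)^v\1$ on the sub-states using \ref{lem:observables_one_mixed_theta}. The only genuine divergence is your $b_i=1$ endgame. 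You establish a sub-state version $\tr{Z_i\tilde X_i Z_i\,\sigma^{(\vec 1^i,v,\vec b')}}\approx-\tr{\sigma^{(\vec 1^i,v,\vec b')}}$ by noting that $Z_i\tilde X_i Z_i+\1\succeq 0$ (since $Z_i\tilde X_i Z_i$ is a binary observable), so a small non-negative total forces each summand to be small. That argument is correct, but it is a detour: the paper simply sums the residual terms $\sum_v\tr{Z_i\tilde X_i Z_i\,\sigma^{(\vec 1^i,v,\vec b')}}=\tr{Z_i\tilde X_i Z_i\,\sigma^{(\vec 1^i)}}$ and invokes \ref{lem:anticomm_single_obs} once, on the full state. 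Since you sum over $v$ at the very end anyway, the positivity step buys you nothing here; it would matter only if you needed the per-$v$ bound for some other reason, which you do not. The error-bookkeeping concern you flag in your last paragraph is legitimate but not a problem: each application of \ref{lem:replace_in_trace} (or your Cauchy--Schwarz trace-norm upgrade) converts a $\gamma_H(D)$ state-dependent bound into a single $O(\gamma_H(D)^{1/2})$ additive trace error rather than compounding square roots, and there are only $O(1)$ such replacements, so the triangle inequality preserves the overall $\gamma_H(D)^{1/2}$ scaling exactly as in the paper.
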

\begin{proof}
Since $Z(\vec b)$ is efficient, $\tilde X_i$ is efficient with access to the trapdoor $t_i$ for the $i$-th key, and $\vec 1$ and $\vec 1^i$ agree on their $i$-th bit, by \ref{lem:partial_lifting} it suffices to show that 
\begin{align}
\tilde X_i Z(\vec b) \approx_{\gamma_H(D)^{1/2}, \sigma^{(\vec 1^i)}} (-1)^{b_i} Z(\vec b) \tilde X_{i} \,.
\end{align}
By \ref{lem:state_dep_distance_expanded}, this is implied by
\begin{align}
\tr{\tilde X_i Z(\vec b) \tilde X_i Z(\vec b) \sigma^{(\vec 1^i)}} \approx_{\gamma_H(D)^{1/2}} (-1)^{b_i} \,.
\end{align}
We can use \ref{lem:xtilde_is_one} (noting that the $i$-th bit of $\vec 1^i$ is 1) together with \ref{lem:replace_in_trace} to replace the leftmost $\tilde X_i$-operator by $\1$:
\begin{align}
\tr{\tilde X_i Z(\vec b) X_i Z(\vec b) \sigma^{(\vec 1^i)}} \approx_{\gamma_H(D)^{1/2}} \tr{Z(\vec b) \tilde X_i Z(\vec b) \sigma^{(\vec 1^i)}} \,.
\end{align}
For the rest of the proof, we distinguish two cases:
\paragraph{Case $b_i = 0$.}
In this case, we split $\sigma^{(\vec 1^i)} = \sum_{v \in \bits} \sigma^{(\vec 1^i, v, \vec b)}$:
\begin{align}
\tr{Z(\vec b) \tilde X_i Z(\vec b) \sigma^{(\vec 1^i)}} 
&= \sum_{v \in \bits} \tr{Z(\vec b) \tilde X_i Z(\vec b) \sigma^{(\vec 1^i, v, \vec b)}} \\
\intertext{Since $b_i = 0$ by assumption, the condition $b_i = 1 \implies \theta_i = 0$ (with $\vec \theta = \vec 1^i$) in \ref{lem:observables_one_mixed_theta} is satisfied. We can therefore use that lemma together with \ref{lem:replace_in_trace} on each term in the sum to get:}
&\approx_{\gamma_H(D)^{1/2}} \sum_{v \in \bits} (-1)^v \,\tr{Z(\vec b) \tilde X_i \sigma^{(\vec 1^i, v, \vec b)}} \\
&\approx_{\gamma_H(D)^{1/2}} \sum_{v \in \bits} \tr{\tilde X_i \sigma^{(\vec 1^i, v, \vec b)}} \\
&= \tr{\tilde X_i \sigma^{(\vec 1^i)}} \\
\intertext{Noting that $\tilde X_i = \tilde X(\vec 1^i)$, we can use \ref{lem:xtilde_is_one} and \ref{lem:replace_in_trace} to conclude:}
&\approx_{\gamma_H(D)^{1/2}} \tr{ \sigma^{(\vec 1^i)}} \\
&= 1 \,.
\end{align}

\paragraph{Case $b_i = 1$.}
We define $\vec b'$ by $b'_i = 0$ and $b'_j = b_j$ for $j \neq i$.
Then, by \ref{lem:observables_one_mixed_theta} we have that $Z(\vec b') \approx_{\sigma^{(\vec 1^i, v, \vec b')}} (-1)^v \1$.
Furthermore, by definition of the $Z$-observables, $Z(\vec b) = Z_i Z(\vec b') = Z(\vec b') Z_i$.
We can use this with \ref{lem:replace_in_trace} similarly to the previous case to obtain 
\begin{align}
\tr{Z(\vec b) \tilde X_i Z(\vec b) \sigma^{(\vec 1^i)}} 
&= \sum_{v \in \bits} \tr{Z(\vec b') Z_i \tilde X_i Z_i Z(\vec b') \sigma^{(\vec 1^i, v, \vec b)}} \\
&\approx_{\gamma_H(D)^{1/2}} \sum_{v \in \bits} (-1)^v \tr{Z(\vec b') Z_i \tilde X_i Z_i \sigma^{(\vec 1^i, v, \vec b)}} \\
&\approx_{\gamma_H(D)^{1/2}} \sum_{v \in \bits} \tr{Z_i \tilde X_i Z_i \sigma^{(\vec 1^i, v, \vec b)}} \\
&= \tr{Z_i \tilde X_i Z_i \sigma^{(\vec 1^i)}} \\
&\approx_{\gamma_H(D)^{1/2}} -1 \,,
\end{align}
where we used \ref{lem:anticomm_single_obs} in the last line.
\end{proof}

\begin{proof}[Proof of \ref{lem:approx_rep_tilde}]
Recall that our goal is to prove that 
\begin{equation}
\tilde X(\vec a) Z(\vec b) \approx_{n \gamma_H(D)^{1/4}, \sigma^{(\vec 1)}} (-1)^{\vec a \cdot \vec b} Z(\vec b) \tilde X(\vec a) \,.
\end{equation}
By \ref{lem:state_dep_distance_expanded}, it suffices to show that
\begin{align}
\tr{Z(\vec b) \tilde X(\vec a) Z(\vec b) \tilde X(\vec a) \sigma^{(\vec 1)}} \approx_{n \gamma_H(D)^{1/4}} (-1)^{\vec a \cdot \vec b} \,.
\end{align}
As a first step, we apply \ref{lem:xtilde_is_one} to obtain 
\begin{align}
\tr{Z(\vec b) \tilde X(\vec a) Z(\vec b) \tilde X(\vec a) \sigma^{(\vec 1)}} \approx_{\gamma_H(D)^{1/2}} \tr{Z(\vec b) \tilde X(\vec a) Z(\vec b) \sigma^{(\vec 1)}} \,. 
\end{align}
We can write $\tilde X(\vec a) = \prod_{i =1}^n \tilde X_i^{a_i}$ (where $\tilde X_i^{a_i} = \1$ if $a_i = 0$):
\begin{align}
\tr{Z(\vec b) \tilde X(\vec a) Z(\vec b) \sigma^{(\vec 1)}} 
&= \tr{Z(\vec b) \left( \prod_{i =1}^{n-1} \tilde X_i^{a_i} \right) X_n^{a_n} Z(\vec b) \sigma^{(\vec 1)}} 
\intertext{Applying \ref{lem:replace_in_trace} with \ref{lem:anticomm_many_z}  (if $a_n = 1$; if $a_n = 0$, the same step follows trivially and with equality):}
&\approx_{\gamma_H(D)^{1/4}} (-1)^{a_n b_n} \tr{Z(\vec b) \left( \prod_{i =1}^{n-1} \tilde X_i^{a_i} \right) Z(\vec b) X_n^{a_n} \sigma^{(\vec 1)}} 
\intertext{Applying \ref{lem:replace_in_trace} with \ref{lem:xtilde_is_one} (if $a_n = 1$; if $a_n = 0$, the same step follows trivially and with equality):}
&\approx_{\gamma_H(D)^{1/2}} (-1)^{a_n b_n} \tr{Z(\vec b) \left( \prod_{i =1}^{n-1} \tilde X_i^{a_i} \right) Z(\vec b) \sigma^{(\vec 1)}} \,. 
\end{align}
Repeating the above steps for each of the remaining $\tilde X_i^{a_i}$, we find that 
\begin{align}
\tr{Z(\vec b) \tilde X(\vec a) Z(\vec b) \sigma^{(\vec 1)}} \approx_{n \gamma_H(D)^{1/4}} \left( \prod_{i = 1}^n (-1)^{a_i b_i} \right) \tr{Z(\vec b) Z(\vec b) \sigma^{(\vec 1)}} = (-1)^{\vec a \cdot \vec b}
\end{align}
as desired.
The factor of $n$ in the approximation arises because we incur an approximation error of $O(\gamma_H(D)^{1/4})$ for dealing with each of the $n$ operators $\tilde X_i^{a_i}$, so by the triangle inequality the total approximation error is $O(n \gamma_H(D)^{1/4})$.
\end{proof}

\subsection{Switching to efficient observables}

\begin{definition}[Rounding isometries] \label{def:iso}
For a device $D$ with associated Hilbert space $\H_D$ and $\vec y \in \mY^{\times n}$, $d \in \bits^{w \times n}$, we define the isometry $\tilde V_{y, d}: \H_D \to \H_D \ot \H_A \ot \H_Q$ by the following action on an arbitrary state $\ket{\varphi}_D$: 
\begin{align}
\tilde V_{\vec y, \vec d} \ket{\varphi}_D = \E_{\vec a, \vec b \in \bits^n} \left( \left( \tilde X(\vec a)_{\vec y, \vec d} Z(\vec b)_{\vec y, \vec d} \right)_{D} \ot \left( \sigma_X(\vec a) \sigma_Z(\vec b) \right)_A \right) \ket{\varphi}_D \ot \left( \ket{\phi^{(0,0)}}^{\ot n} \right)_{AQ} \,,
\end{align}
where $\ket{\phi^{(0,0)}} = \frac{\ket{00} + \ket{11}}{\sqrt{2}}$ denotes an EPR pair, and $\left( \ket{\phi^{(0,0)}}^{\ot n} \right)_{AQ}$ is distributed between $A$ and $Q$ such that every EPR pair has one qubit in either system.
We can combine the different $V_{y,d}$ into one isometry 
\begin{align}
\tilde V = \sum_{\vec y, \vec d} \tilde V_{\vec y, \vec d} \ot \proj{\vec y, \vec d} : \H_{D} \ot \H_{Y} \ot \H_R \to \H_{D} \ot \H_{Y} \ot \H_R \ot \H_A \ot \H_Q \,.
\end{align}
We similarly define 
\begin{align}
V_{\vec y, \vec d} \ket{\varphi}_D = \E_{\vec a, \vec b \in \bits^n} \left( \left( X(\vec a)_{\vec y, \vec d} Z(\vec b)_{\vec y, \vec d} \right)_{D} \ot \left( \sigma_X(\vec a) \sigma_Z(\vec b) \right)_A \right) \ket{\varphi}_D \ot \left( \ket{\phi^{(0,0)}}^{\ot n} \right)_{AQ}
\end{align}
and 
\begin{align}
V = \sum_{\vec y, \vec d} V_{\vec y,\vec d} \ot \proj{\vec y, \vec d} \,.
\end{align}
\end{definition}

\begin{remark}
The fact that $\tilde V$ and $V$ as defined in \ref{def:iso} are indeed isometries, i.e. that $\tilde V^\dagger \tilde V = \1$ and $V^\dagger V = \1$ follows straightforwardly from the fact that 
\begin{align}
\left( \bra{\phi^{(0,0)}}^{\ot n} \right)_{AQ} \left( \sigma_X(\vec a') \sigma_Z(\vec b') \right)_A^\dagger \left( \sigma_X(\vec a) \sigma_Z(\vec b) \right)_A \left( \ket{\phi^{(0,0)}}^{\ot n} \right)_{AQ} = \delta_{\vec a, \vec a'} \delta_{\vec b, \vec b'} \,,
\end{align}
and $\tilde X(\vec a)^2 = X(\vec a)^2 = Z(\vec b)^2 = \1$.
\end{remark}

\begin{remark}
The isometry $\tilde V$ is inefficient since $\tilde X$ is an inefficient observable and is furthermore only defined for the basis choice $\vec \theta = \vec 1$ since $\tilde X$ depends on the function $\hat u$ (see \ref{def:observables}).
In contrast, since $X$ and $Z$ are both efficient observables, $V$ is an efficient isometry and is well-defined for any basis choice.
\end{remark}

The following lemma relates $\tilde V$ and $V$.
\begin{lemma} \label{lem:relation_V_tilde_notilde}
For any keys $\vec k \in \kf^n$: 
\begin{align}
V_{\vec y, \vec d} = \sigma_Z(\hat u(\vec k, \vec y, \vec d))_{A} \ot \sigma_Z(\hat u(\vec k, \vec y, \vec d))_{Q} \tilde V_{\vec y, \vec d} \,.
\end{align}
\end{lemma}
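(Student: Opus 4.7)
The plan is to verify this identity by a direct calculation, using two elementary facts about Pauli operators on EPR pairs together with the sign relation between $\tilde X(\vec a)$ and $X(\vec a)$ already established in \ref{def:observables}.

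First I would unfold the definition of $\tilde V_{\vec y, \vec d}$ and immediately rewrite $\tilde X(\vec a)_{\vec y, \vec d} = (-1)^{\vec a \cdot \hat u(\vec k, \vec y, \vec d)} X(\vec a)_{\vec y, \vec d}$ inside the sum. Writing $\hat u \equiv \hat u(\vec k, \vec y, \vec d)$ for brevity, this gives
\begin{align*}
\tilde V_{\vec y, \vec d} \ket{\varphi}_D = \E_{\vec a, \vec b} (-1)^{\vec a \cdot \hat u} \bigl(X(\vec a) Z(\vec b)\bigr)_D \otimes \bigl(\sigma_X(\vec a) \sigma_Z(\vec b)\bigr)_A \ket{\varphi}_D \otimes \bigl(\ket{\phi^{(0,0)}}^{\otimes n}\bigr)_{AQ}\,.
\end{align*}
Comparing with the definition of $V_{\vec y, \vec d}$, the task is therefore to show that left-multiplying $\tilde V_{\vec y, \vec d}$ by $\sigma_Z(\hat u)_A \otimes \sigma_Z(\hat u)_Q$ exactly cancels the phase $(-1)^{\vec a \cdot \hat u}$.

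The two elementary Pauli facts I will use are (i) the anticommutation $\sigma_Z \sigma_X = -\sigma_X \sigma_Z$, which in product form becomes
\begin{align*}
\sigma_Z(\hat u)\, \sigma_X(\vec a)\sigma_Z(\vec b) = (-1)^{\hat u \cdot \vec a}\, \sigma_X(\vec a)\sigma_Z(\vec b)\, \sigma_Z(\hat u)\,,
\end{align*}
and (ii) the stabiliser identity $(\sigma_Z \otimes \sigma_Z)\ket{\phi^{(0,0)}} = \ket{\phi^{(0,0)}}$, which tensored over the $n$ EPR pairs gives $\bigl(\sigma_Z(\hat u)_A \otimes \sigma_Z(\hat u)_Q\bigr)\ket{\phi^{(0,0)}}^{\otimes n}_{AQ} = \ket{\phi^{(0,0)}}^{\otimes n}_{AQ}$. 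Applying (i) to commute $\sigma_Z(\hat u)_A$ past $(\sigma_X(\vec a)\sigma_Z(\vec b))_A$ inside $\sigma_Z(\hat u)_A \otimes \sigma_Z(\hat u)_Q \cdot \tilde V_{\vec y, \vec d} \ket{\varphi}_D$ produces a phase $(-1)^{\hat u \cdot \vec a}$ and leaves $\sigma_Z(\hat u)_A \otimes \sigma_Z(\hat u)_Q$ acting only on the EPR pairs, where by (ii) it acts as the identity.

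Collecting terms, the two phases multiply to $(-1)^{2 \vec a \cdot \hat u} = 1$, and what remains is precisely the expression defining $V_{\vec y, \vec d}\ket{\varphi}_D$. Since $\ket{\varphi}_D$ was arbitrary, this proves the claimed operator identity. There is no serious obstacle: the argument is a two-line manipulation once the sign relation from \ref{def:observables} and the EPR stabiliser identity are invoked.
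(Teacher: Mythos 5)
Your proof is correct and follows essentially the same route as the paper: you invoke the sign relation $\tilde X(\vec a)_{\vec y,\vec d} = (-1)^{\vec a\cdot\hat u}X(\vec a)_{\vec y,\vec d}$, the Pauli (anti)commutation, and a property of the EPR pair, and these combine in the same way. The only cosmetic difference is that you substitute the sign relation first and then cancel the phase via the stabiliser identity $(\sigma_Z\otimes\sigma_Z)\ket{\phi^{(0,0)}}=\ket{\phi^{(0,0)}}$, whereas the paper instead uses the equivalent ricochet identity $(\sigma_Z)_A\ket{\phi^{(0,0)}}_{AQ}=(\sigma_Z)_Q\ket{\phi^{(0,0)}}_{AQ}$ to move the $Q$-side Pauli over to $A$, conjugates, and applies the sign relation last.
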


\begin{proof}
For any state $\ket{\varphi}_D$, we have: 
\begin{align*}
& \sigma_Z(\hat u(\vec k, \vec y, \vec d))_{A} \ot \sigma_Z(\hat u(\vec k, \vec y, \vec d))_{Q} \tilde V_{\vec y, \vec d} \ket{\varphi}_D \\
&= \E_{a, b \in \bits^n} \left( \tilde X(\vec a)_{\vec y, \vec d} Z(\vec b)_{\vec y, \vec d} \right)_{D} \ket{\varphi}_D \ot \left[ \left( \sigma_Z(\hat u(\vec k, \vec y, \vec d)) \sigma_X(\vec a) \sigma_Z(\vec b) \right)_A \ot \sigma_Z(\hat u(\vec k, \vec y, \vec d))_{Q} \left( \ket{\phi^{(0,0)}}^{\ot n} \right)_{AQ}  \right] \\
\intertext{Repeatedly using that $\left( \sigma_Z  \right)_{A} \ket{\phi^{(0,0)}}_{AQ} = \left( \sigma_Z  \right)_{Q} \ket{\phi^{(0,0)}}_{AQ}$:}
&= \E_{a, b \in \bits^n} \left( \tilde X(\vec a)_{\vec y, \vec d} Z(\vec b)_{\vec y, \vec d} \right)_{D} \ket{\varphi}_D \ot \left[ \left( \sigma_Z(\hat u(\vec k, \vec y, \vec d)) \sigma_X(\vec a) \sigma_Z(\vec b) \sigma_Z(\hat u(\vec k, \vec y, \vec d)) \right)_A \left( \ket{\phi^{(0,0)}}^{\ot n} \right)_{AQ}  \right] \\
\intertext{Since $\sigma_Z(\hat u(\vec k, \vec y, \vec d)) \sigma_X(\vec a) \sigma_Z(\vec b) \sigma_Z(\hat u(\vec k, \vec y, \vec d)) = (-1)^{a \cdot \hat u(\vec k, \vec y, \vec d)} \sigma_X(\vec a) \sigma_Z(\vec b)$:}
&= \E_{a, b \in \bits^n} \left( (-1)^{a \cdot u(\vec k, \vec y, \vec d)} \tilde X(\vec a)_{\vec y, \vec d} Z(\vec b)_{\vec y, \vec d} \right)_{D} \ket{\varphi}_D \ot \left[ \left( \sigma_X(\vec a) \sigma_Z(\vec b) \right)_A \left( \ket{\phi^{(0,0)}}^{\ot n} \right)_{AQ}  \right] \\
\intertext{Recalling from \ref{def:observables} that $(-1)^{\vec a \cdot \hat u(\vec k, \vec y, \vec d)} \tilde X(\vec a)_{\vec y, \vec d} = X(\vec a)_{\vec y, \vec d}$:}
&= \E_{a, b \in \bits^n} \left( X(\vec a)_{\vec y, \vec d} Z(\vec b)_{\vec y, \vec d} \right)_{D} \ket{\varphi}_D \ot \left[ \left( \sigma_X(\vec a) \sigma_Z(\vec b) \right)_A \left( \ket{\phi^{(0,0)}}^{\ot n} \right)_{AQ}  \right] \\
&= V \ket{\varphi}_D \,.
\end{align*}
\end{proof}

\begin{lemma} \label{lem:operator_rounding_tilde}
For an efficient perfect device $D = (S, \Pi, M, P)$ and any $\vec a, \vec b \in \bits^n$ we have
\begin{align}
\tr{ \tilde V^\dagger  \left(\sigma_X(\vec a) \sigma_Z(\vec b) \right)_Q^\dagger \tilde V \tilde X(\vec a)_{DYR} Z(\vec b)_{DYR} \sigma^{(\vec 1)}_{DYR} } \approx_{n^{1/2} \gamma_H(D)^{1/8}} 1 \,.
\end{align}
\end{lemma}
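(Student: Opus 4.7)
The plan is to evaluate $\tr{\tilde V^\dagger(\sigma_X(\vec a)\sigma_Z(\vec b))_Q^\dagger\tilde V\,\tilde X(\vec a)Z(\vec b)\,\sigma^{(\vec 1)}}$ by first reducing $\tilde V^\dagger(\sigma_X(\vec a)\sigma_Z(\vec b))_Q^\dagger\tilde V$ to an explicit operator on $\H_{DYR}$ alone, and then applying the approximate Pauli commutation of \ref{lem:approx_rep_tilde} through the Replacement Lemma \ref{lem:replace_in_trace}. This is a Gowers--Hatami-style intertwining argument, but kept entirely at the level of the trace so that we can avoid any conjugation of $\sigma^{(\vec 1)}$ by auxiliary unitaries.

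For the explicit computation, I would unfold the definition of $\tilde V$, transfer the reference-side Pauli $(\sigma_X(\vec a)\sigma_Z(\vec b))_Q^\dagger$ onto the $A$-register via the EPR identity $M_Q|\phi^{(0,0)}\rangle_{AQ}=M_A^T|\phi^{(0,0)}\rangle_{AQ}$, and observe that the dagger-phase and the transpose-phase cancel since $\bigl((\sigma_X(\vec a)\sigma_Z(\vec b))^\dagger\bigr)^T=\sigma_X(\vec a)\sigma_Z(\vec b)$. Combining the transferred Pauli with the Paulis already acting on $A$ via the Heisenberg--Weyl multiplication rule and reindexing by $\vec c\mapsto\vec c\oplus\vec a$, $\vec d\mapsto\vec d\oplus\vec b$, the orthogonality of the states $\{(\sigma_X(\vec c)\sigma_Z(\vec d))_A|\phi^{(0,0)}\rangle^{\otimes n}\}$ noted in the remark after \ref{def:iso} collapses one of the two sums in $\tilde V^\dagger\cdot\tilde V$. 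Abelianness of the $\{\tilde X_i\}$ and the $\{Z_i\}$ (giving $\tilde X(\vec c)\tilde X(\vec c\oplus\vec a)=\tilde X(\vec a)$ and $Z(\vec d\oplus\vec b)=Z(\vec d)Z(\vec b)$) removes the $\vec c$-dependence from the $D$-side operator, leaving
\begin{align*}
\tilde V^\dagger(\sigma_X(\vec a)\sigma_Z(\vec b))_Q^\dagger\tilde V \;=\; (-1)^{\vec a\cdot\vec b}\,\E_{\vec d}\,(-1)^{\vec d\cdot\vec a}\,Z(\vec d)\,\tilde X(\vec a)\,Z(\vec d)\,Z(\vec b),
\end{align*}
which is, up to a scalar and a right factor of $Z(\vec b)$, a character-weighted $Z$-twirl of $\tilde X(\vec a)$.

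Multiplying by $\tilde X(\vec a)Z(\vec b)$ on the right produces the sandwich $Z(\vec b)\tilde X(\vec a)Z(\vec b)$ alongside each $Z(\vec d)\tilde X(\vec a)Z(\vec d)$ inside the trace. The key observation is that left-multiplying both sides of $\tilde X(\vec a)Z(\vec c)\approx_{n\gamma_H(D)^{1/4},\sigma^{(\vec 1)}}(-1)^{\vec a\cdot\vec c}Z(\vec c)\tilde X(\vec a)$ from \ref{lem:approx_rep_tilde} by the unitary $Z(\vec c)$ preserves the state-dependent distance (conjugation-by-$\sigma^{(\vec 1)}$ does not enter, since only left-multiplication by a unitary is used), giving the sandwich identity $Z(\vec c)\tilde X(\vec a)Z(\vec c)\approx_{n\gamma_H(D)^{1/4},\sigma^{(\vec 1)}}(-1)^{\vec a\cdot\vec c}\tilde X(\vec a)$ for every $\vec c$. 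Two successive applications of \ref{lem:replace_in_trace} (one with $\vec c=\vec d$, one with $\vec c=\vec b$), each contributing an error of $O\bigl((n\gamma_H(D)^{1/4})^{1/2}\bigr)=O(n^{1/2}\gamma_H(D)^{1/8})$, reduce the inner trace to $(-1)^{\vec a\cdot(\vec d+\vec b)}\tr{\tilde X(\vec a)^2\sigma^{(\vec 1)}}=(-1)^{\vec a\cdot(\vec d+\vec b)}$. The combined phase $(-1)^{\vec a\cdot\vec b}\cdot(-1)^{\vec d\cdot\vec a}\cdot(-1)^{\vec a\cdot(\vec d+\vec b)}$ equals $1$ for every $\vec d$, so $\E_{\vec d}$ of it is $1$; since the replacement-lemma errors are uniform in $\vec d$, the average does not blow them up.

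The main obstacle is arranging the first step so that the explicit formula for $\tilde V^\dagger B\tilde V$ contains only a single $\tilde X(\vec a)$-factor rather than $n$ separate $\tilde X_i$'s: this is what lets a \emph{constant} number of replacements suffice in step two, giving the final $n^{1/2}\gamma_H(D)^{1/8}$ scaling rather than something with an extra factor of $n$. The abelianness of the $\{\tilde X_i\}$ and of the $\{Z_i\}$ is precisely what makes this collapse possible; the rest is phase bookkeeping.
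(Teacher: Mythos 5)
Your proposal is correct and is a close variant of the paper's argument, using exactly the same ingredients (the definition of $\tilde V$, the EPR transpose identity, the Heisenberg--Weyl multiplication rule, \ref{lem:approx_rep_tilde}, and \ref{lem:replace_in_trace}), but organized differently. The paper expands only one copy of $\tilde V$, keeps $\tilde V^\dagger$ floating on the left throughout the two replacement steps, and only collapses to $\tilde V^\dagger\tilde V=\1$ at the very end after shifting $(\vec a',\vec b')\mapsto(\vec a'-\vec a,\vec b'-\vec b)$. You instead fully contract $\tilde V^\dagger(\sigma_X(\vec a)\sigma_Z(\vec b))_Q^\dagger\tilde V$ first, exploiting the abelianness of $\{\tilde X_i\}$ and $\{Z_i\}$ and the EPR orthogonality to reduce the double sum to the single-index character twirl $(-1)^{\vec a\cdot\vec b}\E_{\vec d}(-1)^{\vec d\cdot\vec a}Z(\vec d)\tilde X(\vec a)Z(\vec d)Z(\vec b)$, and then apply the sandwich identity $Z(\vec c)\tilde X(\vec a)Z(\vec c)\approx_{n\gamma_H(D)^{1/4},\sigma^{(\vec 1)}}(-1)^{\vec a\cdot\vec c}\tilde X(\vec a)$ (which is indeed a legitimate left-multiplication of \ref{lem:approx_rep_tilde} by the unitary $Z(\vec c)$, since left unitary multiplication preserves the state-dependent norm). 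What this buys you is that when you invoke \ref{lem:replace_in_trace} the ``spectator'' operator $C$ is manifestly the unitary $Z(\vec d)\tilde X(\vec a)Z(\vec d)$, so the bound $\norm{C}_\infty\leq 1$ is immediate; in the paper's version $C$ implicitly involves $\tilde V^\dagger$ and the EPR injection, which still has $\norm{C}_\infty\leq 1$ but requires a moment of thought. The error analysis is identical in both: two replacements, each costing $O\bigl((n\gamma_H(D)^{1/4})^{1/2}\bigr)=O(n^{1/2}\gamma_H(D)^{1/8})$, uniformly in the averaged index, with phases cancelling exactly. One remark: the worry in your last paragraph about an extra factor of $n$ from doing ``$n$ separate replacements'' is already preempted by \ref{lem:approx_rep_tilde} being stated for the full multi-index observables $\tilde X(\vec a),Z(\vec b)$; the $n$ in the error constant comes from the $n$-step triangle inequality inside that proposition's proof, not from the number of replacements here.
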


\begin{proof}
Inserting the definition of $\tilde V$:
\begin{align*}
& \tr{ \tilde V^\dagger  \left(\sigma_X(\vec a) \sigma_Z(\vec b) \right)_Q^\dagger \tilde V \tilde X(\vec a)_{DYR} Z(\vec b)_{DYR} \sigma^{(\vec 1)}_{DYR} } \\
&= \E_{\vec a', \vec b'} \tr{ \tilde V^\dagger  \left( \tilde X(\vec a') Z(\vec b') \tilde X(\vec a) Z(\vec b) \sigma^{(\vec 1)} \right)_{DYR} \ot \left( \sigma_X(\vec a') \sigma_Z(\vec b') \right)_A \ot \left( \sigma_X(\vec a) \sigma_Z(\vec b)\right)^\dagger_Q \left( \ket{\phi^{(0,0)}}^{\ot n} \right)_{AQ}}
\intertext{Using $\left( \sigma_X(\vec a) \sigma_Z(\vec b)\right)^\dagger_Q \left( \ket{\phi^{(0,0)}}^{\ot n} \right)_{AQ} = \left( \sigma_X(\vec a) \sigma_Z(\vec b)\right)_A \left( \ket{\phi^{(0,0)}}^{\ot n} \right)_{AQ}$ and the Pauli group relation $\sigma_X(\vec a') \sigma_Z(\vec b') \sigma_X(\vec a) \sigma_Z(\vec b) = (-1)^{\vec a \cdot \vec b'} \sigma_X(\vec a + \vec a') \sigma_Z(\vec b + \vec b')$:}
&= \E_{\vec a', \vec b'} (-1)^{\vec a \cdot \vec b'} \tr{ \tilde V^\dagger  \left( \tilde X(\vec a') Z(\vec b') \tilde X(\vec a) Z(\vec b) \sigma^{(\vec 1)} \right)_{DYR} \ot \left( \sigma_X(\vec a + \vec a') \sigma_Z(\vec b + \vec b') \right)_A \left( \ket{\phi^{(0,0)}}^{\ot n} \right)_{AQ}}
\intertext{Using \ref{lem:approx_rep_tilde} with \ref{lem:replace_in_trace} to exchange the order of $\tilde X(\vec a)$ and $Z(\vec b)$ (which act directly on the state in the above expression), and then combining $Z(\vec b) Z(\vec b') = Z(\vec b + \vec b')$:}
&\approx_{n^{1/2} \gamma_H(D)^{1/8}} \E_{\vec a', \vec b'} (-1)^{\vec a \cdot \vec b' + \vec a \cdot \vec b} \tr{ \tilde V^\dagger  \left( \tilde X(\vec a') Z(\vec b + \vec b') \tilde X(\vec a) \sigma^{(\vec 1)} \right)_{DYR} \ot \left( \sigma_X(\vec a + \vec a') \sigma_Z(\vec b + \vec b') \right)_A \left( \ket{\phi^{(0,0)}}^{\ot n} \right)_{AQ}}
\intertext{We can now again use \ref{lem:approx_rep_tilde} with \ref{lem:replace_in_trace} to exchange the order of $Z(\vec b + \vec b')$ and $\tilde X(\vec a)$ (and note that this will cancel the $(-1)^{\vec a \cdot \vec b' + \vec a \cdot \vec b}$ pre-factor in the above expression), and combine $X(\vec a) X(\vec a') = X(\vec a + \vec a')$:}
&\approx_{n^{1/2} \gamma_H(D)^{1/8}} \E_{\vec a', \vec b'} \tr{ \tilde V^\dagger  \left( \tilde X(\vec a + \vec a') Z(\vec b + \vec b') \sigma^{(\vec 1)} \right)_{DYR} \ot \left( \sigma_X(\vec a + \vec a') \sigma_Z(\vec b + \vec b') \right)_A \left( \ket{\phi^{(0,0)}}^{\ot n} \right)_{AQ}}
\intertext{If we now shift the indices in the expectation $\vec a' \mapsto \vec a' - \vec a$ and $\vec b' \mapsto \vec b' - \vec b$, then this simplifies to:}
&= \tr{\tilde V^\dagger \tilde V \sigma^{(\vec 1)}} \\
&= 1 \,.
\end{align*}
\end{proof}

We can now combine \ref{lem:relation_V_tilde_notilde} and \ref{lem:operator_rounding_tilde} to show that the isometry $V$ maps the observables $X(\vec a) Z(\vec b)$ to the corresponding Pauli observables.
\begin{proposition} \label{lem:operator_rounding_no_tilde}
For an efficient perfect device $D = (S, \Pi, M, P)$ and any $\vec a, \vec b \in \bits^n$ we have
\begin{align}
V X(\vec a) Z(\vec b) V^\dagger &\approx_{n^{1/2} \gamma_H(D)^{1/8}, V \sigma^{(\vec 1)} V^\dagger } \left(  \sigma_X(\vec a) \sigma_Z(\vec b) \right)_Q \ot \1_{YRDA} \,.
\end{align}
\end{proposition}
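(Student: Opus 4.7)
The plan is to reduce the claim to the inefficient-observable statement \ref{lem:operator_rounding_tilde} by exploiting \ref{lem:relation_V_tilde_notilde}, which connects $V$ and $\tilde V$. First I would invoke \ref{lem:state_dep_distance_expanded} in its second form, together with cyclicity of the trace and $V^\dagger V = \1$, to reduce the target to establishing
\[
\tr{Z(\vec b) X(\vec a) V^\dagger T V \sigma^{(\vec 1)}} \approx_{\eps} 1 \,,
\]
where $T \deq (\sigma_X(\vec a) \sigma_Z(\vec b))_Q \otimes \1$ and $\eps \deq n^{1/2} \gamma_H(D)^{1/8}$, noting that $(X(\vec a) Z(\vec b))^\dagger = Z(\vec b) X(\vec a)$ by Hermiticity.

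Next I would introduce the Hermitian phase operator $S_a \deq \sum_{\vec y, \vec d} (-1)^{\vec a \cdot \hat u(\vec k, \vec y, \vec d)} \proj{\vec y, \vec d}_{YR}$ on $YR$, which satisfies $S_a^2 = \1$ and commutes with each of $\tilde V$, $T$, $X(\vec a)$, $Z(\vec b)$, $\tilde X(\vec a)$, and $\sigma^{(\vec 1)}$, since each of these operators either acts trivially on $YR$ or is block-diagonal in the $(\vec y, \vec d)$-basis. From \ref{def:observables} one reads off $X(\vec a) = \tilde X(\vec a) S_a$, and from \ref{lem:relation_V_tilde_notilde} together with the Pauli identity $\sigma_Z(\hat u) \sigma_X(\vec a) \sigma_Z(\hat u) = (-1)^{\vec a \cdot \hat u} \sigma_X(\vec a)$ a direct calculation gives $V^\dagger T V = \tilde V^\dagger T \tilde V \cdot S_a$. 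Substituting both identities into $Z(\vec b) X(\vec a) V^\dagger T V$ produces two factors of $S_a$ that may be brought adjacent by the commutation relations and then cancelled via $S_a^2 = \1$, yielding the key identity
\[
\tr{Z(\vec b) X(\vec a) V^\dagger T V \sigma^{(\vec 1)}} = \tr{Z(\vec b) \tilde X(\vec a) \tilde V^\dagger T \tilde V \sigma^{(\vec 1)}}\,.
\]

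The subtle step will be matching the right-hand side with the expression appearing in \ref{lem:operator_rounding_tilde}, namely $\tr{\tilde V^\dagger T^\dagger \tilde V \tilde X(\vec a) Z(\vec b) \sigma^{(\vec 1)}} \approx_{\eps} 1$, since the two expressions differ both in the order of $Z(\vec b)$ and $\tilde X(\vec a)$ and in whether $T$ or $T^\dagger$ appears. I would resolve this cleanly by taking the complex conjugate of the trace: since $\sigma^{(\vec 1)}$, $Z(\vec b)$, and $\tilde X(\vec a)$ are all Hermitian,
\[
\overline{\tr{Z(\vec b) \tilde X(\vec a) \tilde V^\dagger T \tilde V \sigma^{(\vec 1)}}} = \tr{\sigma^{(\vec 1)} \tilde V^\dagger T^\dagger \tilde V \tilde X(\vec a) Z(\vec b)} = \tr{\tilde V^\dagger T^\dagger \tilde V \tilde X(\vec a) Z(\vec b) \sigma^{(\vec 1)}}
\]
by cyclicity, and the final expression equals $1$ up to error $\eps$ by \ref{lem:operator_rounding_tilde}. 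As $1$ is real, the conjugation can be dropped, yielding $\tr{Z(\vec b) X(\vec a) V^\dagger T V \sigma^{(\vec 1)}} \approx_\eps 1$, which by the initial reduction via \ref{lem:state_dep_distance_expanded} delivers the proposition with the stated parameter (no power is lost along the way).
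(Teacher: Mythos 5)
Your proof is correct and follows essentially the same approach as the paper: reduce via \ref{lem:state_dep_distance_expanded} to a trace identity, use \ref{lem:relation_V_tilde_notilde} to pass from $V$ to $\tilde V$ while the resulting $(-1)^{\vec a\cdot\hat u}$ phase converts $X(\vec a)$ into $\tilde X(\vec a)$, and conclude from \ref{lem:operator_rounding_tilde}. Introducing the diagonal phase operator $S_a$ is a clean repackaging of the paper's explicit sum over $\vec y,\vec d$, and the final complex-conjugation step merely accounts for applying \ref{lem:state_dep_distance_expanded} in the form $\tr{VA^\dagger V^\dagger B\psi_2}$ rather than its conjugate $\tr{B^\dagger VAV^\dagger\psi_2}$, which the paper lands on directly.
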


\begin{proof}
By \ref{lem:state_dep_distance_expanded}, it suffices to show that 
\begin{align}
\tr{\left(  \sigma_X(\vec a) \sigma_Z(\vec b) \right)_Q^\dagger  V X(\vec a) Z(\vec b) V^\dagger V \sigma^{(\vec 1)} V^\dagger} \approx_{n^{1/2} \gamma_H(D)^{1/8}} 1 \,.
\end{align}
For this, we perform the following calculation.
Using $V^\dagger V = \1$, cyclicity of the trace, and tracing over the registers $Y$ and $R$:
\begin{align*}
& \tr{\left(  \sigma_X(\vec a) \sigma_Z(\vec b) \right)_Q^\dagger  V X(\vec a) Z(\vec b) V^\dagger V \sigma^{(\vec 1)} V^\dagger} \\
&= \sum_{\vec y, \vec d} \tr{ V_{\vec y, \vec d}^\dagger \left(\sigma_X(\vec a) \sigma_Z(\vec b) \right)_Q^\dagger V_{\vec y, \vec d} X(\vec a)_{\vec y, \vec d} Z(\vec b)_{\vec y, \vec d} \sigma^{(\vec 1)}_{\vec y, \vec d} } \\
\intertext{Using \ref{lem:relation_V_tilde_notilde}:}
&= \sum_{\vec y, \vec d} \tr{\tilde V_{\vec y, \vec d}^\dagger \sigma_Z(\hat u(\vec k, \vec y, \vec d))_{A} \ot \sigma_Z(\hat u(\vec k, \vec y, \vec d))_{Q} \left(\sigma_X(\vec a) \sigma_Z(\vec b) \right)_Q^\dagger V_{\vec y, \vec d} X(\vec a)_{\vec y, \vec d} Z(\vec b)_{\vec y, \vec d} \sigma^{(\vec 1)}_{\vec y, \vec d} } \\
\intertext{Exchanging the order of $\sigma_Z(\hat u(\vec k, \vec y, \vec d))_{A} \ot \sigma_Z(\hat u(\vec k, \vec y, \vec d))_{Q}$ and $\left(\sigma_X(\vec a) \sigma_Z(\vec b) \right)_Q^\dagger$ (which produces a factor of $(-1)^{a \cdot \hat u(\vec k, \vec y, \vec d)}$):}
&= \sum_{\vec y, \vec d} (-1)^{\vec a \cdot \hat u(\vec k, \vec y, \vec d)} \tr{\tilde V_{\vec y, \vec d}^\dagger \left(\sigma_X(\vec a) \sigma_Z(\vec b) \right)_Q^\dagger \left[ \sigma_Z(\hat u(\vec k, \vec y, \vec d))_{A} \ot \sigma_Z(\hat u(\vec k, \vec y, \vec d))_{Q} V_{\vec y, \vec d}  \right]  X(\vec a)_{\vec y, \vec d}  Z(\vec b)_{\vec y, \vec d} \sigma^{(\vec 1)}_{\vec y, \vec d} } \\
\intertext{By \ref{lem:relation_V_tilde_notilde}, the expression in square brackets is simply $\tilde V_{\vec y, \vec d}$. Additionally recalling from \ref{def:observables} that $(-1)^{\vec a \cdot \hat u(\vec k, \vec y, \vec d)} X(\vec a)_{\vec y, \vec d} = \tilde X(\vec a)_{\vec y, \vec d}$:}
&= \sum_{\vec y, \vec d} \tr{\tilde V_{\vec y, \vec d}^\dagger \left(\sigma_X(\vec a) \sigma_Z(\vec b) \right)_Q^\dagger \tilde V_{\vec y, \vec d} \tilde X(\vec a)_{\vec y, \vec d} Z(\vec b)_{\vec y, \vec d} \sigma^{(\vec 1)}_{\vec y, \vec d} } \\
\intertext{Finally, we can re-introduce the systems $Y$ and $R$ and use \ref{lem:operator_rounding_tilde} to obtain:}
&= \tr{\tilde V^\dagger \left(\sigma_X(\vec a) \sigma_Z(\vec b) \right)_Q^\dagger \tilde V \tilde X(\vec a) Z(\vec b) \sigma^{(\vec 1)}} \\
&\approx_{n^{1/2} \gamma_H(D)^{1/8}} 1 \,. 
\end{align*}
\end{proof}

\subsection{Preparing BB84 states} \label{sec:bb84}

\begin{lemma} \label{lem:pauli_projectors_one}
For an efficient perfect  device $D = (S, \Pi, M, P)$ and any $\vec \theta \in \bits^n$:
\begin{enumerate}
\item If $\theta_i = 0$, then
\begin{align}
\sum_{\vec v} \proj{\vec v} \ot \left( \sigma_{Z,i}^{(v_i)} \right)_Q \approx_{n^{1/4} \gamma_H(D)^{1/16}, \sum_{\vec v'} \proj{\vec v'} \ot V \sigma^{(\vec \theta, \vec v')} V^\dagger}  \1 \,.
\end{align}
\item If $\theta_i = 1$, then
\begin{align}
\sum_{\vec v} \proj{\vec v} \ot \left( \sigma_{X,i}^{(v_i)} \right)_Q \approx_{n^{1/4} \gamma_H(D)^{1/16}, \sum_{\vec v'} \proj{\vec v'} \ot V \sigma^{(\vec \theta, \vec v')} V^\dagger} \1 \,.
\end{align}
\end{enumerate}
\end{lemma}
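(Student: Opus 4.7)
The plan is to reduce the state-dependent statement $A \approx_{\eps',\rho} \1$, with $A \deq \sum_{\vec v} \proj{\vec v} \ot (\sigma_{Z,i}^{(v_i)})_Q$ and $\rho \deq \sum_{\vec v'} \proj{\vec v'} \ot V \sigma^{(\vec \theta, \vec v')} V^\dagger$, to the scalar statement $\tr{A\rho} \approx_{\eps'} 1$. This reduction is legitimate because $A$ is a projector (hence $(A-\1)^\dagger(A-\1) = \1 - A$) and $\tr{\rho} = \tr{\sigma^{(\vec \theta)}} = 1$, so by \ref{def:approx_dist} the two formulations are equivalent. I will detail case 1 ($\theta_i = 0$); case 2 is obtained by swapping each $Z$ with $X$ throughout and appealing to the $X$-branch of \ref{lem:observables_one_mixed_theta}.

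The key observation is that although $\vec v$ ranges over exponentially many bitstrings, the sum in $\tr{A\rho}$ collapses to two terms: using orthogonality of $\{\proj{\vec v}\}_{\vec v}$ and grouping by the $i$-th coordinate,
\begin{align*}
\tr{A\rho} = \sum_{\vec v} \tr{(\sigma_{Z,i}^{(v_i)})_Q\, V \sigma^{(\vec \theta, \vec v)} V^\dagger} = \sum_{v \in \bits} \tr{(\sigma_{Z,i}^{(v)})_Q\, V \sigma^{(\vec \theta, v, \vec 1^i)} V^\dagger} \,,
\end{align*}
where I have used $\sum_{\vec v:\,v_i = v} \sigma^{(\vec \theta, \vec v)} = \sigma^{(\vec \theta, v, \vec 1^i)}$ from \ref{def:partial_postmeas_states}. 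For each $v$ I will expand $\sigma_{Z,i}^{(v)} = \tfrac{1}{2}(\1 + (-1)^v \sigma_{Z,i})$: the $\1$-piece contributes exactly $\tfrac{1}{2}\tr{\sigma^{(\vec \theta, v, \vec 1^i)}}$, so it suffices to show that the $(-1)^v \sigma_{Z,i}$-piece also contributes $\tfrac{1}{2}\tr{\sigma^{(\vec \theta, v, \vec 1^i)}}$ up to the target error.

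I will accomplish this via two replacements. First, \ref{lem:operator_rounding_no_tilde} with $\vec a = \vec 0$, $\vec b = \vec 1^i$ gives $V Z_i V^\dagger \approx_{n^{1/2}\gamma_H(D)^{1/8}, V\sigma^{(\vec 1)}V^\dagger} (\sigma_{Z,i})_Q$; combined with \ref{lem:replace_in_trace} and $V^\dagger V = \1$, this produces $\tr{(\sigma_{Z,i})_Q V \sigma^{(\vec \theta, v, \vec 1^i)} V^\dagger} \approx_{n^{1/4}\gamma_H(D)^{1/16}} \tr{Z_i \sigma^{(\vec \theta, v, \vec 1^i)}}$. Second, \ref{lem:observables_one_mixed_theta} (whose hypothesis reduces to $\theta_i = 0$ here, since $\vec 1^i$ is supported only at coordinate $i$) yields $Z_i \approx_{\gamma_H(D), \sigma^{(\vec \theta, v, \vec 1^i)}} (-1)^v \1$, and a further application of \ref{lem:replace_in_trace} gives $\tr{Z_i \sigma^{(\vec \theta, v, \vec 1^i)}} \approx_{\gamma_H(D)^{1/2}} (-1)^v \tr{\sigma^{(\vec \theta, v, \vec 1^i)}}$. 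The two $(-1)^v$ factors cancel, the two halves of $\sigma_{Z,i}^{(v)}$ sum to $\tr{\sigma^{(\vec \theta, v, \vec 1^i)}}$, and summing over $v \in \bits$ yields $\tr{A\rho} \approx_{n^{1/4}\gamma_H(D)^{1/16}} \sum_v \tr{\sigma^{(\vec \theta, v, \vec 1^i)}} = \tr{\sigma^{(\vec \theta)}} = 1$.

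The main subtlety I anticipate lies in the first replacement, since \ref{lem:operator_rounding_no_tilde} is stated only on the single state $V \sigma^{(\vec 1)} V^\dagger$ while I need the approximation on $V \sigma^{(\vec \theta, v, \vec 1^i)} V^\dagger$ for the arbitrary $\vec \theta$ of the hypothesis. I will handle this in two sub-steps. First, I lift from $V \sigma^{(\vec 1)} V^\dagger$ to $V \sigma^{(\vec \theta)} V^\dagger$: since $V Z_i V^\dagger$ and $(\sigma_{Z,i})_Q$ are both efficient and $V \sigma^{(\vec \theta)} V^\dagger$ is efficiently preparable from $\sigma^{(\vec \theta)}$ (because $V$ is efficient), the distinguisher-based argument underlying the proof of \ref{lem:partial_lifting} with $\mI = \emptyset$ transfers verbatim, with the distinguisher additionally applying $V$ before running the estimator of \cite[Lemma 2.6]{self-testing}. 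Second, since $\sigma^{(\vec \theta, v, \vec 1^i)} \leq \sigma^{(\vec \theta)}$ in the PSD order, the quantity $\tr{(V Z_i V^\dagger - (\sigma_{Z,i})_Q)^2 V \sigma V^\dagger}$ is monotone nondecreasing in $\sigma$, so the approximation carries over to $V \sigma^{(\vec \theta, v, \vec 1^i)} V^\dagger$ with the same error.
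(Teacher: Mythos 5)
Your proof is correct and follows essentially the same route as the paper: reduce the state-dependent statement to a trace equality (using that the left-hand operator is a projector), collapse the sum over $\vec v$ to the two terms indexed by $v_i$ via \ref{def:partial_postmeas_states}, apply \ref{lem:operator_rounding_no_tilde} to replace the Pauli projector by the device's $Z_i$, lift from the basis choice $\vec 1$ to a general $\vec\theta$ via the \ref{lem:partial_lifting} argument, restrict to the sub-states $\sigma^{(\vec\theta,v,\vec 1^i)}$, and close with \ref{lem:observables_one_mixed_theta}. The only departures are local bookkeeping: you expand $\sigma_{Z,i}^{(v)}=\tfrac12(\1+(-1)^v\sigma_{Z,i})$ and let the $(-1)^v$ factors cancel by hand, and you justify the sub-state restriction by direct PSD monotonicity of $\tr{(A-B)^\dagger(A-B)\,\cdot\,}$, whereas the paper instead passes to projector form via a cited helper lemma from~\cite{self-testing} and invokes a second cited lemma that encapsulates exactly the monotonicity fact you reprove; both routes yield the same $n^{1/4}\gamma_H(D)^{1/16}$.
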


\begin{proof}
We first prove the first statement.
It is easy to check that $\sum_{\vec v} \proj{\vec v} \ot \left( \sigma_{Z,i}^{(v_i)} \right)_Q$ is a projector, so we can expand the definition of the state-dependent distance and compute:
\begin{align*}
&\tr{ \left( \sum_{\vec v} \proj{\vec v} \ot \left( \sigma_{Z,i}^{(v_i)} \right)_Q - \1 \right)^\dagger \left( \sum_{\vec v} \proj{\vec v} \ot \left( \sigma_{Z,i}^{(v_i)} \right)_Q - \1 \right) \sum_{\vec v'} \proj{\vec v'} \ot V \sigma^{(\vec \theta, \vec v')} V^\dagger} \\
&= \tr{ \left( \1 - \sum_{\vec v} \proj{\vec v} \ot \left( \sigma_{Z,i}^{(v_i)} \right)_Q \right) \sum_{\vec v'} \proj{\vec v'} \ot V \sigma^{(\vec \theta, \vec v')} V^\dagger} \\
&= 1 - \sum_{\vec v} \tr{ \left( \proj{\vec v} \ot \left( \sigma_{Z,i}^{(v_i)} \right)_Q \right) \sum_{\vec v'} \proj{\vec v'} \ot V \sigma^{(\vec \theta, \vec v')} V^\dagger} \\
&= 1 - \sum_{\vec v} \tr{ \left( \sigma_{Z,i}^{(v_i)} \right)_Q V \sigma^{(\vec \theta, \vec v)} V^\dagger } \\
&= 1 - \sum_{v_i \in \bits} \; \tr{ \left( \sigma_{Z,i}^{(v_i)} \right)_Q V \sigma^{(\vec \theta, v_i, \vec 1^i)} V^\dagger} \,,
\end{align*}
where for the last line we used that $\sum_{v_1, \dots, v_{i-1}, v_{i+1}, \dots, v_n \in \bits} \sigma^{(\vec \theta, \vec v)} = \sigma^{(\vec \theta, v_i, \vec 1^i)}$.
Therefore, to show the first part of the lemma, we need to show that 
\begin{align}
\sum_{v_i \in \bits} \; \tr{ \left( \sigma_{Z,i}^{(v_i)} \right)_Q V \sigma^{(\vec \theta, v_i, \vec 1^i)} V^\dagger} \approx_{n^{1/4} \gamma_H(D)^{1/16}} 1\,.
\end{align}
For this, recall from \ref{lem:operator_rounding_no_tilde} that we have 
\begin{align}
V Z_i V^\dagger &\approx_{n^{1/2} \gamma_H(D)^{1/8}, V \sigma^{(\vec 1)} V^\dagger } \left( \sigma_{Z,i} \right)_Q \ot \1_{YRDA} \,.
\end{align}
Since $V$ and $Z_i$ are efficient, by \ref{lem:partial_lifting} this implies that for any $\vec \theta$,
\begin{align}
V Z_i V^\dagger &\approx_{n^{1/2} \gamma_H(D)^{1/8}, V \sigma^{(\vec \theta)} V^\dagger } \left( \sigma_{Z,i} \right)_Q \ot \1_{YRDA} \,.
\end{align}
(Compared to \ref{lem:partial_lifting}, here we have an additional isometry $V$ applied to the state in the approximation. However, because $V$ is efficient, it is easy to see that the proof of \ref{lem:partial_lifting} still goes through.)
Using \cite[Lemma 2.18(ii)]{self-testing} and \cite[Lemma 2.24]{self-testing}, we get for any $v_i \in \bits$:
\begin{align}
V Z_i^{(v_i)} V^\dagger &\approx_{n^{1/2} \gamma_H(D)^{1/8}, V \sigma^{(\vec \theta, v_i, \vec 1^i)} V^\dagger } \left( \sigma_{Z,i}^{(v_i)} \right)_Q \ot \1_{YRDA} \,.
\end{align}
Using the replacement lemma (\ref{lem:replace_in_trace}), we obtain 
\begin{align}
\sum_{v_i \in \bits} \tr{ \left( \sigma_{Z,i}^{(v_i)} \right)_Q V \sigma^{(\vec \theta, v_i, \vec 1^i)} V^\dagger} 
&\approx_{n^{1/4} \gamma_H(D)^{1/16}} \sum_{v_i \in \bits} \tr{ V Z_i^{(v_i)} V^\dagger V \sigma^{(\vec \theta, v_i, \vec 1^i)} V^\dagger} \\
& = \sum_{v_i \in \bits} \tr{ Z_i^{(v_i)} \sigma^{(\vec \theta, v_i, \vec 1^i)} } \\
& \approx_{\gamma_H(D)} 1 \,,
\end{align}
where the last line follows from \ref{eqn:z_vec_succprob_mixed_theta} because $\theta_i = 0$. This finishes the proof of the first statement.

For the second statement, we can perform the same calculation, but use \ref{eqn:x_vec_succprob_mixed_theta} instead of \ref{eqn:z_vec_succprob_mixed_theta}.
\end{proof}

We are now in a position to show that on average over $\vec v$ and under the isometry $V$, an efficient perfect device must have prepared a product of BB84 states tensored with an additional state $\alpha^{(\vec \theta, \vec v)}$.
\begin{lemma} \label{lem:bb84_mixed_alpha}
For an efficient perfect device $D = (S, \Pi, M, P)$ and any $\vec \theta \in \bits^n$, there exists a set of subnormalised states $\{\alpha^{(\vec \theta, \vec v)}\}_{\vec v \in \bits^n}$ such that
\begin{align}
\sum_{\vec v \in \bits^n} \proj{\vec v} \ot V \sigma^{(\vec \theta, \vec v)} V^\dagger &\approx_{n^{5/4} \gamma_H(D)^{1/16}} \sum_{\vec v \in \bits^n} \proj{\vec v} \ot \left( \bigotimes_{i} H^{\theta_i} \proj{v_i} H^{\theta_i}  \right)_Q \ot \alpha^{(\vec \theta, \vec v)} \,.
\end{align}
\end{lemma}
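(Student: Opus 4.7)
The plan is to apply \ref{lem:pauli_projectors_one} simultaneously for all $i \in [n]$, combine the resulting approximations via a union-bound argument on commuting projectors, and then factor out the $Q$-register using that the relevant projectors are rank-one. Write $\rho := \sum_{\vec v} \proj{\vec v}_W \ot V \sigma^{(\vec \theta, \vec v)} V^\dagger$, and for each $i$ let $\Pi_i^{(v_i)}$ denote the single-qubit projector on the $i$-th qubit of $Q$ that equals $\sigma_{Z,i}^{(v_i)}$ if $\theta_i = 0$ and $\sigma_{X,i}^{(v_i)}$ if $\theta_i = 1$; equivalently, $\Pi_i^{(v_i)} = H^{\theta_i} \proj{v_i} H^{\theta_i}$. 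Set $\tilde \Pi_i := \sum_{\vec v} \proj{\vec v}_W \ot (\Pi_i^{(v_i)})_Q$. By \ref{lem:pauli_projectors_one}, $\tilde \Pi_i \approx_{\eps,\rho} \1$ with $\eps := n^{1/4} \gamma_H(D)^{1/16}$. Since the $\proj{\vec v}_W$ share a common eigenbasis and the $\Pi_i^{(v_i)}$ act on distinct qubits of $Q$, the $\tilde \Pi_i$ pairwise commute and $\Pi := \prod_i \tilde \Pi_i = \sum_{\vec v} \proj{\vec v}_W \ot \Pi_{\vec v}^Q$ is a projector, where $\Pi_{\vec v}^Q := \bigotimes_i (\Pi_i^{(v_i)})_Q$.

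The first main step is to propagate the approximation from the $\tilde \Pi_i$ to $\Pi$. Each $\tilde \Pi_i$ being a projector yields $(\1 - \tilde \Pi_i)^\dagger(\1 - \tilde \Pi_i) = \1 - \tilde \Pi_i$, so $\tilde \Pi_i \approx_{\eps,\rho} \1$ unpacks to the scalar inequality $\tr{(\1 - \tilde \Pi_i)\rho} = O(\eps) + \negl(\lambda)$. Diagonalising the commuting $\tilde \Pi_i$ in a common eigenbasis and applying a union bound on the event ``some $\tilde \Pi_i$ fails'' gives $\tr{(\1 - \Pi)\rho} \leq \sum_i \tr{(\1 - \tilde \Pi_i)\rho} = O(n\eps) + \negl(\lambda)$. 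Since $\tr{\rho} = 1$ (as $V$ is an isometry and $\sigma^{(\vec \theta)}$ is a normalised state), the gentle measurement lemma gives $\|\rho - \Pi \rho \Pi\|_1 \leq 2\sqrt{\tr{(\1-\Pi)\rho}} = O(\sqrt{n\eps})$, so in the sense of \ref{def:approx_dist} we have $\rho \approx_{n\eps} \Pi \rho \Pi$; substituting $\eps$ this is exactly the desired $n^{5/4} \gamma_H(D)^{1/16}$-closeness.

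The second step is to rewrite $\Pi \rho \Pi$ in the claimed tensor-product form. Orthogonality of the $\proj{\vec v}_W$ yields
\begin{equation*}
\Pi \rho \Pi = \sum_{\vec v} \proj{\vec v}_W \ot \Pi_{\vec v}^Q V \sigma^{(\vec \theta, \vec v)} V^\dagger \Pi_{\vec v}^Q.
\end{equation*}
Because $\Pi_{\vec v}^Q$ is the rank-one projector onto $\ket{\phi_{\vec v}} := \bigotimes_i H^{\theta_i} \ket{v_i}$ on $Q$, for any operator $X$ on the full Hilbert space one has $\Pi_{\vec v}^Q X \Pi_{\vec v}^Q = \Pi_{\vec v}^Q \ot \bra{\phi_{\vec v}}_Q X \ket{\phi_{\vec v}}_Q$, where the partial matrix element is an operator on the remaining registers. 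Defining the subnormalised positive semidefinite operator $\alpha^{(\vec \theta, \vec v)} := \bra{\phi_{\vec v}}_Q V \sigma^{(\vec \theta, \vec v)} V^\dagger \ket{\phi_{\vec v}}_Q$ gives the required expression $\sum_{\vec v} \proj{\vec v}_W \ot \Pi_{\vec v}^Q \ot \alpha^{(\vec \theta, \vec v)}$.

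The main subtlety is ensuring the error grows only linearly in $n$ in the union-bound step. A naive triangle-inequality argument directly for the state-dependent distance $\approx_{\eps,\rho}$ would produce cross terms potentially exponential in $n$; the trick is to first translate to the scalar quantity $\tr{(\1 - \tilde \Pi_i)\rho}$, exploit commutativity of the $\tilde \Pi_i$ so that their product $\Pi$ is itself a projector with a clean union bound, and only then invoke gentle measurement to return to the trace norm. This keeps the final dependence at $O(n\eps) = O(n^{5/4}\gamma_H(D)^{1/16})$ rather than anything worse.
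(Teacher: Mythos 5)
Your proposal is correct, and it takes a genuinely different (and arguably cleaner) route than the paper. The paper proves this lemma by applying \ref{lem:pauli_projectors_one} together with a ``sandwiching'' lemma from~\cite{self-testing} iteratively, once per index $i$, inserting $\sum_{\vec v}\proj{\vec v}\otimes(\sigma_{M(\theta_i),i}^{(v_i)})_Q$ on both sides of the state and declaring a total error of $n\cdot n^{1/4}\gamma_H(D)^{1/16}$. If one literally adds the per-step errors by the triangle inequality on the trace distance, one gets $O(n\sqrt{\eps})$ in trace norm for $\eps = n^{1/4}\gamma_H(D)^{1/16}$, i.e.\ $O(n^2\eps) = O(n^{9/4}\gamma_H(D)^{1/16})$ in the squared-trace-distance convention of \ref{def:approx_dist} --- so the paper's linear-in-$n$ accounting is not transparent as written. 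Your argument achieves the linear dependence directly: you exploit that the operators $\tilde\Pi_i$ are \emph{commuting} projectors, bound $\tr{(\1-\Pi)\rho} \le \sum_i \tr{(\1-\tilde\Pi_i)\rho}$ by the operator union bound $\1 - \prod_i\tilde\Pi_i \le \sum_i(\1-\tilde\Pi_i)$, and then pass to trace distance with a single invocation of the gentle measurement lemma. This keeps the error at $O(n\eps)=O(n^{5/4}\gamma_H(D)^{1/16})$ and avoids iterated triangle-inequality losses. The algebraic identity $\Pi_{\vec v}^Q X\Pi_{\vec v}^Q = \Pi_{\vec v}^Q\otimes\bra{\phi_{\vec v}}_Q X\ket{\phi_{\vec v}}_Q$ for a rank-one projector correctly produces the desired tensor form.

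One point worth flagging for downstream use, though it does not affect the correctness of your proof of this particular lemma: your witness states $\alpha^{(\vec\theta,\vec v)} = \bra{\phi_{\vec v}}_Q V\sigma^{(\vec\theta,\vec v)}V^\dagger\ket{\phi_{\vec v}}_Q$ differ from the paper's choice $\alpha^{(\vec\theta,\vec v)} = \setft{Tr}_Q[V\sigma^{(\vec\theta,\vec v)}V^\dagger]$, which the paper obtains by one extra approximation step that replaces the projectors inside the partial trace by identity. The lemma is existential, so both choices are valid; however, the paper's choice has the additional feature that $\sum_{\vec v}\alpha^{(\vec\theta,\vec v)} = \setft{Tr}_Q[V\sigma^{(\vec\theta)}V^\dagger]$, which is \emph{efficiently preparable} by running the device, applying $V$, and tracing out $Q$ --- a property explicitly invoked in the proof of \ref{lem:bb84_same_alpha} to argue that $\alpha'=\alpha^{(\vec 0)}$ is efficiently preparable. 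Your rank-one matrix element involves a post-selection on $Q$ whose target depends on $\vec v$, hence on trapdoor information, and is therefore not obviously efficiently preparable without further argument. If you intend your proof to slot into the subsequent development unchanged, you should either add the paper's final approximation step to recover $\alpha^{(\vec\theta,\vec v)}=\setft{Tr}_Q[V\sigma^{(\vec\theta,\vec v)}V^\dagger]$, or separately establish that the state you need downstream is efficiently preparable.
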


\begin{proof}
We define the shorthand 
\begin{align*}
M(\theta) = \begin{cases}
Z & \text{if $\theta = 0$\,,} \\
X & \text{if $\theta = 1$\,.} \\
\end{cases}
\end{align*}
We can then apply \ref{lem:pauli_projectors_one} and \cite[Lemma 2.22]{self-testing} twice to get 
\begin{align*}
&\sum_{\vec v \in \bits^n} \proj{\vec v} \ot V \sigma^{(\vec \theta, \vec v)} V^\dagger \\
&\approx_{n^{1/4} \gamma_H(D)^{1/16}} \left( \sum_{\vec v} \proj{\vec v} \ot \left( \sigma_{M(\theta_1),1}^{(v_1)} \right)_Q \right) \sum_{\vec v \in \bits^n} \proj{\vec v} \ot V \sigma^{(\vec \theta, \vec v)} V^\dagger \left( \sum_{\vec v} \proj{\vec v} \ot \left( \sigma_{M(\theta_1),1}^{(v_1)} \right)_Q \right)
\end{align*}
We can repeat this for the remaining indices $i = 2, \dots, n$.
Since we incur an approximation error of $n^{1/4} \gamma_H(D)^{1/16}$ for each of the $n$ steps, the total approximation error will be $n^{5/4} \gamma_H(D)^{1/16}$, so we have
\begin{align*}
&\sum_{\vec v \in \bits^n} \proj{\vec v} \ot V \sigma^{(\vec \theta, \vec v)} V^\dagger \\
&\begin{multlined}\approx_{n^{5/4} \gamma_H(D)^{1/16}} \left( \sum_{\vec v} \proj{\vec v} \ot \left( \sigma_{M(\theta_1),1}^{(v_1)} \right)_Q  \right) \dots \left( \sum_{\vec v} \proj{\vec v} \ot \left( \sigma_{M(\theta_n),n}^{(v_n)} \right)_Q  \right)
\sum_{\vec v \in \bits^n} \proj{\vec v} \ot V \sigma^{(\vec \theta, \vec v)} V^\dagger \\
\left( \sum_{\vec v} \proj{\vec v} \ot \left( \sigma_{M(\theta_1),1}^{(v_1)} \right)_Q  \right) \dots \left( \sum_{\vec v} \proj{\vec v} \ot \left( \sigma_{M(\theta_n),n}^{(v_n)} \right)_Q  \right)
\end{multlined} \\
&= \sum_{\vec v \in \bits^n} \proj{\vec v} \ot \left( \prod_i \sigma_{M(\theta_i),i}^{(v_i)} \right)_Q V \sigma^{(\vec \theta, \vec v)} V^\dagger \left( \prod_i \sigma_{M(\theta_i),i}^{(v_i)} \right)_Q \,.
\intertext{Now noting that $\prod_i \sigma_{M(\theta_i),i}^{(v_i)} = \bigotimes_{i} H^{\theta_i} \proj{v_i} H^{\theta_i}$, we obtain}
&= \sum_{\vec v \in \bits^n} \proj{\vec v} \ot \left( \bigotimes_{i} H^{\theta_i} \proj{v_i} H^{\theta_i} \right)_Q \ot \left( \bigotimes_{i} \bra{v_i} H^{\theta_i} \right)_Q V \sigma^{(\vec \theta, \vec v)} V^\dagger \left( \bigotimes_{i} H^{\theta_i} \ket{v_i}  \right)_Q \\
&\begin{multlined}=\sum_{\vec v \in \bits^n} \proj{\vec v} \ot \left( \bigotimes_{i} H^{\theta_i} \proj{v_i} H^{\theta_i} \right)_Q \\ \qquad \ot \setft{Tr}_Q \left[ \left( \bigotimes_{i} H^{\theta_i} \proj{v_i} H^{\theta_i} \right)_Q V \sigma^{(\vec \theta, \vec v)} V^\dagger \left( \bigotimes_{i} H^{\theta_i} \proj{v_i} H^{\theta_i} \right)_Q \right] \end{multlined}
\intertext{Analogously to how we added the factors $\prod_i \sigma_{M(\theta_i),i}^{(v_i)}$ in a previous step, we can now replace the factors $\left( \bigotimes_{i} H^{\theta_i} \proj{v_i} H^{\theta_i} \right)_Q$ inside the partial trace by identity, resulting in}
&\approx_{n^{1/4} \gamma_H(D)^{1/16}} \sum_{\vec v \in \bits^n} \proj{\vec v} \ot \left( \bigotimes_{i} H^{\theta_i} \proj{v_i} H^{\theta_i} \right)_Q \ot \setft{Tr}_Q \left[ V \sigma^{(\vec \theta, \vec v)} V^\dagger \right] \,.
\end{align*}
We then obtain the desired statement by defining 
\begin{align}
\alpha^{(\vec \theta, \vec v)} \deq \setft{Tr}_Q \left[ V \sigma^{(\vec \theta, \vec v)} V^\dagger  \right] \,.
\end{align}
\end{proof}

The statement of \ref{lem:bb84_mixed_alpha} says that up to an isometry, the device's state has to be (information-theoretically) close to a product of BB84 states tensored with an auxiliary state $\alpha^{(\vec \theta, \vec v)}$.
The drawback of this is that the auxiliary state depends on $\vec \theta$ and $\vec v$.
For many applications, it is crucial that the only dependence on $\vec \theta$ and $\vec v$ in the prover's state is in the form of BB84 states, so we want that the auxiliary state be independent of $\vec \theta$ and $\vec v$.
This holds in the sense that a computationally bounded device cannot learn anything about $\vec \theta$ and $\vec v$ from $\alpha^{(\vec \theta, \vec v)}$, as formalised in the following proposition.
\begin{proposition} \label{lem:bb84_same_alpha}
For an efficient perfect device $D = (S, \Pi, M, P)$, there exists an efficiently preparable state $\alpha'$ such that for any $\vec \theta \in \bits^n$:
\begin{align}
\sum_{\vec v \in \bits^n} \proj{\vec v} \ot V \sigma^{(\vec \theta, \vec v)} V^\dagger &\capprox_{n^{5/8} \gamma_H(D)^{1/32}} \frac{1}{2^n} \sum_{\vec v \in \bits^n} \proj{\vec v} \ot \left( \bigotimes_{i} H^{\theta_i} \proj{v_i} H^{\theta_i}  \right)_Q   \ot \alpha' \,. \label{eqn:bb84_same_alpha}
\end{align}
\end{proposition}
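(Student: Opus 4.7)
The plan is to define $\alpha' := \setft{Tr}_Q[V \sigma^{(\vec 0)} V^\dagger]$, which is efficiently preparable since both the device and the isometry $V$ from \ref{def:iso} are efficient. The proof then combines the information-theoretic structure of \ref{lem:bb84_mixed_alpha} with the computational-indistinguishability tools of \ref{lem:states_indist} and the adaptive hardcore bit property of the ENTCF family to reach the claimed closeness.

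First, I would invoke \ref{lem:bb84_mixed_alpha} on the LHS of \ref{eqn:bb84_same_alpha} to obtain a trace-distance bound of $O(n^{5/4}\gamma_H(D)^{1/16})$ between it and the intermediate state $\sum_{\vec v}\proj{\vec v}\otimes \bigl(\bigotimes_i H^{\theta_i}\proj{v_i}H^{\theta_i}\bigr)_Q \otimes \alpha^{(\vec\theta,\vec v)}$. Since $\approx_{\eps}$ is defined as a bound on trace-norm-squared (see \ref{def:approx_dist}), this corresponds to trace distance $O(n^{5/8}\gamma_H(D)^{1/32})$, which in turn upper-bounds computational distinguishing advantage and therefore already contributes the claimed $\capprox_{n^{5/8}\gamma_H(D)^{1/32}}$-closeness. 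It thus remains to show that this intermediate state is negligibly computationally indistinguishable from the target RHS.

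The core claim is a product-state structure: $V \sigma^{(\vec\theta)} V^\dagger \capprox_0 (\1_Q/2^n)\otimes\alpha'$ for every $\vec\theta$. By \ref{lem:states_indist} applied to the efficient composition ``run device then apply $V$'', it suffices to show this for one convenient basis choice, for which I would pick $\vec\theta=\vec 1$, and then transfer to arbitrary $\vec\theta$ via \ref{lem:states_indist} again (which also gives $\alpha^{(\vec 1)}:=\setft{Tr}_Q V\sigma^{(\vec 1)}V^\dagger \capprox_0 \alpha'$). For $\vec\theta=\vec 1$, the verifier's string equals $\vec v=\hat u(\vec k,\vec y,\vec d)$, which by the adaptive hardcore bit property of ENTCF families is computationally indistinguishable from a uniformly random string even to an efficient observer holding the classical-quantum data in $\alpha^{(\vec 1,\vec v)}$ (i.e.~the $\vec y,\vec d$-registers together with the residual quantum state, but not the trapdoors). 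Combining this uniformising step with the information-theoretic decomposition $V\sigma^{(\vec 1)}V^\dagger \approx \sum_{\vec v}H\proj{\vec v}H_Q\otimes\alpha^{(\vec 1,\vec v)}$ from \ref{lem:bb84_mixed_alpha} then yields $V\sigma^{(\vec 1)}V^\dagger \capprox_0 \bigl(\frac{1}{2^n}\sum_{\vec v}H\proj{\vec v}H_Q\bigr)\otimes\alpha^{(\vec 1)} = (\1_Q/2^n)\otimes\alpha^{(\vec 1)}$, using $\sum_{\vec v}H\proj{\vec v}H = \1$.

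To conclude, I would apply the efficient (given $\vec\theta$) procedure $\Phi_{\vec\theta}$ that measures $Q$ in basis $H^{\vec\theta}$, records the outcome into a fresh register $W$, traces out $Q$, and re-prepares $Q$ in the matching BB84 state conditioned on $W$. Applied to $(\1_Q/2^n)\otimes\alpha'$ this procedure yields exactly the target RHS of \ref{eqn:bb84_same_alpha}; applied to $V\sigma^{(\vec\theta)}V^\dagger$ it yields the intermediate state from the first step (using \ref{lem:bb84_mixed_alpha} once more). Since $\Phi_{\vec\theta}$ is efficient it preserves $\capprox$-closeness, and the composition of all steps yields the claimed bound. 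The hardest part of this plan is the hardcore-bit uniformisation step: formalising that $\vec v$, which is information-theoretically determined by $(\vec k,\vec y,\vec d)$ via the secret trapdoors, may nevertheless be treated as computationally uniform and independent of the classical and quantum data in $\alpha^{(\vec 1,\vec v)}$.
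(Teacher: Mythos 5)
Your proposal takes a genuinely different route from the paper. Both start from \ref{lem:bb84_mixed_alpha} and define $\alpha'$ the same way (your $\text{Tr}_Q[V\sigma^{(\vec 0)}V^\dagger]$ is exactly the paper's $\alpha^{(\vec 0)}$), but the central step—showing that the $\vec v$-conditional residual states $\alpha^{(\vec\theta,\vec v)}$ are computationally indistinguishable from the $\vec v$-independent $\frac{1}{2^n}\alpha'$—is handled very differently. The paper picks $\vec\theta' = \vec\theta \oplus \vec 1$ (the bitwise complement) and uses the overlap identity $|\bra{\vec v'}\bigl(\bigotimes_i H^{\theta'_i}H^{\theta_i}\bigr)\ket{\vec v}|^2 = 2^{-n}$ to show that any distinguisher $\Lambda$ against the conditional structure can be converted into a distinguisher between $\sigma^{(\vec\theta)}$ and $\sigma^{(\vec\theta')}$, contradicting \ref{lem:states_indist} (injective invariance). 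Your plan instead argues that for $\vec\theta = \vec 1$, the string $\vec v = \hat u(\vec k,\vec y,\vec d)$ is computationally pseudorandom given the residual state, via the adaptive hardcore bit property, and then transfers to general $\vec\theta$ by indistinguishability. The paper's route is slicker in that it re-uses only \ref{lem:states_indist} and requires no new hardness statement beyond what is already established.

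The key gap—which you yourself flag—is the hardcore-bit uniformisation step. What you need is a \emph{parallel} adaptive hardcore bit: that the full $n$-bit string $\vec v$ (not a single $v_i$) is pseudorandom given $(\vec k,\vec y,\vec d)$ and the quantum side information in $\alpha^{(\vec 1,\vec v)}$. A single-instance hardcore bit plus a hybrid argument over the $n$ independently-sampled keys should give this (each hybrid replaces one $v_i$ by a fresh bit; the reduction samples the remaining keys with their trapdoors), but it also needs the observation that the extra register $A$ introduced by $V$ is admissible side information, which holds because $V$ is an efficient trapdoor-free isometry. This is a nontrivial additional lemma that you have not proved, and which the paper's complement-basis trick entirely avoids. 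Note also that you would have to be careful applying the adaptive hardcore bit in a setting where the prover's outputs $(\vec y,\vec d)$ are not adversarially chosen to make the bit hard but are already fixed by the protocol transcript; the paper's reduction in \ref{lem:anticomm_single_obs} shows how to handle this, but lifting it to the $n$-parallel setting for a pseudorandomness (rather than unpredictability) statement requires care.

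There is also a smaller issue with your error accounting: your core claim $V\sigma^{(\vec\theta)}V^\dagger \capprox_0 (\1_Q/2^n)\otimes\alpha'$ cannot hold with negligible error, because it is derived through \ref{lem:bb84_mixed_alpha}, which contributes a $n^{5/4}\gamma_H(D)^{1/16}$-term in operator distance, i.e.\ $n^{5/8}\gamma_H(D)^{1/32}$ in trace distance. Every subsequent step inherits this, so the "$\capprox_0$" should read "$\capprox_{n^{5/8}\gamma_H(D)^{1/32}}$". This does not sink the overall argument (the final bound in \ref{eqn:bb84_same_alpha} is of that order anyway), but it needs to be tracked. Similarly, applying $\Phi_{\vec\theta}$ to the actual state $V\sigma^{(\vec\theta)}V^\dagger$ matches the intermediate state only up to another invocation of \ref{lem:bb84_mixed_alpha}, which contributes the same error again—fine by the triangle inequality, but your sketch glosses over this.
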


\begin{proof}
Fix $\vec \theta$. 
For any $\vec \theta'$, we define
\begin{align}
\alpha^{(\vec \theta')} = \sum_{\vec v} \alpha^{(\vec \theta', \vec v)} \,,
\end{align}
where $\alpha^{(\vec \theta', \vec v)}$ are the states in \ref{lem:bb84_mixed_alpha}.
From \ref{lem:states_indist} we know that for any $\vec \theta', \vec \theta''$, $\sigma^{(\vec \theta')} \capprox_0 \sigma^{(\vec \theta'')}$. Since $V$ is an efficient isometry, it follows from this and \ref{lem:bb84_mixed_alpha} that
\begin{align}
\alpha^{(\vec \theta')} \capprox_{n^{5/4} \gamma_H(D)^{1/16}} \alpha^{(\vec \theta'')} \,.
\end{align}
We now set $\alpha^{(\vec \theta'')} = \alpha^{(\vec 0)} \eqqcolon \alpha'$ (any other choice also works).
From the proof of \ref{lem:bb84_mixed_alpha} we see that this state $\alpha'$ is efficiently preparable because the prover's operations are efficient, so the state $\alpha'$ (which is averaged over the possible answer's the prover can give, i.e.~no post-selection is needed) can be efficiently prepared by executing the prover with basis choice $\vec 0$, applying the (efficient) isometry $V$, and tracing out additional registers.
Taking $\vec \theta' = \vec \theta \oplus \vec 1$, we have 
\begin{align}
&\frac{1}{2^n} \sum_{\vec v \in \bits^n} \proj{\vec v} \ot \left( \bigotimes_{i} H^{\theta_i} \proj{v_i} H^{\theta_i}  \right)_Q   \ot \alpha^{(\vec \theta')} \\
&\capprox_{n^{5/4} \gamma_H(D)^{1/16}} \frac{1}{2^n} \sum_{\vec v \in \bits^n} \proj{\vec v} \ot \left( \bigotimes_{i} H^{\theta_i} \proj{v_i} H^{\theta_i}  \right)_Q   \ot \alpha' \,. \label{eqn:same_alpha1}
\end{align}
From \ref{lem:bb84_mixed_alpha}, we have
\begin{align}
\sum_{\vec v \in \bits^n} \proj{\vec v} \ot V \sigma^{(\vec \theta, \vec v)} V^\dagger &\approx_{n^{5/4} \gamma_H(D)^{1/16}} \sum_{\vec v \in \bits^n} \proj{\vec v} \ot \left( \bigotimes_{i} H^{\theta_i} \proj{v_i} H^{\theta_i}  \right)_Q \ot \alpha^{(\vec \theta, \vec v)} \,. \label{eqn:same_alpha2}
\end{align}
Comparing \ref{eqn:bb84_same_alpha}, \ref{eqn:same_alpha1}, and \ref{eqn:same_alpha2}, we see that to show the lemma it suffices to show 
\begin{multline}
\sum_{\vec v \in \bits^n} \proj{\vec v} \ot \left( \bigotimes_{i} H^{\theta_i} \proj{v_i} H^{\theta_i}  \right)_Q \ot \alpha^{(\vec \theta, \vec v)} 
\\\capprox_{n^{5/4} \gamma_H(D)^{1/16}} \frac{1}{2^n} \sum_{\vec v \in \bits^n} \proj{\vec v} \ot \left( \bigotimes_{i} H^{\theta_i} \proj{v_i} H^{\theta_i}  \right)_Q   \ot \alpha^{(\vec \theta')} \,. \label{eqn:same_alpha3}
\end{multline}

To show \ref{eqn:same_alpha3}, we first note that by tracing out the first system in the statement of \ref{lem:bb84_mixed_alpha}, we get that there exists an $\eps = O(n^{5/4} \gamma_H(D)^{1/16})$ such that for any $\vec \theta''$
\begin{align}
\norm{\sum_{\vec v \in \bits^n} V \sigma^{(\vec \theta'', \vec v)} V^\dagger - \sum_{\vec v \in \bits^n} \left( \bigotimes_{i} H^{\theta''_i} \proj{v_i} H^{\theta''_i}  \right)_Q \ot \alpha^{(\vec \theta'', \vec v)}}_1^2 \leq \eps \,. \label{eqn:state_explicit_eps}
\end{align}
Supposing that \ref{eqn:same_alpha3} does not hold, there exists an efficient measurement $\{\Lambda, \1 - \Lambda\}$ such that 
\begin{multline}
\setft{Tr}\Bigg[\Lambda \Bigg( \sum_{\vec v \in \bits^n} \proj{\vec v} \ot \left( \bigotimes_{i} H^{\theta_i} \proj{v_i} H^{\theta_i}  \right)_Q \ot \alpha^{(\vec \theta, \vec v)} \\- \frac{1}{2^n} \sum_{\vec v \in \bits^n} \proj{\vec v} \ot \left( \bigotimes_{i} H^{\theta_i} \proj{v_i} H^{\theta_i}  \right)_Q \ot \alpha^{(\vec \theta')} \Bigg) \Bigg]
\geq 2\, \eps^{1/2} + \mu(\lambda) \label{eqn:lambda_assumption_long}
\end{multline}
for some non-negligible function $\mu(\lambda)$.

Since both states in the above expression are classical on the first two systems, without loss of generality we can assume that $\Lambda$ is classical on the first two systems, too, i.e.~we can write (remembering that $\vec \theta$ is fixed)
\begin{align}
\Lambda = \sum_{\vec v} \proj{\vec v} \ot \left( \bigotimes_{i} H^{\theta_i} \proj{v_i} H^{\theta_i}  \right)_Q \ot \Lambda_{\vec v} \,.
\end{align}
With this, we can rewrite \ref{eqn:lambda_assumption_long} in a more compact form: 
\begin{align}
\sum_{\vec v} \tr{\Lambda_{\vec v} \left( \alpha^{(\vec \theta, \vec v)} - \frac{1}{2^n} \alpha^{(\vec \theta')} \right)} \geq 2\, \eps^{1/2} + \mu(\lambda) \label{eqn:lambda_assumption_compact} \,.
\end{align}
Define 
\begin{align}
\Gamma = V^\dagger \left( \sum_{\vec v} \left( \bigotimes_{i} H^{\theta_i} \proj{v_i} H^{\theta_i}  \right)_Q \ot \Lambda_{\vec v} \right) V \,.
\end{align}
Since $\{\Lambda, \1 - \Lambda\}$ is an efficient measurement and $V$ an efficient isometry, $\{\Gamma, \1 - \Gamma\}$ is an efficient measurement, too.
We claim that, assuming \ref{eqn:lambda_assumption_compact} holds, we can use the measurement $\{\Gamma, \1 - \Gamma\}$ to distinguish $\sigma^{(\vec \theta)}$ from $\sigma^{(\vec \theta')}$ with advantage $\mu(\lambda)$, contradicting \ref{lem:states_indist}. 
To show this, we perform the following calculation:
\begin{align*}
&\tr{\Gamma \left( \sigma^{(\vec \theta)} - \sigma^{(\vec \theta')} \right)} \\
&= \tr{ \left( \sum_{\vec v} \left( \bigotimes_{i} H^{\theta_i} \proj{v_i} H^{\theta_i}  \right)_Q \ot \Lambda_{\vec v} \right) \left( V \sigma^{(\vec \theta)} V^\dagger  - V \sigma^{(\vec \theta')} V^\dagger  \right)} \\
\intertext{Now using \ref{eqn:state_explicit_eps} and \cite[Lemma 2.21(ii)]{self-testing} (in a non-asymptotic form, which is easily seen to hold from the proof of \cite[Lemma 2.21(ii)]{self-testing}):}
&\begin{multlined}
\geq \text{Tr} \Bigg[ \left( \sum_{\vec v} \left( \bigotimes_{i} H^{\theta_i} \proj{v_i} H^{\theta_i}  \right)_Q \ot \Lambda_{\vec v} \right) \\
\left( \sum_{\vec v' \in \bits^n} \left( \bigotimes_{i} H^{\theta_i} \proj{v'_i} H^{\theta_i}  \right)_Q \ot \alpha^{(\vec \theta, \vec v')}  - \sum_{\vec v' \in \bits^n} \left( \bigotimes_{i} H^{\theta'_i} \proj{v'_i} H^{\theta'_i}  \right)_Q \ot \alpha^{(\vec \theta', \vec v')}  \right) \Bigg]  - 2 \eps^{1/2}
\end{multlined} \\
\intertext{Since $\theta_i \neq \theta'_i$ for all $i$, we have that for all $\vec v$ and $\vec v'$, $|\bra{\vec v'} (\ot_i H^{\theta'_i} H^{\theta_i}) \ket{\vec v}|^2 = 2^{-n}$. Then using $\sum_{\vec v'} \alpha^{(\vec \theta' ,\vec v')} = \alpha^{(\vec \theta')}$:}
&= \sum_{\vec v} \tr{ \Lambda_{\vec v} \left( \alpha^{(\vec \theta, \vec v)} -  \frac{1}{2^n} \alpha^{(\vec \theta')} \right) } - 2 \eps^{1/2} \\
& \geq \mu(\lambda) \,,
\end{align*}
where the last inequality follows from \ref{eqn:lambda_assumption_compact}.
This yields the desired contradiction and finishes the proof.
\end{proof}

We are now ready to prove the main result of this part of the paper, namely that any efficient quantum prover that does not cause \ref{fig:protocol_multi_round} to abort must have prepared a product of BB84 states (up to an isometry and an additional state $\alpha$).

\begin{theorem} \label{thm:final_rigidity}
Suppose the verifier executes \ref{fig:protocol_multi_round} with an efficient quantum prover with parameters $n$ (number of BB84 states the verifier wishes to prepare), $N = M^2$ (maximum possible number of test rounds), and $\delta$ (error tolerance, i.e.~proportion of test rounds that a prover is allowed to fail) that satisfy $N \geq C \frac{1}{n^{5/4} \delta^{2+1/32}}$ for a sufficiently large constant $C$.
We denote by $\sigma_{SWDYR}^{(\vec\theta)}$ the verifier and prover's joint final state at the end of \ref{fig:protocol_multi_round}, where $\vec \theta$ is the basis choice recorded by the verifier in \ref{fig:protocol_multi_round}, $S$ is set to $\proj{\bot}$ by the verifier if the protocol aborts and $\proj{\top}$ otherwise, $W$ is the register in which the verifier records the string $\vec v$, and $DYR$ are the prover's registers.
Then, denoting the probability of success as $\pr{\top}$ and expanding 
\begin{align*}
\sigma_{SWDYR}^{(\vec\theta)} = \pr{\top} \proj{\top}_S \ot \sigma_{WDYR|\top}^{(\vec \theta)} + (1 - \pr{\top}) \proj{\bot}_S \ot \sigma_{WDYR|\bot}^{(\vec \theta)} 
\end{align*}
there exists an efficiently preparable state $\alpha'_{DYRA}$ such that for any $\vec \theta \in \bits^n$ the following holds (with $V: \H_{DYR} \to \H_{DYRAQ}$ the efficient isometry defined in \ref{def:iso}):
\begin{align}
\pr{\top} V \sigma_{WDYR|\top}^{(\vec \theta)} V^\dagger &\capprox_{n^{5/8} \delta^{1/64}} \pr{\top} \frac{1}{2^n} \sum_{\vec v \in \bits^n} \proj{\vec v}_W \ot \left( \bigotimes_{i} H^{\theta_i} \proj{v_i} H^{\theta_i}  \right)_Q   \ot \alpha'_{DYRA} \,. \label{eqn:final_rigidity_thm}
\end{align}
\end{theorem}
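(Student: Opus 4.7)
The plan is to reduce \ref{thm:final_rigidity} to the single-round characterization of \ref{lem:bb84_same_alpha} via a statistical argument over the multi-round structure of \ref{fig:protocol_multi_round}. First I model the prover's behavior in the $j$-th execution of \ref{fig:protocol_test} (or \ref{fig:protocol_prep}) as a single-round device $D_j$ in the sense of \ref{def:devices}, whose advice state (cf.~\ref{rem:advice_state}) is the prover's memory at the end of round $j-1$. Because the prover cannot distinguish test rounds from preparation rounds before the point at which $\vec v$ is actually requested, the device modeling the preparation round shares its states $\psi^{(\vec\theta)}$, its $M$-measurement, and its $P$-observables with the corresponding test-round device; only in a preparation round the $P$-measurement is left unperformed so that its state is the post-$M$ state $\sigma^{(\vec\theta)}$ on the system $DYR$, which is precisely the object that \ref{lem:bb84_same_alpha} characterises.

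Second, I argue that conditioned on the non-abort event $\top$, the preparation-round device $D^{\text{prep}}$ satisfies $\gamma_P(D^{\text{prep}}), \gamma_H(D^{\text{prep}}) = O(\delta)$ on average. The key ingredients are: (i) conditioning on $\top$ forces every observed block of $M$ test rounds to have empirical failure rate at most $\delta$; and (ii) the random choices $S \rand \{0,\dots,M-1\}$ and $R \rand [M]$ place the preparation round at a position that, from the prover's viewpoint, is uniform within the executed round sequence and statistically indistinguishable (by injective invariance, \ref{lem:injective_invariance}) from a random test round. A martingale concentration argument (Azuma--Hoeffding applied to the adaptive sequence of test-round failure indicators, with a union bound over the $M$ blocks) then shows that, up to an event of probability $\negl(\lambda)$ relative to $\pr{\top}$, the expected failure probabilities of the randomly located $D^{\text{prep}}$ are at most $\delta + O(1/\sqrt{N})$. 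The choice $N \geq C/(n^{5/4}\delta^{1+1/32})$ ensures the second term is absorbed, so $\gamma_H(D^{\text{prep}}), \gamma_P(D^{\text{prep}}) = O(\delta)$.

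Third, I apply \ref{lem:reduction_to_perfect} to replace $D^{\text{prep}}$ by an efficient perfect device $\tilde D$ with $\psi'^{(\vec\theta)} \approx_{\gamma_P} \psi^{(\vec\theta)}$, and then invoke \ref{lem:bb84_same_alpha} on $\tilde D$ to obtain the BB84 structure with parameter $n^{5/8}\gamma_H(\tilde D)^{1/32}$. Identifying the verifier's register $W$ with the string $\vec v$ computed in step~5 of \ref{fig:protocol_prep} via $\hat b$ and $\hat u$ matches the sum $\sum_{\vec v}\proj{\vec v}\ot V\sigma^{(\vec\theta,\vec v)}V^\dagger$ of \ref{lem:bb84_same_alpha} with $\pr{\top}\, V\sigma^{(\vec\theta)}_{WDYR\mid\top}V^\dagger$, since by \ref{def:partial_postmeas_states} the partial post-measurement state $\sigma^{(\vec\theta,\vec v)}$ is exactly the branch in which the verifier records that particular $\vec v$. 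Plugging in $\gamma_H = O(\delta)$, absorbing the $O(\gamma_P)$ reduction error, and relabelling the resulting efficiently preparable state $\alpha'$ as $\alpha'_{DYRA}$ yields \ref{eqn:final_rigidity_thm} with the claimed approximation $n^{5/8}\delta^{1/64}$.

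The main obstacle I expect is the adaptive concentration argument of the second step: the per-round failure probabilities form an adaptive sequence depending on the full history and the conditioning on $\top$ correlates the rounds nontrivially, so a direct Chernoff bound does not apply. The block structure with $N=M^2$ rounds split into $M$ blocks of size $M$ is designed precisely so that a martingale argument can translate the block-wise error bound into a pointwise bound on the randomly located preparation round with a slack of only $O(1/\sqrt{N})$; making this combine cleanly with the $1/32$-exponent of \ref{lem:bb84_same_alpha} to hit the target $n^{5/8}\delta^{1/64}$ is where most of the quantitative bookkeeping lies.
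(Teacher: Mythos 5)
Your high-level reduction — model the prover as a sequence of single-round devices, argue the preparation-round device has small failure probability conditioned on non-abort, then invoke \ref{lem:bb84_same_alpha} — is the same as the paper's. But the statistical step you propose for the second stage is not the right tool and would not close the argument as written.

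The preparation round occurs at position $SM+R$, i.e.\ inside block $B_{S+1}$, which is \emph{never observed} by the verifier's checks: step 2 of \ref{fig:protocol_multi_round} only verifies blocks $B_1,\dots,B_S$. An Azuma--Hoeffding bound on the adaptive failure indicators relates the empirical failure rate of the observed rounds to the sum of their conditional failure probabilities, but it says nothing about the failure probability of a round in the next, unexamined block — the prover's devices $D_{SM+1},\dots$ are free to be arbitrary. Concentration cannot bridge this gap. What does bridge it, and what the paper actually uses (following \cite[Theorem 3.17]{rsp}), is a \emph{counting} argument: for any fixed realisation of all failure indicators there is at most one block index $j^\ast$ that is the ``first failing block'' (i.e.\ $F^B_{j^\ast}=1$ and $F^B_j=0$ for $j<j^\ast$), and since $S$ is sampled uniformly and independently of the prover's behaviour, $\pr{F^B_{S+1}=1\mid\Omega'}\leq 1/(M\,\pr{\Omega'})$. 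This exploits that the prover cannot time its misbehaviour because it does not know $S$; it is not a concentration statement, and replacing it with Azuma--Hoeffding plus a union bound leaves the bad-block-at-$S{+}1$ possibility uncontrolled.

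Quantitatively, your conclusion $\gamma_H(D^{\text{prep}})=O(\delta)$ overshoots what is provable from the block check: once $F^B_{S+1}=0$ guarantees block failure rate $\leq\delta$, passing to a bound on a \emph{uniformly random position} $R$ in that block costs a Markov step, giving only a $\geq 1-\sqrt{\delta}$ fraction of positions with failure probability $\leq\sqrt{\delta}$. The theorem's $\delta^{1/64}$ is precisely $(\sqrt{\delta})^{1/32}$ fed to the exponent from \ref{lem:bb84_same_alpha}; your stated $\gamma_H=O(\delta)$ would give $\delta^{1/32}$, which is internally inconsistent with the $\delta^{1/64}$ you claim to recover. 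Also note that the indistinguishability of test and preparation rounds is computational (via injective invariance), not statistical, though this is only a wording issue since all you need is that the prover's device for the prep round cannot depend on whether it is a test or prep round. Finally, you are right that \ref{lem:bb84_same_alpha} as stated requires a perfect device; the paper elides the invocation of \ref{lem:reduction_to_perfect}, which is indeed a small gap you correctly identify and patch.
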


\begin{remark}
Computational indistinguishability in the above theorem is with respect to a distinguisher that does not know the trapdoors for the keys used in the protocol.
In particular, this includes the quantum prover with whom the verifier executed the protocol, i.e., the quantum prover itself cannot distinguish the joint state of the verifier's output $\vec v$ and its own final quantum state from the state on the r.h.s.~of  \ref{eqn:final_rigidity_thm} (up to the isometry $V$).
Also note that on the prover's systems $QDYRA$ on the r.h.s.~of \ref{eqn:final_rigidity_thm}, the only dependence on $\vec \theta$ and $\vec v$ is in form of the BB84 states $H^{\theta_i} \proj{v_i} H^{\theta_i}$, i.e.~the state $\alpha$ is the same for all $\vec \theta$ and $\vec v$.
\end{remark}

\begin{proof}
We will reduce this theorem to \ref{lem:bb84_same_alpha} in essentially the same manner as in \cite[Theorem 3.17]{rsp}, but we spell out the details for completeness.
Fix a prover strategy for \ref{fig:protocol_multi_round}. 
As a first step, we observe that we can model any such strategy as a sequence of devices $D_1, \dots, D_N$, where $D_i$ defines the prover's strategy for the $i$-th test round.
Note that these devices need not be independent, i.e.~device $D_{i+1}$ can depend on the final state of device $D_i$ after execution of the $i$-th test round.
Since the prover is computationally efficient, so are the devices $D_i$ (and the input to device $D_{i}$ from device $D_{i-1}$ is treated as an advice state as explained in \ref{rem:advice_state}).

Having fixed such sequence of devices $D_1, \dots, D_N$, we can now view an execution of \ref{fig:protocol_multi_round} as a random experiment with sample space $\Omega = \{0, \dots, M-1\} \times [M] \times \bits^n$.
We denote by $S$ and $R$ the random variables that output the first and second component of an element of $\Omega$, respectively, and by $F_i$ the (bit-valued) random variable that outputs the $i$-th entry of the the third component.
The probability measure on this space is induced by $\pr{S=s, R=r, (F_1=f_1, \dots, F_n=f_n)} = \frac{1}{M^2} \pr{F_1=f_1, \dots, F_n=f_n}$, where $\pr{F_1=f_1, \dots, F_n=f_n}$ is given by the fixed sequence of devices $D_1, \dots, D_N$ via the condition that $F_i = 1$ if the verifier sets $\texttt{flag = fail}_{\texttt{Pre}}$ or $\texttt{flag = fail}_{\texttt{Had}}$ when executing \ref{fig:protocol_test} with $D_i$, and $F_i = 0$ otherwise.

We also define random variables that denote whether this sequence of devices causes the verifier to abort on one of the $M$ blocks of rounds in \ref{fig:protocol_multi_round}: for $j \in [M]$, we define 
\begin{align*}
F^B_j = \begin{cases}
0 & \textnormal{if } \frac{1}{M} \sum_{i \in B_j} F_i \leq \delta \,,\\
1 & \textnormal{else,}
\end{cases} 
\end{align*}
where $B_j$ is defined as in \ref{fig:protocol_multi_round}.
We now denote by $\Omega' \subset \Omega$ the event that the verifier does not abort in \ref{fig:protocol_multi_round}, i.e., $\pr{\Omega'} = \pr{\top}$.
By the definition of the abort condition in \ref{fig:protocol_multi_round}, $\omega = (s, r, (f_1, \dots, f_n)) \in \Omega'$ if and only if $F^B_j(\omega) = 0$ for all $j \leq s$.
For any $\omega \in \Omega$ there can only exist one index $j^* \in [M]$ such that $F^B_{j^*}(\omega) = 1$ and $F^B_{j}(\omega) = 0$ for $j < j^*$.
Therefore, since $S$ is chosen uniformly at random by the verifier and $F_i$ are independent of $S$ (since $S$ is unknown to the prover),
\begin{align}
\pr{F^B_{S+1} = 1 | \Omega'} 
&= \frac{\pr{F^B_{S+1} = 1 \wedge \Omega'}}{\pr{\Omega'}} \\
&= \frac{\pr{F^B_{S+1} = 1 \wedge F^B_{S} = 0 \wedge \dots \wedge F^B_{1} = 0}}{\pr{\Omega'}} \\
&\leq \frac{1}{M \, \pr{\Omega'}} \,. \label{eqn:prob_F_bound}
\end{align}
Here, $F^B_{S}$ is a random variable defined in the obvious way, i.e., 
\begin{align}
F^B_{S}((s, r, (f_1, \dots, f_n))) \deq F^B_{s}((s, r, (f_1, \dots, f_n))) = \begin{cases}
0 & \textnormal{if } \frac{1}{M} \sum_{i \in B_s} f_i \leq \delta \,,\\
1 & \textnormal{else.}
\end{cases} 
\end{align}
We can now bound the probability of observing $F_{MS+R} = 1$ conditioned on not aborting: 
\begin{align*}
\pr{F_{MS+R} = 1 | \Omega'} &\leq \pr{F_{MS+R} = 1 \wedge F_{S+1}^B = 0 | \Omega'} + \pr{F_{MS+R} = 1 \wedge F_{S+1}^B = 1 | \Omega'} \\
&\leq  \pr{F_{MS+R} = 1 | \Omega' \wedge F_{S+1}^B = 0} + \pr{F_{S+1}^B = 1 | \Omega'} \\
&\leq \delta + \frac{1}{M \, \pr{\Omega'}} \eqqcolon \eta \,,
\end{align*}
where the last inequality holds because for any fixed $s$, conditioned on $F_{s+1}^B = 0$ and on average over $r$, $F_{Ms+r} = 1$ cannot occur with probability more than $\delta$; the second term in that inequality follows from \cref{eqn:prob_F_bound}.
We observe that we may assume $\pr{\Omega'} \geq \Omega(n^{5/8} \delta^{1/64})$, as otherwise the theorem statement is trivially satisfied.
Since we further assumed $N = M^2 \geq \frac{C}{n^{5/4} \delta^{2+1/32}}$ for sufficiently large $C$, we get that $\frac{1}{M \, \pr{\Omega'}} \leq O(\delta)$.

The above calculation implies that for a fraction $1 - O(\sqrt{\delta})$ of values $(s,r)$, we have 
\begin{align}
\pr{F_{Ms+r} = 0 | \Omega' \wedge S = s \wedge R = r} \geq 1 - O(\sqrt{\delta}) \,. \label{eqn:good_sr}
\end{align}
Conditioned on the verifier choosing an $(s,r)$ for which the above inequality holds, we get from \ref{lem:bb84_same_alpha} that the state $\sigma^{(\vec \theta)}$ at the end of \ref{fig:protocol_multi_round} satisfies
\begin{align}
\sum_{\vec v \in \bits^n} \proj{\vec v} \ot V \sigma^{(\vec \theta, \vec v)} V^\dagger &\capprox_{n^{5/8} \delta^{1/64}} \frac{1}{2^n} \sum_{\vec v \in \bits^n} \proj{\vec v} \ot \left( \bigotimes_{i} H^{\theta_i} \proj{v_i} H^{\theta_i}  \right)_Q   \ot \alpha'_{DYRA} \,, \label{eqn:rigidity_prf1}
\end{align}
for an efficiently preparable state $\alpha'_{DYRA}$.
On the other hand, if the verifier chooses a pair $(s, r)$ that does not satisfy \cref{eqn:good_sr}, then the computational distinguishability between the prover's final state and the state on the r.h.s.~of \ref{eqn:rigidity_prf1} is at most 1.
Since the verifier chooses $(s,r)$ uniformly at random, conditioned on the event $\Omega'$ this happens with probability at most $O(\sqrt{\delta})$, so on average over the verifier's choice of $(s,r)$ and conditioned on $\Omega'$  we get an additional $O(\sqrt{\delta})$ contribution to the approximation error.
The result follows because $\sqrt{\delta} \leq O(n^{5/8} \delta^{1/64})$.
\end{proof}

While \ref{thm:final_rigidity} does list explicit parameters, we emphasise that we have made no attempt at optimising these, so they can likely be improved.
For the proof-of-principle applications in \ref{part:applications}, we typically want to choose a number of BB84 states $n$ that we want to prepare and use parameters in the protocol that are polynomial in $n$.
For that use case, we provide the following corollary, which is a simplified version of \ref{thm:final_rigidity} in terms of asymptotically polynomial quantities.

\begin{corollary}\label{cor:conditional-state-non-abort}
For any $n \in \N$, there exist choices $N = \poly(n)$ and $\delta = 1/\poly(n)$ such that if the verifier executes \ref{fig:protocol_multi_round} with this choice of parameters, the following holds.
For any basis choice $\vec \theta$ by the verifier, denoting the verifier and prover's joint final state as 
\begin{align*}
\sigma_{SWDYR}^{(\vec\theta)} = \pr{\top} \proj{\top}_S \ot \sigma_{WDYR|\top}^{(\vec \theta)} + (1 - \pr{\top}) \proj{\bot}_S \ot \sigma_{WDYR|\bot}^{(\vec \theta)} 
\end{align*}
as in \ref{thm:final_rigidity}, the state $\sigma_{WDYR|\top}^{(\vec \theta)} = \sum_{\vec v} \proj{\vec v}_W \ot \sigma_{DYR|\top}^{(\vec \theta, \vec v)}$ conditioned on acceptance satisfies 
\begin{align}
\pr{\top} V \sigma_{WDYR|\top}^{(\vec \theta)} V^\dagger &\capprox_{1/\poly(n)} \pr{\top} \frac{1}{2^n} \sum_{\vec v \in \bits^n} \proj{\vec v}_W \ot \left( \bigotimes_{i} H^{\theta_i} \proj{v_i} H^{\theta_i}  \right)_Q   \ot \alpha'_{B} \label{eqn:rigidity_cor1}
\end{align}
for an efficiently preparable state $\alpha_{B}$ with $B \deq DYRA$ and the isometry $V: DYR \to QB$ as in \ref{thm:final_rigidity}.
If furthermore the prover's success probability is lower-bounded by an inverse polynomial, i.e., $\pr{\top} \geq 1/\poly(n)$, then we can choose $N = \poly(n)$ and $\delta = 1/\poly(n)$ such that an analogous statement also holds for the normalised conditional state:
\begin{align}
V \sigma_{WDYR|\top}^{(\vec \theta)} V^\dagger &\capprox_{1/\poly(n)} \frac{1}{2^n} \sum_{\vec v \in \bits^n} \proj{\vec v}_W \ot \left( \bigotimes_{i} H^{\theta_i} \proj{v_i} H^{\theta_i}  \right)_Q   \ot \alpha'_{B} \,. \label{eqn:rigidity_cor2}
\end{align}
\end{corollary}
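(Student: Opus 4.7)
The plan is to derive this corollary directly from \ref{thm:final_rigidity} by choosing the free parameters $N$ and $\delta$ appropriately. For the first statement \ref{eqn:rigidity_cor1}, I would fix any target inverse polynomial error $1/n^c$ and then pick $\delta = 1/n^{64(5/8+c)}$ so that the error bound $n^{5/8}\delta^{1/64}$ in \ref{thm:final_rigidity} is at most $1/n^c$. The constraint $N \geq C/(n^{5/4}\delta^{1+1/32})$ then forces $N$ to be polynomial in $n$ of some larger degree, which is still compatible with the requirement $N = \poly(n)$. With these choices, \ref{eqn:rigidity_cor1} is exactly the conclusion of \ref{thm:final_rigidity}.

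For the normalized statement \ref{eqn:rigidity_cor2}, the issue is that computational indistinguishability of subnormalized states with error $\eps$ translates to indistinguishability of the corresponding normalized states with error at most $\eps / \pr{\top}$: any distinguisher achieving advantage $\delta'$ on the normalized states achieves advantage $\pr{\top}\cdot\delta'$ on the subnormalized states, since the distinguisher does not have access to the post-selection flag and its output probabilities scale by $\pr{\top}$. Therefore, under the assumption $\pr{\top} \geq 1/n^{a}$, I would choose the target error in the first part to be $1/n^{a+c}$, so that after dividing by $\pr{\top}$ the normalized statement holds with error at most $1/n^c$. Again this only requires $\delta = 1/\poly(n)$ and $N = \poly(n)$ with a slightly larger polynomial, so the claimed parameter scaling is preserved.

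The main observation needed is simply the relationship between subnormalized and normalized computational indistinguishability; I do not expect any substantive obstacle, since \ref{thm:final_rigidity} has already done all the heavy lifting and the corollary is essentially a parameter rephrasing tailored to the polynomial regime used in \ref{part:applications}. The only minor care required is to verify that the isometry $V$ and the state $\alpha'_B$ produced by the theorem are already in the right form (acting on the prover's space only and efficiently preparable), which they are, so no additional modification is needed in going from \ref{thm:final_rigidity} to \ref{cor:conditional-state-non-abort}.
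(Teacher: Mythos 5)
Your proposal is correct and matches the paper's own proof: choose $\delta = 1/\poly(n)$ small enough that $n^{5/8}\delta^{1/64} = 1/\poly(n)$ (with $N = \poly(n)$ picked accordingly), then obtain \ref{eqn:rigidity_cor2} by renormalising — using precisely the observation you make that the distinguishing advantage for subnormalised states scales by the normalisation factor $\pr{\top}$ — and finally absorb the $1/\pr{\top} \leq \poly(n)$ blow-up by shrinking $\delta$ further.
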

\begin{proof}
We can choose $\delta = 1/\poly(n)$ such that $n^{5/8} \delta^{1/64} = 1/\poly(n)$.
Then, we can choose $N = O(1/(n^{5/4} \delta^{1 + 1/32})) = \poly(n)$ as in \ref{thm:final_rigidity} and obtain \ref{eqn:rigidity_cor1} directly from \ref{thm:final_rigidity}.
To show \ref{eqn:rigidity_cor2}, we note that by renormalising (and noting that in the definition of computational indistinguishability, the distinguishing advantage for subnormalised states scales proportionally to the normalisation factor), we immediately get 
\begin{align}
V \sigma_{WDYR|\top}^{(\vec \theta)} V^\dagger &\capprox_{1/(\pr{\top} \poly(n))} \frac{1}{2^n} \sum_{\vec v \in \bits^n} \proj{\vec v}_V \ot \left( \bigotimes_{i} H^{\theta_i} \proj{v_i} H^{\theta_i}  \right)_Q   \ot \alpha'_{B} \,. \label{eqn:rigcor1}
\end{align}
To finish the proof, we observe that for any $\pr{\top} \geq 1/\poly(n)$ we can choose a sufficiently small $\delta = 1/\poly(n)$ such that in the approximation error in \ref{eqn:rigcor1}, $\pr{\top} \poly(n) \geq \poly(n)$.
\end{proof}

\newpage
\part{Applications} \label{part:applications}

\section{Unclonable quantum encryption} 

As a first application of our parallel RSP protocol, we consider the notion of \emph{unclonable quantum encryption}. This cryptographic functionality was coined by Gottesman~\cite{gottesman2004uncloneable} and subsequently formalised by Broadbent and Lord~\cite{BroadbentL19}. Intuitively, the idea is to create a quantum ciphertext, corresponding to the encryption of a classical message, that cannot be duplicated. In other words, given the quantum ciphertext, no adversary can create two ciphertexts that decrypt to the same message as the original ciphertext. Let us formalise this intuition by introducing a few relevant definitions.

A private-key quantum encryption of classical messages (QECM) scheme is a procedure that takes as input a key and a plaintext in the form of a classical bit string in a quantum register, and produces a
ciphertext in the form of a quantum state. We formalise this notion in \ref{def:QECM}.

\begin{definition}[$\QECM$ scheme]\label{def:QECM}\ \\
Let $\lambda \in \N$ be the security parameter. A quantum encryption of classical messages (QECM) scheme with key space $\algo K$ and plaintext space $\algo M$ is a triplet $\QECM = (\KeyGen,\Enc,\Dec)$ consisting of $\QPT$ algorithms:
\begin{itemize}
    \item $\QECM.\KeyGen(1^\lambda) \rightarrow k$: takes as input the security parameter and outputs a key $k \in \algo K$.
    \item $\QECM.\Enc(k,m) \rightarrow \rho$: takes as input a key $k$ and a message $m \in \algo M$ and outputs $\rho \in \algo D(\algo H_A)$.
    \item $\QECM.\Dec(k,\rho) \rightarrow \sigma$: takes as input a key $k$ and a quantum ciphertext $\rho \in \algo D(\algo H_A)$, and outputs a plaintext in the form of a state $\sigma \in \algo D(\algo H_M)$, where $\algo H_M = \setft{span}\{ \ket{m} : m \in \algo M\}$.
\end{itemize}
We use the notation $\QECM.\Enc_k$ for the map $m \mapsto \QECM.\Enc(k, m)$, and likewise for $\QECM.\Dec_k$.
The map $\QECM.\Enc_k$ can be extended to quantum inputs $\sigma \in \mD(\H_M)$ with $\algo H_M = \setft{span}\{ \ket{m} : m \in \algo M\}$ via
$$
\Enc_k(\sigma) = \sum_{m \in \algo M} \tr{\proj{m}\sigma} \Enc_k(\proj{m}) \,.
$$
A QECM scheme is called \emph{correct} if, for all $m \in \algo M$ and $k \in \supp\, \QECM.\KeyGen(1^\lambda)$,
$$\tr{\proj{m} \QECM.\Dec_k \circ \QECM.\Enc_k(m)} \geq 1 - \negl(\lambda).$$
\end{definition}
Note that we typically assume that the key space is given by $\algo K = \bit^\lambda$ and that the plaintext space $\algo M$ and ciphertext space consist of inputs of at most $\poly(\lambda)$ many bits and qubits, respectively.

We use the following notion of indistinguishable encryptions for QECM schemes which is inspired by the definition of Alagic et al.~\cite{Alagic_2016}.
\begin{definition}[Indistinguishable security]\label{def:ind}
A QECM scheme $\Sigma=(\KeyGen,\Enc,\Dec)$ has \emph{indistinguishable encryptions} (or is $\IND$-secure) if, for every $\QPT$ adversary $\algo A = (\algo M_{\mA},\algo D)$ consisting of an (adversarial) message sampling procedure $\mM_\mA$ and a distinguisher $\mD$,
$$
\Big| \Pr\left[\algo D \big((\Enc_k \otimes \id_E)\rho_{ME}\big)=1 \right] - \Pr\left[\algo D \big((\Enc_k \otimes \id_E)(\proj{0}_M \otimes \rho_E)\big)=1 \right] \Big| \leq \negl(\lambda)\,,
$$
where we assume that $k \leftarrow \KeyGen(1^\lambda)$, $\rho_{ME} \leftarrow \algo M_{\mA}(1^\lambda)$ with $\rho_E = \Tr_M[\rho_{ME}]$, and where $\proj{0}_M$ is the all-0 string in the plaintext register $M$. 
\end{definition}

Informally, we say that a quantum encryption scheme is \emph{unclonable} if no $\QPT$ adversary can produce two ``copies'' of a quantum ciphertext which can each be decrypted with access to the private key. Before we make the notion of unclonable ciphertexts more precise, let us first introduce the following definition of a cloning attack due to Broadbent and Lord \cite{BroadbentL19}.

\begin{definition}[Cloning attack]\label{def:cloning-attack}

Let $\lambda \in \N$ be a parameter. A cloning attack $(\mathcal{A},\mathcal{B},\mathcal{C})$ against a QECM scheme $\Sigma = (\KeyGen,\Enc,\Dec)$ consists of the following $\QPT$ algorithms (which are parameterised by $\lambda$)
\begin{itemize}
    \item (cloning map) $\mathcal{A}: \mathcal{D}(\mathcal{H}_{A}) \rightarrow \mathcal{D}(\mathcal{H}_{B} \otimes \mathcal{H}_{C})$
    \item (1st decoder) $\mathcal{B}: \algo K \times \mathcal{D}( \mathcal{H}_{B}) \rightarrow \mathcal{D}(\mathcal{H}_{M})$
    \item (2nd decoder) $\mathcal{C}: \algo K \times \mathcal{D}( \mathcal{H}_{C}) \rightarrow \mathcal{D}(\mathcal{H}_{M})$
\end{itemize}
where $\algo K$ and $\mathcal{H}_{A}$ are defined by the scheme $\Sigma$, i.e.~$\mK$ is the set of keys and $\mathcal{H}_{A}$ is the ciphertext system.
\end{definition}

We remark that we only consider \emph{efficient} cloning attacks throughout this work. This is in contrast with the definition of Broadbent and Lord \cite{BroadbentL19} who consider $\CPTP$ maps more generally.

\begin{definition}[Cloning experiment]\label{def:cloning-experiment}
Let $\Sigma = (\KeyGen,\Enc,\Dec)$ be a QECM scheme and let $\lambda \in \N$ be the security parameter. We define the following security game, called the $\emph{cloning experiment}$, which takes place between a challenger and an efficient adversary who executes a cloning attack $(\algo A, \algo B,\algo C)$:
 \begin{enumerate}[label=\arabic*.]
     \item The challenger samples $k \leftarrow \KeyGen(1^\lambda)$ and $m \rand \algo M$, and sends $\rho_A \leftarrow \Enc_k(m)$ to $\algo A$.
     \item $\algo A$ maps $\rho_A$ into a bipartite state $\rho_{BC}$ on systems $BC$, and sends $\rho_{BC}$ to the challenger together with a pair of circuit descriptions of ensembles of efficient quantum algorithms $\{\algo B_{\kappa}\}_{\kappa \in \algo K}$ and $\{\algo C_{\kappa}\}_{\kappa \in \algo K}$.
     \item The challenger runs $\algo B_k$ on system $B$ and $\algo C_k$ on system $C$ of $\rho_{BC}$, measures the output states in the computational basis to obtain outcomes $m_B$ and $m_C$, and outputs $1$ if $m=m_B=m_C$, and $0$ otherwise.
 \end{enumerate}
We let the random variable $\mathsf{CloneExp}_{\Sigma}\big(1^\lambda, (\algo A,\algo B,\algo C)\big)$ denote the output bit of the challenger.
\end{definition}

Building on the cloning experiment in \ref{def:cloning-experiment}, we then define unclonable security as follows.

\begin{definition}[Unclonable Security]\label{def:unclonable_security} 
Let $\lambda \in \N$ be the security parameter. We say that a QECM scheme $\Sigma = (\KeyGen,\Enc,\Dec)$ with message space $\bit^{n(\lambda)}$ is $t(\lambda)$-unclonable secure if, for all cloning attacks $ (\mathcal{A},\mathcal{B},\mathcal{C})$, it holds that
$$
\Pr \left[\mathsf{CloneExp}_{\Sigma}\big(1^\lambda, (\algo A,\algo B,\algo C)\big) =1 \right] \leq 2^{- n(\lambda) + t(\lambda)} + \negl(\lambda).
$$
\end{definition}

\subsection{Conjugate coding and hybrid encryption}

In this section, we introduce a quantum encryption scheme based on the notion of \emph{conjugate coding} which was first proposed by Wiesner~\cite{wiesner} in the context of quantum money. 
The main idea behind the scheme is that it is possible to \emph{encrypt} a plaintext $m \in \bit^\lambda$ in the form of a quantum ciphertext by first applying a classical one-time pad, and then making $\lambda$ uniformly random choices of basis (either computational or Hadamard). Broadbent and Lord~\cite{BroadbentL19} considered this scheme in the context of unclonable encryption. 

\begin{construction}[Conjugate coding scheme]\label{cons:conjugate-coding} Let $\lambda \in \N$ be the security parameter. The conjugate coding encryption scheme $\Sigma = (\KeyGen,\Enc,\Dec)$ consists of the following $\QPT$ algorithms:
\begin{description}
 \item $\Sigma.\KeyGen(1^\lambda)$: takes as input the parameter $1^\lambda$ and outputs a key $k=(\vec r,\vec \theta) \rand \bit^\lambda \times \bit^\lambda$.
    \item $\Sigma.\Enc(k,m)$: takes as input a key $k=(\vec r,\vec \theta)$ and a message $m \in \bit^\lambda$ and outputs the cipherterxt
    $$
    \rho = \bigotimes_{i=1}^\lambda H^{\theta_i} \proj{r_i \oplus m_i} H^{\theta_i}.
    $$
    \item $\Sigma.\Dec(k,\rho)$: takes as input a key $k=(\vec r,\vec \theta)$ and decrypts a quantum ciphertext $\rho$ as follows:
    \begin{enumerate}
    \item apply $H^{\theta_1} \otimes \dots \otimes H^{\theta_\lambda}$ to the state $\rho$ and measure in the standard basis with outcome $\vec x$.
    \item output the plaintext $\vec m' = \vec x \oplus \vec r$.
\end{enumerate}
\end{description}

\end{construction}

Notice that the conjugate coding scheme in \ref{cons:conjugate-coding} is trivially correct and also $\IND$-secure according to \ref{def:ind}. Moreover, the scheme
has the following desirable property which was shown by Broadbent and Lord~\cite{BroadbentL19}. Namely, the quantum ciphertext is \emph{unclonable} in the following sense:

\begin{theorem}[\cite{BroadbentL19}, Theorem 15]\label{thm_broadbent-lord}
Let $\lambda \in \N$ be the security parameter. Then, the conjugate coding encryption scheme $\Sigma = (\KeyGen,\Enc,\Dec)$ defined in \ref{cons:conjugate-coding} is $t(\lambda)$-unclonable secure, where $t(\lambda) = \lambda \log(1 + 1/\sqrt{2})$.
In other words, for all cloning attacks $(\algo A,\algo B,\algo C)$,
$$
\underset{\vec m}{\mathbb{E}} \, \underset{\vec r}{\mathbb{E}}\underset{\vec \theta}{\mathbb{E}} \,\, \Tr\left[(\proj{\vec m} \otimes \proj{\vec m})(\mathcal{B}_{(\vec r,\vec \theta)} \otimes \mathcal{C}_{(\vec r ,\vec \theta)}) \circ \mathcal{A} \circ \Enc_{(\vec r,\vec \theta)}(\vec m) \right] \leq \left(\frac{1}{2} + \frac{1}{2\sqrt{2}}\right)^\lambda.
$$
\end{theorem}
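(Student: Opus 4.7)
The plan is to reduce the cloning experiment to the parallel BB84 monogamy-of-entanglement game of Tomamichel, Fehr, Kaniewski, and Wehner~\cite{Tomamichel2013} (hereafter TFKW) and invoke their quantitative bound. First, I perform the change of variables $\vec x = \vec r \oplus \vec m$. Since $\vec m$ and $\vec r$ are each uniform on $\bit^\lambda$ and independent, so is $\vec x$, and the ciphertext depends on $\vec m,\vec r$ only through $\vec x$:
\begin{align*}
\Enc_{(\vec r,\vec \theta)}(\vec m) = \bigotimes_{i=1}^\lambda H^{\theta_i} \proj{x_i} H^{\theta_i}.
\end{align*}
Once the key $(\vec r,\vec\theta)$ is revealed to the decoders, outputting $\vec m_B,\vec m_C$ is equivalent to outputting $\vec x_B = \vec r \oplus \vec m_B$ and $\vec x_C = \vec r \oplus \vec m_C$; and since $\vec r$ is uniform and independent of the ciphertext, it can be absorbed into the decoders without loss of generality. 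The cloning experiment therefore reduces to the following game: $\mathcal{A}$ receives the tensor-product BB84 state indexed by uniform $(\vec x,\vec\theta)$, outputs a bipartite state on systems $BC$; then $\vec\theta$ is announced, and each decoder must separately produce $\vec x$. This is exactly the $\lambda$-qubit parallel BB84 monogamy-of-entanglement game.

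Next, I invoke TFKW, which states that no strategy for this game wins with probability exceeding $\cos^{2\lambda}(\pi/8) = \bigl(\tfrac12 + \tfrac{1}{2\sqrt 2}\bigr)^\lambda$. For completeness, I would sketch the key steps of their proof. Fix an adversarial strategy, and let $\{P^B_{\vec\theta,\vec x}\}_{\vec x}$ and $\{P^C_{\vec\theta,\vec x}\}_{\vec x}$ denote the POVMs that the two decoders apply after $\vec\theta$ is announced. Writing $E^{\vec\theta,\vec x} = \bigotimes_i H^{\theta_i}\proj{x_i}H^{\theta_i}$, the winning probability equals
\begin{align*}
\tfrac{1}{2^{2\lambda}} \sum_{\vec\theta,\vec x}\tr{(P^B_{\vec\theta,\vec x}\otimes P^C_{\vec\theta,\vec x})\,\mathcal{A}(E^{\vec\theta,\vec x})},
\end{align*}
which can be upper bounded by the operator norm of an $A \otimes B \otimes C$ block operator built from the triples $\{(E^{\vec\theta,\vec x}, P^B_{\vec\theta,\vec x}, P^C_{\vec\theta,\vec x})\}$, averaged over $\vec\theta$. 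TFKW's core inequality bounds $\|\sum_i A_i\|_\infty$ for positive operators $A_i$ by a sum of pairwise square-root overlaps $\|\sqrt{A_i}\sqrt{A_j}\|_\infty$. The tensor-product structure of $E^{\vec\theta,\vec x}$ across the $\lambda$ qubits then lets these overlaps be controlled by the single-qubit quantity $\|H^{0}\proj{0}H^{0}\cdot H^{1}\proj{0}H^{1}\|_\infty = 1/\sqrt{2}$, yielding the per-qubit factor $\tfrac12(1+1/\sqrt 2)$ and ultimately the $\lambda$-fold product bound.

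The main obstacle is the tensor-product factorisation of the global bound. The decoders' POVMs $\{P^B_{\vec\theta,\vec x}\}$ and $\{P^C_{\vec\theta,\vec x}\}$ are arbitrary global measurements on the (possibly entangled) $\lambda$-qubit outputs of the cloning map and need not decompose across qubits; only the ciphertext $E^{\vec\theta,\vec x}$ has explicit product structure. The TFKW argument leverages precisely this asymmetry, using the product form of the ciphertext operators to reduce the operator-norm bound to a product of single-qubit overlap estimates without accumulating cross-term penalties. Once this information-theoretic bound is in hand, the stated theorem follows immediately by feeding the TFKW value into the reduction established in the first paragraph.
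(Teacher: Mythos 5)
The paper does not reprove \ref{thm_broadbent-lord}; it imports it verbatim from Broadbent and Lord~\cite{BroadbentL19} (their Theorem 15), and your reduction to the parallel BB84 monogamy-of-entanglement game of~\cite{Tomamichel2013} via the change of variables $\vec x = \vec r \oplus \vec m$ and the observation that $\vec r$ is uniform, independent of the ciphertext, and can be absorbed into the decoders, is exactly the argument of the cited source, with the correct TFKW bound $\cos^{2\lambda}(\pi/8) = (\tfrac12 + \tfrac{1}{2\sqrt2})^\lambda$. Your proposal is correct. The only step that would need to be made fully explicit in a complete write-up is why the cloning experiment is formally a special case of the TFKW game (rather than merely analogous): one prepares a maximally entangled state $\Phi_{AA'}$, applies the cloning map $\mathcal{A}$ on $A'$, and uses that the BB84 projectors are real in the computational basis so that $(E^{\vec\theta,\vec x})^T = E^{\vec\theta,\vec x}$; Alice's measurement of $A$ then steers $BC$ to $\mathcal{A}(E^{\vec\theta,\vec x})$ with $(\vec x,\vec\theta)$ uniform, and since TFKW bound arbitrary strategies with no constraint on $\rho_A$, the bound applies. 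Your sketch of the operator-norm argument is at an appropriate level and correctly identifies that the per-qubit factor comes from the product structure of the ciphertext, not of the decoders' measurements.
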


\paragraph{Hybrid encryption.}

We conclude this section with the following simple \emph{hybrid encryption} scheme which we also make use of in our section on QCP schemes with a classical client. Our hybrid encryption scheme combines a standard QECM scheme with the classical \emph{one-time pad} $\mathsf{OTP} = (\KeyGen,\Enc,\Dec)$, where an encryption of a plaintext $\vec m \in \bit^\lambda$ via a key $\vec k \in \bit^\lambda$ is generated as $\mathsf{OTP}.\Enc_k(\vec m) = \vec k \oplus \vec m$.

\begin{construction}[Hybrid encryption scheme]\label{cons:hybrid-encryption} Let $\lambda \in \N$ be the security parameter. Let $\QECM= (\KeyGen,\Enc,\Dec)$ be a QECM scheme and let $\mathsf{OTP} = (\KeyGen,\Enc,\Dec)$ be the classical one-time pad. We define the hybrid encryption scheme $\mathsf{HE}= (\KeyGen,\Enc,\Dec)$ as follows:
\begin{description}
\item $\mathsf{HE}.\KeyGen(1^\lambda)$: this is the same procedure as $\QECM.\KeyGen(1^\lambda)$.
\item $\mathsf{HE}.\Enc(k,m)$: given as input a key $k$ and a plaintext $m \in \bit^\lambda$, encrypt as follows:
\begin{enumerate}
    \item sample a random string $r \rand \bit^\lambda$.
    \item output the ciphertext $(\QECM.\Enc_k(r),\mathsf{OTP}.\Enc_r(m))$.
\end{enumerate}
\item $\mathsf{HE}.\Dec(k,c)$: given as input a key $k$ and ciphertext $c = (c_0,c_1)$, decrypt as follows:
\begin{enumerate}
    \item compute
$r' = \QECM.\Dec_k(c_0)$
\item output the plaintext $\mathsf{OTP}.\Dec_{r'}(c_1)$. 
\end{enumerate}
\end{description}
\end{construction}

\begin{proposition}\label{prop:unclonable-HE}
Let $\lambda \in \N$ and let $\Sigma$ be a correct and $t(\lambda)$-unclonable secure $\QECM$ schme. Then, the hybrid encryption scheme $\mathsf{HE}$ in \ref{cons:hybrid-encryption} instantiated with $\Sigma$ is correct and $t(\lambda)$-unclonable secure.
\end{proposition}

\begin{proof} The correctness of $\mathsf{HE} = (\KeyGen,\Enc,\Dec)$ follows from the correctness of the scheme $\Sigma$.
Let us now show that $\mathsf{HE}$ is $t(\lambda)$-unclonable secure against any efficient cloning attack $(\algo A,\algo B,\algo C)$. To this end, we perform a security reduction and relate the cloning advantage against $\mathsf{HE}$ to the cloning advantage against the conjugate coding scheme in \ref{cons:conjugate-coding}, and then apply \ref{thm_broadbent-lord}. 

We define a cloning attack $(\mathcal{A}',\mathcal{B}',\mathcal{C}')$ for \ref{cons:conjugate-coding} which depends on $(\mathcal{A},\mathcal{B},\mathcal{C})$ as follows:
\begin{itemize}
    \item $\algo A'$ takes as input a $\lambda$-qubit state $\rho_Q$, samples a random string $\vec u \rand \bit^\lambda$ and then outputs a bipartite state in systems $B'C'$ which is generated by running $\algo A_{QQ' \rightarrow BC}$ as follows:
    $$
    \algo A'_{Q \rightarrow B'C'}(\rho_Q) = \algo A(\rho_{Q} \otimes \proj{\vec u}_{Q'}) \otimes \proj{\vec u}_{\bar B} \otimes \proj{\vec u}_{\bar C},
    $$
    where we let $B' = B \bar B$ and $C' = C \bar C$.
    \item $\algo B'$ takes as input a pair of keys $(\vec r,\vec \theta) \in \bit^\lambda \times \bit^\lambda$ and a $2\lambda$-qubit state $\rho_{B \bar B}$, measures system $\bar B$ with outcome $\vec u \in \bit^\lambda$, and then runs $\algo B_{\vec u \oplus \vec r,\vec \theta}$ on input $\rho_B$.
    \item $\algo C'$ takes as input a pair of keys $(\vec r,\vec \theta) \in \bit^\lambda \times \bit^\lambda$ and a $2\lambda$-qubit state $\rho_{C \bar C}$, measures system $\bar C$ with outcome $\vec u \in \bit^\lambda$, and then runs $\algo C_{\vec u \oplus \vec r,\vec \theta}$ on input $\rho_C$.    
\end{itemize}
Using \ref{eq:unclonable-enc-real-vs-ideal} and the attack $(\mathcal{A}',\mathcal{B}',\mathcal{C}')$, we can bound the cloning advantage as follows:
\begin{align*}
&\Pr \left[\mathsf{CloneExp}_{\mathsf{HE}}\big(1^\lambda, (\algo A,\algo B,\algo C)\big) =1 \right]\\
&= \underset{\vec m}{\mathbb{E}}  \underset{\vec v}{\mathbb{E}}\underset{\vec s}{\mathbb{E}} \underset{\vec \theta}{\mathbb{E}} \,\, \Tr\Bigg[(\proj{\vec m} \otimes \proj{\vec m})(\mathcal{B}_{(\vec s,\vec \theta)} \otimes \mathcal{C}_{(\vec s,\vec \theta)}) \,\circ \nonumber \\
& \quad\quad\quad \mathcal{A} \Big( \big( \bigotimes_{i} H^{\theta_i} \proj{v_i} H^{\theta_i}  \big)_Q 
  \ot \proj{\vec v \oplus \vec s \oplus \vec m}_{Q'} \Big)\Bigg]\\
 &= \underset{\vec m}{\mathbb{E}}  \underset{\vec r}{\mathbb{E}}\underset{\vec \theta}{\mathbb{E}} \,\, \Tr\left[(\proj{\vec m} \otimes \proj{\vec m})(\mathcal{B}'_{(\vec r,\vec \theta)} \otimes \mathcal{C}'_{(\vec r ,\vec \theta)}) \circ \mathcal{A}' \left(  \left( \bigotimes_{i} H^{\theta_i} \proj{r_i \oplus m_i} H^{\theta_i}  \right)_Q  \right) \right]\\
\intertext{We can now use the fact that \ref{thm_broadbent-lord} must hold for the cloning attack $(\algo A',\algo B',\algo C')$, and get:}
&\leq \left(\frac{1}{2} + \frac{1}{2\sqrt{2}}\right)^{\lambda}
= \quad 2^{-\lambda + t(\lambda)}.
\end{align*}
This proves the claim.
\end{proof}

\subsection{Unclonable quantum encryption with a classical client}

Let us now extend the notion of quantum encryption schemes of classical messages from \ref{def:QECM} to allow for interaction between a classical client and a quantum receiver. In this setting, the encryption algorithm is replaced by an interactive protocol between a client and a receiver which enables the classical delegation of a quantum ciphertext. Our definition is the following:

\begin{definition}[QECM scheme with a classical client]\label{def:QECM_cc}\ \\
Let $\lambda \in \N$ be a security parameter, let $\algo K$ be the key space and let $\algo M$ be the plaintext space. A QECM scheme with a classical client (i.e., a QECM$_{\text{CC}}$ scheme) is a tuple $\QECM_{\mathsf{CC}} = (\KeyGen,\Enc,\Dec)$, where
\begin{itemize}
    \item $\QECM_{\mathsf{CC}}.\KeyGen(1^\lambda) \rightarrow k$: takes as input the security parameter $1^\lambda$ and outputs a key $k \in \algo K$.
    \item $\QECM_{\mathsf{CC}}.\Enc(\algo C(1^\lambda,k,m),\algo R(1^\lambda)) \rightarrow \rho \, \textbf{or} \, \bot$: this is an interactive protocol between a classical client $\algo C$ which takes as input the security parameter $1^\lambda$, a key $k \in \algo K$ and message $m \in \algo M$, and a quantum receiver $\algo R$ which takes as input the parameter $1^\lambda$. The protocol takes place as follows:
    \begin{itemize}
        \item $\algo C$ and $\algo R$ exchange classical messages only. 
        Once the protocol is complete, $\algo C$ obtains a flag which is either $\top \,\text{(accept)}$ or $\bot \,\text{(reject)}$.
        Provided that the protocol is successful with the flag $\top$,  $\algo R$ is in possession of a state $\rho \in \algo D(\algo H_A)$ (which depends on $k$ and $m$). Otherwise, $\algo C$ outputs $\bot$.
    \end{itemize}
    \item $\QECM_{\mathsf{CC}}.\Dec(k,\rho)$: takes as input a secret key $k$, a state $\rho \in \algo D(\algo H_A)$ and outputs a plaintext in the form of a quantum state $\sigma \in \algo D(\algo H_M)$, where $\algo H_M = \setft{span}\{ \ket{m} : m \in \algo M\}$.
\end{itemize}
We say that $\QECM_{\mathsf{CC}} = (\KeyGen,\Enc,\Dec)$ is correct if, for all plaintexts $m \in \algo M$ and for all $k \in \algo K$,
$$
\tr{\proj{m} \QECM_{\mathsf{CC}}.\Dec_{k} (\rho)} \geq 1- \negl(\lambda),
$$
where $\rho \leftarrow \QECM_{\mathsf{CC}}.\Enc(\algo C(1^\lambda,k,m),\algo R(1^\lambda))$ is the final result of the interactive protocol between the client and the receiver, provided that the protocol is successful.
\end{definition}

Similar to \ref{def:cloning-attack}, we now define an analogous security experiment in the classical client setting. 

\begin{definition}[Cloning experiment with a classical client]\label{def:cloning-experiment_cc}
    Let $\Sigma = (\KeyGen,\Enc,\Dec)$ be a QECM$_{\text{CC}}$ scheme and let $\lambda \in \N$ be the security parameter. We define the security game (called the \emph{cloning experiment with a classical client}) which takes place between a challenger and an adversary $(\algo P,\algo A, \algo B,\algo C)$ consisting of an efficient interactive prover $\algo P$ and an efficient cloning attack $(\algo A, \algo B,\algo C)$ as follows:
 \begin{enumerate}[label=\arabic*.]
     \item The challenger samples a key $k \leftarrow \QECM_{\mathsf{CC}}.\KeyGen(1^\lambda)$ and message $m \rand \algo M$. To classically delegate a ciphertext according to the scheme $\Sigma$, the challenger takes the role of the classical client $\algo C$, whereas the prover $\algo P$ takes the role of the (possibly malicious) recipient in the interactive protocol specified by $\QECM_{\mathsf{CC}}.\Enc$. 
     Provided that the protocol succeeds with $\mathtt{flag}=\top$, $\algo P$ is in possession of a state $\rho \in \algo D(\algo H_A)$ (which depends on $m$ and $k$, as well as any additional information collected during the protocol). Otherwise, if $\mathtt{flag}=\bot$, the adversary loses and the challenger outputs 0.
     
     \item $\algo A$ maps $\rho_A$ into a bipartite state $\rho_{BC}$ on systems $BC$, and sends $\rho_{BC}$ to the challenger together with a pair of classical circuit descriptions of efficient quantum algorithms $\{\algo B_{\kappa}\}_{\kappa \in \algo K}$ and $\{\algo C_{\kappa}\}_{\kappa \in \algo K}$.
     
     \item The challenger applies the channel $(\algo B_{k} \otimes \algo C_{k})$ to the state $\rho_{BC}$, measures in the standard basis with outcomes $m_B$ and $m_C$, and outputs $1$ if $m=m_B=m_C$ and $\mathtt{flag} = \top$, and $0$ otherwise.
 \end{enumerate}
We let the random variable $\mathsf{CloneExp}_{\Sigma,\CC}\big(1^\lambda, (\algo P,\algo A,\algo B,\algo C)\big)$ denote the output bit of the challenger.
\end{definition}

\begin{definition}[Unclonable Security with a Classical Client]\label{def:unclonable_security_cc}\ \\
Let $\lambda \in \N$ be the security parameter. A QECM$_\text{CC}$ scheme $\Sigma = (\KeyGen,\Enc,\Dec)$ with message space $\{0,1\}^{n(\lambda)}$ is said to be $t(\lambda)$-unclonable secure if, for any (interactive) $\QPT$ adversary $(\algo P,\mathcal{A},\mathcal{B},\mathcal{C})$,
$$
\Pr \left[\mathsf{CloneExp}_{\Sigma,\CC}\big(1^\lambda, (\algo P,\algo A,\algo B,\algo C)\big) =1 \right] \leq 2^{- n(\lambda) + t(\lambda)} + \negl(\lambda).
$$
\end{definition}

\subsection{Conjugate coding encryption with a classical client}

In this section, we give a quantum encryption scheme with a classical client based on the notion of \emph{conjugate coding}, formally defined in \ref{cons:unclonable-encryption-classical-client}. \footnote{This scheme is similar to the transformation from single-decryptor schemes to unclonable encryption in~\cite{georgiou2020unclonable}. 
Indeed, our RSP protocol can also be used in a natural way to dequantise the single-decryptor scheme from~\cite{georgiou2020unclonable}, but we focus on unclonable encryption here. We thank an anonymous reviewer for pointing this out to us.}
Let us first show the correctness of our scheme.

\begin{table}[ht!]
\begin{longfbox}[breakable=false, padding=1em, padding-right=1.8em, padding-top=1.2em, margin-top=1em, margin-bottom=1em, background-color=gray!20]
\begin{protocol} {\bf Conjugate Coding Encryption with a Classical Client} \label{cons:unclonable-encryption-classical-client} \end{protocol}
Let $\lambda \in \N$ and $n=n(\lambda) \in \N$. Consider the scheme $\Sigma= (\KeyGen,\Enc,\Dec)$ defined by:
\begin{description}
\item $\Sigma.\KeyGen(1^\lambda)$: takes as input $1^\lambda$ and samples a key $k = (\vec s,\vec \theta) \rand \bit^n \times \bit^n$.
\item $\Sigma.\Enc(\algo C(1^\lambda,k, m),\algo R(1^\lambda)) \rightarrow (\sigma,\vec c) \, \textbf{or} \, \bot$: this is the following interactive protocol between a classical client $\algo C$ (which takes as input the security parameter $1^\lambda$, a key $k = (\vec s,\vec \theta)$ and a plaintext $\vec m \in \bit^n$) and a quantum receiver $\algo R$ (which takes as input the parameter $1^\lambda$):
\begin{enumerate}
    \item $\algo C$ and $\algo R$ run the parallel RSP protocol (\ref{fig:protocol_multi_round}) with parameters $\lambda$ and $n$ with choice of basis $\vec \theta \in \bit^n$. If the protocol is successful, $\algo C$ obtains $\vec v \rand \bit^n$ and $\algo R$ obtains an $n$-qubit state $\sigma$ (which depends on $\vec v,\vec{\theta}$). Otherwise, if the protocol fails, $\algo C$ outputs $\bot$.
    \item $\algo C$ sends the classical string $\vec c = \vec v \oplus \vec s \oplus \vec m$ to the receiver $\algo R$.
\end{enumerate}
\item $\Sigma.\Dec(k,\rho) \rightarrow \vec m'$: takes as input a key $k=(\vec s,\vec \theta)$ and a state $\rho = (\sigma,\vec c)$ and decrypts as follows:
\begin{enumerate}
    \item apply $H^{\theta_1} \otimes \dots \otimes H^{\theta_n}$ to the state $\sigma$ and measure in the standard basis with outcome $\vec x$.
    \item output the plaintext $\vec m' = \vec x \oplus \vec c \oplus \vec s$.
\end{enumerate}
\end{description}
\end{longfbox}
\end{table}

\begin{proposition}[Correctness]
\label{thm:correctness-CC-CC}
$\Sigma = (\KeyGen,\Enc,\Dec)$ in \ref{cons:unclonable-encryption-classical-client} is a correct QECM$_{\text{CC}}$ scheme.
\end{proposition}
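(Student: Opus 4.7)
The plan is to derive correctness directly from the completeness of the parallel RSP protocol (\ref{prop:honest-prover-correctness}). In the correctness experiment, both the client and the receiver follow the prescribed protocol; in particular, the receiver plays the honest prover of \ref{fig:protocol_multi_round}. By \ref{prop:honest-prover-correctness}, with probability negligibly close to $1$ the protocol does not abort, and conditioned on this event the receiver's output state is exactly
\[
\rho \;=\; \bigotimes_{i=1}^{n} H^{\theta_i} \proj{v_i} H^{\theta_i},
\]
where $\vec\theta$ is the basis choice sampled by the client in the final preparation round and $\vec v$ is the string the client reconstructs from the trapdoors. Hence with overwhelming probability the scheme proceeds to the output phase, where $\algo C$ sets $\vec r \deq \vec v \oplus \vec m$ and $k=(\vec r,\vec\theta)$.

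Next I would verify the decryption step on this honest state. Applying $H^{\theta_1}\otimes\cdots\otimes H^{\theta_n}$ to $\rho$ yields $\bigotimes_{i=1}^{n}\proj{v_i}$, since $H^{\theta_i}H^{\theta_i}=\1$. Measuring in the computational basis therefore deterministically produces the outcome $\vec x=\vec v$, and the decryption algorithm outputs
\[
\vec m' \;=\; \vec x \oplus \vec r \;=\; \vec v \oplus (\vec v \oplus \vec m) \;=\; \vec m.
\]
Thus $\tr{\proj{\vec m}\,\Sigma.\Dec_k(\rho)}=1$ whenever the RSP protocol succeeds.

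Combining the two steps, the overall probability that $\Sigma.\Dec_k(\rho)$ returns $\vec m$ is at least the success probability of the honest prover in \ref{fig:protocol_multi_round}, which by \ref{prop:honest-prover-correctness} is $1-\negl(\lambda)$ (for the parameter regime specified there, which is consistent with our choice $n=n(\lambda)=\poly(\lambda)$ and $\delta$ inverse polynomial in $\lambda$). This is exactly the correctness condition of \ref{def:QECM_cc}, so no additional argument is needed. There is no real obstacle here: the only thing to be mildly careful about is matching the parameter regime of the RSP completeness statement with the parameters $(\lambda,n)$ used in \ref{cons:unclonable-encryption-classical-client}, which is immediate since both $N$ and $\delta$ in \ref{fig:protocol_multi_round} can be chosen polynomially in $\lambda$ while keeping $n=\poly(\lambda)$.
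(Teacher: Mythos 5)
Your proposal is correct and follows the same route as the paper: invoke \ref{prop:honest-prover-correctness} to get that the honest receiver succeeds with overwhelming probability and ends up holding exactly $\bigotimes_i H^{\theta_i}\proj{v_i}H^{\theta_i}$, then observe that decryption with $k=(\vec v\oplus\vec m,\vec\theta)$ recovers $\vec m$. The only cosmetic difference is that you spell out the Hadamard-and-measure computation whereas the paper simply cites the correctness of \ref{cons:conjugate-coding}; the substance is identical.
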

\begin{proof}
Let $\lambda \in \N$ and $n(\lambda) \in \N$ be parameters and let $\Sigma = (\KeyGen,\Enc,\Dec)$ be the QECM$_{\text{CC}}$ scheme in \ref{cons:unclonable-encryption-classical-client}. To prove correctness, we have to show that, for all $m \in \bit^{n(\lambda)}$,
\begin{align}\label{eq:correctness-CC-CC}
\tr{\proj{m} \Dec_{k} (\sigma)} \geq 1- \negl(\lambda),
\end{align}
where $(\sigma,\vec c) \leftarrow \Enc(\algo C(1^\lambda,k,m),\algo R(1^\lambda))$ is the final result of the interactive protocol between the client $\algo C$ and the receiver $\algo R$. Note that in this scenario we can assume that the receiver $\algo R$ is honest throughout the protocol.
According to \ref{prop:honest-prover-correctness},
$\algo R$ is accepted in \ref{fig:protocol_multi_round} with probability negligibly close to $1$ in the security parameter $\lambda$ (for parameter choices $n$ at most polynomial in $\lambda$ and $\delta$ at least inverse polynomial in $\lambda$).
Furthermore, provided that \ref{fig:protocol_multi_round} is successful, the final state of $\algo R$ is given by
\begin{align}
\sigma = \bigotimes_{i =1}^n H^{\theta_i} \proj{v_i} H^{\theta_i} \,, 
\end{align}
where $\vec v$ and $\vec \theta$ are the strings recorded by the client $\algo C$. Hence, the correctness property in \ref{eq:correctness-CC-CC} of $\Sigma = (\KeyGen,\Enc,\Dec)$ in \ref{cons:unclonable-encryption-classical-client} immediately follows from the correctness of the conjugate coding scheme in \ref{cons:conjugate-coding}. This proves the claim.
\end{proof}

Since the quantum state $\sigma$ that $\algo R$ receives in \cref{cons:unclonable-encryption-classical-client} is independent of the message $m$ (without access to the secret key), it immediately follows that \cref{cons:unclonable-encryption-classical-client} is $\IND$-secure in the sense of \cref{def:ind}.
The following theorem shows that \cref{cons:unclonable-encryption-classical-client} also satisfies the unclonable security property from \cref{def:unclonable_security_cc}.

\begin{theorem}[Unclonable security]\label{thm:unclonable-security-CC} Let $\lambda \in \N$ and $n(\lambda) \in \N$ be parameters. Then, there exist $\eps(\lambda) = 1/\poly(\lambda)$ and $t(\lambda) = n(\lambda) \log(1 + 1/\sqrt{2})$ such that the QECM$_{\text{CC}}$ scheme $\Sigma = (\KeyGen,\Enc,\Dec)$ defined in \ref{cons:unclonable-encryption-classical-client} is $\widetilde{t}(\lambda)$-unclonable secure, where we let
$\widetilde{t}(\lambda) \deq t(\lambda) + \log\big( 1 + 2^{n(\lambda) - t(\lambda) + \log \eps(\lambda)} \big).$
In other words, for every efficient interactive quantum prover $\algo P$ and for all $\QPT$ cloning attacks $ (\mathcal{A},\mathcal{B},\mathcal{C})$,
$$
\Pr \left[\mathsf{CloneExp}_{\Sigma,\CC}\big(1^\lambda, (\algo P,\algo A,\algo B,\algo C)\big) =1 \right] \leq 2^{- n(\lambda) + t(\lambda)} + \eps(\lambda) + \negl(\lambda).
$$
\begin{proof}
Let us analyze the success probability in the cloning experiment in \ref{def:cloning-experiment}. Let $(\algo P,\mathcal{A},\mathcal{B},\mathcal{C})$ be an efficient adversary consisting of an efficient interactive prover $\algo P$ and a triplet of $\QPT$ algorithms $(\algo A,\algo B,\algo C)$. Let
$\sigma_{WDYR|\top}^{(\vec \theta)}$ be the final state at the end of the interactive protocol conditioned on the event that the challenger outputs $\mathtt{flag} = \top$ as in \ref{cor:conditional-state-non-abort}. 
Recall that the challenger has access to system $W$ whereas the prover $\algo P$ has access to systems $DYR$.
\ref{cor:conditional-state-non-abort} then considers the state $V \sigma_{WDYR|\top}^{(\vec \theta)} V^\dagger$ for an efficient isometry $V: DYR \to QS$ with $S = DYRA$.
We can argue that w.l.og.~we can assume that the prover's state is in fact $V \sigma_{WDYR|\top}^{(\vec \theta)} V^\dagger$, not $\sigma_{WDYR|\top}^{(\vec \theta)}$.
The reason is that the prover can recover the latter from the former by considering the (efficient) unitary extension $U: QS \to QS$ of $V$, applying $U^\dagger$ to $V \sigma_{WDYR|\top}^{(\vec \theta)} V^\dagger$, and then tracing out systems $QA$.
Hence, slightly abusing notation, we define 
\begin{align*}
\sigma_{WQS|\top}^{(\vec \theta)} \deq V \sigma_{WDYR|\top}^{(\vec \theta)} V^\dagger 
\end{align*}
and assume that the prover has prepared the state $\sigma_{WQS|\top}^{(\vec \theta)}$.
We can expand this state as a cq-state as in \ref{cor:conditional-state-non-abort}, and write
\begin{align}
\sigma_{WQS|\top}^{(\vec \theta)} = \sum_{\vec v \in \bit^n} \proj{\vec v}_W \otimes \sigma_{QS|\top}^{(\vec v,\vec \theta)}\,,
\end{align}
where $\sigma_{QS|\top}^{(\vec v,\vec \theta)}$ are subnormalised according to the probability of the verifier receiving outcome $\vec v$.
In \ref{protocol-CP-classical-client}, the verifier makes the basis choice $\vec \theta$ uniformly at random, and in the cloning experiment in \ref{def:cloning-experiment} the verifier chooses a uniformly random message $\vec m$.
Therefore, the success probability in the cloning experiment can be expressed as
\begin{align*}
&\Pr \left[\mathsf{CloneExp}_{\Sigma,\CC}\big(1^\lambda, (\algo P,\algo A,\algo B,\algo C)\big) =1 \right]\\
&\qquad= \sum_{\vec v} \underset{\vec m}{\mathbb{E}} \underset{\vec s}{\mathbb{E}} \underset{\vec \theta}{\mathbb{E}} \,\, \Tr\left[(\proj{\vec m} \otimes \proj{\vec m})(\mathcal{B}_{(\vec s,\vec \theta)} \otimes \mathcal{C}_{(\vec s,\vec \theta)}) \circ \mathcal{A} \left(  \sigma_{QS|\top}^{(\vec v,\vec \theta)} \otimes \proj{\vec v \oplus \vec s \oplus \vec m}_{Q'} \right)\right]. \numberthis
\end{align*}
(Here, we sum over $\vec v$ because $\sigma_{QS|\top}^{(\vec v,\vec \theta)}$ is subnormalised. Alternatively, one can also renormalise these states and take an expectation. Both methods are equivalent since the expression in the trace scales linearly in the normalisation factor.)
We can assume that the acceptance probability of $\algo P$ in the interactive protocol satisfies $\Pr[\top]\geq 1/\poly(\lambda)$ as otherwise, the success probability of the adversary $(\algo P,\mathcal{A},\mathcal{B},\mathcal{C})$ in the cloning experiment would decay faster than an inverse polynomial in $\lambda$, in which case \ref{thm:unclonable-security-CC} is trivially satisfied.
Under this assumption, it follows from \ref{cor:conditional-state-non-abort} that there exists $\eps(\lambda) = 1/\poly(\lambda)$ such that for any $\vec \theta \in \bit^n$,
\begin{align}
\sum_{\vec v \in \bit^n} \proj{\vec v}_W \otimes \sigma_{QS|\top}^{(\vec v,\vec \theta)} &\capprox_{\eps(\lambda)}  \frac{1}{2^n} \sum_{\vec v \in \bit^n} \proj{\vec v}_W \otimes   \left( \bigotimes_{i} H^{\theta_i} \proj{v_i} H^{\theta_i}  \right)_Q   \ot \alpha'_{S} \,, \label{eqn:rigidity_cor_unclonable}
\end{align}
where $\alpha'_{S}$ is a (normalised) quantum state which is independent of $\vec v$ and $\vec \theta$ (and $\vec m$). 
From this it follows that for any $\vec m, \vec s$ and $\vec \theta$,
\begin{align}
&\sum_{\vec v} \Tr\left[(\proj{\vec m} \otimes \proj{\vec m})(\mathcal{B}_{(\vec s,\vec \theta)} \otimes \mathcal{C}_{(\vec s,\vec \theta)}) \circ \mathcal{A} \left( \sigma_{QS|\top}^{(\vec v,\vec \theta)}  \otimes \proj{\vec v \oplus \vec s \oplus \vec m}_{Q'}\right) \right] \nonumber\\
&\leq \frac{1}{2^n} \sum_{\vec v} \,\, \Tr\Bigg[(\proj{\vec m} \otimes \proj{\vec m})(\mathcal{B}_{(\vec s,\vec \theta)} \otimes \mathcal{C}_{(\vec s,\vec \theta)}) \,\circ \nonumber \\
& \quad\quad\quad \mathcal{A} \Big( \big( \bigotimes_{i} H^{\theta_i} \proj{v_i} H^{\theta_i}  \big)_Q 
 \ot \alpha'_{S} \ot \proj{\vec v \oplus \vec s \oplus \vec m}_{Q'} \Big)\Bigg] + \eps(\lambda) \,. \label{eq:unclonable-enc-real-vs-ideal}
\end{align}
This is the case because the adversary $(\algo P,\mathcal{A},\mathcal{B},\mathcal{C})$ is efficient, so given the state on the l.h.s.~or r.h.s.~of \ref{eqn:rigidity_cor_unclonable}, one can efficiently estimate the quantity on the l.h.s.~or r.h.s.~of \ref{eq:unclonable-enc-real-vs-ideal}, respectively.

Since $\alpha_S'$ is a fixed and efficiently preparable state, we can consider a modified cloning map $\widetilde{\algo A}$ that only acts on systems $QQ'$ (rather than on $QQ'S$) and is defined by 
\begin{align*}
\widetilde{\algo A}(\rho_{QQ'}) = \mA \left( \rho_{QQ'} \ot \alpha_S' \right).
\end{align*}
To complete the proof, we use \ref{prop:unclonable-HE}. This allows us to argue that the hybrid encryption scheme $\mathsf{HE}$ in \ref{cons:hybrid-encryption} (which is implicit in $\Sigma$) is $t(\lambda)$-unclonable secure when instantiated with the $t(\lambda)$-unclonable secure conjugate coding scheme in \ref{cons:conjugate-coding}, where $t(\lambda) = n(\lambda) \log(1 + 1/\sqrt{2})$.

Using \ref{prop:unclonable-HE}, we can bound the cloning advantage as follows:
\begin{align*}
&\Pr \left[\mathsf{CloneExp}_{\Sigma,\CC}\big(1^\lambda, (\algo P,\algo A,\algo B,\algo C)\big) =1 \right]\\
&\leq \underset{\vec m}{\mathbb{E}}  \underset{\vec v}{\mathbb{E}}\underset{\vec s}{\mathbb{E}} \underset{\vec \theta}{\mathbb{E}} \,\, \Tr\Bigg[(\proj{\vec m} \otimes \proj{\vec m})(\mathcal{B}_{(\vec s,\vec \theta)} \otimes \mathcal{C}_{(\vec s,\vec \theta)}) \,\circ \nonumber \\
& \quad\quad\quad \mathcal{A} \Big( \big( \bigotimes_{i} H^{\theta_i} \proj{v_i} H^{\theta_i}  \big)_Q 
 \ot \alpha'_{S} \ot \proj{\vec v \oplus \vec s \oplus \vec m}_{Q'} \Big)\Bigg] + \eps(\lambda)\\
&= \underset{\vec m}{\mathbb{E}}  \underset{\vec v}{\mathbb{E}}\underset{\vec s}{\mathbb{E}} \underset{\vec \theta}{\mathbb{E}} \,\, \Tr\Bigg[(\proj{\vec m} \otimes \proj{\vec m})(\mathcal{B}_{(\vec s,\vec \theta)} \otimes \mathcal{C}_{(\vec s,\vec \theta)}) \,\circ \nonumber \\
& \quad\quad\quad \widetilde{\mathcal{A}} \Big( \big( \bigotimes_{i} H^{\theta_i} \proj{v_i} H^{\theta_i}  \big)_Q 
  \ot \proj{\vec v \oplus \vec s \oplus \vec m}_{Q'} \Big)\Bigg] + \eps(\lambda)\\
&\leq \left(\frac{1}{2} + \frac{1}{2\sqrt{2}}\right)^{n(\lambda)} + \eps(\lambda)
\quad = \quad 2^{-n(\lambda) + \widetilde{t}(\lambda)},
\end{align*}
 where we defined $\widetilde{t}(\lambda) \deq t(\lambda) + \log\big( 1 + 2^{n(\lambda) - t(\lambda) + \log \eps(\lambda)} \big)$ and $t(\lambda) = n(\lambda) \log(1 + 1/\sqrt{2})$.
\end{proof}

\end{theorem}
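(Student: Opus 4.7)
The plan is to reduce the classical-client unclonability claim to the information-theoretic unclonability of the Broadbent--Lord conjugate coding scheme (\ref{thm_broadbent-lord}) via the parallel RSP rigidity corollary (\ref{cor:conditional-state-non-abort}). Fix an efficient adversary $(\algo P,\algo A,\algo B,\algo C)$. If the acceptance probability $\Pr[\top]$ of $\algo P$ in the interactive encryption protocol is smaller than an inverse polynomial in $\lambda$, then the cloning experiment succeeds with probability at most $1/\poly(\lambda)$ and the claimed bound is trivial; so the main case is when $\Pr[\top]\geq 1/\poly(\lambda)$, which is exactly the regime in which \ref{cor:conditional-state-non-abort} applies with inverse-polynomial error.

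First I would argue that we may replace the prover's honest register by its image $V\sigma_{WDYR|\top}^{(\vec\theta)}V^{\dagger}$ under the rigidity isometry without loss of generality. The isometry $V:DYR\to QB$ is efficient and hence can be extended to an efficient unitary on $QB$; any adversary against the scheme using $\sigma$ can be transformed into an adversary against the post-isometry state by prepending $U^{\dagger}$ and tracing out the ancilla. Thus the cloning advantage equals
\begin{align*}
\sum_{\vec v}\,\E_{\vec m}\E_{\vec\theta}\,\tr{(\proj{\vec m}\otimes\proj{\vec m})(\mathcal B_{(\vec v\oplus\vec m,\vec\theta)}\otimes\mathcal C_{(\vec v\oplus\vec m,\vec\theta)})\circ\mathcal A\bigl(\sigma_{QB|\top}^{(\vec v,\vec\theta)}\bigr)}.
\end{align*}

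Next I would invoke \ref{cor:conditional-state-non-abort} (in the normalised form, since $\Pr[\top]\geq 1/\poly(\lambda)$) to conclude that there exists an efficiently preparable state $\alpha_B'$ and some $\eps(\lambda)=1/\poly(\lambda)$ such that the cq-state $\sum_{\vec v}\proj{\vec v}_W\otimes\sigma_{QB|\top}^{(\vec v,\vec\theta)}$ is computationally $\eps$-indistinguishable from $\tfrac{1}{2^{n}}\sum_{\vec v}\proj{\vec v}_W\otimes(\bigotimes_i H^{\theta_i}\proj{v_i}H^{\theta_i})_Q\otimes\alpha_B'$, for every $\vec\theta$. Since $(\algo A,\algo B,\algo C)$ together with the fixed, efficiently samplable $\vec m$ and $\vec\theta$ constitute an efficient distinguisher that outputs a bit indicating whether both decoders correctly recover $\vec m$, computational indistinguishability lets me replace $\sigma_{QB|\top}^{(\vec v,\vec\theta)}$ by the ideal state up to additive error $\eps(\lambda)$ in the cloning probability.

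Finally, I would absorb the fixed ancilla $\alpha_B'$ into the cloning map by defining $\widetilde{\algo A}(\rho_Q)\deq\algo A(\rho_Q\otimes\alpha_B')$, which is still efficient. After the variable change $\vec r=\vec v\oplus\vec m$, the resulting expression is exactly the left-hand side of the Broadbent--Lord cloning bound (\ref{thm_broadbent-lord}) for the cloning attack $(\widetilde{\algo A},\algo B,\algo C)$ against the scheme in \ref{cons:conjugate-coding}. Applying that theorem gives the upper bound $(1/2+1/(2\sqrt2))^{n(\lambda)}+\eps(\lambda) = 2^{-n(\lambda)+t(\lambda)}+\eps(\lambda)$, and rewriting this sum as $2^{-n(\lambda)+\widetilde t(\lambda)}$ for $\widetilde t(\lambda)=t(\lambda)+\log(1+2^{n(\lambda)-t(\lambda)+\log\eps(\lambda)})$ yields the statement. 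The only subtle point is ensuring that the distinguisher used to invoke computational indistinguishability really is efficient and that the computational closeness of cq-states implies the corresponding closeness of the two trace expressions; this is routine because the decoders $\algo B_k,\algo C_k$ are specified by efficient classical circuit descriptions supplied by the adversary and the check ``$m=m_B=m_C$'' is efficient.
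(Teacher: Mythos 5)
Your proposal follows the paper's argument step for step: the WLOG replacement via the efficient unitary extension of $V$, the reduction to the normalised form of \ref{cor:conditional-state-non-abort} under the assumption $\Pr[\top]\geq 1/\poly(\lambda)$, the use of computational indistinguishability together with the efficiency of the adversary to transfer the trace bound, the absorption of $\alpha_B'$ into the modified cloning map $\widetilde{\mathcal A}$, and the final invocation of \ref{thm_broadbent-lord}. This is the same proof as in the paper.
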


\section{Quantum copy-protection}

Our second application of our parallel RSP protocol for BB84 states is the task of quantum copy-protection -- an idea first proposed by Aaronson~\cite{Aar2009}. Here, we imagine that a vendor wishes to encode a program into a quantum state in a way
that enables a recipient to run the program, but not to create functionally equivalent ``pirated'' copies. Let us begin by introducing a few relevant definitions.

\begin{definition}[Quantum copy-protection scheme]\label{def:copy-protection-scheme}
Let $ \mathscr{F} = \bigcup_{\lambda \in \N} \mathcal{F}_\lambda$ be a class of efficiently computable functions $f: \algo X \rightarrow \algo Y$ with domain $\algo X$ and range $\algo Y$. A quantum copy-protection (QCP) scheme for the class $\mathscr{F}$ is a pair of $\QPT$ algorithms $\QCP = (\mathsf{Protect}, \mathsf{Eval})$ defined as follows:
\begin{description}
    \item $\mathsf{QCP.Protect}(1^\lambda,d_f) \to \rho:$ takes as input the security parameter $1^\lambda$ and a classical description $d_f$ of a function $f \in \algo F_\lambda$, and outputs a (possibly mixed) quantum state $\rho$.
    \item $\mathsf{QCP.Eval}(1^\lambda,\rho,x) \to \rho' \otimes \ketbra{y}{y}:$ takes as input the security parameter $1^\lambda$, a quantum state $\rho$ and an input $x \in \algo X$, and outputs a bipartite state $\rho' \otimes \ketbra{y}{y}$ with $y \in \algo Y$.
\end{description}
\end{definition}

Slightly abusing notation, we occasionally ignore the post-evaluation state $\rho'$ and simply identify the output of the procedure $\mathsf{QCP.Eval}(1^\lambda,\rho,x)$ with a classical outcome denoted by $y \in \algo Y$.

We say that a QCP scheme is $\delta$-\emph{correct} if, for any $\lambda \in \mathbb{N}$, any  $f \in \algo{F}_\lambda$, and any input $x \in \algo X$ to $f$:
$$\Pr\bigg[ \mathsf{QCP.Eval}(1^\lambda,\rho, x)=f(x) \, : \, \rho \leftarrow \mathsf{QCP.Protect}(1^{\lambda}, d_f) \bigg] \geq 1-\delta(\lambda).$$
Note that the probability above comes from the procedure $\mathsf{QCP.Eval}$ of the QCP scheme. If $\delta(\lambda) = \negl(\lambda)$, we simply call a copy-protection scheme \emph{correct}. By the Gentle Measurement Lemma~\cite{winter1999} it is easy to see that a $\delta$-correct scheme is reusable in the following sense: after performing $\QCP.\mathsf{Eval}$ to $\rho$ it is possible to rewind the procedure to obtain a state that is within trace distance $\sqrt{\delta}$ of the original state $\rho$.
 
Informally, we say that a QCP scheme $\QCP = (\mathsf{Protect}, \mathsf{Eval})$ is \emph{secure} if no $\QPT$ adversary can produce two ``copies'' of a copy-protected program $\rho \leftarrow \QCP.\mathsf{Protect}(1^\lambda,d_f)$ that can both be used to evaluate $f$. We formalise the security of copy-protection schemes by means of the following security experiment.
 
\begin{definition}[Piracy experiment]\label{def:CP-security} Let $\QCP = (\mathsf{Protect}, \mathsf{Eval})$ be a copy-protection scheme for a class of functions $ \mathscr{F} = \bigcup_{\lambda \in \N} \mathcal{F}_\lambda$ with domain $\algo X$ and range $\algo Y$. Let $\algo D_{\mathscr{F}} = \{\algo D_{\algo F_\lambda}\}_{\lambda  \in \N}$ be an ensemble of distributions over $\algo F_\lambda$ and let $\algo D_{\algo X} = \{\algo D_{\algo X}(f)\}_{f \in \algo F_\lambda}$ be an ensemble of distributions over function inputs $\algo X$. The security game (which we call the \emph{piracy experiment}) takes place between a challenger and an adversary consisting of a triplet of $\QPT$ algorithms $(\algo A, \algo B,\algo C)$:
 \begin{enumerate}[label=\arabic*.]
     \item The challenger samples $f \leftarrow \algo D_{\algo F_\lambda}$ and sends the program $\rho \leftarrow \QCP.\mathsf{Protect}(1^\lambda,d_f)$ to $\algo A$.
     \item $\algo A$ applies an efficient $\CPTP$ map to map $\rho$ into a bipartite state $\rho_{BC}$ on systems $BC$, and sends system $B$ to $\algo B$ and system $C$ to $\algo C$ (who are not allowed to communicate from this step onward).
     
     \item The challenger samples a pair $(x_B,x_C) \leftarrow \algo D_{\algo X}(f) \times \algo D_{\algo X}(f)$, and sends $x_B$ to $\algo B$ and $x_C$ to $\algo C$.
     \item $\algo B$ and $\algo C$ output values $y_B \in \algo Y$ and $y_C \in \algo Y$, respectively, and send them to the challenger. The challenger outputs $1$, if $y_B = f(x_B)$ and $y_C = f(x_C)$ (i.e., the adversary has succeeded) and $0$, otherwise (i.e., the adversary has failed).
 \end{enumerate}
We let the random variable $\mathsf{PiracyExp}_{\algo D_{\mathscr{F}}, \algo D_{\algo X}}^{\QCP}\big(1^\lambda, (\algo A,\algo B,\algo C)\big)$ denote the output bit of the challenger.
\end{definition}

\begin{definition}[Secure quantum copy-protection]\label{def:secure-CP-scheme} Let $\QCP = (\mathsf{Protect}, \mathsf{Eval})$ be a QCP scheme for a class of functions $ \mathscr{F} = \bigcup_{\lambda \in \N} \mathcal{F}_\lambda$. Let $\algo D_{\mathscr{F}} = \{\algo D_{\algo F_\lambda}\}_{\lambda  \in \N}$ be an ensemble of distributions over $\algo F_\lambda$ and let $\algo D_{\algo X} = \{\algo D_{\algo X}(f)\}_{f \in \algo F_\lambda}$ be an ensemble of distributions over $\algo X$. Then, $\QCP = (\mathsf{Protect}, \mathsf{Eval})$ is called $(\algo D_{\mathscr{F}},\algo D_{\algo X},\gamma)$-secure if, for any triplet of $\QPT$ algorithms $(\algo A,\algo B,\algo C)$, it holds that
$$
\Pr \left[ \mathsf{PiracyExp}_{\algo D_{\mathscr{F}}, \algo D_{\algo X}}^{\QCP}\big(1^\lambda, (\algo A,\algo B,\algo C)\big) =1\right] \leq  p^{\setft{triv}}_{\algo D_{\mathscr{F}},\algo D_{\algo X}} + \gamma(\lambda),
$$
where $p^{\setft{triv}}_{\algo D_{\mathscr{F}},\algo D_{\algo X}} $ is the trivial winning probability that is always possible due to correctness: $\algo A$ forwards the original copy-protected program to one of the parties, say $\algo B$ (who then evaluates it to obtain the correct output), and the other party, say $\algo C$, has to guess at random~\cite{coladangelo2020quantum}.

\end{definition}
 
\subsection{Quantum encryption with wrong-key detection}

Recent work by Coladangelo, Majenz and Poremba~\cite{coladangelo2020quantum} proposed the first construction of QCP for compute-and-compare programs in the quantum random oracle model (QROM) as well as a scheme for multi-bit point functions in the QROM based on unclonable encryption with wrong-key detection (WKD) -- a property which enables the decryption procedure to recognise incorrect keys. Because the QCP constructions in~\cite{coladangelo2020quantum} rely on BB84 states, we can immediately use our parallel RSP protocol to dequantise the communication between the sender and receiver.
We remark that the former construction in~\cite{coladangelo2020quantum}, however, only guarantees a constant adversarial advantage for creating pirated copies. 

In this work, we focus on the latter QCP construction for multi-bit point functions, which is closely related to \emph{unclonable encryption} and enjoys stronger security guarantees. Our QCP scheme for multi-bit point functions (formally defined in \ref{sec:QCP-MBPF}) combines our unclonable hybrid encryption scheme in \ref{cons:unclonable-encryption-classical-client} with a generic WKD transformation in the QROM as in~\cite{coladangelo2020quantum}. 
The notion of WKD has previously been used in the context of classical obfuscation of point functions by Canetti et al.~\cite{cryptoeprint:2010:049} and was shown to be achievable unconditionally using pair-wise independent permutations~\cite{NaorReingold97}. This seems to suggest that one may be able to copy-protect multi-bit point functions unconditionally (i.e., without the QROM assumption) by combining a generic WKD transformation based on pair-wise independent permutations with an unclonable encryption scheme similar to the construction of Coladangelo et al.~\cite{coladangelo2020quantum}. Unfortunately, it seems difficult to justify that the resulting encryption scheme remains unclonable when relying on pair-wise independent permutations, rather than the random oracle heuristic.
We choose to rely on the generic WKD transformation in the QROM from~\cite{coladangelo2020quantum} as it is compatible with search-to-decision reductions via the one-way-to-hiding paradigm~\cite{cryptoeprint:2018/904}. The latter is crucial when proving that the WKD transformation preserves the notion of unclonability (see \cite{coladangelo2020quantum}). Despite being uninstantiable in principle~\cite{10.1145/1008731.1008734}, modeling
hash functions as quantum-accessible random functions (in the QROM) is considered a standard assumption in cryptography~\cite{https://doi.org/10.48550/arxiv.1008.0931}.

We use the following definition of wrong-key detection for QECM schemes~\cite{coladangelo2020quantum} who consider quantum secret-key encryption schemes, more generally.

\begin{definition}[Wrong-key detection for QECM schemes]\label{def:wkd} A QECM scheme $\Sigma  = (\KeyGen,\Enc,\Dec)$ satisfies the wrong-key detection ($\WKD$) property if, for every $k'\neq k \leftarrow \KeyGen(1^\lambda)$ and plaintext $m$,
$$
\tr{(\1 - \proj{\bot}) \Dec_{k'} \circ \Enc_k(m)} \leq \negl(\lambda).
$$

\end{definition}

We use the following generic transformation due to Coladangelo, Majenz and Poremba~\cite{coladangelo2020quantum} which allows us to turn any QECM scheme into another QECM scheme with wrong-key detection.

\begin{construction}[Generic transformation for $\WKD$ in the QROM]\label{cons:generic-trf} Let $\lambda \in \N$ be the security parameter and let $\Sigma = (\KeyGen,\Enc,\Dec)$ be a QECM scheme with key space $\algo K = \bit^\lambda$. Let $h: \bit^\lambda \rightarrow \bit^\kappa$ be a function. Consider the following QECM scheme $\Sigma_h = (\KeyGen',\Enc',\Dec')$ defined by:
\begin{description}
\item $\KeyGen'(1^\lambda) \rightarrow k:$ this is the same as $ \KeyGen(1^\lambda)$.
\item $\Enc'(k,m) \rightarrow \ct$: takes as input a secret key $k \in \bit^\lambda$ and plaintext $m \in \bit^\lambda$ and outputs the quantum ciphertext given by $\ct = \Enc_k(m) \ot \proj{h(k)}$.
\item $\Dec'(k, \ct) \rightarrow m' \text{ \textbf{or} } \bot:$ takes as input a key $k \in \bit^\lambda$ and ciphertext $\ct$, and decrypts as follows:
\begin{enumerate}[label=\arabic*.]
    \item Parse the ciphertext $\ct$ as $\rho \otimes  \proj{z}$.
    \item If $h(k) \neq z$, output $\bot$. Else, if $h(k)=z$, output $m'=\Dec_k(\rho)$.
\end{enumerate}
\end{description}
\end{construction}

We use the following lemma which is shown in \cite[Lemma 8]{coladangelo2020quantum}.

\begin{lemma}[\cite{coladangelo2020quantum}]\label{lem:generic_transf}
Let $\lambda \in \N$ and let $\Sigma = (\KeyGen,\Enc,\Dec)$ be any $t(\lambda)$-unclonable secure $\QECM$ scheme. Let $h: \{0,1\}^\lambda \rightarrow \{0,1\}^\kappa$ be a hash function, for $ \kappa = \omega(\log \lambda)$. Then, \ref{cons:generic-trf} yields a correct and $t(\lambda)$-unclonable secure $\QECM$ scheme $\Sigma_h$ with WKD in the QROM.
\end{lemma}

Finally, we conclude with the following result.

\begin{corollary}\label{cor:WKD_HE}
Let $\lambda \in \N$ and let $h: \bit^\lambda \rightarrow \bit^\kappa$ be a hash function with $\kappa = \omega(\log \lambda)$. Let $\Sigma$ be a $t(\lambda)$-unclonable secure $\QECM$ schme and let $\Sigma_h$ be the associated WKD transformation in \ref{cons:generic-trf}. Then, the hybrid encryption scheme $\mathsf{HE}= (\KeyGen,\Enc,\Dec)$ in \ref{cons:hybrid-encryption} instantiated with the scheme $\Sigma_h$ is correct, $t(\lambda)$-unclonable secure and has the WKD property in the QROM.
\end{corollary}

\begin{proof}
The correctness of the scheme $\mathsf{HE}= (\KeyGen,\Enc,\Dec)$ is implied by the correctness of $\Sigma_h$. To argue that $\mathsf{HE}$ is $t(\lambda)$-unclonable secure, we can use the fact that $\Sigma_h$ is $t(\lambda)$-unclonable secure in the QROM (see \ref{lem:generic_transf}) and then apply \ref{prop:unclonable-HE}. The fact that $\mathsf{HE}$ inherits the WKD property from $\Sigma_h$ is clear, as the decryption procedure of the scheme $\mathsf{HE}$ first checks whether a given ciphertext decrypts correctly via $\Sigma_h$ under a particular secret key. This proves the claim.
\end{proof}

\subsection{Quantum copy-protection with a classical client}

Let us now define the notion of quantum copy-protection with a classical client. Similar to \ref{def:copy-protection-scheme}, it enables a \emph{classical} client to delegate a copy-protected program to a \emph{quantum} receiver in a way that allows the receiver to execute the program, but prevents it from creating additional ``pirated'' copies.

\begin{definition}[Quantum copy-protection scheme with a classical client]\label{def:copy-protection-scheme_cc}
Let $ \mathscr{F} = \bigcup_{\lambda \in \N} \mathcal{F}_\lambda$ be a class of efficiently computable functions $f: \algo X \rightarrow \algo Y$. A QCP scheme with a classical client (or, QCP$_{\text{CC}}$ scheme) for the class $\mathscr{F}$ is a pair of efficient interactive quantum algorithms $\QCP_{\mathsf{CC}} = (\mathsf{Protect}, \mathsf{Eval})$:
\begin{description}
    \item $\mathsf{QCP_\CC.Protect}(\algo C(1^\lambda,d_f),\algo R(1^\lambda)) \rightarrow \rho \, \textbf{or} \, \bot$: this is an interactive protocol between a classical client $\algo C$ (who takes as input the parameter $1^\lambda$ and a classical description $d_f$ of a function $f \in \algo F_\lambda$) and a quantum receiver $\algo R$ (who takes as input $1^\lambda$).
    Provided that the protocol is successful with the flag $\top$, $\algo R$ is in possession of a quantum state $\rho$ (which depends on $f$). Otherwise, $\algo C$ outputs $\bot$.
    \item $\mathsf{QCP_\CC.Eval}(1^\lambda,\rho,x):$ takes as input the security parameter $1^\lambda$, a quantum state $\rho$ and an input $x \in \algo X$, and outputs a bipartite state $\rho' \otimes \ketbra{y}{y}$ for an outcome $y \in \algo Y$.
\end{description}
\end{definition}

Similar to \ref{def:copy-protection-scheme}, we occasionally ignore the post-evaluation state $\rho'$ and simply identify the output of the procedure $\mathsf{QCP_\CC.Eval}(1^\lambda,\rho,x)$ with a classical outcome $y \in \algo Y$.
We say that a QCP$_{\text{CC}}$ scheme is $\delta$-correct, if, for any $\lambda \in \mathbb{N}$, any  $f \in \algo{F}_\lambda$, and any input $x \in \algo X$ to $f$:
 $$\Pr\bigg[ \mathsf{QCP_\CC.Eval}(1^\lambda,\rho,  x) = f(x) \, : \, \rho \leftarrow \mathsf{QCP_\CC.Protect}(\algo C(1^\lambda,d_f,\algo),R(1^\lambda)) \bigg] \geq 1- \delta(\lambda).$$

Let us now define a piracy experiment with a classical client, similar to \ref{def:CP-security}.
 
 \begin{definition}[Piracy experiment with a classical client] Let $\QCP_\CC = (\mathsf{Protect}, \mathsf{Eval})$ be a QCP$_{\text{CC}}$ scheme for a class of functions $ \mathscr{F} = \bigcup_{\lambda \in \N} \mathcal{F}_\lambda$ with domain $\algo X$ and range $\algo Y$. Let $\algo D_{\mathscr{F}} = \{\algo D_{\algo F_\lambda}\}_{\lambda  \in \N}$ be an ensemble of distributions over $\algo F_\lambda$ and let $\algo D_{\algo X} = \{\algo D_{\algo X}(f)\}_{f \in \algo F_\lambda}$ be an ensemble of distributions over $\algo X$. The security game (called $\emph{piracy experiment with a classical client}$) takes place between a challenger and an adversary consisting of an efficient interactive quantum prover $\algo P$ and triplet of $\QPT$ algorithms $(\algo A, \algo B,\algo C)$:
 \begin{enumerate}
     \item The challenger samples a function $f \leftarrow \algo D_{\algo F_\lambda}$. To classically delegate a program for the functionality $f$ according to $\QCP_\CC$, the challenger takes the role of the classical client $\algo C$ and the prover $\algo P$ takes the role of the (possibly malicious) receiver $\algo R$ in the interactive protocol specified by $\QCP_\CC.\mathsf{Protect}$. Provided that the protocol succeeds with $\mathtt{flag}=\top$, $\algo P$ is in possession of a quantum state $\rho_A$ (which depends on $f$, as well as any additional information the adversary has collected throughout the protocol). Otherwise, if $\mathtt{flag}=\bot$, the adversary loses.
     \item $\algo A$ applies an efficient $\CPTP$ map to map $\rho_A$ into a bipartite state $\rho_{BC}$ on systems $BC$, and sends system $B$ to $\algo B$ and system $C$ to $\algo C$ (who are not allowed to communicate from this step onward).
     
     \item The challenger samples a pair $(x_B,x_C) \leftarrow \algo D_{\algo X}(f) \times \algo D_{\algo X}(f)$, and sends $x_B$ to $\algo B$ and $x_C$ to $\algo C$.
     \item $\algo B$ and $\algo C$ output values $y_B \in \algo Y$ and $y_C \in \algo Y$, respectively, and send them to the challenger. The challenger outputs $1$ if it holds that $y_B = f(x_B)$, $y_C = f(x_C)$ and $\mathtt{flag} = \top$ (i.e., the adversary has succeeded), and $0$ otherwise (i.e., the adversary has failed).
 \end{enumerate}
We let the random variable $\mathsf{PiracyExp}_{\algo D_{\mathscr{F}}, \algo D_{\algo X}}^{\QCP_\CC}\big(1^\lambda, (\algo P,\algo A,\algo B,\algo C)\big)$ denote the output bit of the challenger.
\end{definition}

 We formalise the notion of security with a classical client similar to \ref{def:secure-CP-scheme}.

\begin{definition}[Secure quantum copy-protection with a classical client] Let $\QCP_\CC = (\mathsf{Protect}, \mathsf{Eval})$ be a QCP$_{\text{CC}}$ scheme for a class of functions $ \mathscr{F} = \bigcup_{\lambda \in \N} \mathcal{F}_\lambda$ with domain $\algo X$ and range $\algo Y$. Further, let $\algo D_{\mathscr{F}} = \{\algo D_{\algo F_\lambda}\}_{\lambda  \in \N}$ be an ensemble of distributions over $\algo F_\lambda$ and let $\algo D_{\algo X} = \{\algo D_{\algo X}(f)\}_{f \in \algo F_\lambda}$ be an ensemble of distributions over $\algo X$. Then, $\QCP_\CC$ is called $(\algo D_{\mathscr{F}},\algo D_{\algo X},\gamma)$-secure if, for any efficient interactive quantum prover $\algo P$ and any triplet of $\QPT$ algorithms $(\algo A,\algo B,\algo C)$, it holds that
$$
\Pr \left[ \mathsf{PiracyExp}_{\algo D_{\mathscr{F}}, \algo D_{\algo X}}^{\QCP_\CC}\big(1^\lambda, (\algo P,\algo A,\algo B,\algo C)\big) =1\right] \leq p^{\setft{triv}}_{\algo D_{\mathscr{F}},\algo D_{\algo X}} + \gamma(\lambda).
$$

\end{definition}

\subsection{Quantum copy-protection of multi-bit point functions with a classical client}\label{sec:QCP-MBPF} In this section, we combine our parallel RSP protocol for BB84 states with a QECM scheme with wrong-key detection to construct a QCP$_\text{CC}$ scheme for multi-bit point functions, which is described in \ref{protocol-CP-classical-client}.
We begin with a formal definition. For integers $\mu \in \N$ and $\ell \in \N$, the class of multi-bit point functions $\mathscr{P}_{\mu,\ell}$ consists of functions $P_{y,m}\in \mathscr{P}_{\mu,\ell}$, where for a marked input $\vec y \in \bit^\mu$ and output $\vec m \in \bit^\ell$,
$$
P_{y,m}(\vec x) = \begin{cases}
\vec m, & \text{ if } \vec x=\vec y,\\
\vec 0^\ell,  & \text{ if } \vec x\neq \vec y.
\end{cases}
$$
Let $\lambda \in \N$ be the security parameter and let $\mu = 2 \lambda$ and $\ell = \lambda$. Our construction in \ref{protocol-CP-classical-client} enables a classical client to remotely prepare a copy-protected program for a multi-bit point function of the form $P_{y,m}$, where $y \rand \bit^\mu$ and $m \rand \bit^\ell$ are uniformly random. We first prove the \emph{correctness} of \ref{protocol-CP-classical-client}.

\begin{table}[p!]
\begin{longfbox}[breakable=false, padding=1em, padding-right=1.8em, padding-top=1.2em, margin-top=1em, margin-bottom=1em, background-color=gray!20]
\begin{protocol} {\bf Quantum Copy-Protection of Multi-bit Point Functions with a Classical Client} \label{protocol-CP-classical-client} \end{protocol}

Let $\lambda \in \N$ be the security parameter and let $\mu = 2 \lambda$ and $\ell = \lambda$. Let $\mathscr{P}_{\mu,\ell}$ be the class of multi-bit point functions with input length $\mu$ and output length $\ell$.
Let $h: \bit^\lambda \rightarrow \bit^\kappa$ be a hash function.
We define the QCP$_{\text{CC}}$ scheme consisting of a pair of efficient (interactive) algorithms $\QCP_{\CC,h}=(\mathsf{Protect},\mathsf{Eval})$ for the class $\mathscr{P}_{\mu,\ell}$ as follows:
\begin{itemize}
\item $\mathsf{QCP}_{\CC,h}.\mathsf{Protect}(\algo C(1^{\lambda},d_{P_{y,m}}),\algo R(1^\lambda))$: this is an interactive protocol between a classical client $\algo C$ and a quantum receiver $\algo R$.
The client $\algo C$ takes as input
the security parameter $1^\lambda$ and a succinct classical description $d_{P_{y,m}}$ of a multi-bit point function $P_{y,m}\in \mathscr{P}_{\mu,\ell}$, whereas the receiver $\algo R$ takes as input the security parameter $1^\lambda$. The protocol takes place as follows:
\begin{enumerate}
    \item The client $\algo C$ parses the marked input as $\vec y$ as $(\vec s || \vec \theta)$ with $\vec s,\vec \theta \in \bit^\lambda$.
    \item The client $\algo C$ and the receiver $\algo R$ run the interactive RSP protocol for BB84 states (\ref{fig:protocol_multi_round}) with security parameter $1^\lambda$, $n=\lambda$ and choice of basis $\vec \theta \in \bit^{\lambda}$.
    \item Provided the protocol is successful, $\algo C$ obtains $\vec v \rand \bit^{\lambda}$ and the receiver $\algo R$ obtains a $\lambda$-qubit state $\sigma$ (which depends on $\vec v$ and $\vec{\theta}$). Otherwise, $\algo C$ outputs $\bot$.
    \item $\algo C$ sends the classical strings $\vec c = \vec v \oplus \vec s \oplus \vec m$ and $\vec z = h(\vec y)$ to the receiver $\algo R$.
\end{enumerate}

\item $\mathsf{QCP}_{\CC,h}.\mathsf{Eval}(1^{\lambda}, \sigma ; x)$: takes as input $1^\lambda$, an alleged program $\sigma$, and a string $x \in \{0,1\}^{2 \lambda}$ (the input on which the program is to be evaluated), and runs the program as follows:
\begin{enumerate}
\item Parse the program as $\rho \otimes  \proj{\vec c} \otimes \proj{\vec z} \leftarrow \rho$.
\item Parse the input $\vec x$ as $(\vec s_x || \vec \theta_x)$ with $\vec s_{x},\vec \theta_{x} \in \bit^\lambda$.
    \item Output $\vec 0^\ell$, if $h(\vec x) \neq \vec z$. Else, if $h(\vec x) =\vec z$, apply Hadamards $H^{\theta_{x,1}} \otimes \dots \otimes H^{\theta_{x,2\lambda}}$ to $\rho$ and measure in the computational basis with outcome $\vec w$. Then, output $\vec m' =\vec w \oplus \vec s_x \oplus \vec c$.
\end{enumerate}
\end{itemize}
\end{longfbox}
\end{table}

\begin{proposition}[Correctness]
Let $\lambda \in \N$ be the security parameter and let $\mu =2 \lambda$ and $\ell = \lambda$. Let $h: \bit^\lambda \rightarrow \bit^\kappa$ be a hash function with $\kappa = \omega(\log \lambda)$. Then, the scheme
$\mathsf{QCP}_{\CC,h} = (\mathsf{Protect}, \mathsf{Eval})$ in \ref{protocol-CP-classical-client} is a correct QCP$_\text{CC}$ scheme for the class $\mathscr{P}_{\mu,\ell}$ in the QROM.
\end{proposition}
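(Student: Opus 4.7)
The plan is to reduce correctness to two case analyses, using \ref{prop:honest-prover-correctness} to control the RSP step and then arguing case-by-case on whether the input $\vec x$ matches the marked point $\vec y$. First I would observe that for an honest receiver, \ref{prop:honest-prover-correctness} guarantees that with probability $1-\negl(\lambda)$ the RSP protocol accepts and leaves the receiver holding exactly $\sigma = \bigotimes_{i=1}^{2\lambda} H^{\theta_i}|v_i\rangle\!\langle v_i| H^{\theta_i}$ while the client records the corresponding $\vec v$ and $\vec \theta$. Conditioned on this event, the program is
\[
\rho \;=\; \sigma \,\otimes\, |\vec r\rangle\!\langle \vec r| \,\otimes\, |\pi\rangle\!\langle \pi| \,\otimes\, |\vec u \oplus \vec m\rangle\!\langle \vec u \oplus \vec m|,
\]
with $\vec r = \vec v_0 \oplus \vec s_0$, $\vec u = \vec v_1 \oplus \vec s_1$, and $(\vec s \| \vec \theta) = \pi(\vec y)$. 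It remains to check that $\mathsf{Eval}(1^\lambda, \rho, \vec x) = P_{\vec y, \vec m}(\vec x)$ for every $\vec x \in \bit^{4\lambda}$.

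For $\vec x = \vec y$ the analysis is immediate: $\pi(\vec x) = (\vec s \| \vec \theta)$, so applying $H^{\vec \theta_x} = H^{\vec \theta}$ to $\sigma$ yields $|\vec v\rangle\!\langle \vec v|$, and the coherent prefix check asks whether the first $\lambda$ qubits equal $\vec r \oplus \vec s_{x,0} = (\vec v_0 \oplus \vec s_0)\oplus \vec s_0 = \vec v_0$, which they do. The subsequent standard-basis measurement returns $\vec w = \vec v$, and the output $\vec w_1 \oplus \vec s_{x,1} \oplus (\vec u \oplus \vec m) = \vec v_1 \oplus \vec s_1 \oplus \vec v_1 \oplus \vec s_1 \oplus \vec m = \vec m$ matches $P_{\vec y, \vec m}(\vec y)$.

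The slightly more delicate case is $\vec x \neq \vec y$, where I need to show the prefix check fails with overwhelming probability so Eval outputs $\vec 0^\ell$. Writing $\pi(\vec x) = (\vec s_x \| \vec \theta_x)$, the probability the check succeeds factorises as
\[
P(\pi) \;=\; \prod_{i=1}^{\lambda}\bigl|\langle r_i \oplus s_{x,0,i}|\, H^{\theta_{x,i}}H^{\theta_i}\,|v_i\rangle\bigr|^2,
\]
and each factor equals $1$ when $\theta_{x,i}=\theta_i$ and $s_{0,i}=s_{x,0,i}$, equals $0$ when $\theta_{x,i}=\theta_i$ but $s_{0,i}\neq s_{x,0,i}$, and equals $1/2$ otherwise. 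Treating each position as independent with $(\theta_{x,i}-\theta_i, s_{x,0,i}-s_{0,i})$ uniform on $\bit^2$ gives an expected per-position factor of $\tfrac14\cdot 1 + \tfrac14\cdot 0 + \tfrac12\cdot \tfrac12 = \tfrac12$, hence an expected success probability of $2^{-\lambda}$. The hard part, such as it is, is justifying this independence: here I would invoke the pairwise independence of $\pi$, which guarantees that for $\vec x\neq \vec y$ the string $\pi(\vec x)$ is uniform on $\bit^{4\lambda}\setminus\{\pi(\vec y)\}$, so its marginal on the $2\lambda$ coordinates $(\vec s_{x,0},\vec \theta_{x,[1..\lambda]})$ relevant to the check has total variation distance $O(2^{-2\lambda})$ from the uniform distribution on $\bit^{2\lambda}$. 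Consequently $\mathbb{E}_\pi[P(\pi)] = 2^{-\lambda} + \negl(\lambda)$, and combining with Markov's inequality and the earlier $\negl(\lambda)$ RSP error yields the overall correctness bound $\delta(\lambda) = 2^{-\lambda} + \negl(\lambda) = \negl(\lambda)$.
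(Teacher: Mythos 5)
Your proof is correct, and the $\vec x = \vec y$ case is handled in essentially the same way the paper does. But the $\vec x \neq \vec y$ case takes a genuinely different route from the paper's. The paper observes that the state $\sigma \otimes \proj{\vec r} \otimes \proj{\pi} \otimes \proj{\vec u \oplus \vec m}$ is exactly a ciphertext of the hybrid encryption scheme $\mathsf{HE}$ of \ref{cons:hybrid-encryption} (instantiated, via \ref{cons:generic-trf}, with the conjugate coding scheme), and then invokes \ref{cor:WKD_HE} abstractly: for $\vec x \neq \vec y$, the decryption with the wrong key $\vec x$ outputs $\bot$ except with negligible probability, by the WKD property. That WKD property is in turn proved (\ref{prop:security-of-WKD}) by combining pairwise independence of $\pi$ with a reduction to \textsf{IND}-security of the underlying scheme. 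You instead inline all of this for the concrete conjugate-coding instantiation, directly computing the success probability of the prefix check as a product $\prod_i$ of per-qubit factors and averaging over the (approximately uniform, by pairwise independence) marginal of $\pi(\vec x)$ on the $2\lambda$ relevant coordinates. Both arguments are valid. The paper's is modular — it carries over unchanged if the conjugate-coding scheme is replaced by any other \textsf{IND}-secure QECM scheme — while yours gives an explicit bound $2^{-\lambda} + \negl(\lambda)$ and does not need the detour through \textsf{IND}-security. One minor remark: the invocation of Markov's inequality at the end is superfluous. The failure probability in the correctness definition is an average over \emph{all} randomness, including the choice of $\pi$, so it equals $\mathbb{E}_\pi[P(\pi)] = 2^{-\lambda} + \negl(\lambda)$ directly; no tail bound on $P(\pi)$ is required.
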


\begin{proof}
To prove correctness, we show that for any $\lambda \in \mathbb{N}$, $P_{y,m} \in \mathscr{P}_{\mu,\ell}$ and any input $\vec x \in \bit^{2 \lambda}$ to $P_{y,m}$,
 $$\Pr\bigg[ \mathsf{QCP}_{\CC,h}.\mathsf{Eval}(1^\lambda,\sigma, \vec x) \neq P_{y,m}(\vec x) \, : \, \sigma \leftarrow \mathsf{QCP}_{\CC,h}.\mathsf{Protect}(\algo C(1^\lambda,d_{P_{y,m}}),\algo R(1^\lambda)) \bigg] \leq \negl(\lambda).$$
Note that in this scenario we can assume that the receiver $\algo R$ is honest throughout \ref{protocol-CP-classical-client}.
According to \ref{prop:honest-prover-correctness},
$\algo R$ is accepted in the interactive protocol $\mathsf{QCP}_{\CC,h}.\mathsf{Protect}(\algo C(1^\lambda,d_{P_{y,m}}),\algo R(1^\lambda))$ with probability negligibly close to $1$ in the parameter $\lambda$ 
(for parameter choices $n$ at most polynomial in $\lambda$ and $\delta$ at least inverse polynomial in $\lambda$). Provided that the protocol successful, the final state of $\algo R$ is given by
\begin{align}\label{eq:sigma-HE-ctxt}
\sigma =  \bigotimes_{i=1}^{\lambda} H^{\theta_i} \proj{v_i} H^{\theta_i}  \ot \proj{\vec c} \otimes \proj{\vec z}.
\end{align}
where $\vec v \rand \bit^\lambda$ is the random string recorded by the client $\algo C$ during the parallel RSP protocol, and where $\vec c = \vec v \oplus \vec s \oplus \vec m$ and $z= h(\vec y)$ with respect to $(\vec s||\vec \theta) \leftarrow \vec y$.

Let $\Sigma_h = (\KeyGen,\Enc,\Dec)$ be the QECM scheme with WKD in \ref{cons:generic-trf} instantiated with the conjugate coding encryption scheme in \ref{cons:conjugate-coding} and hash $h$. Further, let $\mathsf{HE}= (\KeyGen,\Enc,\Dec)$ be the hybrid encryption scheme in \ref{cons:hybrid-encryption} instantiated with $\Sigma_h$. Notice that the state $\sigma$ in \ref{eq:sigma-HE-ctxt} corresponds precisely to a ciphertext generated by $\mathsf{HE}.\Enc_{\vec y}(\vec m)$ with randomness $\vec u = \vec v \oplus \vec s$, where
\begin{align}
\mathsf{HE}.\Enc_{\vec y}(\vec m) = \Sigma_h.\Enc_{\vec y}(\vec u) \otimes \proj{\mathsf{OTP}.\Enc_{\vec u}(\vec m)}.
\end{align}
Note that, if $\vec x = \vec y$, the correctness of $\QCP_{\CC,h}$ follows immediately from the decryption correctness of $\mathsf{HE}$.
Else, if $\vec x \neq \vec y$, the correctness of $\QCP_{\CC,h}$ follows from the WKD property of $\mathsf{HE}$ (in particular, of $\Sigma_h$). Both properties therefore follow from \ref{cor:WKD_HE}.
This proves the claim.
\end{proof}

Let us now prove the security of our QCP$_\text{CC}$ scheme \ref{protocol-CP-classical-client} for the class of point functions $\mathscr{P}_{\mu,\ell}$ of the form $P_{y,m}$, where the marked strings $y \rand \bit^\mu$ and $m \rand \bit^\ell$ are chosen uniformly random.

\begin{proposition}[Security]
Let $\lambda \in \N$ be the security parameter and let $\mu = 2 \lambda$ and $\ell = \lambda$. Let $\mathscr{P}_{\mu,\ell}$ be the class of multi-bit point functions with input $\algo X = \bit^{\mu}$ output $\algo Y = \bit^{\ell}$. Let
$\algo D_{\algo X}$ be an arbitrary ensemble of challenge distributions over $\algo X$ and let $\algo D_{\mathscr{P}_{\mu,\ell}}$ be the ensemble of multi-bit point functions that samples both the marked input and the marked output uniformly at random. Let $h: \bit^\lambda \rightarrow \bit^\kappa$ be a hash function with $\kappa = \omega(\log \lambda)$.
Then, there exists $\gamma(\lambda) = 1/\poly(\lambda)$ such that the scheme
$\QCP_{\mathsf{CC},h} = (\mathsf{Protect}, \mathsf{Eval})$ in \ref{protocol-CP-classical-client} is a
 $(\algo D_{\mathscr{P}_{\mu,\ell}},\algo D_{\algo X},\gamma)$-secure QCP$_\text{CC}$ scheme for the class $\mathscr{P}_{\mu,\ell}$ in the QROM. In other words, for any (interactive) $\QPT$ algorithms $(\algo P,\algo A,\algo B,\algo C)$, it holds that
$$
\Pr \left[ \mathsf{PiracyExp}_{\algo D_{\mathscr{P}_{\mu,\ell}}, \algo D_{\algo X}}^{\QCP_{\CC,h}}\big(1^\lambda, (\algo P,\algo A,\algo B,\algo C)\big) =1\right] \leq p^{\setft{triv}}_{\algo D_{\mathscr{P}_{\mu,\ell}},\algo D_{\algo X}} + \gamma(\lambda).
$$

\end{proposition}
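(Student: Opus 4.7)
The plan is to reduce the piracy experiment for \ref{protocol-CP-classical-client} to the unclonable security of the underlying conjugate-coding hybrid encryption scheme, using our parallel RSP rigidity theorem to ``dequantise'' the preparation step. First, I would apply \ref{cor:conditional-state-non-abort}: conditioned on acceptance with probability $\pr{\top} \geq 1/\poly(\lambda)$, and after an efficient isometry $V$ applied by the adversary on its own registers (which is a cost-free modification, as in the proof of \ref{thm:unclonable-security-CC}), the joint state between the client's recorded $\vec v$ and the prover's quantum system is computationally $1/\poly(\lambda)$-close to
\begin{equation*}
\frac{1}{2^{2\lambda}} \sum_{\vec v} \proj{\vec v}_W \otimes \left(\bigotimes_{i=1}^{2\lambda} H^{\theta_i} \proj{v_i} H^{\theta_i}\right)_Q \otimes \alpha'_B \,,
\end{equation*}
where $\alpha'_B$ is an efficiently preparable state independent of $\vec y, \vec m$. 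Combined with the extra classical message $(\vec r, \pi, \vec u \oplus \vec m)$ that the client sends in step 4 of $\mathsf{QCP_\CC.Protect}$, the prover's state at the start of the piracy game is computationally indistinguishable from the state obtained by freshly sampling the randomness and handing over $\mathsf{HE}.\Enc_{\vec y}(\vec m) \otimes \alpha'_B$, where $\mathsf{HE}$ is the hybrid encryption scheme from \ref{cons:hybrid-encryption} instantiated with the WKD conjugate-coding scheme and keyed by $\vec y$.

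Next, I would argue that a successful pirate $(\algo P, \algo A, \algo B, \algo C)$ in the piracy experiment induces a cloning attack $(\widetilde{\algo A}, \widetilde{\algo B}, \widetilde{\algo C})$ against $\mathsf{HE}$. The fixed efficiently preparable auxiliary state $\alpha'_B$ can be absorbed into $\widetilde{\algo A}$ exactly as in the proof of \ref{thm:unclonable-security-CC}. The new decoders $\widetilde{\algo B}$ and $\widetilde{\algo C}$ receive the key $\vec y$, run $\mathsf{QCP_\CC.Eval}$ with $\vec y$ as the input, and output the result. Because $\mathsf{QCP_\CC.Eval}(\cdot, \vec y)$ on a correctly formed program returns the plaintext $\vec m$, whenever both $\algo B$ and $\algo C$ return $\vec m$ on challenge $(x_B, x_C) = (\vec y, \vec y)$, the induced cloning attack wins the cloning experiment for $\mathsf{HE}$.

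The main obstacle is converting a general piracy success into a clean cloning statement: on challenges with $x_B \neq \vec y$ or $x_C \neq \vec y$, the correct answer is $\vec 0^{\ell}$, and so the adversary's success on these challenges is largely explained by the trivial strategy (one party runs $\mathsf{QCP_\CC.Eval}$ honestly; the other outputs $\vec 0^\ell$ and is correct whenever $x \neq \vec y$). I would handle this by an averaging argument: since $\vec y \rand \bits^\mu$ is uniform and $\algo D_{\algo X}$ is independent of $\vec y$, the event $\{x_B = x_C = \vec y\}$ occurs with some probability $p_{\vec y, \vec y}$, while the ``off-diagonal'' events where at least one challenge differs from $\vec y$ contribute at most $p^{\setft{triv}}_{\algo D_{\mathscr{P}_{\mu,\ell}},\algo D_{\algo X}}$ by the WKD property (any advantage over guessing $\vec 0^\ell$ on the side where $x \neq \vec y$, combined with correctly answering $\vec m$ on the other side, would again induce a non-trivial cloning attack on $\mathsf{HE}$, now on the first-output register after rejection). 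Concretely, the WKD property of $\mathsf{HE}$ (\ref{cor:WKD_HE}) ensures that the only way to obtain $\vec m$ from a ciphertext is to use key $\vec y$, so both decoders' successes must trace back to a successful $\mathsf{HE}$ cloning.

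Finally, I would invoke \ref{thm:unclonable-security-CC} applied to $\mathsf{HE}$ (which itself reduces to \ref{thm_broadbent-lord} via the conjugate coding encryption structure with $n = 2\lambda$) to conclude that the probability of the induced cloning attack succeeding is at most $2^{-2\lambda + t(2\lambda)} + 1/\poly(\lambda) + \negl(\lambda)$. Combining with the trivial strategy contribution yields the desired bound
\begin{equation*}
\Pr\left[\mathsf{PiracyExp}_{\algo D_{\mathscr{P}_{\mu,\ell}}, \algo D_{\algo X}}^{\QCP_\CC}(1^\lambda, (\algo P, \algo A, \algo B, \algo C)) = 1\right] \leq p^{\setft{triv}}_{\algo D_{\mathscr{P}_{\mu,\ell}}, \algo D_{\algo X}} + \gamma(\lambda)
\end{equation*}
for some $\gamma(\lambda) = 1/\poly(\lambda)$, inherited from the parallel RSP error. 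The pairwise independence of $\pi$ is used along the way to ensure that $\pi(\vec y)$ is uniformly distributed, so that the induced key for $\mathsf{HE}$ is uniform as required by \ref{thm_broadbent-lord}, and \ref{lem:inv_pip} ensures that the distribution of $\vec y$ given $\pi(\vec y)$ matches the uniform distribution used in the unclonable security bound.
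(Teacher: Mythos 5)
Your high-level plan---reduce the diagonal challenge case to a cloning attack against an unclonable encryption scheme, using \ref{cor:conditional-state-non-abort} to replace the RSP execution by ideal BB84 states---matches the paper's. However, there are two genuine gaps.

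\paragraph{Handling the off-diagonal challenges.} Your claim that ``the `off-diagonal' events where at least one challenge differs from $\vec y$ contribute at most $p^{\setft{triv}}_{\algo D_{\mathscr{P}_{\mu,\ell}},\algo D_{\algo X}}$ by the WKD property'' is not correct, and the appeal to an ``averaging argument'' does not replace the actual step the paper needs. Consider an adversary $\algo A$ that simply runs the RSP protocol honestly to obtain two independent valid copy-protected programs (e.g.\ by re-running two instances) and hands one to each of $\algo B$ and $\algo C$; such an attack wins on \emph{every} challenge pair, so its contribution from off-diagonal events is $\Pr[(x_B,x_C)\neq(y,y)]$, which is strictly larger than $p^{\setft{triv}}$. (This particular attack is of course excluded by the rest of the proof, but it shows your bound fails as a stand-alone claim.) The paper instead uses a non-signalling union-bound argument between $\algo B$ and $\algo C$, namely
\begin{align*}
\Pr[\Adv\text{ wins}\mid x_B=y=x_C] \geq \Pr[\Adv\text{ wins}\mid x_B=y\neq x_C]+\Pr[\Adv\text{ wins}\mid x_B\neq y= x_C]-1\,,
\end{align*}
which rearranges cleanly to $\Pr[\Adv\text{ wins}] \leq p^{\setft{triv}} + 2\Pr[\Adv\text{ wins}\mid x_B=y=x_C]$. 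Notice the factor of $2$: the bound does not say off-diagonal events contribute at most $p^{\setft{triv}}$; it says the \emph{excess} over $p^{\setft{triv}}$ is controlled by the diagonal-case success, and that is what the cloning reduction must drive to $1/\poly(\lambda)$. The WKD property is used for correctness of the scheme, not in this case analysis.

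\paragraph{The reduction target and the key-threading.} You reduce to cloning of $\mathsf{HE}$ and then claim to ``invoke \ref{thm:unclonable-security-CC} applied to $\mathsf{HE}$.'' But \ref{thm:unclonable-security-CC} is proven only for the conjugate-coding scheme $\Sigma$ of \ref{cons:unclonable-encryption-classical-client}; it does not say anything about $\mathsf{HE}$ (or any other $\QECM_{\CC}$ scheme), and proving unclonable security of $\mathsf{HE}$ with a classical client would itself require the reduction you are sketching. The paper avoids this circularity by reducing directly to the cloning experiment for $\Sigma$: the challenger for $\Sigma$ supplies only part of the key $(\vec s_R,\vec\theta_R)$ and the corresponding $\lambda$-qubit ciphertext; the reduction adversary $\algo A'$ locally samples $\vec s_L,\vec\theta_L,\vec r,\vec t,\pi$ and the BB84 states on $Q_L$, assembles the full $\QCP$ program, and constructs the decoders $\algo B'$ and $\algo C'$ that on input a key $(\vec s_R,\vec\theta_R)$ recompute $\vec y=\pi^{-1}(\vec s_L\|\vec s_R\|\vec\theta_L\|\vec\theta_R)$ (uniform by \ref{lem:inv_pip}), run the \emph{adversary's} decoders $\algo B$ and $\algo C$ on input $\vec y$, and XOR out $\vec t$. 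Your description of the decoders as ``running $\mathsf{QCP_\CC.Eval}$ with $\vec y$ as input'' is off: the reduction must run the original adversary's $\algo B,\algo C$ on challenge $\vec y$, not the honest evaluation procedure, and must post-process the output ($\vec m\mapsto \vec m\oplus\vec t$) to recover the message that the $\Sigma$-challenger expects. Without this explicit threading of partial keys and randomness through $\pi^{-1}$, the reduction is not complete.
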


\begin{proof}
Let $\Adv = (\algo A,\algo B,\algo C)$ denote the adversary in the challenge phase of the piracy experiment $\mathsf{PirExp}_{\algo D_{\mathscr{P}_{\mu,\ell}}, \algo D_{\algo X}}^{\QCP_{\CC,h}}$. We consider two cases, namely when $p^{\setft{triv}}_{\algo D_{\mathscr{P}_{\mu,\ell}},\algo D_{\algo X}}=1$ and when $p^{\setft{triv}}_{\algo D_{\mathscr{P}_{\mu,\ell}},\algo D_{\algo X}} <1$ (depending on the challenge distribution $\algo D_{\algo X}$). In the former case, the scheme is trivially secure by definition and we are done. Hence, we will assume that $p^{\setft{triv}}_{\algo D_{\mathscr{P}_{\mu,\ell}},\algo D_{\algo X}}<1$ for the remainder of the proof. Note that in this case, the distribution $\algo D_{\algo X}$ has non-zero weight on the marked input $y$ of a random point function $P_{y,m}$.

Let $(x_B,x_C) \leftarrow \algo D_{\algo X}$ denote the inputs received by the algorithms $\algo B$ and $\algo C$ during the challenge phase, and let $y_B$ and $y_C$ be the outputs returned by $\algo B$ and $\algo C$, respectively.
We can express the probability that $\Adv$ succeeds at the challenge phase (i.e., that $y_B = P_{y,m}(x_B)$ and $y_C=P_{y,m}(x_C)$) as follows:
\begin{align*}
\Pr[ \Adv \text{ wins}]
&\! =\! \Pr[ \Adv \text{ wins} \,|\, x_B \!\neq\! y \!\neq\! x_C] \!\cdot\!\Pr[x_B \!\neq\! y \!\neq\! x_C]\!+\!\Pr[ \Adv \text{ wins} \,|\, x_B \!=\! y \!\neq\! x_C] \!\cdot\!\Pr[x_B \!=\! y \!\neq\! x_C]\nonumber\\
&\, \!+\!\Pr[ \Adv \text{ wins} \,|\, x_B \!\neq\! y \!=\! x_C] \!\cdot\!\Pr[x_B \!\neq\! y \!=\! x_C]\!+\!\Pr[ \Adv \text{ wins} \,|\, x_B \!=\! y \!=\! x_C] \!\cdot\!\Pr[x_B \!=\! y \!=\! x_C].
\end{align*}
Without loss of generality, we assume that $\Pr[x_B = y \!\neq\! x_C] \leq \Pr[x_B \!\neq\! y = x_C]$. Hence,
\begin{align}
\Pr[ \Adv \text{ wins}] &\leq  \Pr[ \Adv \text{ wins} \,|\, x_B \!\neq\! y \!\neq\! x_C] \cdot \Pr[x_B \!\neq\! y \!\neq\! x_C]\nonumber\\
& +\big(\Pr[ \Adv \text{ wins} \,|\, x_B = y \!\neq\! x_C]+\Pr[ \Adv \text{ wins} \,|\, x_B \!\neq\! y = x_C]\big) \cdot \Pr[x_B \!\neq\! y = x_C]\nonumber\\
& +\Pr[ \Adv \text{ wins} \,|\, x_B = y = x_C] \cdot \Pr[x_B = y = x_C]. \label{eq:A_wins}
\end{align}
Let us now state the following simple inequality. By first applying the union bound and then using that $\algo B$ and $\algo C$ are non-signalling, we find that:
\begin{align}
&\Pr[ \Adv \text{ wins} \,|\, x_B = y =  x_C]\nonumber\\
&= \Pr[\algo B \text{ succeeds} \land \algo C \text{ succeeds} \,|\, x_B = y = x_C]\nonumber\\
&\geq \Pr[\algo B \text{ succeeds} \,|\, x_B = y = x_C] + \Pr[ \algo C \text{ succeeds} \,|\, x_B = y = x_C] -1\nonumber\\
&= \Pr[\algo B \text{ succeeds} \,|\, x_B = y \neq x_C] + \Pr[ \algo C \text{ succeeds} \,|\, x_B \neq y = x_C] -1\nonumber\\
&\geq \Pr[\Adv \text{ wins} \,|\, x_B = y \neq x_C] + \Pr[ \Adv \text{ wins} \,|\, x_B \neq y = x_C] -1.
\end{align}
Plugging this into~\ref{eq:A_wins}, we obtain the following upper bound:
\begin{align}
\Pr[\Adv \text{ wins}] & \leq  \Pr[ \Adv \text{ wins} \,|\, x_B \neq y \neq x_B] \cdot \Pr[x_B \neq y \neq x_C]\nonumber\\
& +\big(1 + \Pr[ \Adv \text{ wins} \,|\, x_B = y =  x_C]\big) \cdot \Pr[x_B \neq y = x_C]\nonumber\\
& +\Pr[ \Adv \text{ wins} \,|\, x_B = y = x_C] \cdot \Pr[x_B = y = x_C]\nonumber\\
& \leq  \Pr[x_B \neq y \neq x_C] + \Pr[x_B \neq y = x_C]+2\Pr[ \Adv \text{ wins} \,|\, x_B = y = x_C]\nonumber\\
&= p^{\setft{triv}}_{\algo D_{\mathscr{P}_{\mu,\ell}},\algo D_{\algo X}} + 2\Pr[ \Adv \text{ wins} \,|\, x_B = y = x_C].\label{eq:A_wins3}
\end{align}
In the last line, we used the assumption that $\Pr[x_B = y \neq x_C] \leq \Pr[x_B \neq y = x_C]$ together with the following simple identity for the trivial guessing probability:
$$
p^{\setft{triv}}_{\algo D_{\mathscr{P}_{\mu,\ell}},\algo D_{\algo X}} = \Pr[x_B \neq y \neq x_C] + \max\big\{\Pr[x_B \neq y = x_C],\Pr[x_B = y \neq x_C]\big\}.
$$
We complete the proof by showing that $\Pr[ \Adv \text{ wins} \,|\, x_B = y = x_C] \leq \eps(\lambda)+\negl(\lambda)$, for some function $\eps(\lambda) = 1/\poly(\lambda)$. This implies that
\begin{align}
\Pr[\Adv \text{ wins}] \leq p^{\setft{triv}}_{\algo D_{\mathscr{P}_{\mu,\ell}},\algo D_{\algo X}} + \eps(\lambda) + \negl(\lambda).\label{eq:CP_security_bound}
\end{align}

Let $\sigma_{|\top} \leftarrow \QCP_{\CC,h}.\mathsf{Protect}(\algo C(1^\lambda,d_{P_{y,m}}),\algo R(1^\lambda))$ denote the final state of the interactive protocol conditioned on the event that challenger outputs the flag $\bot$. As in the proof of \ref{thm:unclonable-security-CC}, we assume that the acceptance probability of $\algo P$ in the interactive protocol satisfies $\Pr[\top]\geq 1/\poly(\lambda)$. Using a similar careful analysis as in the proof of \ref{thm:unclonable-security-CC} we can argue that, without loss of generality (and up to an isometry), the final state of $\algo R$ is $\eps(\lambda)=1/\poly(\lambda)$-computationally indistinguishable from
\begin{align}\label{eq:final-state-CP}
\sigma =  \bigotimes_{i=1}^{\lambda} H^{\theta_i} \proj{v_i} H^{\theta_i}  \ot \proj{\vec c} \otimes \proj{\vec z},
\end{align}
where $\vec v \rand \bit^\lambda$ is the random string recorded by the client $\algo C$ during the parallel RSP protocol, and where $\vec c = \vec v \oplus \vec s \oplus \vec m$ and $z= h(\vec y)$ with respect to $(\vec s||\vec \theta) \leftarrow \vec y$, for a random marked input $\vec y \in \rand \bit^{2\lambda}$.

Let $\Sigma_h = (\KeyGen,\Enc,\Dec)$ be the QECM scheme with WKD in \ref{cons:generic-trf} instantiated with the conjugate coding encryption scheme in \ref{cons:conjugate-coding} and hash $h$. Further, let $\mathsf{HE}= (\KeyGen,\Enc,\Dec)$ be the hybrid encryption scheme in \ref{cons:hybrid-encryption} instantiated with $\Sigma_h$.
To complete the proof, we now relate the success probability of the adversary $(\algo P,\algo A,\algo B,\algo C)$ against $\QCP_\CC$ in the experiment $\mathsf{PirExp}_{\algo D_{\mathscr{P}_{\mu,\ell}}, \algo D_{\algo X}}^{\QCP_\CC}$ with challenge pair $x_B =y$ and $x_C=y$ to a particular adversary $(\algo A',\algo B',\algo C')$ in the experiment $\mathsf{CloneExp}_{\mathsf{HE}}$.

We now proceed as in the proof of \ref{thm:unclonable-security-CC}. Using the indistinguishability property in \ref{eq:final-state-CP} together with \ref{cor:WKD_HE}, we find that there exists $\eps(\lambda) = 1/\poly(\lambda)$ such that
\begin{align*}
&\Pr[ \Adv \text{ wins} \,|\, x_B = y = x_C]\\
&=\underset{\vec m}{\mathbb{E}}\underset{\vec y}{\mathbb{E}}\,\, \Tr\left[(\proj{\vec m} \otimes \proj{\vec m})(\mathcal{B}_{\vec y} \otimes \mathcal{C}_{\vec y}) \circ \mathcal{A} \circ \sigma_{|\top} \right]\\
&\leq \underset{\vec m}{\mathbb{E}} \underset{\vec s}{\mathbb{E}}\underset{\vec \theta}{\mathbb{E}} \underset{\vec v}{\mathbb{E}} \,\, \Tr\Bigg[(\proj{\vec m} \otimes \proj{\vec m})(\mathcal{B'}_{(\vec s ,\vec \theta)} \otimes \mathcal{C'}_{(\vec s,\vec \theta)}) \,\circ \\
& \quad\quad\quad\quad\quad\mathcal{A'} \left( \bigotimes_{i=1}^{\lambda} H^{\theta_i} \proj{v_i} H^{\theta_i}  \otimes \proj{\vec v \oplus \vec s \oplus \vec m} \ot \proj{h(\vec s || \vec \theta)} \right) \Bigg] + \eps(\lambda)\\
&= \Pr \left[\mathsf{CloneExp}_{\mathsf{HE}}\big(1^\lambda, (\algo A',\algo B',\algo C')\big) =1 \right] + \eps(\lambda)\\
&\leq 2^{-\lambda + t(\lambda)} + \eps(\lambda) + \negl(\lambda),
\end{align*}
where the last line follows from \ref{cor:WKD_HE} with $t(\lambda) = \lambda \log(1 + 1/\sqrt{2})$. Putting everything together, we find that there exists $\gamma = 1/\poly(\lambda)$ such that the advantage in the piracy experiment is at most
\begin{align*}
\Pr \left[ \mathsf{PiracyExp}_{\algo D_{\mathscr{P}_{\mu,\ell}}, \algo D_{\algo X}}^{\QCP_{\CC,h}}\big(1^\lambda, (\algo P,\algo A,\algo B,\algo C)\big) =1\right]  \leq p^{\setft{triv}}_{\algo D_{\mathscr{P}_{\mu,\ell}},\algo D_{\algo X}} + \gamma(\lambda).
\end{align*}
\end{proof}

\section{Quantum computing on encrypted data} \label{sec:qced}

The next application of parallel RSP we consider is quantum computing on encrypted data. At a high level, this is a protocol between a weakly-quantum client, $\mathcal{C}$, and a quantum server, $\mathcal{S}$. By ``weakly-quantum'' we mean that the client can perform some quantum operations (preparing, storing and applying unitaries on quantum states) but does not have the ability to evaluate a generic quantum circuit, $C$. For this reason, it will interact with $\mathcal{S}$ in order to delegate the application of $C$ on some input quantum state, $\rho_{\mathcal{C}}$. Before doing so, however, the client will encrypt $\rho_{\mathcal{C}}$, yielding a quantum ciphertext $\ct$. In addition, it will also generate a quantum state, $\pi$, which will aid in evaluating the circuit $C$ on the encrypted input, but does not require full quantum capabilities to prepare. Both $\ct$ and $\pi$ are sent to the server and the two parties exchange classical (or quantum) messages for a number of rounds that is polynomial in the size of the input. At the end of the interaction, the client obtains a state $\ct^*$, which, ideally, is an encryption of $C \rho_{\mathcal{C}} C^{\dagger}$. Decrypting this state, should result in the client's desired output.
The fundamental property of such a scheme is that, throughout the protocol, the server learns nothing about the client's input state, $\rho_{\mathcal{C}}$ (or the output state). This should be true even if the server deviates from the client's instructions and behaves maliciously.

Formalizing this intuition, we have the following:

\begin{definition}[Quantum Computing on Encrypted Data]\label{def:qc-enc}
A protocol for quantum computing on encrypted data is a tuple of $\QPT$ algorithms $\QCED = (\Setup, \Enc, \Evaluate, \Dec)$, where

\begin{description}
\item $\QCED.\Setup(C, 1^n) \to (\sk=(\sk_{in}, \sk_{comp}), \pi)$: takes as input a quantum circuit $C$, acting on $n$ qubits, and outputs two classical keys $\sk_{in}, \sk_{comp} \in \{0, 1\}^{\poly(n)}$, jointly denoted as $\sk$, and a quantum state $\pi$, on $\poly(n)$ qubits.

\item $\QCED.\Enc(\sk, \rho_{\mathcal{C}}) \to \ct$: takes as input a key $\sk$ and an $n$-qubit state $\rho_{\mathcal{C}}$ and outputs an $n$-qubit state, $\ct$, representing an encrypted version of $\rho_{\mathcal{C}}$.

\item $\QCED.\Evaluate(\mathcal{C}(C, \sk, \ct, \pi), \mathcal{S}(\rho_{\mathcal{S}})) \to 
(\sk^*, \ct^*, \tau_{\mathcal{S}})$: is an interactive protocol between a client, $\mathcal{C}$, and a server, $\mathcal{S}$. The client's input consists of the circuit $C$, a classical key $\sk$, and two quantum states $\ct$ and $\pi$. The server's input is the quantum state\footnote{While in what follows we take $\rho_{\mathcal{S}}$ to be $\proj{0}^{\otimes \poly(n)}$, it should be noted that $\rho_{\mathcal{S}}$ can in principle be any state that depends only on $n$. It can be thought of as a \emph{quantum advice} state of the prover (in the sense of~\cite{nishimura2004polynomial}), so that the overall security of the protocol holds against non-uniform adversaries.} $\rho_{\mathcal{S}}$. At the end of the interaction, the client obtains an output key $\sk^*$ and the $n$-qubit quantum state $\ct^*$. The server obtains the quantum state $\tau_{\mathcal{S}}$.

\item $\QCED.\Dec(\sk^*, \ct^*) \to \tau_{\mathcal{C}}$: takes as input a key, $\sk^*$ and a state $\ct^*$ and either outputs an $n$-qubit state $\tau_{\mathcal{C}}$ or sets $\tau_\mathcal{C}$ to an abort symbol $\proj{\bot}$.
\end{description}

We now define the following properties associated with a $\QCED$ protocol:
\begin{enumerate}
    \item \emph{$\delta$-correctness:} for every quantum circuit $C$ and every $n$-qubit client input $\rho_{\mathcal{C}}$, provided the server follows the instructions of the interactive protocol, $\QCED.\Evaluate$, we have that
    $$\| \QCED.\Dec(\sk^*, \ct^*) - C \rho_{\mathcal{C}} C^{\dagger} \|_{1} \leq \delta(n)$$
where,    
	$$ (\sk^*, \ct^*) \leftarrow \QCED.\Evaluate \left( \mathcal{C}(C, \sk, \ct, \pi), \mathcal{S}\left( \ketbra{0}{0}^{\otimes \poly(n)} \right) \right)$$
	$$ \ct \leftarrow \QCED.\Enc(\sk, \rho_{\mathcal{C}})$$
	$$ (\sk, \pi) \leftarrow \QCED.\Setup(C, 1^n) $$
    \item \emph{$\eps$-security:} for each quantum circuit $C$ and each QPT server $\mathcal{S}$, there exists a QPT algorithm known as a simulator, denoted $\Sim$, such that for every $n$-qubit client input $\rho_{\mathcal{C}}$ and every server input $\rho_\mS$ it is the case that
    $$ \| \Sim(\rho_\mS) - \QCED.\Evaluate(\mathcal{C}(C, \sk, \ct, \pi), \mathcal{S}(\rho_\mS)) \|_1 \leq \eps(n)$$
    with
  	$$ \ct \leftarrow \QCED.\Enc(\sk, \rho_{\mathcal{C}})$$
	$$ (\sk, \pi) \leftarrow \QCED.\Setup(C, 1^n) $$
	where $\Sim$ and $\QCED.\Evaluate$ are CPTP maps of the form $\rho_{\mathcal{S}} \mapsto (\ct^*, \tau_{\mathcal{S}})$, mapping states from the server's input register to its output space (which comprises the quantum ciphertext returned to the client, $\ct^*$, and any side information the server retains, $\tau_{\mathcal{S}}$)\footnote{Note that the states $\ct^*$ and $\tau_{\mathcal{S}}$ can be entangled.}.
    
\end{enumerate}
\end{definition}

A number of $\QCED$ protocols have been proposed~\cite{arrighi2006blind, broadbent2009universal, fitzsimons2017private}. Most of these have as a primary objective to not only hide the input to the computation $C$, but to also hide the computation itself. For this reason, such protocols are referred to as \emph{blind quantum computing} protocols.
Of course, it should be clear that a $\QCED$ protocol, as defined above, also allows for hiding of the computation $C$. This is done by delegating to the server a universal circuit $C_U$ and with the encrypted input consisting of not just the client's state $\rho_{\mathcal{C}}$ but also a description of the circuit $C$, to be applied on $\rho_{\mathcal{C}}$.

It should also be noted that the definition of a $\QCED$ can be easily satisfied by a protocol in which the client performs the quantum computation itself and never interacts with the server. Naturally, this requires the client to have the capability of performing universal quantum computations. Instead, the $\QCED$ protocols that have been proposed are such that the client only needs to prepare or measure single qubit states and otherwise has no quantum capabilities. 
Here we focus on one such protocol from~\cite{broadbent2015delegating}. For a formal description of the protocol, we refer to~\cite{broadbent2015delegating}.
For our purposes, the important feature of this protocol that we use is that the client encrypts its input using a \emph{quantum one-time pad} (QOTP) and the state $\pi$ it generates is a tensor product of BB84 states. More formally:

\begin{theorem}[\cite{broadbent2015delegating}] \label{thm:broadbent_proto}
There exists a $(\delta, \eps)$-$\QCED$ protocol, with $\delta(n) = \eps(n) = 0$, for which:
\begin{enumerate}
\item[1.] $\QCED.\Setup(C, 1^n) \to ((\sk_{in}, \sk_{comp}), \pi)$ with 
 $$\sk_{in} = (\vec{a}, \vec{b}), \;\; \vec{a}, \vec{b} \rand \bit^{n}$$ 
 $$\sk_{comp} = (\vec{v}, \vec{\theta}), \;\; \vec{v}, \vec{\theta} \rand \bit^{|C|}$$
 $$\pi =  \bigotimes_{i=1}^{|C|} H^{\theta_i} \proj{v_i} H^{\theta_i}.$$
 where $|C|$ denotes the size, or number of gates\footnote{Strictly speaking, in the protocol of~\cite{broadbent2015delegating}, the length of $\sk_{comp}$ and the number of qubits in $\pi$ are equal to the number of $T$ gates in the circuit $C$.}, in $|C|$.
\item[2.] $\QCED.\Enc(((\vec{a}, \vec{b}), \sk_{comp}), \rho_{\mathcal{C}}) \to \ct$ with
 $$ \ct =  \sigma_{X}(\vec{a}) \sigma_{Z}(\vec{b}) \; \rho_{\mathcal{C}} \; \sigma_{Z}(\vec{b}) \sigma_{X}(\vec{a}).$$
\item[3.] $\QCED.\Dec(\sk^*, \ct^*) \to \sigma_C$ with 
 $$ \sigma_C =\sigma_{X}(\vec{a}^*) \sigma_{Z}(\vec{b}^*) \; \ct^{*} \; \sigma_{Z}(\vec{b}^*) \sigma_{X}(\vec{a}^*),$$
 whenever $\sk^* = (((\vec{a}^*, \vec{b}^*), \sk^*_{comp})$ and otherwise the client sets $\sigma_{\mathcal{C}} = \proj{\bot}$.
\end{enumerate}
\end{theorem}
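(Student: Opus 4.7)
The plan is to verify that Broadbent's protocol from \cite{broadbent2015delegating}, with minor cosmetic adjustments, satisfies our \ref{def:qc-enc} with the specified structure for $\Setup$, $\Enc$, and $\Dec$, and with $\delta = \eps = 0$. The bulk of the work is really a translation of the known result into our simulation-based formulation, so I would proceed in three steps.

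First, I would specify the missing piece, the $\Evaluate$ procedure. The client's state $\rho_{\mathcal{C}}$ has been encrypted via a quantum one-time pad (QOTP) $\sigma_X(\vec a)\sigma_Z(\vec b)$, and the server processes the circuit $C$ gate-by-gate. Clifford gates can be applied directly to the QOTP-encrypted state: since the Clifford group normalises the Pauli group, a Clifford $U$ pushes through the encryption as $U\sigma_X(\vec a)\sigma_Z(\vec b) = \sigma_X(\vec a')\sigma_Z(\vec b')U$, so the server just updates its register while the client updates its classical key $(\vec a, \vec b)$ with the appropriately propagated Paulis. $T$ gates are the nontrivial case, and the BB84 states in $\pi$ serve as auxiliary ``magic'' resources: for each $T$ gate the server consumes one state $H^{\theta_i}\ket{v_i}$ via a gate-teleportation gadget in which it performs a controlled operation with the ancilla, measures, and sends one classical bit back to the client. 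The client combines this measurement outcome with its knowledge of $(v_i, \theta_i)$ to determine whether a Clifford correction is needed, and if so, updates the accumulated QOTP key (and possibly instructs the server on a subsequent Clifford) accordingly. After all gates are processed, the client's final $\sk^* = (\vec a^*, \vec b^*, \sk^*_{comp})$ is precisely the tracked QOTP key.

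Second, for $\delta$-correctness with $\delta = 0$, I would argue by induction on the gates of $C$ that after processing the $j$-th gate, the server's register contains $\sigma_X(\vec a^{(j)})\sigma_Z(\vec b^{(j)}) C_j \cdots C_1 \rho_{\mathcal{C}} C_1^\dagger \cdots C_j^\dagger \sigma_Z(\vec b^{(j)})\sigma_X(\vec a^{(j)})$ for some keys $(\vec a^{(j)}, \vec b^{(j)})$ determined by the client. The base case is the QOTP encryption; the inductive step for Clifford gates is immediate from Pauli propagation, and for $T$ gates uses exact correctness of the gate-teleportation gadget on BB84 auxiliary states. Since $\Dec$ applies exactly the inverse QOTP with key $(\vec a^*, \vec b^*) = (\vec a^{(|C|)}, \vec b^{(|C|)})$, decryption recovers $C\rho_{\mathcal{C}} C^\dagger$ exactly.

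Third, for $\eps$-security with $\eps = 0$, I would exhibit an explicit simulator $\Sim$ that takes only $\rho_{\mathcal{S}}$ and prepares, in place of the client, a maximally mixed $n$-qubit state (as the ``ciphertext''), a maximally mixed $|C|$-qubit state (as ``$\pi$''), and uniformly random classical messages in every subsequent round of the protocol. The equality of distributions follows because: (i) averaged over uniform $(\vec a, \vec b)$, the state $\sigma_X(\vec a)\sigma_Z(\vec b)\rho_{\mathcal{C}}\sigma_Z(\vec b)\sigma_X(\vec a)$ is exactly the maximally mixed state, independent of $\rho_{\mathcal{C}}$; (ii) averaged over uniform $(\vec v, \vec \theta)$, each BB84 state in $\pi$ is exactly the maximally mixed single-qubit state; and (iii) the classical messages sent by the client during $\Evaluate$ (the corrections associated with $T$-gate gadgets and any basis updates) are one-time-pad-masked by fresh randomness from $(\vec v, \vec \theta)$, so each is uniform and independent of the input from the server's marginal view. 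This exactly matches what $\Sim$ produces, giving trace distance $0$.

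The main obstacle is bookkeeping rather than conceptual: one has to verify carefully that every classical round-message in $\Evaluate$ is indeed masked by a fresh, unused portion of $(\vec v, \vec \theta)$ so that the simulator's uniform messages are information-theoretically identical to the real ones, and dually that the client can always recover $(\vec a^*, \vec b^*)$ from its own randomness plus the server's messages. Both checks reduce to Pauli propagation through the circuit and are spelled out in \cite{broadbent2015delegating}; we invoke them rather than redo them.
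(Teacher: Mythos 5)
The paper itself offers no proof of this statement: it is stated as \ref{thm:broadbent_proto} with a direct citation to Broadbent's paper, and the surrounding text explicitly defers the formal description and analysis to that reference. So there is no ``paper's own proof'' to compare against---the paper simply invokes the external result. Your proposal is, in effect, a reconstruction of what the cited proof establishes, and it is the right reconstruction: specify $\Evaluate$ by Pauli-key tracking through Clifford gates and a $T$-gadget consuming one auxiliary qubit per $T$ gate, prove $\delta=0$ correctness by induction over gates (Pauli propagation plus exact correctness of the gadget), and prove $\eps=0$ security by exhibiting a simulator that feeds the server maximally mixed states in place of $\ct$ and $\pi$ and fresh uniform bits in place of the client's classical messages. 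This is exactly the structure of Broadbent's blindness argument, and your observation that the simulator's inputs to the server are identically distributed to the real ones (hence the joint output $(\ct^*, \tau_{\mathcal{S}})$ of the server's CPTP map is identical) is the key point.

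Two small remarks. First, on the security step you frame the argument as ``the server's view is independent of the input,'' but the definition asks you to simulate the joint state $(\ct^*, \tau_{\mathcal{S}})$, where $\ct^*$ is the (encrypted) register returned to the client. Your argument does cover this---identical inputs to a fixed CPTP map yield identical joint outputs---but it is worth saying explicitly rather than leaving it implicit in ``exactly matches what $\Sim$ produces.'' Second, you inherit from the theorem statement the description of the auxiliary states as BB84 states $H^{\theta_i}\proj{v_i}H^{\theta_i}$; the exact family of auxiliary states and the precise $T$-gadget that makes them work are details you correctly defer to the reference, which is appropriate since the point of the exercise is not to re-derive Broadbent's gadget but to cast the known protocol in the notation of \ref{def:qc-enc}.

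Overall: correct, and essentially the same approach as the paper (which is to say, invoke Broadbent), with the added value that you spell out what the invoked proof actually shows.
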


\noindent For the special case in which the client's input (the state $\rho_{\mathcal{C}}$) is entirely classical, note that it needs to prepare only BB84 states and the quantum one-time pad becomes a classical one-time pad. This will be the case of interest for us when switching to a completely classical client in the next subsection.

The security achieved by schemes for $\QCED$ is known as \emph{simulation security}, owing its name to the existence of the simulator, $\textsf{Sim}$, which reproduces the behavior of $\mathcal{S}$. This notion of security is equivalent to \emph{composable security}~\cite{canetti2001universally, maurer2011abstract}, allowing for the protocol to be composed in sequence or in parallel with other protocols.  

Let us also consider another version of security that is standard in the cryptography literature, namely that of indistinguishability under chosen-plaintext-attacks: 

\begin{definition}[(Adaptive) $\eps$-\textsf{IND-CPA}-security of $\QCED$] \label{def:indcpaqced}
Let $\QCED = (\Setup, \Enc, \Evaluate, \Dec)$ be a protocol for quantum computing on encrypted data. For some quantum circuit $C$ acting on $n$ qubits, consider the following security experiment between a computationally unbounded adversary $\algo A$ and a referee, $\algo R$:

\begin{enumerate}
\item[1.] $\algo A$ generates two $n$-qubit quantum states $\rho_0$ and $\rho_1$. It sends these states to the referee.
\item[2.] The referee chooses $b \rand \bit$ and keeps $\rho_b$. It then performs $\QCED.\Setup(C, 1^n) \to (\sk, \pi)$ and $\QCED.\Enc(\sk, \rho_b) \to \ct$.
\item[3.] The referee and $\algo A$ engage in the interactive protocol defined by $\QCED.\Evaluate(\mathcal{R}, \mathcal{A})$.
\item[4.] The adversary $\algo A$ outputs a bit $b' \in \bit$.
\end{enumerate} 

It should be the case that
\begin{equation}
\pr { b' = b } \leq \frac{1}{2} + \eps(n)
\end{equation}
where $\eps(n)$ is referred to as the adversary's advantage in the security experiment.

\end{definition}

Note that in the above definition the adversary is assumed to be computationally unbounded. When we transition to the setting of poly-time-bounded adversaries, we can simply modify this definition so as to consider only $\QPT$ adversaries $\algo A$.
It should be clear that simulation security is the stronger notion of security, in the sense that simulation security implies \textsf{IND-CPA}-security, while the converse is not true. However, for the purpose of designing quantum protocols that use only classical communication, we will use \textsf{IND-CPA}-security, as in the previous schemes. The reason for this stems from the difficulties in achieving simulation security in the setting where the client and server interact classically, as explained in ~\cite{rsp,badertscher2020security}.
However, we expect that with ideas from~\cite{rsp} and the additional assumption of a measurement buffer introduced in that paper, one could strengthen our security proof to satisfy the simulation based criterion.

\subsection{Quantum computing on encrypted data with a classical client}

We start by defining a $\QCED$ protocol that only involves classical communication between the client and the server (and where, additionally, the client is entirely classical).

\begin{definition}[Quantum Computing on Encrypted Data with a Classical Client]\label{def:qcedcc}
A protocol for quantum computing on encrypted data with a classical client is a tuple of polynomial-time classical and quantum algorithms $\QCED = (\Setup, \StatePrep, \Evaluate, \Dec)$, where

\begin{description}
\item $\QCEDCC.\Setup(C, 1^n) \to (\sk, \pk)$: takes as input a quantum circuit $C$, acting on $n$ qubits and with $|C| = \poly(n)$, and outputs two classical keys $\sk, \pk \in \{0, 1\}^{\poly(n)}$, referred to as the secret and public key, respectively.

\item $\QCEDCC.\StatePrep(\mathcal{C}(\sk, \pk), \mathcal{S}(\mathcal{\pk})) \to (\tau, \sigma_{\mathcal{S}})$: is an interactive protocol between a $\PPT$ client, $\mathcal{C}$, and a $\QPT$ server $\mathcal{S}$. The client's input consists of a secret key $\sk$ and a public key $\pk$. The server's input is the public key $\pk$. At the end of the interaction both parties either obtain the classical transcript $\tau \in \bit^{\poly(n)}$ and the server also obtains the quantum state $\sigma_{\mathcal{S}}$ or the client aborts.

\item $\QCEDCC.\Evaluate(\mathcal{C}(C, m_{\mathcal{C}}, \sk, \pk, \tau), \mathcal{S}(\pk, \tau, \sigma_{\mathcal{S}})) \to 
(\sk^*, \ct^*, \gamma_{\mathcal{S}})$: is an interactive protocol between a $\PPT$ client, $\mathcal{C}$, and a $\QPT$ server, $\mathcal{S}$. The client's input consists of the circuit $C$, an input for that circuit denoted as the string $m_{\mathcal{C}} \in \bit^n$, the secret key $\sk$, the public key $\pk$ and the classical transcript $\tau$. The server's input is the quantum state $\sigma_{\mathcal{S}}$ as well as the public key $\pk$ and the transcript $\tau$. At the end of the interaction, the client obtains an output key $\sk^* \in \bit^{\poly(n)}$ and a classical ciphertext $\ct^* \in \bit^{\poly(n)}$. The server obtains the quantum state $\gamma_{\mathcal{S}}$.

\item $\QCEDCC.\Dec(\sk^*, \ct^*) \to r$: takes as input a key, $\sk^*$ and ciphertext $\ct^*$ and outputs the string $r \in \bit^{\poly(n)}$. If the string $\ct^*$ is not of the appropriate size, the client aborts.
\end{description}

We now define the following properties associated with a $\QCEDCC$ protocol:
\begin{enumerate}
    \item \emph{$\delta$-correctness:} for every quantum circuit $C$, with $|C| = \poly(n)$, and every input $m_{\mathcal{C}} \in \bit^n$, provided the server follows the instructions of the interactive protocol, $\QCEDCC.\Evaluate$, we have that
    $$\TVD [ \QCEDCC.\Dec(\sk^*, \ct^*), M(C \ket{m_{\mathcal{C}}}\bra{m_{\mathcal{C}}} C^{\dagger}) ] \leq \delta(n)$$
where $M(\cdot)$ denotes the process of measuring a quantum state in the computational basis and where,    
\begin{align*}
(\sk^*, \ct^*) &\leftarrow \QCEDCC.\Evaluate
(\mathcal{C}(C, m_{\mathcal{C}}, \sk, \pk, \tau), \mathcal{S}(\pk, \tau, \sigma_{\mathcal{S}}))\,,\\
(\tau, \sigma_{\mathcal{S}}) &\leftarrow \QCEDCC.\StatePrep(\mathcal{C}(\sk, \pk), \mathcal{S}(\mathcal{\pk})) \,,\\
(\sk, \pk) &\leftarrow \QCEDCC.\Setup(C, 1^n)\,.
\end{align*}
\item \emph{(Adaptive) $\eps$-\textsf{IND-CPA} security:} For a quantum circuit $C$ acting on $n$ qubits, consider the following security experiment between a $\QPT$ adversary $\algo A$ and a referee, $\algo R$:

\begin{enumerate}
\item[1.] $\algo A$ generates two $n$-bit strings $m_0$ and $m_1$. It sends these to the referee.
\item[2.] The referee chooses $b \rand \bit$ and keeps $m_b$. It then performs $\QCEDCC.\Setup(C, 1^n, 1^{\lambda}) \to (\sk, \pk)$.
\item[3.] The referee and the adversary engage in the protocol $\QCEDCC.\StatePrep(\mathcal{R}(\sk, \pk), \mathcal{A}(\mathcal{\pk})) \to (\tau, \sigma_{\mathcal{A}})$.
\item[4.] Provided the referee did not abort in step 3, the referee and $\algo A$ engage in the interactive protocol defined by $\QCEDCC.\Evaluate(\mathcal{R}(C, m_b, \sk, \pk, \tau), \mathcal{A}(\pk, \tau, \sigma_{\mathcal{A}}))$.
\item[5.] The adversary $\algo A$ outputs a bit $b' \in \bit$.
\end{enumerate} 

It should be the case that
\begin{equation}
| \pr{ b' = 0 \text{ $\wedge$ $\algo R$ did not abort} | b=0 } - \pr{ b' = 0 \text{ $\wedge$ $\algo R$ did not abort} | b=1} | \leq  \eps(n)
\end{equation}
where $\eps(n)$ is referred to as the adversary's advantage in the security experiment.
    
\end{enumerate}

\end{definition}
Note that unlike~\ref{def:indcpaqced}, the security definition for \QCEDCC includes a condition about whether the referee (or client) aborted in the protocol or not. This is because the state preparation subroutine, $\QCEDCC.\StatePrep$, that is meant to mimic the sending of quantum states, can fail if the adversary (or server) behaves maliciously. As such, the security definition in this case states that the adversary cannot both convince the referee to accept in $\QCEDCC.\StatePrep$ and learn something about the referee's chosen input.

We now turn our attention to constructing a protocol that satisfies \ref{def:qcedcc}.
At a high level, the idea is to start from the protocol from~\cite{broadbent2015delegating} described in \ref{thm:broadbent_proto} for the special case of classical inputs $m_C$ and replace the step of sending random BB84 states with our parallel RSP protocol (\ref{fig:protocol_multi_round})
We give a formal description of this protocol in~\ref{fig:QCEDCC_protocol}.
The following theorem asserts that~\ref{fig:QCEDCC_protocol} indeed satisfies the conditions in \ref{def:qcedcc}.

\begin{table}[ht!]
\begin{longfbox}[breakable=false, padding=1em, padding-right=1.8em, padding-top=1.2em, margin-top=1em, margin-bottom=1em, background-color=gray!20]
\begin{protocol} {\bf Protocol for Quantum Computing on Encrypted Data with a Classical Client} \label{fig:QCEDCC_protocol}\end{protocol}
Let $C$ be a circuit acting on $n$ qubits. Additionally, take $\delta = 1/\poly(n)$ to be some error tolerance parameter.
Define the following algorithms.

\begin{description}
\item $\QCEDCC.\Setup(C, 1^n) \to (\sk, \pk)$:
Run the procedure to sample keys and trapdoors $(k_1, t_{k_1}; \dots k_M, t_{k_M})$, as in~\ref{fig:protocol_test} and~\ref{fig:protocol_prep}, $M=\poly(n)$ is the total amount of keys used, as determined by~\ref{fig:protocol_multi_round}, with error tolerance $\delta$.
In addition, take $\sk_{in} \rand \bit^n$. Set $\pk = \{ k_i \}_i, \; \sk = \{ \{ t_{k_i} \}_i, \sk_{in} \}$.
\item $\QCEDCC.\StatePrep(\mathcal{C}(\sk, \pk), \mathcal{S}(\mathcal{\pk})) \to (\tau, \sigma_{\mathcal{S}})$: The client runs~\ref{fig:protocol_multi_round} with security parameter $\lambda=\poly(n)$ with the server. Here, $\tau$ will denote the transcript of this protocol and $\sigma_{\mathcal{S}}$ will be the server's state upon completing the protocol (in the case where the client accepts).
\item $\QCEDCC.\Evaluate(\mathcal{C}(C, m_{\mathcal{C}}, \sk, \pk, \tau), \mathcal{S}(\pk, \tau, \sigma_{\mathcal{S}})) \to 
(\sk^*, \ct^*, \gamma_{\mathcal{S}})$: The client and server perform $\QCED.\Evaluate$ from Broadbent's protocol~\cite{broadbent2015delegating}, mentioned in the previous section. Specifically, the client uses $\sk$ and $\tau$ to determine the BB84 states the server should have (upon completing $\QCEDCC.\StatePrep$). The key $\sk_{in}$ is used as a one-time pad for the input $m_{\mathcal{C}}$. The client then performs the classical interaction from Broadbent's protocol with the server, as if the BB84 states determined by $\sk$ and $\tau$ were sent via a quantum channel.
\item $\QCEDCC.\Dec(\sk^*, \ct^*) \to r$: Same as the $\QCED.\Dec$ operation in Broadbent's protocol, but restricted to classical outputs.
\end{description}
\end{longfbox}
\end{table}

\begin{theorem}
\ref{fig:QCEDCC_protocol} is a protocol for Quantum Computing on Encrypted Data with a Classical Client having $\negl(n)$ correctness and $1/\poly(n)$ adaptive \textsf{IND-CPA} security.
\end{theorem}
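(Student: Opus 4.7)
The plan is to establish correctness and \textsf{IND-CPA} security separately, each by reducing to the corresponding property of Broadbent's protocol (\ref{thm:broadbent_proto}) together with our parallel RSP guarantees.

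For correctness, I would combine the completeness of the parallel RSP protocol (\ref{prop:honest-prover-correctness}) with the perfect correctness of Broadbent's original QCED protocol. An honest server in $\QCEDCC.\StatePrep$ is accepted by the client with probability negligibly close to 1, and conditioned on acceptance holds exactly the state $\bigotimes_i H^{\theta_i}\proj{v_i}H^{\theta_i}$, where $\vec v, \vec \theta$ are the strings recorded by the client. From this point onward the client runs the classical portion of Broadbent's protocol treating these BB84 states as if they had been sent over a quantum channel. Since Broadbent's protocol is $\delta = 0$-correct for classical inputs (with the quantum one-time pad degenerating to a classical one-time pad on the input wires), the overall correctness error is at most the abort probability in $\QCEDCC.\StatePrep$, which is $\negl(n)$.

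For security, I would use a hybrid argument anchored on \ref{cor:conditional-state-non-abort}. In $\textsf{Hybrid}_0$ we run the real \textsf{IND-CPA} experiment with \ref{fig:QCEDCC_protocol}. We may assume the adversary's success probability in $\QCEDCC.\StatePrep$ is at least $1/\poly(n)$, since otherwise the \textsf{IND-CPA} advantage is trivially $1/\poly(n)$. In $\textsf{Hybrid}_1$ we apply on the adversary's internal space the efficient unitary extension of the isometry $V$ from \ref{cor:conditional-state-non-abort}; since this is a local operation on the adversary's register that is independent of $\vec\theta, \vec v, m_b$, it preserves the adversary's distinguishing advantage exactly, and we may re-absorb $V^\dagger$ into the adversary's next-round strategy. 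In $\textsf{Hybrid}_2$ we replace the adversary's post-$\StatePrep$ state by the ideal state $\frac{1}{2^n}\sum_{\vec v}\proj{\vec v}_W \otimes \bigotimes_i H^{\theta_i}\proj{v_i}H^{\theta_i} \otimes \alpha'_B$ from \ref{cor:conditional-state-non-abort}, where crucially $\alpha'_B$ is efficiently preparable and independent of $\vec\theta$ and $\vec v$. Since the remaining part of the experiment (running $\QCEDCC.\Evaluate$, applying the adversary, and outputting $b'$) is a \textsf{QPT} procedure, computational indistinguishability of the two states implies that the adversary's output distribution changes by at most $1/\poly(n)$. Finally, $\textsf{Hybrid}_2$ is, by inspection, exactly an execution of Broadbent's original $\QCED$ protocol on the classical input $m_b$ with the adversary playing the role of the server (the independent $\alpha'_B$ is just an auxiliary state the server may hold), and \ref{thm:broadbent_proto} gives $\eps = 0$, so the adversary's advantage in $\textsf{Hybrid}_2$ is $0$. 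Chaining the hybrids yields $1/\poly(n)$ \textsf{IND-CPA} security.

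The main obstacle is conceptual rather than calculational: the rigidity guarantee only characterises the server's state \emph{up to an efficient isometry} $V$ on its private Hilbert space, so we must carefully argue that this isometry can be absorbed into the \textsf{QPT} adversary without affecting the experiment. This works because $V$ depends only on the adversary's circuit description (not on $\vec\theta, \vec v$, or $m_b$) and is efficient, and because all subsequent interaction between client and server is classical, so the client is oblivious to a local basis change on the server. A secondary delicate point is the conditioning on non-abort in $\QCEDCC.\StatePrep$, but our \textsf{IND-CPA} definition already bakes in the $\mathcal{R}\text{ did not abort}$ event, matching exactly the conditional statement provided by \ref{cor:conditional-state-non-abort}. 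Once these points are handled, the reduction is otherwise a direct composition of the RSP rigidity theorem with the security of Broadbent's protocol.
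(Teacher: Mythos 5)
Your proposal is correct and follows essentially the same strategy as the paper: handle correctness by composing RSP completeness with Broadbent correctness, and handle \textsf{IND-CPA} security by splitting on whether $\Pr[\top]$ is at least $1/\poly(n)$, using the unitary extension of $V$ to absorb the isometry into the adversary, invoking \ref{cor:conditional-state-non-abort} to swap the post-$\StatePrep$ state for the ideal BB84-tensor-$\alpha'_B$ state, and finishing with the security of Broadbent's protocol. The only cosmetic difference is that you phrase the reduction as a three-step hybrid argument, whereas the paper phrases it as a proof by contradiction (assume advantage $> \eps(n)$, build a new adversary $\mA'$ for Broadbent's experiment that prepares $\alpha'_B$ itself and feeds it together with the genuine BB84 states to $\mA$, and derive a contradiction). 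These are two presentations of the same reduction, and your hybrid formulation is arguably cleaner. One small point worth making explicit, which your ``$0$ advantage in $\mathsf{Hybrid}_2$'' step glosses over slightly: \ref{thm:broadbent_proto} provides \emph{simulation} security with $\eps=0$, and the step from simulation security to \textsf{IND-CPA} security is standard but should be stated; the paper instead invokes Broadbent's \textsf{IND-CPA} security directly as a black box without unpacking it. Neither is a gap, just a difference in where the standard implication is left implicit.
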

\begin{proof}
Correctness follows from the completeness of~\ref{fig:protocol_multi_round} (see \ref{prop:honest-prover-correctness}) together with the correctness of Broadbent's $\QCED$ protocol.

To show \textsf{IND-CPA} security, let us consider the security experiment in~\ref{def:qcedcc}. The adversary (or the server) chooses two inputs, $m_0$ and $m_1$ and sends them to the referee (or the client). The referee picks a random bit $b \in \{0, 1\}$ and then selects $m_b$. This is followed by $\QCEDCC.\Setup$ and the interactive protocol $\QCEDCC.\StatePrep$ between the referee and the adversary. 
As follows from~\ref{cor:conditional-state-non-abort} we have that the shared state between the two is:
\begin{align*}
\sigma_{SWDYR}^{(\vec\theta)} = \pr{\top} \proj{\top}_S \ot \sigma_{WDYR|\top}^{(\vec \theta)} + (1 - \pr{\top}) \proj{\bot}_S \ot \sigma_{WDYR|\bot}^{(\vec \theta)} 
\end{align*}
where we recall that $S$ is the register denoting acceptance or rejection in the protocol, $W$ is a register of the referee and $DYR$ are registers of the adversary.
As in the proof of~\ref{thm:unclonable-security-CC}, we can assume w.l.o.g.~that conditioned on acceptance, the prover in fact has prepared the state $V \sigma_{WDYR|\top}^{(\vec \theta)} V^\dagger$, not $\sigma_{WDYR|\top}^{(\vec \theta)}$, where $V$ is the isometry from~\ref{cor:conditional-state-non-abort}.
Slightly abusing notation, we denote this state as 
\begin{align*}
\sigma_{WQB|\top}^{(\vec \theta)} \deq V \sigma_{WDYR|\top}^{(\vec \theta)} V^\dagger \,,
\end{align*}
with $B \deq DYRA$ and $V: DYR \to QB$ as in~\ref{cor:conditional-state-non-abort}.
Then, \ref{cor:conditional-state-non-abort} states that for any $\vec \theta$,
\begin{align*}
\pr{\top} \sigma_{WQB|\top}^{(\vec \theta)} &\capprox_{1/\poly(n)} \pr{\top} \frac{1}{2^n}  \sum_{\vec v \in \bits^n} \proj{\vec v}_W \ot \left( \bigotimes_{i} H^{\theta_i} \proj{v_i} H^{\theta_i}  \right)_Q   \ot \alpha'_{B}
\end{align*}
for some efficiently preparable state $\alpha'_{B}$

Next, conditioned on having accepted in $\QCEDCC.\StatePrep$, the referee $\algo R$ and the adversary $\algo A$ engage in the protocol $\QCEDCC.\Evaluate$. As mentioned, this procedure is identical to the one in Broadbent's protocol, except for the fact that the referee, in this case, does not send BB84 states to the adversary, instead assuming those states were prepared via $\QCEDCC.\StatePrep$. 
Upon completing this step, the adversary will output the bit $b'$. We are interested in the adversary's advantage in guessing the referee's chosen input. We proceed via an argument by contradiction. Suppose the adversary guesses the referee's input with advantage greater than $\eps(n)$, i.e.
\begin{equation}
| \pr{ b' = 0 \text{ $\wedge$ $\algo R$ did not abort} | b=0} - \pr{ b' = 0 \text{ $\wedge$ $\algo R$ did not abort} | b=1} | >  \eps(n). \label{eqn:assume_eps}
\end{equation}
We will consider two cases. First, suppose the referee's probability of not aborting is less than $\eps(n)$, or, using the notation from~\ref{cor:conditional-state-non-abort}, $\pr{\top} < \eps(n)$. Then, using the fact that the referee aborting in $\QCEDCC.\StatePrep$ is independent of the choice of $b$, we immediately find a contradiction to \ref{eqn:assume_eps}.

In what follows, we will assume $\pr{\top} \geq \eps(n)$ and take $\eps(n) = 1/\poly(n)$. In this case, we get from~\ref{cor:conditional-state-non-abort} that
\begin{align}
\sigma_{WQB|\top}^{(\vec \theta)} &\capprox_{1/\poly(n)} \frac{1}{2^n} \sum_{\vec v \in \bits^n} \proj{\vec v}_W \ot \left( \bigotimes_{i} H^{\theta_i} \proj{v_i} H^{\theta_i}  \right)_Q   \ot \alpha'_{B} \,,
\end{align} \label{eqn:qced_indist_prf}
where $\alpha'_{B}$ is an efficiently preparable (normalised) quantum state.

We would now like to construct a new adversary, $\algo A'$, that breaks the security of Broadbent's \QCED protocol. First, suppose $\algo A'$ interacts with the referee in the security experiment of~\ref{def:indcpaqced}, up to step 3.\footnote{We are assuming that possible inputs of that security experiment are the classical strings $m_0$ and $m_1$ in order to match the experiment where the two interact classically.} At this point, the adversary has received random BB84 states from the referee. After adding to this the (fixed and efficiently preparable) state $\alpha'_B$, adversary $\algo A'$ has a state that is $1/\poly(n)$ computationally indistinguishable from $\sigma_{WQB|\top}^{(\vec \theta)}$. Now, $\algo A'$ runs the original adversary $\algo A$.
We know that $\algo A$ achieves advantage greater than $\eps(n)$ in the security experiment when run on the state $\sigma_{WQB|\top}^{(\vec \theta)}$. 
Since $\algo A$ is computationally efficient, it follows that its distinguishing advantage can change by at most $1/\poly(n)$ when run on the state prepared by $\algo A'$.
By choosing appropriate parameters for the execution of \ref{fig:protocol_multi_round}, this $1/\poly(n)$ error can be made sufficiently small, e.g.~less than $\eps(n)/2$.
This immediately implies that $\algo A'$ obtains advantage greater than $\eps(n)/2$ in the security experiment of~\ref{def:indcpaqced}, which contradicts the adaptive \textsf{IND-CPA} security of Broadbent's \QCED protocol.
It follows that $\algo A$'s advantage must be upper bounded by $\eps(n)$, concluding the proof.
\end{proof}

\section{Verifiable delegated blind quantum computation}

Our final application of parallel RSP is verifiable delegated blind quantum computation (VDBQC). We start with an overview of VDBQC and then show how one can obtain such a protocol which uses only classical communication. Much like quantum computing on encrypted data, VDBQC protocols are interactive protocols between two parties that are here referred to as \emph{verifier} and \emph{prover}. The verifier delegates some quantum computation, in the form of a quantum circuit $C$, to a $\QPT$ prover. The ``blindness'' of VDBQC protocols refers to the fact that the verifier's input should remain hidden from the prover\footnote{As mentioned in~\ref{sec:qced}, the original definition of blind quantum computing~\cite{arrighi2006blind, broadbent2009universal} refers to hiding both the input to the circuit and the circuit itself from the prover. However, this is equivalent to keeping only the input hidden, as the verifier can always delegate some universal circuit $C_U$, which takes as input a description of a circuit $C$ and an input string $x$.}, just like in QCED protocols. 
In addition to this condition, it must also be the case that the verifier accepts with high probability when the output it obtains after interacting with the prover is the correct one (corresponding to having performed the computation $C$), and otherwise it rejects with high probability (this is called \emph{verifiability}).

Formalizing this intuition, we have the following:

\begin{definition}[Verifiable Delegated Blind Quantum Computing]\label{def:vdbqc}
A protocol for verifiable delegated blind quantum computing is a tuple of $\QPT$ algorithms $\VDBQC = (\Setup, \Enc, \Evaluate, \Dec)$, where

\begin{description}
\item $\VDBQC.\Setup(C, 1^n) \to (\sk, \pi)$: takes as input a quantum circuit $C$, acting on $n$ qubits, and outputs a classical key $\sk \in \{0, 1\}^{\poly(n)}$, and a quantum state $\pi$, on $\poly(n)$ qubits.

\item $\VDBQC.\Enc(\sk, \rho_{\mathcal{V}}) \to \ct$: takes as input a key $\sk$ and an $n$-qubit state $\rho_{\mathcal{V}}$ and outputs an $n$-qubit state, $\ct$, representing an encrypted version of $\rho_{\mathcal{V}}$.

\item $\VDBQC.\Evaluate(\mathcal{V}(C, \sk, \ct, \pi), \mathcal{P}(\rho_{\mathcal{P}})) \to 
(\sk^*, \ct^*, \tau_{\mathcal{P}})$: is an interactive protocol between a verifier, $\mathcal{V}$, and a prover, $\mathcal{P}$. The verifier's input consists of the circuit $C$, a classical key $\sk$, and two quantum states $\ct$ and $\pi$. The prover's input is the quantum state $\rho_{\mathcal{P}}$. At the end of the interaction, the verifier obtains an output key $\sk^*$ and a quantum state $\ct^*$.
The prover obtains the quantum state $\tau_{\mathcal{P}}$.

\item $\VDBQC.\Dec(\sk^*, \ct^*) \to \tau_{\mathcal{V}}$: takes as input a key $\sk^*$ and an $n$-qubit state $\ct^*$ and outputs an $poly(n)$-qubit state $\tau_{\mathcal{V}}$ such that $\tau_{\mathcal{V}} = \sigma \otimes \phi$, and either $\ct^* \leftarrow \VDBQC.\Enc(\sk^*, \sigma)$ and $\phi = \proj{\top}$ or $\sigma$ is an arbitrary state and $\phi = \proj{\bot}$.

\end{description}

We now define the following properties associated with a $\VDBQC$ protocol:
\begin{enumerate}
    \item \emph{$\delta$-correctness:} for every quantum circuit $C$ and every $n$-qubit verifier input $\rho_{\mathcal{V}}$, provided the prover follows the instructions of the interactive protocol, $\VDBQC.\Evaluate$, we have that
    $$\| \VDBQC.\Dec(\sk^*, \ct^*) - C \rho_{\mathcal{V}} C^{\dagger} \otimes \proj{\top} \|_{1} \leq \delta(n),$$
where,    

\begin{align*}
(\sk^*, \ct^*) &\leftarrow \VDBQC.\Evaluate \left( \mathcal{V}(C, \sk, \ct, \pi), \mathcal{P}\left( \ketbra{0}{0}^{\otimes \poly(n)} \right) \right)\\
\ct &\leftarrow \VDBQC.\Enc(\sk, \rho_{\mathcal{V}})\\
(\sk, \pi) &\leftarrow \VDBQC.\Setup(C, 1^n)
\end{align*}

    \item \emph{$\eps$-blind-verifiability:} for each quantum circuit $C$ and each QPT prover $\mathcal{P}$, there exists a QPT algorithm known as a simulator, denoted $\Sim$, such that for every $n$-qubit verifier input $\rho_{\mathcal{V}}$ and every prover input $\rho_{\mathcal{P}}$ it is the case that 
    $$ \| \Sim(\rho_{\mathcal{P}}) - \VDBQC.\Evaluate(\mathcal{V}(C, \sk, \ct, \pi), \mathcal{P}(\rho_{\mathcal{P}})) \|_1 \leq \eps(n)$$
and there exist $0 \leq p \leq 1$ and a state $\rho_{inc}$ on $\poly(n)$ qubits such that 
	$$ \| \tau_{\mathcal{V}} - \rho_{ver} \|_{1} \leq \eps(n) $$
    with 
    $$\rho_{ver} = p C \rho_{\mathcal{V}} C^{\dagger} \otimes \ketbra{\top}{\top} + (1 - p) \rho_{inc} \otimes \ketbra{\bot}{\bot}, $$
    $$ \tau_{\mathcal{V}} \leftarrow \VDBQC.\Dec(\sk^*, \ct^*)$$
    $$(\sk^*, \ct^*, \tau_{\mathcal{P}}) \leftarrow \VDBQC.\Evaluate \left( \mathcal{V}(C, \sk, \ct, \pi), \mathcal{P}\left( \rho_{\mathcal{P}} \right) \right)$$     
  	$$ \ct \leftarrow \VDBQC.\Enc(\sk, \rho_{\mathcal{V}})$$
	$$ (\sk, \pi) \leftarrow \QCED.\Setup(C, 1^n) $$
where $\Sim$ and $\VDBQC.\Evaluate$ are CPTP maps of the form $\rho_{\mathcal{P}} \mapsto (\ct^*, \tau_{\mathcal{P}})$, mapping states from the prover's input register to its output space (which comprises the quantum ciphertext returned to the verifier, $\ct^*$, and any side information the prover retains, $\tau_{\mathcal{P}}$).
    
\end{enumerate}

\end{definition}

There are a number of protocols that satisfy the above definition and we refer the reader to~\cite{gheorghiu2019verification} for an overview. We should also note that the above definition does not apply to all $\VDBQC$ protocols in the literature. In a bit more detail, using terminology from~\cite{gheorghiu2019verification}, the above definition generally refers to single-prover prepare-and-send protocols,\footnote{The definition also applies for single-prover receive-and-measure protocols~\cite{gheorghiu2019verification}.} where the verifier sends quantum states to the prover.
This is precisely the setting we are interested in.
Specifically, we would like a $\VDBQC$ protocol in which the verifier has to prepare and send only BB84 states to the prover. 
As far as we are aware, there is no protocol that achieves this directly. However, it can be achieved indirectly using a protocol of Morimae~\cite{morimae2018blind}. To explain, we need to give a high-level description of that protocol. We start by first explaining the \emph{history state construction}~\cite{kitaev2002classical, kempe2006complexity}.

\subsection{History state construction and post-hoc verification} \label{subsec:histstate}

The history state construction, introduced in~\cite{kitaev2002classical, kempe2006complexity}, was used to show that the \emph{local Hamiltonian problem} is complete for the complexity class \textsf{QMA} (the quantun analogue of \textsf{NP}). For our purposes, we forego the definition of \textsf{QMA} and instead explain how the history state construction can be used to verify general quantum computations.

Given a circuit $C = U_{T} U_{T-1} ... U_{1}$ acting on $n$ qubits and an input $x \in \{0, 1\}^n$, where $T = |C|$ and $U_i$ denotes the $i$'th gate in $C$ (and with $U_0 = \id$), consider the state
\begin{equation}
\ket{\psi_{C, x}} = \frac{1}{\sqrt{T+1}} \sum \limits_{t=0}^{T}  \ket{t} U_{t} U_{t-1} ... U_{0} \ket{x}.
\end{equation}
This is referred to as the history state of $C$ on input $x$. Importantly, $\ket{\psi_{C, x}}$ is the ground state of a $k$-local Hamiltonian denoted $H_{C,x}$, with $k = O(1)$. Whenever $C$ accepts $x$ (i.e. measuring the first qubit of $C \ket{x}$ yields outcome $1$ with probability greater than $2/3$) the state $\ket{\psi_{C, x}}$ has low energy with respect to $H_{C, x}$ otherwise it has high energy. More precisely, the gap between accepting and rejecting instances (referred to as the \emph{promise gap}) is at least $1/\poly(n)$.
It should also be noted that $\ket{\psi_{C, x}}$ can be prepared efficiently, with a circuit of size $\poly(|C|, |x|)$ applied to a $\ket{00...0}$ state.

This yields a simple protocol for verifying quantum computations: a quantum prover is instructed to prepare $\ket{\psi_{C, x}}$ and send the state to the verifier, who will then measure it with one of the local terms of $H_{C, x}$. The local term will be chosen with a probability proportional to its ``weight'' in $H_{C, x}$. In other words, it should be the case that the expectation of the verifier's measurement is $\bra{\psi_{C, x}} H_{C, x} \ket{\psi_{C,x}}$. This provides an estimate for the state's energy which can be used to decide acceptance or rejection.
Thus, if $C$ accepts $x$, the history state (when measured according to $H_{C, x}$) will have low energy. In contrast, whenever $C$ rejects $x$ \emph{all} states will have high energy with respect to $H_{C, x}$ so that a malicious prover cannot convince the verifier to accept.
It should also be noted that one can construct local Hamiltonians that are $2$-local and for which the non-trivial components of the Hamiltonian terms are either Pauli $\sigma_X$ or $\sigma_Z$~\cite{biamonte2008realizable}.

The protocol we sketched above is referred to as \emph{post-hoc} verification of quantum computation and was introduced in~\cite{morimae2016post, fitzsimons2018post}. We can summarise this result with the following theorem.

\begin{theorem}[\cite{fitzsimons2018post}] \label{thm:posthoc}
Let $C$ be a $T$-gate quantum circuit acting on $n$ qubits such that $C = U_{T} U_{T-1} ... U_{1}$ where $T = |C|$ and $U_i$ denotes the $i$'th gate in $C$ (and where we take $U_0 = \id$). Suppose it is the case that for any input $x \in \{0, 1\}^n$ either $|\bra{0} C \ket{x}|^2 \geq 2/3$ or $|\bra{0} C \ket{x}|^2 \leq 1/3$. Then, there exists a collection of $n$-qubit observables $\{ O_i \}_{i \leq m}$, with $m = \poly(n)$, of the form $O_i = \bigotimes_{j = 1}^{n} P_j$, with $P_j \in \{ \sigma_X, \sigma_Z, \id \}$ (and with at most two $P_j$'s not equal to $\id$), a probability distribution $p : [m] \to [0, 1]$, and values $\alpha, \beta > 0$ with $\beta - \alpha > 1/\poly(n)$ (referred to as the \emph{promise gap}), such that
\begin{itemize}
\item if $|\bra{0} C \ket{x}|^2 \geq 2/3$ then $\sum_{i=1}^m p(i) \bra{\psi_{C, x}} O_i \ket{\psi_{C, x}} \leq \alpha$,
\item if $|\bra{0} C \ket{x}|^2 \leq 1/3$ then for all states $\ket{\psi'}$, $\sum_{i=1}^m p(i) \bra{\psi'} O_i \ket{\psi'} \geq \beta$,
\end{itemize}
where
\begin{equation}
\ket{\psi_{C, x}} = \frac{1}{\sqrt{T+1}} \sum \limits_{t=0}^{T}  \ket{t} U_{t} U_{t-1} ... U_{0} \ket{x}.
\end{equation}
\end{theorem}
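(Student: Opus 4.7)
My plan is to construct the observables $\{O_i\}$ and the distribution $p$ from a suitably normalized local Hamiltonian whose unique low-energy eigenstate is the history state $\ket{\psi_{C,x}}$. Without loss of generality I first amplify $C$ by sequential or parallel repetition (taking the majority vote on the output qubit) so that the acceptance/rejection probabilities are exponentially close to $1$ or $0$ respectively; this costs only a polynomial blow-up in circuit size, but it will make the energy penalty on the output term much larger than the intrinsic slack in Kitaev's analysis. I then adopt the standard Feynman--Kitaev Hamiltonian $H_{C,x} = H_{\textnormal{in}} + H_{\textnormal{clock}} + H_{\textnormal{prop}} + H_{\textnormal{out}}$ acting on a computation register together with a clock register (in the unary encoding). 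By Kitaev's analysis (see e.g.~\cite{kitaev2002classical, kempe2006complexity}), $H_{C,x}$ is $5$-local, has $\ket{\psi_{C,x}}$ as its unique ground state with energy $\leq \alpha_0$ for YES instances (where $\alpha_0$ depends on the amplified acceptance probability and the normalisation of the propagation terms), and for NO instances the minimum eigenvalue of $H_{C,x}$ is bounded below by $\beta_0$ with $\beta_0 - \alpha_0 \geq 1/\poly(n,T)$; the latter is Kitaev's geometric lemma applied to the orthogonal components of $H_{\textnormal{in}} + H_{\textnormal{out}}$ and $H_{\textnormal{clock}} + H_{\textnormal{prop}}$.

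Next I reduce to $2$-local Pauli $X,Z$ observables. Applying the Kempe--Kitaev--Regev perturbation-gadget reduction followed by the Biamonte--Love reduction~\cite{biamonte2008realizable}, I obtain a $2$-local Hamiltonian $\widetilde H_{C,x}$ each of whose terms is (up to a sign) a tensor product of at most two single-qubit Pauli $\sigma_X$'s or $\sigma_Z$'s and identities elsewhere. The perturbation-gadget reductions preserve the promise gap up to at most a polynomial loss (the reduction introduces a large coupling $J=\poly(n,T)$ and works in the low-energy subspace with error $O(1/J)$), so we end with $\widetilde\alpha, \widetilde\beta$ satisfying $\widetilde\beta - \widetilde\alpha \geq 1/\poly(n)$. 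Expanding $\widetilde H_{C,x} = \sum_i c_i O_i'$ where each $O_i'$ is a Pauli tensor product of the allowed form, set $Z \deq \sum_i |c_i|$, $p(i) \deq |c_i|/Z$, and $O_i \deq \mathrm{sign}(c_i)\, O_i'$; since Pauli observables have eigenvalues $\pm 1$, one may if desired re-label these back to the form $\bigotimes_j P_j$ with $P_j \in \{\sigma_X, \sigma_Z, \id\}$ by absorbing the overall sign into a constant shift of $\alpha,\beta$. Dividing the two sides of the energy bound by $Z$ yields $\alpha = \widetilde\alpha / Z$ and $\beta = \widetilde\beta / Z$; because the Hamiltonian has only $\poly(n,T)$ terms each with coefficient bounded by a polynomial, $Z = \poly(n)$ and the rescaled promise gap remains $1/\poly(n)$, as required.

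The principal obstacle is the careful accounting of the promise gap through the chain of reductions: Kitaev's original gap scales as $\Omega(1/T^3)$, and each perturbative gadget used to push locality down from $5$ to $2$ while restricting the Pauli alphabet to $\{\sigma_X, \sigma_Z\}$ introduces further polynomial losses and enforces a spectrum-splitting condition that must be verified on the actual circuit-to-Hamiltonian encoding used here. I would handle this by using the quantitative version of the gadget lemmas from Oliveira--Terhal and Biamonte--Love, which bound the eigenvalue shift by $O(\|H\|^2/J)$ in the low-energy sector, choosing the coupling $J$ polynomially large enough to preserve the gap. A secondary subtlety is the mild abuse of notation in the statement: the observables live on the computation plus clock register, which together have $n + \poly(T) = \poly(n)$ qubits, so the promised bound on locality (at most two non-identity Pauli factors) is on this extended register and is unchanged in asymptotic scaling.
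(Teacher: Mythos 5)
The paper does not actually prove this theorem; it cites it directly from~\cite{fitzsimons2018post} and only gives a brief informal sketch in the surrounding prose of \ref{subsec:histstate} (Feynman--Kitaev history state, QMA-hardness of local Hamiltonian, reduction to a $2$-local Hamiltonian with $\sigma_X,\sigma_Z$ terms via~\cite{biamonte2008realizable}). Your proposal fills in that sketch following exactly the standard route the paper alludes to, so the approach is the same rather than a genuinely different one.

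One place where your argument needs more care is the claim that $\ket{\psi_{C,x}}$ itself is the low-energy witness for the final $2$-local XZ Hamiltonian. The perturbation-gadget reductions you invoke enlarge the Hilbert space with mediator/ancilla qubits, and the effective low-energy subspace of $\widetilde H_{C,x}$ is not literally $\text{span}\{\ket{\psi_{C,x}}\}$ but an approximate isometric embedding of it. To make the first bullet of the theorem literally true you need to (i) specify the ancilla configuration $\ket{\chi}$ such that $\ket{\psi_{C,x}}\ot\ket{\chi}$ lies close to the low-energy sector of the heavy gadget term, and (ii) track the second-order error $O(\norm{V}_\infty^2/J)$ of the Schrieffer--Wolff (or self-energy) expansion so that $\bra{\psi_{C,x}}\bra{\chi}\widetilde H_{C,x}\ket{\psi_{C,x}}\ket{\chi}$ is still below $\widetilde\alpha$ with a $1/\poly(n)$ margin; you mention the $O(\norm{H}^2/J)$ bound but do not close this loop by exhibiting the extended witness. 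You correctly notice the related looseness in the theorem's phrasing (``$n$-qubit observables'' when the observables actually act on the computation, clock, and gadget ancilla registers), but the same looseness also applies to the state appearing in the first bullet, and resolving it is what makes the gadget step go through cleanly. This is a presentational gap in the write-up rather than a wrong idea---the standard treatments in Kempe--Kitaev--Regev, Oliveira--Terhal, and Biamonte--Love do supply the missing bookkeeping---but as stated your proof leaves the key quantitative step implicit.
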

Here the observables $O_i$ are the local terms of the Hamiltonian $H_{C, x}$, whose groundstate is the history state $\ket{\psi_{C, x}}$. The probability distribution $p(i)$ is simply the weight of each local term in the Hamiltonian, suitably normalised. In the post-hoc protocol, the verifier samples an observable according to that distribution and measures the state received from the prover with that observable.

\begin{remark}[Instance independence] \label{rem:instindep}
An important observation about the post-hoc verification protocol, first made in~\cite{alagic2020non}, is that the sampling of an observable $O_i$ and the measurement performed by the verifier can be made independent of each other. This is done as follows. The verifier will measure each qubit of the state received from the prover uniformly at random with either the $\sigma_X$ or $\sigma_Z$ observable and then sample an observable $O_i$ according to the distribution $p$. Assuming the underlying Hamiltonian is $2$-local, let $1 \leq j, k \leq n$ denote the indices of the qubits acted upon by non-trivial operators in $O_i$ (i.e. operators not equal to $\id$). If these operators match the observables the verifier used when measuring qubits $i$ and $j$ (which happens with probability $1/4$), the verifier keeps the measurement outcome, discarding it otherwise (and rejecting). As shown in~\cite{alagic2020non}, the effect of this is to reduce the promise gap by a constant factor of $1/4$, so that overall the promise gap is still $1/\poly(n)$.
\end{remark}

\subsection{Morimae's verification protocol} \label{subsec:morimae}

The post-hoc protocol described in the previous subsection is a simple protocol for verifying general quantum computations. Note, however, that it is not a VDBQC protocol according to our definition. This is because the protocol is not blind---the prover is told both the computation $C$ and the input $x$ so as to prepare the history state. 

While the post-hoc protocol does not satisfy the blindness property, Morimae recently proposed a modified version of the protocol that is blind~\cite{morimae2018blind}.
This main idea of Morimae's protocol is to instruct the prover to prepare the history state using a $\QCED$ protocol.\footnote{Strictly speaking, the preparation should be done with a blind quantum computing protocol, i.e. a protocol that explicitly hides the computation from the prover. However, as mentioned previously, a $\QCED$ protocol also yields blind quantum computation by having the $\QCED$ computation be a universal circuit, $U$, with the property that that $U \ket{C, x} = C \ket{x} \otimes \ket{00...0}$, for some quantum circuit $C$ and input $x$.} In fact, the verifier delegates to the prover the preparation of the state $\ket{\psi_{C, x}}$ as well as $\ket{\psi_{\bar{C}, x}}$, where $\bar{C}$ is the same as $C$ except the output qubit is negated (that is, an additional $\sigma_X$ gate is added to the output qubit of $C$). 
As the states are prepared via a $\QCED$ scheme, they will be encrypted and the prover learns nothing about the input $x$ or the circuit $C$.
The prover then sends these states to the verifier who proceeds to decrypt and measure them using the observables from~\ref{thm:posthoc}.
In this way, the verifier is able certify the outcome of the computation $C$ on input $x$ while also keeping $C$ and $x$ hidden from the prover.
We can also see that if the $\QCED$ scheme that is used in the first part of the protocol is Broadbent's scheme~\cite{broadbent2015delegating}, mentioned in the previous section, the verifier will only be required to send BB84 states to the prover. Moreover, in that scheme, the encryption of the states is done via a quantum one-time pad. As this consists of the application of Pauli $\sigma_X$ and $\sigma_Z$ operations, these can be absorbed into the observables that the verifier measures, so that the overall protocol requires the verifier to send BB84 states and measure qubits in the standard and Hadamard bases (or with the observables $\sigma_Z$, $\sigma_X$, respectively).

Our goal is to dequantise this protocol. As we can see, however, this requires one to not only dequantise the sending of quantum states, but also the process of performing $\sigma_X$ and $\sigma_Z$ measurements. For this, we make use of the measurement subprotocol from Mahadev's verification protocol~\cite{mahadev}. This is described in the next subsection.

\subsection{Mahadev's measurement protocol} \label{subsec:mahadev}

In~\cite{mahadev}, Mahadev introduced the first protocol allowing a classical verifier to delegate and verify arbitrary quantum computations to a quantum prover. At the core of that protocol is a primitive for classically instructing a quantum prover to perform measurements in the standard and Hadamard bases. Importantly, this is done in a ``verifiable way'', so that if the prover does not perform the correct measurements, the verifier can detect this and abort.
Combined with the idea of post-hoc verification, we can see how this yields a protocol for verifying general quantum computations: simply instruct the prover to measure observables from \ref{thm:posthoc} on the history state of the computation.

For our purposes, we are only interested in the measurement subprotocol. For a detailed description of that protocol, we refer the reader to~\cite{mahadev}. At a high level, the protocol works in a similar way to our RSP protocol (which was directly inspired by Mahadev's result): the verifier uses an ENTCF family (based on the quantum intractability of LWE) to delegate the chosen measurement to the prover. As a result of interacting with the prover, the verifier obtains a purported measurement outcome and can decide whether to accept or reject. This decision is performed using the ENTCF family's trapdoor, which allows the verifier to check whether the prover performed the measurements as instructed or not.
The main result of this protocol is summarised in the following two theorems, taken from~\cite[Sections 5.4, 7]{mahadev}, that capture the completeness and soundness of the protocol:

\begin{theorem}[Completeness of measurement protocol~\cite{mahadev}] \label{thm:completemeas}
For all $n$-qubit states $\rho$ and all basis choices $h \in \{0, 1\}^n$ (where $h_i=0$ denotes measurement of qubit $i$ in the standard basis and $h_i=1$ denotes measurement in the Hadamard basis), an honest QPT prover, $\algo P$, is accepted by the verifier in the measurement protocol with probability $1 - \negl(n)$. Additionally, it is the case that
\begin{equation}
\TVD [D_{\algo P, h}, D_{\rho, h} ]  \leq \negl(n),
\end{equation}
where $D_{\algo P, h}$ is the probability distribution resulting from the verifier having delegated the measurement $h$ to $\algo P$ and having accepted in the protocol; $D_{\rho, h}$ is the distribution resulting from measuring the state $\rho$ in the bases determined by $h$.
\end{theorem}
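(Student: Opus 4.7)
\medskip

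\noindent\textbf{Proof proposal.} Since this is a completeness statement, the plan is to exhibit an explicit honest $\QPT$ prover $\algo P$ and then verify both claims (near-certain acceptance and negligible total variation distance of the output distribution from $D_{\rho,h}$) by tracking the committed state through the protocol. The honest strategy is essentially the one used in the completeness proof of \ref{prop:honest-prover-correctness}, extended from a single qubit to an $n$-qubit input state $\rho$: on receiving the verifier's basis choice $h\in\bits^n$ (encoded in keys $k_i\in\mK_{h_i}$, where $\mK_0=\kg$ and $\mK_1=\kf$), the prover prepares, for each qubit $i$ of $\rho$, the ``committed'' state
\[
\frac{1}{\sqrt{2|\mX|}}\sum_{b\in\bits}\sum_{x\in\mX,\,y\in\mY}\sqrt{f_{k_i,b}(x)(y)}\,\alpha_{i,b}\,\ket{b}\ket{x}\ket{y},
\]
where the amplitudes $\alpha_{i,b}$ are obtained by entangling the ``bit register'' coherently with qubit $i$ of $\rho$ via a CNOT. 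The prover then measures the image register, returns $\vec{y}$, subsequently measures the preimage register in the Hadamard basis to obtain $\vec{d}$, and finally reports measurement outcomes $\vec v$ of the bit register in the computational basis.

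First I would handle acceptance. As in the argument following \ref{eqn:hones_post_y_states}, after returning $\vec y$ the joint state on the bit register and the external qubit register of $\rho$ is, up to negligible error, the original qubit of $\rho$ coherently paired with $\ket{\hat b(k_i,y_i)}$ in the $\kg$ case and, after the Hadamard-basis preimage measurement yielding $d_i$, a $\ket{(-)^{\hat u(k_i,y_i,d_i)}}$-phase twist in the $\kf$ case. The verifier's consistency check in Mahadev's protocol is designed so that the honest prover's reported $(\vec y,\vec d,\vec v)$ always passes (up to the $\negl(n)$ preparation error from the $\Samp$ procedure of the ENTCF family). This gives acceptance probability $1-\negl(n)$.

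Next I would address the closeness of distributions. The key observation is that after the prover has honestly committed, the post-measurement state on the ``external'' qubits of $\rho$ (conditioned on any fixed $\vec y$ and, for Hadamard positions, any fixed $\vec d$) is negligibly close to $\rho$ itself up to a known Pauli correction depending on $\hat b(k_i,y_i)$ (for $h_i=0$) or $\hat u(k_i,y_i,d_i)$ (for $h_i=1$). Measuring the bit register in the computational basis and XOR-ing with the Pauli correction therefore yields a sample from $D_{\rho,h}$. Since the verifier's decoding of $\vec v$ in the measurement protocol implements exactly this correction (using the trapdoors to compute $\hat b$ and $\hat u$), the output distribution $D_{\algo P,h}$ on accepted transcripts is within $\negl(n)$ in total variation of $D_{\rho,h}$.

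The only delicate point, and the place where I expect the most care, is bookkeeping the negligible errors: the $\Samp$ procedure only prepares the superposition above up to negligible error, and the adaptive hardcore / trapdoor injective-pair properties of the ENTCF family are used to argue that measuring the preimage register in the Hadamard basis genuinely produces a $d_i$ such that $\hat u(k_i,y_i,d_i)$ is well-defined with probability $1-\negl(n)$ (in the $\kf$ case). Propagating these errors additively through the $n=\poly(\lambda)$ parallel instances gives the stated $\negl(n)$ bounds, completing the proof of completeness.
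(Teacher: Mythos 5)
The paper does not prove this theorem. It is stated as a citation---``taken from~\cite[Sections 5.4, 7]{mahadev}''---and used as a black box in the analysis of \ref{fig:VDBQCCC_protocol}. So there is no ``paper's own proof'' to compare against; what you have written is a reconstruction of the argument from Mahadev's paper.

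As a sketch of Mahadev's completeness argument it is close in spirit but has one structural inaccuracy worth flagging. You describe the honest prover as CNOT-ing qubit $i$ of $\rho$ onto a fresh bit register and then committing, with $\rho$ remaining as a separate ``external'' register. In Mahadev's honest prover, the input qubit \emph{becomes} the bit register---$\Samp$ is applied controlled on the input qubit and there is no residual external copy. This matters for the Hadamard-basis case $h_i = 1$: after the Hadamard-basis measurement of the preimage register, the phase kickback $(-1)^{\hat u(k_i,y_i,d_i)}$ lands on the bit register. If an external CNOT-copied register still exists and is entangled with the bit register, then neither register alone carries the coherent superposition $\alpha_0\ket{0} + (-1)^{\hat u}\alpha_1\ket{1}$, and a Hadamard-basis measurement of either gives a uniformly random bit rather than the Hadamard measurement of $\rho$. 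Your route can be patched by uncomputing the CNOT before the final measurement, but it is cleaner and more faithful to the protocol to absorb the input qubit directly. Separately, your acceptance analysis discusses only the Hadamard round; the completeness claim also requires the honest prover to pass the preimage test round with probability $1 - \negl(n)$, which follows from the same observation that the honest superposition is supported on valid $(b,x,y)$ triples up to the $\Samp$ error.
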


\begin{theorem}[Soundness of measurement protocol~\cite{mahadev}] \label{thm:soundmeas}
Assuming the quantum intractability of LWE, for a QPT prover $\algo P$, let $1 - p$ be the probability that the verifier accepts the prover in the measurement protocol.\footnote{Strictly speaking, the measurement protocol has two types of rounds (a test round and a Hadamard round) and one should consider the probabilities of acceptance in each of these rounds. See~\cite[Section 7]{mahadev} for more details. Here we take $p$ to be the maximum of the two probabilities.} Then, there exists an $n$-qubit state $\rho$ and a prover $\algo P'$ such that for all basis choices $h$,
\begin{equation}
\TVD [ D_{\algo P, h}, D_{\algo P', h} ] \leq p + \sqrt{p} + \negl(n),
\end{equation}
and with $D_{\algo P', h}$ being computationally indistinguishable from $D_{\rho, h}$.
\end{theorem}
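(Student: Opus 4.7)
The plan is to reproduce the argument of~\cite[Section 7]{mahadev} rather than invent new techniques, so the proposal is mostly a roadmap through her soundness proof. The measurement protocol, from the prover's side, looks like our \ref{fig:protocol_multi_round}: the verifier chooses basis $h\in\bits^n$ (standard or Hadamard per qubit), samples ENTCF keys $k_i\in\mK_{h_i}$, and either executes a test round (asking for a preimage) or a Hadamard round (asking for $\vec d$ and a measurement result $\vec v$) each with probability $1/2$. The first step is to introduce the device formalism of \ref{sec:devices}: any efficient prover $\algo P$ can be modelled as an efficient device $D=(S,\Pi,M,P)$ whose test-round failure probability satisfies $\gamma_P(D)+\gamma_H(D)\le 2p$.

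Next, I would apply \ref{lem:reduction_to_perfect} to replace $D$ by a perfect device $D'$ whose states differ from $D$'s by $O(p)$ in trace distance; this lets the rest of the analysis assume the preimage check passes with overwhelming probability while only contributing $\sqrt{p}$ to the final bound (by the relation between trace distance and $\approx$-distance in \ref{def:approx_dist}). From this point Mahadev's argument constructs the \emph{committed state} $\rho$ as follows: in the Hadamard round, the prover's post-$\vec y$ state $\psi^{(h)}_{\vec y}$ is ``decoded'' by applying, conditioned on $\vec d$, the Pauli corrections $\sigma_X(\vec b(\vec k,\vec y))$ on standard-basis qubits and $\sigma_Z(\hat u(\vec k,\vec y,\vec d))$ on Hadamard-basis qubits that an honest prover would produce, followed by tracing out all auxiliary registers. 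Call this state $\rho_{\vec y,\vec d}$; averaging over $\vec y,\vec d$ yields a candidate committed state $\rho$. One then defines $\algo P'$ to be the prover that internally prepares $\rho$ honestly and measures in basis $h$ through the honest protocol.

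The bulk of the work is to show that the actual Hadamard-round output distribution $D_{\algo P,h}$ is within $p+\sqrt p+\negl(n)$ total-variation distance of the distribution obtained by measuring $\rho$ in basis $h$. Mahadev does this by a hybrid argument built on the two defining properties of ENTCF families: (i) the \emph{adaptive hardcore bit} property, which forces the Hadamard-round answers on Hadamard-basis indices to be consistent with measurements in the correct basis on the committed state, and (ii) \emph{injective invariance}, which lets one swap $\mK_\mG$ keys for $\mK_\mF$ keys without detection so that the committed state is (computationally) the same across all basis choices $h$. Each substitution in the hybrid costs a negligible amount, yielding the stated $p+\sqrt p+\negl(n)$ bound for $\TVD[D_{\algo P,h},D_{\algo P',h}]$, while $D_{\algo P',h}\capprox_{\negl(n)} D_{\rho,h}$ by construction and injective invariance.

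The main obstacle, and the reason Mahadev's proof is substantially harder than the completeness argument, is carrying out the hybrid over all $2^n$ basis choices while keeping the error additive (rather than multiplicative) in $n$. Her trick is that the committed state is defined \emph{once}, independent of $h$, and the relation between $D_{\algo P,h}$ and measurements on $\rho$ is proved using only the single-qubit anti-commutation arising from the adaptive hardcore bit. In our setting we do not need to reproduce this proof, merely to cite it: the quantitative statement above is exactly \cite[Claim 7.1 and Theorem 8.6]{mahadev} specialised to the measurement protocol, so invoking those results completes the argument.
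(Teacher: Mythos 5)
The paper offers no proof of this theorem: the header attributes it explicitly to~\cite{mahadev}, and the surrounding text says the two theorems are ``taken from [Sections 5.4, 7]{mahadev}''. Your proposal ultimately lands in the same place — your final sentence concedes that ``we do not need to reproduce this proof, merely to cite it'' — so on the bottom line you and the paper agree: this is an imported black-box result, not something this paper reproves.

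The roadmap you give as context is fine as intuition but should not be mistaken for a faithful outline of Mahadev's Section~7, for two reasons. First, you import this paper's device formalism (\ref{def:devices}, \ref{lem:reduction_to_perfect}) into Mahadev's setting where it does not directly apply: the definitions of $\psi^{(\vec\theta)}$, $\gamma_P(D)$, $\gamma_H(D)$ are tied to \ref{fig:protocol_test}, where every test round uses a \emph{single} shared basis $\theta$ across all $n$ keys, whereas Mahadev's measurement protocol samples $h_i$ independently per qubit. The objects and lemmas would have to be redefined before being invoked, and the ``$\gamma_P(D)+\gamma_H(D)\le 2p$'' bookkeeping does not match the footnote's convention that $p$ is the \emph{max} of the two conditional failure probabilities (so $\gamma_P,\gamma_H\le p$ individually; summing is not the right accounting). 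Second, the claim that the relation between $D_{\algo P,h}$ and measurements on $\rho$ ``is proved using only the single-qubit anti-commutation arising from the adaptive hardcore bit'' is this paper's vocabulary, not Mahadev's: her Section~7 does not phrase soundness as an observable anti-commutation or rigidity statement. That rephrasing is one of the contributions of the subsequent line of work (\cite{rsp}, \cite{self-testing}, and the present paper), and projecting it backward onto Mahadev's proof misrepresents how her argument actually runs. None of this affects your conclusion — cite and move on — but if the sketch is to appear in prose, it should be written in Mahadev's terms rather than retrofit with notation from a later paper.
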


\subsection{Verifiable delegated blind quantum computation with a classical verifier}

Let us start by defining $\VDBQC$ with a classical verifier. Throughout this section we are considering $\VDBQC$ protocols for decision problems, i.e.~problems where the output is a single classical bit. In other words, for some computation $C$ and input $x$, we are only interested in the result of measuring the first qubit of $C \ket{x}$ in the standard basis.

\begin{definition}[Verifiable Delegated Blind Quantum Computing with a Classical Verifier]\label{def:vdbqccc}
A protocol for verifiable delegated blind quantum computing (of decision problems) with a classical verifier is a tuple of $\QPT$ algorithms $\VDBQCCC = (\StatePrep, \Evaluate, \Dec)$, where

\begin{description}

\item $\VDBQCCC.\StatePrep(\mathcal{V}(C, 1^n), \mathcal{P}(\rho_{\algo P})) \to (\sk, \pk, \tau, \sigma_{\mathcal{P}})$: is an interactive protocol between a $\PPT$ verifier, $\mathcal{V}$, and a $\QPT$ prover $\mathcal{P}$. The verifier's input consists of a description of a circuit $C$ acting on $n$ qubits. The prover has as input some $\poly(n)$-qubit quantum state $\rho_{\algo P}$. The protocol either aborts or succeeds\footnote{Here we are assuming that there is a global abort flag for the entire protocol.}. If the protocol succeeds the verifier obtains as an output a secret key $\sk \in \bit^{\poly(n)}$ and the prover obtains a $\poly(n)$-qubit quantum state $\sigma_{\algo P}$. Additionally, they both obtain a public key $\pk \in \bit^{\poly(n)}$ and a classical transcript $\tau \in \bit^{\poly(n)}$. 

\item $\VDBQCCC.\Evaluate(\mathcal{V}(C, m_{\mathcal{V}}, \sk, \pk, \tau), \mathcal{P}(\pk, \tau, \sigma_{\mathcal{P}})) \to 
(\sk^*, \ct^*, \gamma_{\mathcal{P}})$: is an interactive protocol between a $\PPT$ verifier, $\mathcal{V}$, and a $\QPT$ prover, $\mathcal{P}$. The verifier's input consists of the circuit $C$, an input for that circuit denoted as the string $m_{\mathcal{V}} \in \bit^n$, the secret key $\sk$, the public key $\pk$ and the classical transcript $\tau$. The prover's input is the quantum state $\sigma_{\mathcal{P}}$ as well as the public key $\pk$ and the transcript $\tau$. At the end of the interaction, the verifier obtains an output key $\sk^* \in \bit^{\poly(n)}$ and a classical ciphertext $\ct^* \in \bit^{\poly(n)}$. The prover obtains the quantum state $\gamma_{\mathcal{P}}$.

\item $\VDBQCCC.\Dec(\sk^*, \ct^*) \to r$: takes as input a key, $\sk^*$, and a ciphertext $\ct^*$ and either outputs the bit $r \in \bit$ or the protocol aborts.
\end{description}

We now define the following properties associated with a $\VDBQCCC$ protocol:
\begin{enumerate}
\item \emph{$\delta$-correctness:} for every quantum circuit $C$, with $|C| = \poly(n)$, and every input $m_{\mathcal{V}} \in \bit^n$, provided the prover follows the instructions of the interactive protocol, $\VDBQCCC.\Evaluate$, we have that
\begin{equation*}
\Pr[r = M(C \proj{m_{\mathcal{V}}} C^{\dagger}) \land \top ] \geq 1 - \delta(n)
\end{equation*}
where $M(\cdot)$ denotes the process of measuring the first qubit of a quantum state in the standard basis and outputting that result, $\bot$ denotes the event of the protocol aborting and where,
\begin{align*}
r &\leftarrow \VDBQCCC.\Dec(\sk^*, \ct^*) \\
(\sk^*, \ct^*, \gamma_{\algo P}) &\leftarrow \VDBQCCC.\Evaluate
(\mathcal{V}(C, m_{\mathcal{V}}, \sk, \pk, \tau), \mathcal{P}(\pk, \tau, \sigma_{\mathcal{P}}))	\\
(\sk, \pk, \tau, \sigma_{\mathcal{P}}) &\leftarrow \VDBQCCC.\StatePrep(\mathcal{V}(C, 1^n), \mathcal{P}(\proj{0}^{\otimes \poly(n)}))\\	
\end{align*}
\item \emph{$\eps$-blind-verifiability:} the protocol satisfies the $\eps$-\textsf{IND-CPA} security condition of~\ref{def:qcedcc} and in addition, for each quantum circuit $C$, each verifier input $m_{\mathcal{V}}$, each $\QPT$ prover $\mathcal{P}$ and each state $\rho_{\algo P}$ it is the case that 
$$ \Pr [ r \neq M(C \proj{m_{\mathcal{V}}} C^{\dagger}) \land \top ] \leq \eps(n)$$
where $M(\cdot)$ denotes the process of measuring the first qubit of a quantum state in the standard basis and outputting that result, $\top$ denotes the event of the protocol \emph{not aborting} and where,
\begin{align*}
r &\leftarrow \VDBQCCC.\Dec(\sk^*, \ct^*) \\
(\sk^*, \ct^*, \gamma_{\algo P}) &\leftarrow \VDBQCCC.\Evaluate
(\mathcal{V}(C, m_{\mathcal{V}}, \sk, \pk, \tau), \mathcal{P}(\pk, \tau, \sigma_{\mathcal{P}}))	\\
(\sk, \pk, \tau, \sigma_{\mathcal{P}}) &\leftarrow \VDBQCCC.\StatePrep(\mathcal{V}(C, 1^n), \mathcal{P}(\rho_{\algo P}))\\	
\end{align*}
\end{enumerate}
\end{definition}

To obtain a protocol satisfying the above definition, for some computation $C$ on input $x$, we follow the ideas in Morimae's protocol and have the verifier first perform a $\QCEDCC$ protocol with the prover, instructing it to prepare the states $\ket{\psi_{C, x}}$ and $\ket{\psi_{\bar{C}, x}}$, as described in~\ref{subsec:morimae}. 
The verifier must then instruct the prover to perform the measurement of local terms from the Hamiltonian $H_{C, x}$. We do this by making use of Mahadev's measurement protocol from~\ref{subsec:mahadev}. An important point to note here is that because the states $\ket{\psi_{C, x}}$ and $\ket{\psi_{\bar{C}, x}}$ were prepared via the $\VDBQCCC$ protocol, they will be encrypted and so it's not a priori clear whether the subsequent measurement is useful for the verifier. But this will indeed be the case, as the encryption of the states is given a by a quantum one-time pad (that is known to the verifier) which can be absorbed into the measurements by suitably flipping measurement outcomes.
The protocol is described more formally in~\ref{fig:VDBQCCC_protocol} and the following theorem shows that it satisfies~\ref{def:vdbqccc}

\begin{table}[ht!]
\begin{longfbox}[breakable=false, padding=1em, padding-right=1.8em, padding-top=1.2em, margin-top=1em, margin-bottom=1em, background-color=gray!20]
\begin{protocol} {\bf Protocol for Verifiable Delegated Blind Quantum Computation with a Classical Verifier} \label{fig:VDBQCCC_protocol} \end{protocol}
Let $C$ be a circuit acting on $n$ qubits. Additionally, take $\delta = 1/\poly(n)$ to be some error tolerance parameter.
Define the following algorithms.
\begin{description}

\item $\VDBQCCC.\StatePrep(\mathcal{V}(C, 1^n), \mathcal{P}(\rho_{\algo P}) \to (\sk, \pk, \tau, \sigma_{\mathcal{P}})$: The verifier runs~\ref{fig:protocol_multi_round} with the prover, taking the security parameter $\lambda = \poly(n)$. Here, $\pk$ and $\sk$ denote the public and secret keys generated in that protocol, $\tau$ denotes the transcript of the protocol and $\sigma_{\mathcal{P}}$ is the prover's state upon completing the protocol (in the case where the verifier accepts).

\item $\VDBQCCC.\Evaluate(\mathcal{V}(C, m_{\mathcal{V}}, \sk, \pk, \tau), \mathcal{P}(\pk, \tau, \sigma_{\mathcal{P}})) \to 
(\sk^*, \ct^*, \gamma_{\mathcal{P}})$: The verifier first performs the $\QCEDCC.\Evaluate(\mathcal{V}(C', m_{\mathcal{V}}, \{\sk, \sk_{in}\}, \pk, \tau), \mathcal{P}(\pk, \tau, \sigma_{\mathcal{P}}))$ subprotocol, where the circuit $C'$ is one that prepares the history states $\ket{\psi_{C, m_{\mathcal{V}}}}$ and $\ket{\psi_{\bar{C}, m_{\mathcal{V}}}}$ and where $\sk_{in} \rand \bit^{\poly(n)}$ (see~\ref{fig:QCEDCC_protocol} for details). It then performs Mahadev's measurement protocol, instructing the prover to measure each qubit of the (encrypted) history states in either the standard basis or the Hadamard basis, chosen uniformly at random. The string $\ct^*$ is set to be the transcript of that protocol. The verifier also samples a local term from each of the Hamiltonians $H_{C, m_{\mathcal{V}}}$ and $H_{\bar{C}, m_{\mathcal{V}}}$ with probability proportional to the weight of that term (as in the post-hoc protocol mentioned in~\ref{subsec:histstate}). The string $\sk^*$ is set to the choices of these local terms, together with the strings $\sk$ and $\sk_{in}$.

\item $\VDBQCCC.\Dec(\sk^*, \ct^*) \to r$: The verifier uses $\sk^*$ to decrypt the results in $\ct^*$, corresponding to the measurements of local terms of the Hamiltonians $H_{C, m_{\mathcal{V}}}$ and $H_{\bar{C}, m_{\mathcal{V}}}$ on the states $\ket{\psi_{C, m_{\mathcal{V}}}}$ and $\ket{\psi_{\bar{C}, m_{\mathcal{V}}}}$\footnote{Strictly speaking, this happens with probability $1/4$ as we are following the instance-independent setup described in~\ref{rem:instindep}.}. If the tests of the Mahadev's measurement protocol failed, the verifier aborts. Otherwise, if, according to the measurements, $\ket{\psi_{C, m_{\mathcal{V}}}}$ has low energy with respect to $H_{C, m_{\mathcal{V}}}$ and $\ket{\psi_{\bar{C}, m_{\mathcal{V}}}}$ has high energy with respect to $H_{\bar{C}, m_{\mathcal{V}}}$\footnote{``Low energy'' and ``high energy'' meaning that the expected outcome is below a threshold $\alpha$ or above a threshold $\beta$, as in~\ref{thm:posthoc}.}, it sets $r=0$ and accepts. If $\ket{\psi_{C, m_{\mathcal{V}}}}$ has high energy with respect to $H_{C, m_{\mathcal{V}}}$ and $\ket{\psi_{\bar{C}, m_{\mathcal{V}}}}$ has low energy with respect to $H_{\bar{C}, m_{\mathcal{V}}}$, it sets $r=1$ and accepts. Otherwise, it aborts.
\end{description}
\end{longfbox}
\end{table}

\begin{theorem}
\ref{fig:VDBQCCC_protocol} is a protocol for Verifiable Delegated Blind Quantum Computation with a Classical Verifier having $\negl(n)$ correctness and $1/\poly(n)$ blind-verifiability.
\end{theorem}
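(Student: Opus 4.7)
My plan is to verify each of the two properties in \ref{def:vdbqccc} by modularly combining the guarantees of the three subprotocols composing \ref{fig:VDBQCCC_protocol}, namely the parallel RSP protocol (\ref{fig:protocol_multi_round}), Broadbent's $\QCEDCC$ protocol (\ref{fig:QCEDCC_protocol}), and Mahadev's measurement protocol (\ref{thm:completemeas}, \ref{thm:soundmeas}). For correctness I would first invoke \ref{prop:honest-prover-correctness} to conclude that an honest prover passes $\QCEDCC.\StatePrep$ with probability $1-\negl(n)$, and conditioned on this, the correctness of Broadbent's scheme (\ref{thm:broadbent_proto}) ensures that after $\QCEDCC.\Evaluate$ the prover holds the quantum one-time-pad encryptions of $\ket{\psi_{C,m_\mV}}$ and $\ket{\psi_{\bar C,m_\mV}}$. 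Applying \ref{thm:completemeas} to the honest continuation then gives a measurement distribution that is $\negl(n)$-close to that of the two true history states, where the verifier absorbs the known one-time pad by flipping outcome bits. Finally \ref{thm:posthoc} together with the instance-independent variant from \ref{rem:instindep} guarantees that the accept/reject decision based on these outcomes matches $M(C\ket{m_\mV})$ with probability $1-\delta(n)$ for any prescribed inverse-polynomial $\delta$, which by repeating the measurement subprotocol polynomially many times inside $\QCEDCC.\Evaluate$ can be driven to $\negl(n)$.

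For the $\IND$-$\CPA$ blindness part of $\eps$-blind-verifiability, my plan is a straightforward hybrid argument. The transcript of $\VDBQCCC.\StatePrep$ is generated before $m_\mV$ is even used, so it carries no information about $m_\mV$. During $\VDBQCCC.\Evaluate$ the protocol splits into a $\QCEDCC$ phase and a Mahadev measurement phase. By the $\IND$-$\CPA$ security of \ref{fig:QCEDCC_protocol}, the joint state of the prover at the end of the $\QCEDCC$ phase is computationally $1/\poly(n)$-indistinguishable for any two inputs $m_0, m_1$ (conditioned on non-abort). The subsequent measurement phase consists entirely of classical messages whose distribution is a fixed efficient function of the prover's state and the verifier's independently chosen measurement bases (sampled uniformly under \ref{rem:instindep}), so any distinguisher for the full transcript yields a distinguisher for the $\QCEDCC$ phase, giving the desired bound.

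For the verifiability part, the key ingredient is \ref{thm:soundmeas}: for any $\QPT$ prover that is accepted in Mahadev's measurement protocol with probability $1-p$, there exists a state $\rho$ on the prover's space whose measurement distribution in any basis pattern $h$ is computationally indistinguishable (up to $p^{1/2}+\negl$) from the distribution reported by the prover. Conditioning on non-abort, I would apply this to the state of the prover entering the measurement phase; the verifier's decrypted outcomes then correspond to measuring the ``shifted'' state $\Pi\rho\Pi^\dagger$, where $\Pi$ is the one-time pad determined by $\sk$. Because the verifier samples local Hamiltonian terms of $H_{C,m_\mV}$ and $H_{\bar C,m_\mV}$ according to the post-hoc distribution, the empirical averages of the reported outcomes approximate, up to Chernoff noise plus $p^{1/2}+\negl$, the energies $\tr{H_{C,m_\mV}\,\Pi\rho\Pi^\dagger}$ and $\tr{H_{\bar C,m_\mV}\,\Pi\rho\Pi^\dagger}$. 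By \ref{thm:posthoc} the promise gap between accepting and rejecting history states is at least $1/\poly(n)$, so no single state $\rho$ can simultaneously yield ``low energy on $H_{C,m_\mV}$'' and ``high energy on $H_{\bar C,m_\mV}$'' unless $C$ indeed accepts $m_\mV$ (and symmetrically for rejection). By repeating the measurement subprotocol $\poly(n)$ times inside $\VDBQCCC.\Evaluate$ and taking the empirical mean, the Chernoff error can be made smaller than the promise gap, giving the $1/\poly(n)$ verifiability bound.

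The main obstacle, and the step that needs the most care, is the chaining of computational indistinguishability across the three protocols: the RSP rigidity theorem (\ref{thm:final_rigidity}) gives only $1/\poly(n)$ closeness, $\QCEDCC$ inherits this, and Mahadev's measurement soundness introduces an additional $\sqrt p$ loss. I would therefore need to fix the RSP security parameter, the number of $\QCEDCC$ repetitions used to prepare the history states, and the number of measurement-protocol repetitions all as sufficiently large polynomials in $n$ so that every error term in the triangle-inequality chain remains within $1/\poly(n)$, while the Chernoff error from sampling Hamiltonian terms remains well below the post-hoc promise gap.
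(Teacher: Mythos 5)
Your plan is correct and matches the paper's proof in all essential respects: correctness chains \ref{prop:honest-prover-correctness}, \ref{thm:broadbent_proto}, \ref{thm:completemeas}, and \ref{thm:posthoc}; blindness combines the $\IND$-$\CPA$ security of $\QCEDCC$ with the instance-independence observation from \ref{rem:instindep}; and verifiability combines \ref{thm:soundmeas} with the promise gap of \ref{thm:posthoc}. The only difference is that where you unpack the verifiability argument (extracting a state $\rho$ via \ref{thm:soundmeas}, absorbing the one-time pad, and applying the promise gap plus Chernoff), the paper simply defers to the corresponding proof of Theorem 8.6 in \cite{mahadev}; your unpacking is consistent with what that cited proof does. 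One small imprecision worth flagging: you speak of no single $\rho$ simultaneously having low energy on $H_{C,m_\mV}$ and high energy on $H_{\bar C,m_\mV}$ unless $C$ accepts, but the relevant mechanism is the soundness direction of \ref{thm:posthoc} applied to the register holding the $\ket{\psi_{C,m_\mV}}$ component: if $C$ rejects $m_\mV$, every state on that register has high $H_{C,m_\mV}$-energy, so the verifier cannot be induced to output $r=0$, and symmetrically for $\bar C$; the two Hamiltonians act on disjoint registers, so the "simultaneous" phrasing is slightly off even though your conclusion is right.
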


\begin{proof}
We start by showing the protocol's correctness.
The prover will first engage in the $\QCEDCC$ protocol with the verifier, resulting in the preparation of BB84 states in its memory, followed by the application of the unitary for preparing the (encrypted) history states $\ket{\psi_{C, m_{\mathcal{V}}}}$ and $\ket{\psi_{\bar{C}, m_{\mathcal{V}}}}$. As explained in~\ref{sec:qced}, these states are encrypted via a quantum one-time pad known to the verifier. The verifier then uses Mahadev's measurement protocol to instruct the prover to measure the states according to randomly chosen terms from the Hamiltonians $H_{C, x}$ and $H_{\bar{C}, x}$, respectively (using the instance-independent idea from~\ref{rem:instindep}). Again, as the prover is assumed to behave correctly, this will result in the correct measurements being performed, as follows from~\ref{thm:completemeas}.
Recall that the terms of the Hamiltonian comprise only Pauli $\sigma_X$ and $\sigma_Z$ operators (and so the measurements performed via Mahadev's protocol will also be a measurements of $\sigma_X$ or $\sigma_Z$ observables). Since the history states have a quantum one-time pad applied, we can ``absorb'' the pad into the measurement by suitably flipping the measurement outcomes. More precisely, if a qubit has a $\sigma_X$ ($\sigma_Z$, respectively) applied in the one-time pad and the qubit is measured with the $\sigma_Z$ ($\sigma_X$, respectively) observable, then the outcome should be flipped. Otherwise, the outcome is not flipped.
Once the verifier has decrypted the measurement outcomes in this way, it will then test the energy condition to estimate the energies of the two states relative to the Hamiltonians $H_{C, x}$ and $H_{\bar{C}, x}$. 
The correctness of the history state construction (\ref{thm:posthoc}) ensures that the verifier obtains the correct outcome with high probability\footnote{We are assuming here that the verifier and prover are performing the parallel-repeated version of these protocols, as described in~\cite{alagic2020non}, hence why correctness is negligible.}. 

We would now like to show that the protocol also satisfies $1/\poly(n)$ blind-verifiability.
The blindness condition, which is identical to that of $\QCEDCC$, is clearly satisfied up to the point where the verifier performs Mahadev's measurement protocol. This is because until that point, the protocol is simply an application of $\QCEDCC$ for a specific computation (that of preparing the two history states). Using the instance-independent version of the history state construction (see~\ref{rem:instindep}), Mahadev's measurement protocol also satisfies the \textsf{IND-CPA} condition. This is due to the fact that the instructions the verifier sends to the prover in the measurement protocol are independent of the input.

Verifiability follows from the soundness of the measurement protocol (\ref{thm:soundmeas}) together with the properties of the history state construction (\ref{thm:posthoc}). The proof is identical to the proof of Theorem 8.6 from~\cite{mahadev}, which we will not restate here. It follows that
\begin{equation*}
|  \Pr [ r \neq M(C \ket{m_{\mathcal{V}}}\bra{m_{\mathcal{V}}} C^{\dagger}) \land \top] - 1/\poly(n)| \leq \negl(n),
\end{equation*}
which then implies that 
\begin{equation*}
    \Pr [ r \neq M(C \ket{m_{\mathcal{V}}}\bra{m_{\mathcal{V}}} C^{\dagger}) \land \top] \leq 1/\poly(n) + \negl(n).
\end{equation*}
This yields the desired condition of $1/\poly(n)$ verifiability that, when combined with the blindness condition (i.e. the \textsf{IND-CPA} security) yields $1/\poly(n)$ blind-verifiability, concluding the proof.
\end{proof}

\bibliographystyle{alpha}
\bibliography{main}

\end{document}